\algnewcommand\Input{\item[\textbf{Input:}]}
\algnewcommand\Output{\item[\textbf{Output:}]}
\newtheorem{theorem}{Theorem}
\newtheorem{lemma}{Lemma}
\newtheorem{definition}{Definition}
\newcommand {\ceil}[1] {\left\lceil #1 \right\rceil}
\DeclareMathOperator* {\argmax} {arg\,max}
\newcommand{\etal}{\textit{et al.}}
\newcommand{\R}{\mathbb{R}}
\newcommand{\N}{\mathbb{N}}
\newcommand{\ER}{\bar{E}}
\newcommand{\G}{\mathbb{G}}
\newcommand{\ps}{\Pi}
\newcommand{\rnk}{\#}
\newcommand{\sd}{\sigma}
\newcommand{\sddegr}{\sd_{degr}}
\newcommand{\sdclos}{\sd_{clos}}
\newcommand{\sdbetw}{\sd_{betw}}
\newcommand{\sdeig}{\sd_{eig}}
\newcommand{\sdrumor}{\sd_{rumor}}
\newcommand{\sdrwalk}{\sd_{rwalk}}
\newcommand{\sdmcarlo}{\sd_{mcarlo}}
\newcommand{\vs}{v^\dagger}
\newcommand{\FA}{\widehat{A}}
\newcommand{\FR}{\widehat{R}}
\newcommand{\FS}{\widehat{S}}
\newcommand{\Add}{A^*}
\newcommand{\Rem}{R^*}
\newcommand{\Subo}{\nabla}
\newcommand{\II}{I}
\newcommand{\thr}{\omega}
\newcommand{\goal}{goal}
\newcommand{\BAn}{Barab\'{a}si-Albert\xspace}
\newcommand{\ERn}{Erd\H{o}s-R\'{e}nyi\xspace}
\title{Social diffusion sources can escape detection}
\author[a]{Marcin Waniek}
\author[b]{Manuel Cebrian}
\author[c]{Petter Holme}
\author[a]{Talal Rahwan}
\affil[a]{New York University Abu Dhabi, Abu Dhabi, UAE}
\affil[b]{Max Planck Institute for Human Development, Berlin, Germany}
\affil[c]{Tokyo Institute of Technology, Tokyo, Japan}
\date{}
\begin{document}

\maketitle

\begin{abstract}
Influencing (and being influenced by) others through social networks is fundamental to all human societies. Whether this happens through the diffusion of rumors, opinions, or viruses, identifying the diffusion source (i.e., the person that initiated it) is a problem that has attracted much research interest. Nevertheless, existing literature has ignored the possibility that the source might strategically modify the network structure (by rewiring links or introducing fake nodes) to escape detection. Here, without restricting our analysis to any particular diffusion scenario, we close this gap by evaluating two mechanisms that hide the source---one stemming from the source's actions, the other from the network structure itself. This reveals that sources can easily escape detection, and that removing links is far more effective than introducing fake nodes. Thus, efforts should focus on exposing concealed ties rather than planted entities; such exposure would drastically improve our chances of detecting the diffusion source.
\end{abstract}

\section{Introduction}

As humans, we are perpetually involved in, and affected by, things spreading in networks---from infections~\cite{block2020social,chiu2020state} to ideas, from financial distress to fake news~\cite{barrat2008dynamical,lehmann2018complex}.
Furthermore, we live in an increasingly networked world---the networks of our society are becoming denser, meaning that spreading phenomena happen with accelerating speed~\cite{barabasi2016network}. Occasionally, we do not know who started the spreading. However, recent research has shown that we can often infer the source with great accuracy~\cite{lokhov2013inferring,jain2016fast,shah2020finding}.
If the spreading has an illicit intent or negative consequences for the source---like bioterrorism, disinformation, or whistleblowing in an authoritarian society---the source would want to hide from such source detection algorithms.
In this article, we study the conditions under which such hiding can be successful.
To keep the results as widely applicable as possible, and to conform to the praxis of Network Science~\cite{lokhov2013inferring,jain2016fast,shah2020finding,de2013anatomy,was2020manipulability,waniek2018hiding}, we avoid making domain-specific assumptions about the nature of the social diffusion.

The main challenge to understand the possible obfuscation of the diffusion source is that it is a problem with two components. First, networks have an innate ability to hide the source; although most research in the literature has focused on designing source detection algorithms, the efficiency of these algorithms strongly depends on the network structure.
Second, by changing its local network surrounding, the source can hide its identity. So far, no theory has been able to separate these two factors, and understand how they are affected by the dynamics of the spreading and the timing of the source detection.
This article builds such a theory from the systematic simulations,
with and without active obfuscation of the source. 
For the sake of parsimony, we base our study on simple contagion~\cite{lehmann2018complex,goffman1964generalization},
while varying the 
network structure, source detection, and timing of these.

We begin our analysis by investigating the theoretical limits of the problem of hiding the source of diffusion. We mathematically prove that identifying an optimal solution to this problem is practically impossible. Based on this, we turn our attention to feasible---albeit not optimal---ways in which adversaries may hide the source. The first way involves executing several heuristics that either introduce new nodes or rewire existing links. The second way relies on the network structure itself, without any intervention from the source. We evaluate both ways by running simulations in real-life networks, as well as synthetic networks with varying structure and density. This evaluation is done via exact methods when the network is of moderate size and approximation when the network is massive. Although most of our analysis focuses on a simple contagion model, we also perform simulations with alternative diffusion models, including variants of complex contagion. We also study how the source's actions affect other nodes part of the same cascade and how the source detection algorithms are influenced, given imperfect knowledge about the network structure. We conduct several sensitivity analyses, varying how the source is selected, the parameters for generating synthetic networks, and the approximations by which we evaluate the heuristics. Finally, we validate our findings using a large-scale dataset of new hashtags introduced to Twitter, where both the structure of the network and the information cascades are real. Altogether, our work presents the first systematic analysis of hiding the source of social diffusion. 

\section{Results}

\subsection{Problem Overview and Theoretical Analysis}

We consider the problem of hiding the source of a diffusion process in a network. In particular, we consider an undirected network $G=(V,E)$, where one of the nodes, $\vs \in V$, starts a diffusion process, resulting in a subset of nodes $\II$ becoming infected. In this work we assume that this process follows the Susceptible-Infected (SI) model~\cite{kermack1927contribution}; see Methods for a formal description of the network notation and the SI model. We consider situations where $\vs$ wishes to avoid being detected as the source of the diffusion process. Hence, we call the node $\vs$ \textit{the evader}. We also assume the existence of another entity, called \textit{the seeker}, whose goal is to identify the origin of the diffusion using source detection algorithms. In our analysis, we consider source detection algorithms that return a ranking of network nodes~\cite{comin2011identifying,shah2011rumors,jain2016fast,antulov2015identification}, where the node at the top position in the ranking is identified as the source; see Methods for more details. The goal of the evader is then to introduce modifications to the network structure (after the diffusion has taken place) in order to avoid being identified by the seeker as the source.
We consider the two (as we argue below) most realistic types of such modifications: (1) adding nodes and (2) modifying edges.

Next, we provide more details about the two types of modifications, starting with the one in which nodes are added to the network. For instance, these can be individuals working for the evader, who deliberately position themselves in strategic locations within the social network to confuse the seeker. We refer to such individuals as ``confederates'' throughout the article (although this term suggests that they are willingly and knowingly helping the evader, this does not have to be the case). Then, the problem faced by the evader is to determine the contacts of each confederate. Thus, although the evader wishes to hide by adding nodes, the optimization problem faced by the evader is to choose which edges, not nodes, to add to the network. Note that this is a variation of the well-known Sybil attack~\cite{douceur2002sybil}, where an entity affects a system by using multiple identities. We also consider an alternative way in which the evader may conceal their true nature as the source of the diffusion. Instead of adding confederates to the network, the evader can modify (i.e., add or remove) the network edges after the diffusion has taken place. For instance, the evader could claim to have met someone when in reality they have not, or vice versa, hoping that such modifications would mislead the source detection algorithms.

\begin{figure}[t!]
\centering
\includegraphics[width=\linewidth]{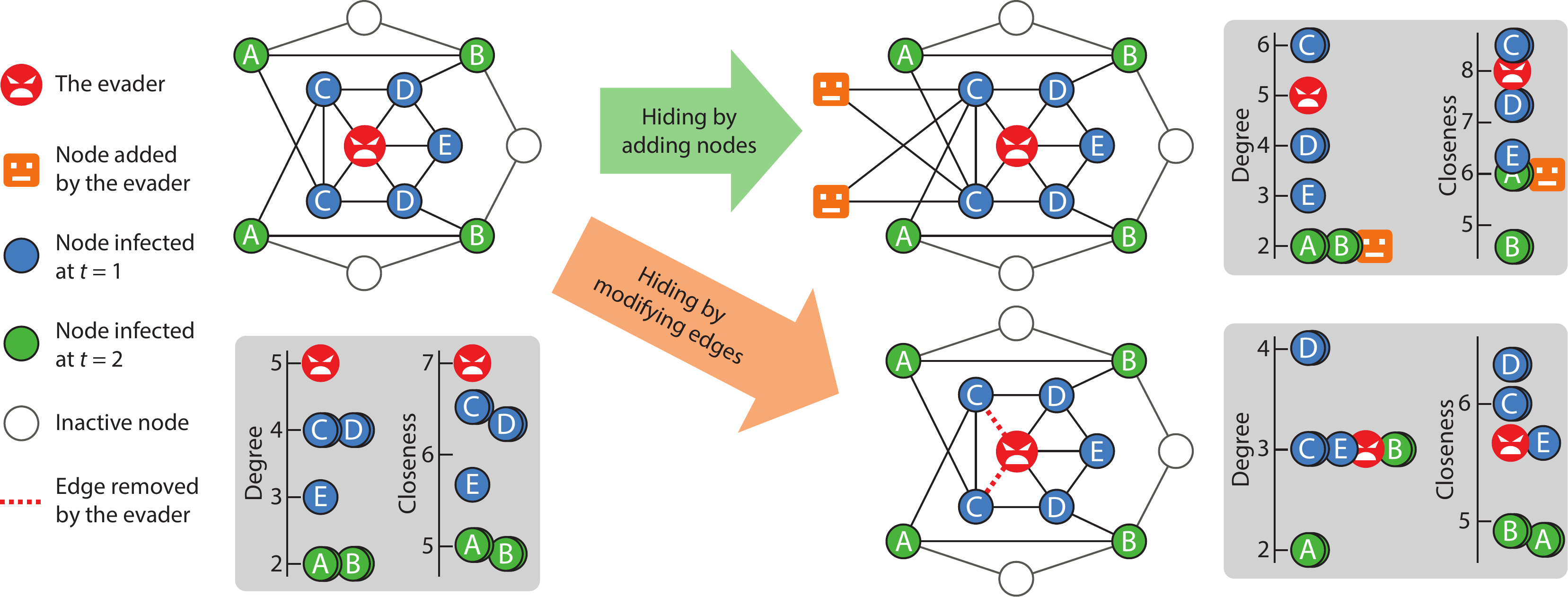}
\caption{
\textbf{An overview of the hiding process.}
The network on the left represents the original structure before the hiding process. Here, the red node is the evader, and the blue nodes are those infected in the first round of diffusion, i.e., at time step $t=1$, while the green nodes are those infected in $t=2$. The grey frame next to this network depicts the rankings computed by two source detection algorithms, namely Degree and Closeness (nodes that have the same ranking were given the same label, e.g., see the two nodes labeled $C$). The numbers on the scale represent the score assigned by each algorithm, implying that the node positioned at the top is the most likely source according to that algorithm. The network in the top-right corner represents a possible scenario where the evader tries to hide by adding two confederates, and connecting them to the nodes labeled $C$. The network in the bottom-right corner represents another scenario where the evader tries to hide by removing the dotted edges. The gray frames next to these networks show how the rankings change as a result of these modifications.
}
\label{fig:infographic}
\end{figure}

Figure~\ref{fig:infographic} illustrates an example of the hiding process. The network on the left represents the original structure, i.e., the one in which the diffusion takes place. Here, we focus on two source detection algorithms, namely Degree (which counts the number of infected neighbors that one has) and Closeness (which measures one's distance to other infected nodes); see Methods for formal definitions of these algorithms. According to both algorithms, the evader (represented as the red node) is identified as the source of diffusion (see how the evader occupies the top position in the ranking produced by each algorithm). The networks on the right illustrate two possible scenarios in which the evader hides its identity. The evader could avoid detection by introducing newly infected nodes, as exemplified in the top-right network. After adding the two confederates, the evader connects them to the nodes labeled $C$ (e.g., asks them to claim they were in contact with the nodes labeled $C$). Consequently, the evader drops to the third position in the rankings produced by both the Degree and Closeness algorithms. Alternatively, the evader could try to avoid detection by removing the edges between themselves and the two nodes labeled $C$ as exemplified in the bottom-right network (a possible interpretation of this action is that the evader denies having met these two individuals). As a result, the evader drops to the third position in the ranking produced by the Degree source detection algorithm, and to the fifth position in the ranking produced by the Closeness source detection algorithm, thereby concealing their identity as the source of diffusion. As can be seen, by performing a relatively small number of network modifications, the evader can lower their chances of being identified as the source. It should be noted that the scenario illustrated in Figure~\ref{fig:infographic} is just an example; in principle the added nodes do not necessarily have to be connected to the source's neighbors, but can be connected to any other node in the network. Likewise, modifying the network edges does not have to be done by removing existing ones; it could also be done by adding edges to the network. The effectiveness of the different choices will be evaluated in our subsequent experiments.

\begin{table}[t!]
\centering
\begin{tabular}{ l c c }
\toprule
Source detection algorithm & Adding nodes & Modifying edges \\
\midrule
Degree & P & NP-complete \\
Closeness & NP-complete & NP-complete \\
Betweenness & NP-complete & NP-complete \\
Rumor & NP-complete & NP-complete \\
Random Walk & NP-complete & NP-complete \\
Monte Carlo & NP-complete & NP-complete \\
\bottomrule
\end{tabular}
\caption{\textbf{Summary of our computational complexity results.} For different source detection algorithms, we consider the decision problem that the evader must solve in order hide optimally from the algorithm by either adding nodes or modifying edges. P = solvable in polynomial time; NP-complete = Non-deterministic Polynomial-time complete, implying that no known algorithm can solve it in polynomial time.}
\label{tab:decision-summary}
\end{table}

The first question we investigate is: \textit{How difficult is it to find an optimal way of hiding the source of diffusion?} Formal definitions of the decision problems faced by the evader are presented in Appendix~\ref{app:formal-definitions}. Our theoretical findings regarding the computational complexity of these problems are summarized in Table~\ref{tab:decision-summary}, while the formal proofs are presented in Appendix~\ref{app:complexity-proofs}. As can be seen in the table, in almost all cases, the problems under consideration are NP-complete (Non-deterministic Polynomial-time complete), implying that no known algorithm can solve them in polynomial time relative to the network size. Hence, finding an optimal way of preventing the evader from being identified as the source of diffusion is a computationally intractable task that cannot be completed efficiently, especially for large networks.

\subsection{Heuristics}

Given the computational complexity of identifying an optimal way of hiding the source of diffusion, we will now focus on heuristic methods instead. The first class of heuristics that we consider is adding confederates to the network. We use the term ``supporters'' to describe the nodes that are already present in the network and are willing to accept connections from the confederates. These supporters do not necessarily need to be the evader's associates, and are not restricted to those who are intentionally cooperating to hide the source. Then, the evader must optimize the list of contacts of each confederate, which can include any of the supporters and any of the other confederates. Let us first consider how contacts are chosen from the list of supporters. Assuming that the evader wishes to connect each confederate to $k$ supporters, we consider three alternative heuristics:
\begin{itemize}\itemsep-0.1cm
\item \textbf{Hub}---for each confederate, connect it to the $k$ supporters with the greatest degrees (this way, all confederates get connected to the same $k$ supporters);
\item \textbf{Degree}---for each confederate, connect it to the $k$ supporters with the greatest degree out of those who are not yet connected to any other confederate (if no such supporters exist, select from the ones connected to the smallest number of confederates);
\item \textbf{Random}---for each confederate, connect it to $k$ supporters chosen uniformly at random.
\end{itemize}

Each of the above heuristics has two versions, depending on how the confederates are connected to each other. In particular, we consider two possibilities:
\begin{itemize}\itemsep-0.1cm
\item \textbf{Just supporters}---every confederate is connected only to supporters, implying that there are no edges between confederates;
\item \textbf{Clique}---every confederate is connected to every other confederate, implying that the confederates form a clique.
\end{itemize}
For each heuristic, we add the word ``clique'' to indicate that the confederates are connected to each other, e.g., by writing ``Hub clique''. Otherwise, if there are no connections between the confederates, we write the name as it is, e.g., ``Hub''.

Now that we have presented our first class of heuristics, which add confederates to the network, let us now consider the second class of heuristics, which modify edges in the network. We assume that the evader can only add or remove edges between themselves and a specific subset of nodes. To determine which of those nodes to connect to, and which to disconnect from, we consider three alternative heuristics:

\begin{itemize}\itemsep-0.1cm
\item \textbf{Max degree}---choose the nodes with the greatest degree;
\item \textbf{Min degree}---choose the nodes with the smallest degree;
\item \textbf{Random}---choose nodes uniformly at random.
\end{itemize}
All ties are broken uniformly at random. Each of the above heuristics has two versions, depending on whether the evader is adding new connections, or removing existing ones. We write the word ``adding'' to indicate that the evader is adding new connections, e.g., by writing ``Adding max degree''. Otherwise, we write the word ``removing'' to indicate that the evader is removing existing connections, e.g., ``Removing max degree''.

\subsection{Simulation Analysis}

The experimental procedure for a given network $G = (V,E)$ is as follows. First, we select the evader $\vs$ uniformly at random from the top $10\%$ of nodes according to degree ranking, provided that its degree is at least $10$.
We imposed this restriction because some of our heuristics remove edges that are incident to the evader, implying that the evader needs to have enough edges to actually execute these heuristics. Besides, the spreaders of fake news tend to be well connected~\cite{bovet2019influence}. After the evader is selected, we spread the diffusion starting from $\vs$ to obtain the set of infected nodes $\II$. In our simulations, we use the SI model with the probability of diffusion being $p=0.15$ and the number of rounds being $T=5$. We then perform the hiding process using different heuristics, recording the position of $\vs$ in the rankings generated by each source detection algorithm after each step of the hiding process.

The majority of the source detection algorithms considered in our experiments disregard the nodes that are not infected. Hence, when choosing the confederate's contacts, it makes sense for the evader to consider only infected supporters. In our simulations, we assume that all infected nodes other than the evader are supporters. Furthermore, we assume that the evader can only remove edges that they are part of and only add edges between themselves and their neighbors' neighbors.

\begin{figure}[t!]
\centering
\includegraphics[width=\linewidth]{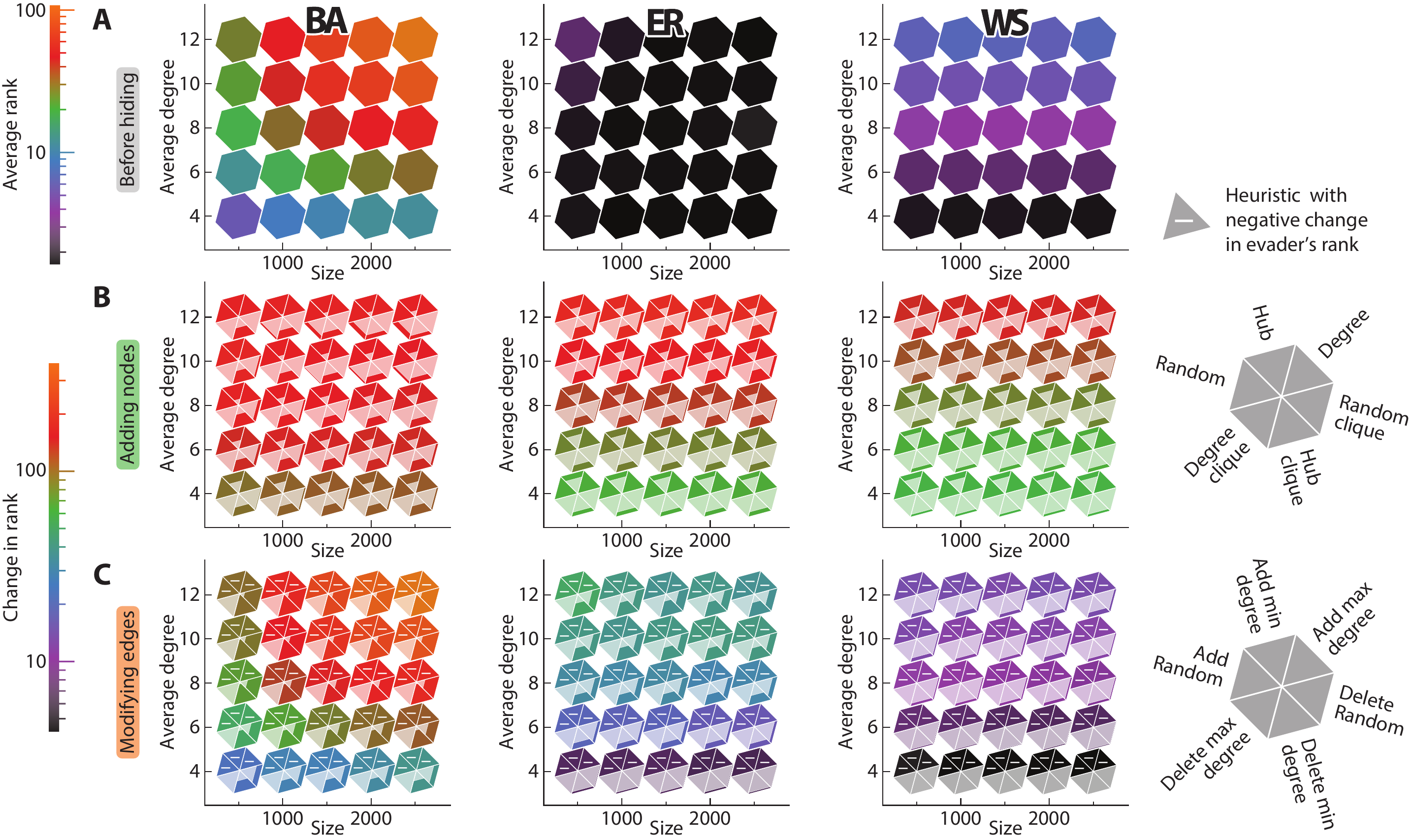}
\caption{
\textbf{The efficiency of hiding from the Eigenvector source detection algorithm in networks with varying structure, size and density.}
In each color-coded tessellation, the x-axis represents the number of nodes in the network, while the y-axis represents the average degree.
The first column corresponds to scale-free networks generated using the \BAn (BA) model, the second corresponds to random networks generated using the \ERn (ER) model, and the third corresponds to small-world networks generated using the Watts-Strogatz (WS) model. The results are presented as an average over $100$ networks and over $10$ evaders in each network. 
The colors of the first row (A) gives the average position of the evader in the ranking computed using the Eigenvector source detection algorithm before the hiding process on a logarithmic scale (the lower the ranking, the more exposed is the evader).
The second (B) and third (C) rows depict the average difference in the evader's ranking as a result of the hiding process, where each hexagon indicates the efficiency of six different hiding heuristics, and the color represents the effectiveness of the best heuristic (on a logarithmic scale).
More specifically, row (B) depicts the effect of adding $50$ confederates connected to $3$ supporters each, while row (C) depicts the effect of either adding or removing $5$ edges, depending on the heuristic being used. Additionally, each triangular sector is filled up with a lighter shaded area (overlaid the hexagons), representing the relative effectiveness of the corresponding heuristic compared to the best one (the greater the shaded area, the better the heuristic). As such, triangular sectors that have no shaded area have zero effectiveness. However, there are cases where the effectiveness is even less than zero, i.e., it backfires and ends up exposing the evader even more. In such cases, the triangular sector has no shaded area and is also marked with a minus.
}
\label{fig:profile}
\end{figure}

In our experiments, we will first disentangle two different aspects of hiding. The first aspect relates to the network topology itself, which can provide some concealment even without the evader manipulating it. The second aspect comes from an evader strategically manipulating the network after the inception of the diffusion process. To separate the two notions of hiding, we ran experiments on networks with varying structure, size, and density; see Figure~\ref{fig:profile}. The figure presents the results for the Eigenvector source detection algorithm, in particular. We chose this example since it is one of the best-performing algorithms and yields the most pronounced differences between the best and worst hiding heuristics;
see Appendix~\ref{app:hiding-profiles} for the results pertaining to other source detection algorithms. Figure~\ref{fig:profile}A presents the results for the first notion of hiding, i.e., the one stemming from the network structure itself, whereas Figures~\ref{fig:profile}B and \ref{fig:profile}C present the results for the second notion of hiding, which stems from strategically adding confederates or modifying edges, respectively.
In all of our simulations, we report the absolute, rather than relative, ranking of the evader according to the source detection algorithm in question. However, in principle, one could suspect an individual to be the source of diffusion if that individual is, e.g., among the top 10 nodes, or the top 1\% of nodes. Since the two are correlated, we chose one of them---the absolute ranking---and used it throughout all of our experiments.

As can be seen in Figure~\ref{fig:profile}A, out of the three network structures considered in our experiments---scale-free, small-world, and random---the one that provides the greatest level of concealment to the evader is the scale-free structure. Moreover, independent of the network model, the denser the network, the more concealed is the evader. As for the network size, having a larger number of nodes results in a greater level of concealment for scale-free networks, but results in a negligible effect for small-world or random networks.
When it comes to strategic hiding via network manipulations, Figures~\ref{fig:profile}B and \ref{fig:profile}C show that it is generally more efficient to strategically hide in networks with greater density. When comparing the different structures in terms of how they facilitate the strategic hiding, our heuristics are most efficient in scale-free networks, and least efficient in small-world networks, regardless of whether the evader is adding confederates, or modifying edges. Finally, commenting on how the network size affects the efficiency of strategic hiding, the effect is relatively small. The only exception is when hiding by modifying edges in scale-free networks, which is considerably more effective in larger networks.
Next, we compare heuristics of the same type, starting with the ones that add confederates, to determine whether they should create a clique amongst themselves or remain disconnected from one another, and determine which supporters to connect to which confederates. As for the former question, creating a clique is consistently superior (see how the shaded area of the triangular sectors in Figure~\ref{fig:profile}B is greater for heuristics with ``clique'' in their name). As for the question of which supporters to connect to which confederates, when confederates form a clique, it is more fruitful to connect confederates to different supporters (using either the Random clique or Degree clique heuristics) than connecting them all to the same supporters (using the Hub clique heuristic). On the other hand, when confederates are disconnected from each other, the results vary depending on the source detection algorithm being used; see Appendix~\ref{app:hiding-profiles}.
Having compared the heuristics that add confederates, we now compare the heuristics that modify (some of) the edges that are incident to the evader, to determine whether we should add or remove edges. Our results indicate that the latter is significantly more effective. In fact, adding new edges often backfires, and ends up exposing the evader even more to the source detection algorithm. The only remaining question is to determine which edges to remove from the network. Our results show that the most effective choice is to disconnect the evader from the neighbors with the greatest degrees, and the least effective choice is to disconnect from those with the lowest degrees. All the results in Figure~\ref{fig:profile} are shown after the heuristics have made all the modifications to the network. To see how the evader's ranking changes after each such modification, see Appendices~\ref{app:simulation-subos} and \ref{app:simulation-edges} for the heuristics that add confederates and modify edges, respectively.

\begin{figure}
\centering
\includegraphics[width=0.85\linewidth]{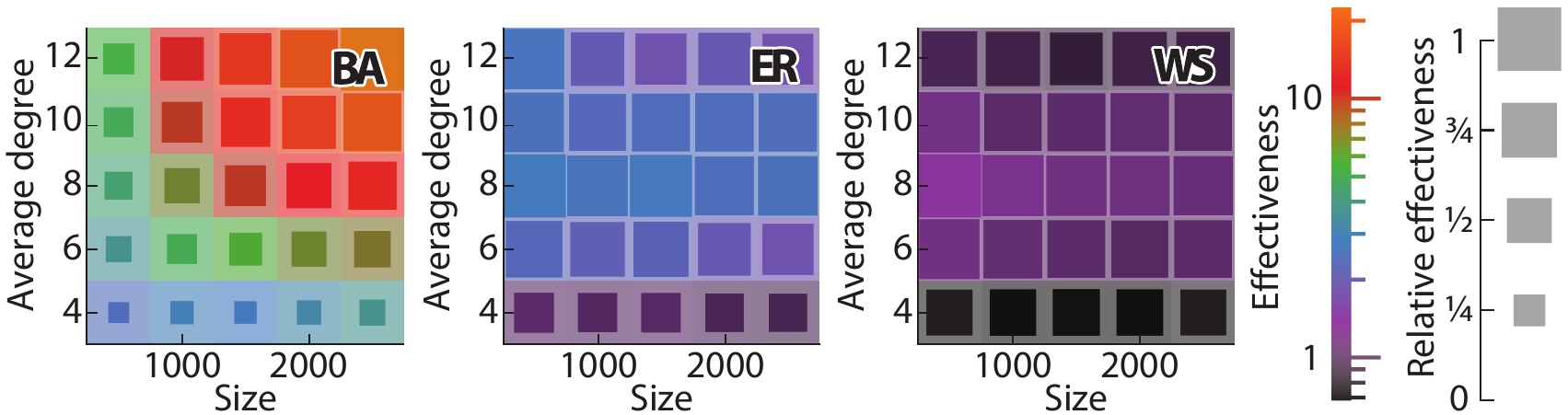}
\caption{
\textbf{Comparing the effectiveness of adding confederates vs.~modifying edges when hiding from the Eigenvector source detection algorithm.}
The effectiveness of modifying an edge is calculated as the number of confederates that must be added to achieve the same change in the evader's ranking (assuming that the modification of edges and the addition of confederates are done using the best respective heuristics). The heatmaps evaluate the effectiveness of modifying an edge while varying the network structure. In particular, the left-most heatmap corresponds to scale-free networks generated using the \BAn (BA) model, the central heatmap corresponds to random networks generated using the \ERn (ER) model, and the right-most heatmap corresponds to small-world networks generated using the Watts-Strogatz (WS) model. In each heatmap, the x-axis represents the number of nodes in the network, while the y-axis represents the average degree. The color of each cell corresponds to the average effectiveness of modifying an edge. The size of the unshaded (darker) area in each cell corresponds to the average effectiveness of modifying an edge in that cell, relative to the maximum effectiveness across all cells of that panel. The results are presented as an average over $100$ networks and over $10$ evaders in each network, after adding $50$ confederates or modifying $5$ edges. The colors in the heat maps reflect a logarithmic scale.
}
\label{fig:exchange}
\end{figure}

It is difficult to compare the effectiveness of adding nodes and modifying edges based solely on Figure~\ref{fig:profile}, since the figure shows only the impact of adding $50$ nodes and modifying $5$ edges. To facilitate this comparison, Figure~\ref{fig:exchange} shows how many nodes must be added to the network in order to have the same effect as modifying a single edge. As can be seen, in the majority of cases the effect of modifying a single edge is equivalent to adding several confederates. In fact, the number of confederates needed to achieve the same effect as modifying a single edge is surprisingly large (and may even reach tens) in scale-free networks. The only exception is the case of sparse, small-world networks, where adding one confederate affects the evader's ranking more than modifying a single edge (as indicated by the values smaller than $1$ in the heatmap). The results depicted in the figure are for the Eigenvector source detection algorithm; the results for other source detection algorithms are qualitatively similar as shown in Appendix~\ref{app:exchange}.

\begin{figure}[t!]
\centering
\includegraphics[width=\linewidth]{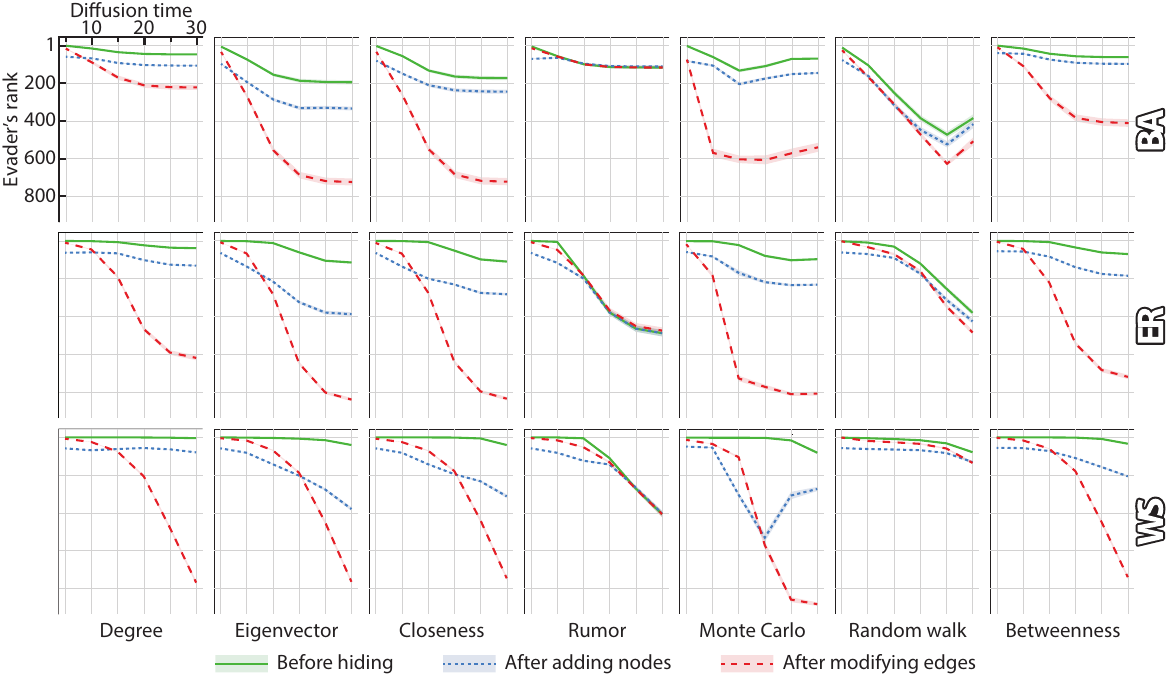}
\caption{
\textbf{The impact of extending the diffusion time on the effectiveness of hiding.}
In each plot, the x-axis represents the diffusion time, i.e., the total number of rounds in the diffusion process. The y-axis represents the evader's ranking according to the source detection algorithm; this ranking is computed before the hiding process (green line), after running the best heuristic that adds confederates (blue line), and after running the best heuristic that modifies edges (red line). Each column corresponds to a different source detection algorithm, while each row corresponds to a different model used to generate the network in which the diffusion takes place. Results are averaged over $10$ evaders and over $100$ networks generated using either the \BAn (BA), the \ERn (ER), or the Watts-Strogatz (WS) model, consisting of $1,000$ nodes each, with the average degree being $4$. The axes in all plots are identical to those used in the upper-left corner. Shaded areas (which are often too small to see) represent $95\%$-confidence intervals.
}
\label{fig:duration}
\end{figure}

Another aspect that may impact the effectiveness of hiding the evader is the diffusion time, i.e., the total number of rounds completed in the diffusion process before the source detection algorithm analyzes the network. The results of this analysis can be found in Figure~\ref{fig:duration}. In the vast majority of cases, the evader becomes more hidden as diffusion time increases. This is true not only when the evader performs no modifications to the network, but also when they modify the network by adding confederates or by removing edges following the most effective heuristic (although the effectiveness of removing edges grows at a greater rate than that of adding confederates). This suggests that if our goal is to identify the source of diffusion, we should start our investigation as early as possible. Shah~\etal~\cite{shah2020finding} reported similar findings, but for different diffusion models than ours, namely Susceptible-Infected-Recovered (SIR) and Susceptible-Exposed-Infected-Recovered (SEIR). Next, we compare the source detection algorithms to each other. As can be seen from the figure, the diffusion time's sensitivity varies greatly from one algorithm to another. When the evader performs no modifications, the Degree and Betweenness algorithms prove to be the most resilient. Similar results are observed when the evader adds confederates to the network. In contrast, when modifying edges, the most resilient algorithms are Rumor and Random Walk, i.e., they are the least affected by changing the diffusion time. Interestingly, all three types of network structures show relatively similar patterns, suggesting that the source detection algorithms' inner workings play a more important role in determining how the diffusion time affects the effectiveness of hiding, rather than the network characteristics.

\begin{figure}[t!]
\centering
\includegraphics[width=.9\linewidth]{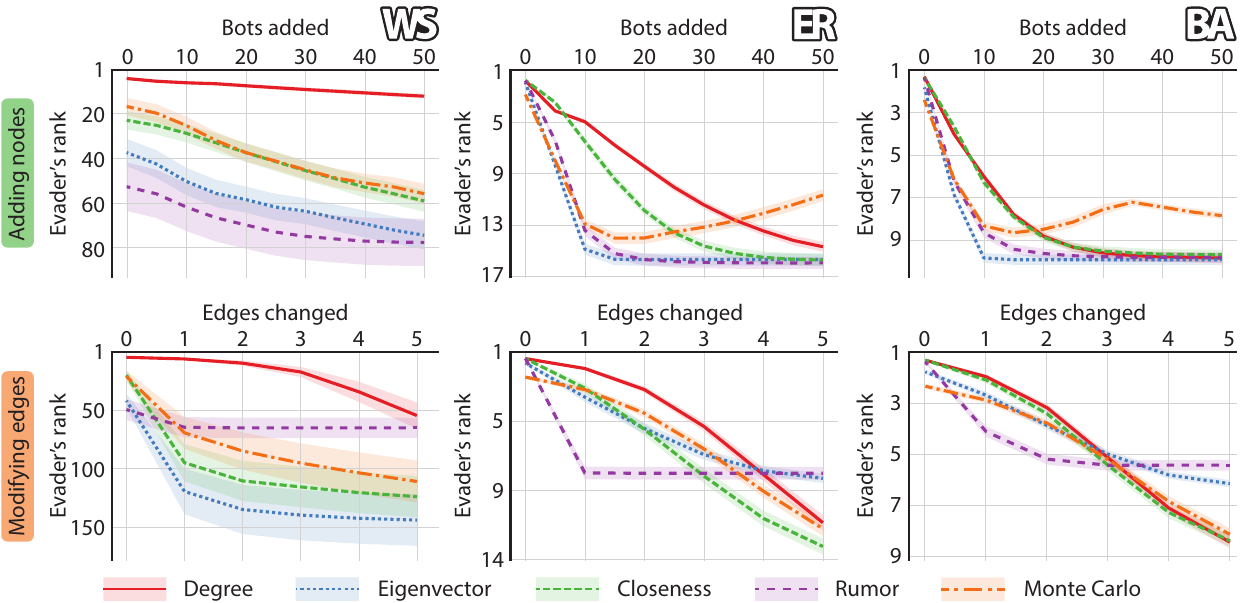}
\caption{
\textbf{The effectiveness of hiding in massive networks.}
Given networks of $100,000$ nodes with an average degree of $4$, the figure depicts the evader's ranking (y-axis) as a function of the number of network modifications (x-axis), using the best heuristic for adding nodes (first row) or modifying edges (second row). Different colors represent different source detection algorithms. The results are presented as averages taken over $100$ networks generated using either the \BAn (BA), the \ERn (ER), or the Watts-Strogatz (WS) model, and over $10$ different evaders in each network.
Shaded areas represent $95\%$-confidence intervals.
}
\label{fig:large}
\end{figure}

So far in our analysis, we only considered networks of up to $2,500$ nodes, since the analysis involved taking an average over a large number of cases, and increasing the number of nodes would have taken excessive time. Fortunately, when it comes to evaluating the impact of the hiding process, we can approximate it even for massive networks. Based on this, we increased the number of nodes to $100,000$, and approximated the evader's ranking after each step of the hiding process. The approximation is done by computing the ranking of the evader not among all nodes, but rather among $10,000$ nodes, consisting of the $5,000$ infected nodes with the greatest degrees and another $5,000$ infected nodes chosen uniformly at random from the remaining ones. Furthermore, in our approximation we do not consider the Betweenness and Random walk source detection algorithms, since their ranking cannot be efficiently computed for just a selected subset of nodes. The results of this analysis are presented in Figure~\ref{fig:large}. In networks generated using the \ERn model and the Watts-Strogatz model, the evader usually occupies the top position of the ranking before hiding, whereas in networks generated using the \BAn model, they are in the top $50$ positions. Moreover, in the former two types of networks, the hiding process seems much less effective than in the latter, regardless of whether the hiding is done by adding nodes or by modifying edges. Notice that the effectiveness of hiding in these massive networks is considerably reduced compared to smaller networks with $1,000$ nodes, the results for which were presented in previous figures. These findings are all based on the best heuristic of each type; see Appendix~\ref{app:simulation-large} for an evaluation of the remaining heuristics.

It should be noted that the aforementioned findings are all based on cascades generated using the SI model, taking place in randomly generated networks. In Appendix~\ref{app:simulation-real} we replace the random networks with real ones (while still using the SI model), and in Appendix~\ref{app:real-cascades} we present results where both the networks and the cascades are taken from real data.
Altogether, the results presented in Appendices~\ref{app:simulation-real} and~\ref{app:real-cascades} validate our findings. More specifically, they confirm that our heuristics are capable of reducing the effectiveness of source detection algorithms. They also confirm that adding edges can backfire and end up exposing the source even more, while adding nodes and removing edges rarely backfires. Finally, the most effective heuristics on synthetic data tend to also be among the most effective ones on real data.

We conclude our analysis by exploring different variations of the experimental setup. First, we ran simulations with alternative models of diffusion and network generation (Appendix~\ref{app:altmodels}); the results were qualitatively similar to those obtained from our main experimental setup. Second, we investigated how the evader's ranking is affected when another node, $v$, located in the evader's direct network vicinity runs our heuristics (Appendix~\ref{app:distance}); we found that the evader becomes hidden as a result of $v$'s actions, albeit not as effectively as in the case when the evader is the one running the heuristic. Third, we investigated how the completeness of the seeker's knowledge about the network's structure affects the effectiveness of the evader's hiding (Appendix~\ref{app:fuzzy}); we found that in the vast majority of cases the evader's hiding becomes significantly more effective as the knowledge of the seeker dwindles.
Fourth, instead of selecting the evader randomly from the $10\%$ of nodes with the highest degrees (as in our basic experiments), we analyzed the case where the evader is randomly selected out of all nodes (Appendix~\ref{app:any-evader}). The results were broadly similar to those observed in our basic experiments, i.e., when the evader does not try to hide, they are ranked high according to source detection algorithms, but when the evader runs the hiding heuristics, they are able to significantly reduce the likelihood of being identified. 
Fifth, we performed various sensitivity analyses by modifying the parameters used in our original simulations (Appendix~\ref{app:sensitivity}). To this end, we started by analyzing how the hiding process is affected by the way in which the evader is selected. Specifically, instead of selecting the evader from the top $10\%$ of nodes with the highest degrees (as was the case with our main experiments), we selected the evader from the top $x\%$ where $10\leq x \leq 90$; the results suggest that the value of $x$ has a relatively small effect on the outcome. Next, instead of setting the rewiring probability of the Watts-Strogatz model to be $0.25$, we varied this probability from $0.05$ to $0.5$, and found that the hiding process tends to be slightly more effective given greater values of this parameter. Finally, instead of approximating the evader's ranking by computing it among $10,000$ infected nodes, we varied this number between $2,000$ and $18,000$, and found this to have a relatively small effect on the outcome.

\section{Discussion}

In this work, we analyze the possibility of obfuscating the origin of diffusion, both as a result of spreading it in a specific type of network structure, and via strategic network manipulations. On the one hand, our theoretical analysis indicates that finding an optimal way of hiding the source of diffusion, either by adding confederates or by modifying the networks' edges, is computationally intractable. On the other hand, our computational experiments demonstrate that even without any strategic manipulations, the structure of the network itself can greatly hinder the efforts to identify the diffusion source. This seems to be the case especially in scale-free networks---an observation that is particularly alarming since many real-life social networks exhibit this property. We also find that the task of identifying the source of diffusion is more challenging in networks that are dense, allowing the culprit to hide in the crowd. Moreover, an adversarial agent can utilize several network modifications simultaneously to obfuscate the source even more. Particularly effective strategies in this regard are based on attaching a densely connected group of confederates to the network and removing connections between the source of the diffusion and its most well-connected neighbors after the diffusion has started. Our analysis also confirms previous findings derived from alternative diffusion models~\cite{shah2020finding}, indicating that the longer the diffusion takes, the more difficult it is to pinpoint its origin, highlighting the importance of prompt reaction to any potential epidemic threat. Finally, our experiments indicate that finding the diffusion source is easier in massive, sparse networks, regardless of whether the evader strategically manipulates the network to hide its identity. Still, given current source detection algorithms, if the diffusion source tries to hide, it probably will succeed. Future algorithms will need other types of information capturing the time evolution of both the diffusion and the network.

There exists a growing literature on avoiding detection by a wide range of social network analysis tools. Such hiding techniques can be used to prevent a closely-cooperating group of nodes from being identified by community detection algorithms~\cite{waniek2018hiding}, or prevent the leader of the organization from being recognized by centrality measures in both standard~\cite{waniek2017construction,was2020manipulability} and multilayer networks~\cite{waniek2020hiding}. Similar techniques can be used to prevent an undisclosed relationship from being exposed by link prediction algorithms~\cite{waniek2019hide,zhou2019attacking}. Nevertheless, none of the existing works considered strategically hiding the source of diffusion from source detection algorithms. What is more, they typically only study hiding by adding edges to, and removing edges from, a network while disregarding the possibility of avoiding detection by adding nodes to the network.
Another body of work that is relevant to our study is the one considering the reconstruction of diffusion cascades. Typically in this literature, the party analyzing the network has at its disposal more information than what is available to the seeker in our setting. Specifically, many works require knowledge about either the exact moment when each node was reached by the diffusion process~\cite{taxidou2014online,farajtabar2015back,xiao2018reconstructing}, or the order in which the nodes became infected~\cite{ghosh2011framework}. Others consider settings where available data includes edges over which the diffusion took place~\cite{sadikov2011correcting}. Some studies assume that the underlying structure is a temporal network~\cite{rozenshtein2016reconstructing}, where each edge exists only in a specific moment in time, or assume the availability of an API serving data about the diffusion~\cite{cogan2012reconstruction}. Yet another body of literature tries to infer which nodes are actually in the infected state, based on partial information~\cite{xiao2018robust,sundareisan2015hidden}. Finally, we mention a growing literature focusing on the reconstruction of the network structure based on information about diffusion cascades~\cite{shen2014reconstructing,gomez2013structure,braunstein2019network,pajevic2009efficient,chwistek2020network,li2017reconstruction}.

In our study, we kept the models, the simulations and the theoretical analysis as generic as possible, without making assumptions about the nature of the social diffusion under consideration.
Nevertheless, it should be noted that the effectiveness of our heuristic depends on the specifics of the scenario at hand. For instance, hiding certain facts (such as coming in contact with a certain individual) might be easier if the diffusion is taking place in the real world as opposed to the virtual world. Consider social media platforms such as Twitter or Facebook, where all the online activities are logged by the platform. In such cases, the source can no longer hide its actions from the platform administrators (since they have access to these logs). Having said that, the source may still be able to hide from other observers who cannot access such logs and can only view publicly available information. Another noteworthy aspect of our analysis is that we only consider source detection from the perspective of identifying the node (e.g., the Facebook account) that initiated the diffusion process. This should not be confused with the problem of identifying the actual entity represented by the source node (e.g., the person or organization behind the Facebook account); such identification is out of the scope of our study. Finally, we comment on the applicability of source detection algorithms in the real world. As mentioned earlier, we avoided incorporating domain-specific details in our experiments to keep the implications as broad as possible. In practice, however, we suspect that source detection algorithms would be augmented by additional, domain-specific information. Having said that, our evaluation of these algorithms on real-life cascades (Appendix~\ref{app:real-cascades}) suggests that, even without such information, the algorithms can be effective, at least in narrowing down the search for the source.

The direct policy implication of our work is that one has to be sure the source has not been trying to hide itself to trust the source detection algorithms of today. This is because, as our experiments have shown, such algorithms can easily be fooled by a strategic source who is actively attempting to escape detection. Having said that, our work also points to the future---the necessary elements of the next generation’s source detection algorithms. In particular, since hiding by manipulating edges is so efficient compared to adding fake nodes, new algorithms need to identify spurious links---connections added since the start of the diffusion. Arguably, the need for such algorithms is more pressing than ever. Developing tamper-proof source detection algorithms would improve our chances of detecting the source of a viral cascade in the future.

\section{Methods}

\subsection{Basic Network Notation}

Let us denote by $G = (V, E) \in \G$ a network, where $V$ is the set of $n$ nodes and $E \subseteq V \times V$ is the set of edges. We denote an edge between nodes $v$ and $w$ by $(v,w)$, and we only consider \textit{undirected} networks, implying that we do not discern between edges $(v,w)$ and $(w,v)$. Moreover, we assume that networks do not contain self-loops, i.e., $\forall_{v \in V}(v,v) \notin E$. We denote by $\ER$ the set of all non-edges, i.e., $\ER = (V \times V) \setminus \left(E \cup \bigcup_{v \in V} (v,v)\right)$. A path in a network $G = (V,E)$ is an ordered sequence of distinct nodes, $\langle v_1, \ldots, v_k\rangle$, in which every two consecutive nodes are connected by an edge in $E$. We consider the length of a path to be the number of edges in that path. The set of all shortest paths between a pair of nodes, $v,w \in V$ is denoted by $\ps_G(v,w)$, while the distance between a pair of nodes $v,w \in V$, i.e., the length of a shortest path between them, is denoted by $d_G(v,w)$. Furthermore, a network is said to be \textit{connected} if and only if there exists a path between every pair of nodes in that network. We denote by $N_G(v)$ the set of \emph{neighbors} of $v$ in $G$, i.e., $N_G(v) = \{w \in V : (v,w) \in E\}$. We denote by $G^{V'}$ the subnetwork of $G$ induced by the nodes in $V' \subseteq V$, i.e., $G^{V'} = (V', E \cap (V' \times V'))$. Finally, for $E' \subseteq V \times V$ we denote by $G \cup E'$ the effect of adding set of edges $E'$ to $G$, i.e., $G \cup E' = (V, E \cup E')$. To make the notation more readable, we will often omit the network itself from the notation whenever it is clear from the context, e.g., by writing $d(v,w)$ instead of $d_G(v,w)$. This applies not only to the notation presented thus far, but rather to all notation in this article.

\subsection{Susceptible-Infected Model and Source Detection Algorithms}

In the Susceptible-Infected (SI) model, every node in the network is in one of two states: either susceptible (prone to be affected by the phenomenon) or infected (already affected by the phenomenon). The modeled process consists of discrete rounds. At the beginning of the process only the nodes belonging to the \textit{seed set} are in the infected state (in this work, the seed set consists of only the evader $\vs$). In every round $t$, every infected node makes each of its susceptible neighbors infected with probability $p$. The process ends after a certain number of rounds $T$. We denote the set of infected nodes after $T$ rounds by $\II$.

A source detection algorithm is a procedure that, based on the network $G$ and the set of infected nodes $\II$, aims at determining the source of diffusion. Every source detection algorithm considered in this work can be represented as a function $\sd: V \times \G \times 2^V \rightarrow \R$ that assigns the score $\sd(v,G,\II)$ to any node $v$, where the node with the highest score is selected by the algorithm as the most probable source of diffusion. We will assume that for any node $v \notin \II$ and any source detection algorithm we have $\sd(v,G,\II) = -\infty$ (as the seed node has to be infected and there is no mechanism of coming back to the susceptible state). In this work we focus on the source detection algorithms that are designed to detect the source of a diffusion process with a seed set consisting of only one node (see Shelke and Attar~\cite{shelke2019source} for a review of multiple source detection algorithms). More specifically, we consider the following source detection algorithms:

\begin{itemize}

\item Degree~\cite{comin2011identifying}---the score assigned to a given $v \in \II$ is the degree centrality of $v$ in $G^\II$, i.e.:
$$
\sddegr(v,G,\II) = |N_{G^\II}(v)|;
$$

\item Closeness~\cite{comin2011identifying}---the score assigned to a given $v \in \II$ is the closeness centrality of $v$ in $G^\II$, i.e.:
$$
\sdclos(v,G,\II) = \frac{1}{\sum_{w \in \II} d_{G^\II}(v,w)};
$$

\item Betweenness~\cite{comin2011identifying}---the score assigned to a given $v \in \II$ is the betweenness centrality of $v$ in $G^\II$, i.e.:
$$
\sdbetw(v,G,\II) = \sum_{u \neq w : u,w \in \II \setminus \{v\}} \frac{|\{\pi \in \Pi_{G^\II}(u,w): v \in \pi\}|}{|\Pi_{G^\II}(u,w)|};
$$

\item Eigenvector~\cite{comin2011identifying}---the score assigned to a given $v \in \II$ is the eigenvector centrality of $v$ in $G^\II$, i.e.:
$$
\sdeig(v,G,\II) = x_v
$$
where $x$ is the eigenvector corresponding to the largest eigenvalue of the adjacency matrix of $G^\II$;

\item Rumor~\cite{shah2011rumors}---the score assigned to a given $v \in \II$ is the rumor centrality of $v$ in $G^\II$, i.e.:
$$
\sdrumor(v,G,\II) = \frac{|\II|!}{\prod_{w \in \II} \Theta^v_w}
$$
where $\Theta^v_w$ is the size of the subtree of $w$ in the BFS (Breadth-First Search) tree of $G^\II$ rooted at $v$;

\item Random Walk~\cite{jain2016fast}---intended to approximate diffusion by random walks. The score of a given node $v$ is:
$$
\sdrwalk(v,G,\II) =
\begin{cases}
	\phi_0(v) & \mbox{if } \forall_{w \in \II} d_G(v,w) \leq T \\
	0 & \mbox{otherwise}
\end{cases}
$$
where $T$ is the number of rounds in the SI model and $\phi$ is as:
$$
\phi_t(v) =
\begin{cases}
	1 & \mbox{if } t=T \\
	(1-p) \phi_{t+1}(v) + \sum_{w \in N(v) \cap \II} \frac{p}{|N(v)|} \phi_{t+1}(w)  & \mbox{otherwise}
\end{cases}
$$
where $p$ is the probability of infection in the SI model.

\item Monte Carlo~\cite{antulov2015identification}---where for each node we repeated run a diffusion starting with that node and investigate for which of the nodes the infected set is the most similar to $\II$ (using Jaccard similarity). The score of a given node $v$ is:
$$
\sdmcarlo(v,G,\II) = \frac{1}{m} \sum_{i=1}^{m}\exp\left( \frac{-(\psi_J(\II,\II_{v,i}) - 1)^2}{a^2} \right)
$$
where $m$ is the number of Monte Carlo samples for each node, $\psi_J(A,B)=\frac{|A \cap B|}{|A \cup B|}$ is the Jaccard similarity measure, $\II_{v,i}$ is the set of infected nodes in the $i$-th Monte Carlo sample where the diffusion starts with $v$, and $a$ is the soft margin parameter.

\end{itemize}

There also exist more advanced source detection algorithms that are specifically designed to find the source of diffusion in tree networks~\cite{wang2014rumor,wang2015rooting,cai2018information}. However, applying them to general (i.e., cyclic) networks is exceedingly expensive in terms of computation time even for very small structures~\cite{antulov2015identification}. Other algorithms use different frameworks than the one considered in our work, where they analyze the problem of placing a number of sensors in a network that notify the user when diffusion reaches a specific node~\cite{pinto2012locating,xu2015scalable,paluch2018fast}.

\clearpage

\bibliographystyle{abbrv}
\bibliography{bibliography-source-detection}

\appendix

\clearpage
\section{Formal Definitions of the Decision Problems}
\label{app:formal-definitions}

We now formally define the computational problems faced by the evader. In what follows, let $\rnk(v,G,\sd,\II)$ denote the ranking position of $v$ among all nodes in $G$ according to source detection algorithm $\sd$ when the set of infected nodes is $\II$. More formally:
$$
\rnk(v,G,\sd,\II) = \left|\left\{ w \in V : \sd(w,G,\II) > \sd(v,G,\II) \right\}\right|.
$$
The goal of the evader is to hide by decreasing their position in the ranking produced by  $\sd$ (notice that decreasing the position in the ranking corresponds to maximizing the value of $\rnk(v,G,\sd,\II)$).

The first method of hiding that we consider is to add confederates to the network. Then, the problem faced by the evader is to determine the contacts of each confederate.
Since not every node in the network is necessarily willing to accept connections from these confederates, we define a subset of nodes, $\FS \subseteq V$, that would accept such connections. More formally, the problem can be defined as follows:

\begin{definition}[Hiding Source by Adding Nodes]
\label{def:hiding-source-subos}
The problem is defined by a tuple, $(G,\vs,\II,\sd,\thr,b,\Subo,\FS)$, where $G=(V,E)$ is a network, $\vs \in V$ is the evader, $\II \subseteq V$ is the set of infected nodes, $\sd$ is a source detection algorithm, $\thr \in \N$ is a safety threshold specifying the smallest ranking that the evader deems acceptable, $b \in \N$ is a budget specifying the maximum number of edges that can be added, $\Subo$ is the set of confederates to be added to the network, and $\FS \subseteq V$ is the set of nodes that the evader can connect to the confederates. The goal is then to identify a set $\Add \subseteq (\Subo \times \Subo) \cup (\Subo \times \FS)$ such that $|\Add| \leq b$, $(V \cup \Subo, E \cup \Add)^\II$ is connected and:
$$
\rnk\left( \vs,(V \cup \Subo, E \cup A),\sd,\II \cup \Subo \right) \geq \thr.
$$
If the algorithm $\sd$ is nondeterministic, then we require the above condition to be met for every possible realization of the algorithm.
\end{definition}

We also consider an alternative way in which the evader may conceal their true nature as the source of the diffusion. Instead of adding confederates to the network, the evader can modify (i.e., add or remove) the network edges after the diffusion has taken place. In this case, the problem faced by the evader can be defined as follows:

\begin{definition}[Hiding Source by Modifying Edges]
\label{def:hiding-source-rewiring}
The problem is defined by a tuple, $(G,\vs,\II,\sd,\thr,b,\FA,\FR)$, where $G=(V,E)$ is a network, $\vs \in V$ is the evader, $\II \subseteq V$ is the set of infected nodes, $\sd$ is a source detection algorithm, $\thr \in \N$ is a safety threshold specifying the smallest ranking that the evader deems acceptable, $b \in \N$ is a budget specifying the maximum number of edges that can be added or removed, $\FA \subseteq \ER$ is the set of edges that can be added, and $\FR \subseteq E$ is the set of edges that can be removed.
The goal is then to identify two sets, $\Add \subseteq \FA$ and $\Rem \subseteq \FR$, such that $|\Add|+|\Rem| \leq b$, $(V, (E \cup \Add) \setminus \Rem)^\II$ is connected and:
$$
\rnk\left( \vs,(V, (E \cup A) \setminus R ),\sd,\II \right) \geq \thr.
$$
If the algorithm $\sd$ is nondeterministic, then we require the above condition to be met for every possible realization of the algorithm.
\end{definition}

\clearpage

\section{Proofs of the Computational Complexity Results}
\label{app:complexity-proofs}

Table~\ref{tab:decision-summary-theorems} summarizes our findings and refers to the theorem corresponding to each result.

\begin{table}[tbh]
\centering
\begin{tabular}{ l c c }
Source detection algorithm & Modifying Edges & Adding Nodes \\
\hline
Degree & P (Theorem~\ref{thrm:npc-subos-degree}) & NP-complete (Theorem~\ref{thrm:npc-rewiring-degree}) \\
Closeness & NP-complete (Theorem~\ref{thrm:npc-subos-closeness}) & NP-complete (Theorem~\ref{thrm:npc-rewiring-closeness}) \\
Betweenness & NP-complete (Theorem~\ref{thrm:npc-subos-betweenness}) & NP-complete (Theorem~\ref{thrm:npc-rewiring-betweenness}) \\
Rumor & NP-complete (Theorem~\ref{thrm:npc-subos-rumor}) & NP-complete (Theorem~\ref{thrm:npc-rewiring-rumor}) \\
Random Walk & NP-complete (Theorem~\ref{thrm:npc-subos-rwalk}) & NP-complete (Theorem~\ref{thrm:npc-rewiring-rwalk}) \\
Monte Carlo & NP-complete (Theorem~\ref{thrm:npc-subos-mcarlo}) & NP-complete (Theorem~\ref{thrm:npc-rewiring-mcarlo}) \\
\hline
\end{tabular}
\caption{Summary of our computational complexity results. For different source detection algorithms, we consider the decision problem that the evader must solve in order hide optimally from the algorithm by either adding nodes or modifying edges. P = solvable in polynomial time; NP-complete = Non-deterministic Polynomial-time complete, implying that no known algorithm can solve it in polynomial time.}
\label{tab:decision-summary-theorems}
\end{table}

\begin{algorithm}[tbh!]
\caption{Finding an optimal solution for the Hiding Source by Adding Nodes problem given the Degree source detection algorithm.}
\label{alg:subos-degree}
\begin{algorithmic}[1]

\small

\Input{
Network $G=(V,E)$, evader $\vs \in V$, set of infected nodes $\II \subseteq V$, safety threshold $\thr \in \N$, budget $b \in \N$, set of evader-controlled nodes $\Subo$, set of supporters $\FS \subseteq V$.
}

\Output{
Solution $\Add$ to instance $(G,\vs,\II,\sddegr,\thr,b,\Subo,\FS)$ of the Hiding Source by Adding Nodes problem or $\perp$ if there is no solution.
}

\State $\goal \gets |N_{G^\II}(\vs)| + 1$
\State $\thr^* \gets \thr - |\{ v \in \II : |N_{G^\II}(v)| \geq g \}|$
\If {$\thr^* \leq 0$}
	\Return $\emptyset$
\EndIf

\State $S \gets \{ v \in \FS : |N_{G^\II}(v)| < \goal \}$

\For {$m \gets \max(0, \thr^*-|\Subo|), \ldots, \min(|S|,\thr^*)$}
	\State $S^* \gets \langle s_i \rangle_{i=1}^m$ such that $s_i$ is $i$-th node from $S$ in order of non-increasing $|N_{G_\II}(s_i)|$
	\If {$\left(m = 0 \lor \goal - |N_{G^\II}(s_m)| < |\Subo| \right) \land \left( m = \thr^* \lor |\FS|+|\Subo| > \goal \right)$}
		\State $\Add \gets \emptyset$
		\State $\Subo^* \gets \langle \delta_i \rangle_{i=1}^{\thr^*-m}$ such that $\forall_i \delta_i \in \Subo$ and $\forall_{i \neq j} \delta_i \neq \delta_j$
		\If {$m > 0$}
			\State $j^* \gets 1$
			\For {$s_i \in S^*$}
				\State $x_i \gets \goal - |N_{G^\II \cup \Add}(s_i)|$
				\If {$x_i < |\Subo^*|$}
					\While {$x_i > 0$}
						\State $\Add \gets \Add \cup \{(s_i,\delta_{j^*})\}$
						\State $j^* \gets \left( j^* \mod |\Subo^*| \right) + 1$
						\State $x_i \gets x_i - 1$
					\EndWhile
				\Else
					\State $\Add \gets \Add \cup \left( \{s_i\} \times \Subo^* \right) \cup select(x_i - |\Subo^*|, \Subo \setminus \Subo^*)$
				\EndIf
			\EndFor
		\EndIf
		\State $\delta^* \gets \argmax_{\delta_i \in \Subo^*} \left( \goal - |N_{G^\II \cup \Add}(\delta_i)| \right)$
		\If {$\goal - |N_{G^\II \cup \Add}(\delta^*)| \geq |\Subo^*|$}
			\For {$\delta_i \in \Subo^*$}
				\If {$\goal-|N_{G^\II \cup \Add}(\delta_i)| \geq |\Subo^*|$}
					\State $\Add \gets \Add \cup \left( \{\delta_i\} \times select \left( \goal - |N_{G^\II \cup \Add}(\delta_i)| - |\Subo^*| + 1, (\FS \cup \Subo) \setminus (\Subo^* \cup N_{G^\II \cup \Add}(\delta_i)) \right) \right)$
				\EndIf
			\EndFor
		\Else
			\If {$\sum_{\delta_i \in \Subo^*} \left( \goal - |N_{G^\II \cup \Add}(\delta_i)| \right) \mod 2 = 1$}
				\State $\Add \gets \Add \cup \left( \{\delta^*\} \times select(1, (\FS \cup \Subo) \setminus (\Subo^* \cup N_{G^\II \cup \Add}(\delta^*)))\right)$
			\EndIf
		\EndIf
		\State Connect nodes in $\Subo^*$ into a network such that degree of $\delta_i$ is $g - |N_{G^\II \cup \Add}(\delta_i)|$ using Havel-Hakimi algorithm
		\If {$m = 0$ and $G^\II \cup \Add$ disconnected}
				\State $\Add \gets \Add \cup \left( \{\delta_1\} \times select(1, \FS \setminus N_{G^\II \cup \Add})\right)$
		\EndIf
		\If {$|\Add| \leq b$}
			\State \Return $\Add$
		\EndIf
	
	\EndIf
\EndFor

\State \Return $\perp$

\end{algorithmic}
\end{algorithm}

\begin{theorem}
\label{thrm:npc-subos-degree}
The problem of Hiding Source by Adding Nodes is in P given the Degree source detection algorithm.
In particular, Algorithm~\ref{alg:subos-degree} finds a solution to the given instance $(G,\vs,\II,\sddegr,\thr,b,\Subo,\FS)$ of the problem.
\end{theorem}

\begin{proof}
We will analyze Algorithm~\ref{alg:subos-degree} and show that it finds a solution to the instance $(G,\vs,\II,\sddegr,\thr,b,\Subo,\FS)$ of the problem.

In order for a given set of edges $A$ to be a solution, we need to have at least $\thr$ infected nodes with degrees at least $\goal = |N_{G^\II}(\vs)|$ (the value computed in line~1).
Notice that some infected nodes might already have the required degree, so we only need to increase the degrees of $\thr^* = \thr - |\{ v \in \II : |N_{G^\II}(v)| \geq g \}|$ (the value computed in line~2).
If there are already at least $\thr$ infected nodes with degrees greater than $\vs$, the solution is the empty set, returned in line~3.

Notice that by adding a set of edges $\Add \subseteq (\Subo \times \Subo) \cup (\Subo \times (\FS \cup \{\vs\}))$ we can only increase degrees of nodes in $\Subo \cup \FS \cup \{\vs\}$.
Since it is never beneficial to increase the degree of $\vs$, a solution has to increase the degree of at least $\thr^*$ nodes in $\Subo \cup \FS$, that initially have lower degree than $\vs$, to at least $\goal$.
In line~4 we identify the set of nodes in $\FS$ that have degrees lower than $\vs$, and thus are candidates for satisfying the threshold $\thr^*$ (notice that we already counted the nodes in $\FS$ whose degrees are at least $\goal$ in line~2).

In lines~6-31 we will compute a smallest set of edges $\Add$ that needs to be added to $G$ so that the threshold $\thr^*$ is satisfied by $m$ nodes from $S$ and $\thr^*-m$ nodes from $\Subo$, for every potential value of $m$ (the loop in line~5).
Notice that if $\thr^* > |\Subo|$ we need to increase the degree of at least $\thr^*-|\Subo|$ nodes in $S$, otherwise it is possible to satisfy the threshold with just nodes in $\Subo$ (see the expression $\max(0, \thr^*-|\Subo|)$ in line~5).
Notice also that we never need to increase the degree of more than $\thr^*$ nodes in $S$ (see the expression $\min(|S|,\thr^*)$ in line~5).
If the said smallest set of edges $\Add$ for a given $m$ is within the evader's budget (the test performed in line~32), we return it in line~33.
Notice that if for every $m$ the size of such smallest $\Add$ is greater than the budget, then there is no solution to the problem (the value $\perp$ returned in line~34).

Since all edges added to nodes in $S$ (and increasing their degree) must connect them to nodes in $\Subo$, increasing the degrees of nodes that already have high degrees will result in the smallest possible size of $\Add$ (notice that if we were allowed to add edges between the nodes in $S$, we would need to take existing edges in $S \times S$ into consideration).
Hence, in line~6 we select the sequence $S^*$ of $m$ nodes from $S$ that will count towards satisfying the threshold $\thr^*$ as nodes with greatest degrees.
Notice that if $m=0$ then $S^*$ is empty.

There are two more conditions necessary for the existence of a solution $\Add$ for a given $m$ (both tested in line~7).
Every one node $s_i \in S^*$ contributing to satisfying the threshold $\thr^*$ needs to be connected with at least $\goal - |N_{G^\II}(s_i)|$ nodes from $\Subo$, and expression $\goal - |N_{G^\II}(s_m)| < |\Subo|$ in line~7 checks this condition for node $s_m$, which needs the greatest number of connections.
Notice that if $m=0$ then there is no need to check this condition.
The second condition is that every node from $\Subo$ contributing to satisfying the threshold $\thr^*$ needs to be connected with at least $\goal$ nodes from $\Subo \cup \FS$ (as initially its degree is $0$), and the expression $|\FS|+|\Subo| > \goal$ in line~7 checks this condition.
Notice that if $m=\thr^*$ then there is no need to check this condition.

In line~8 we initialize the solution $\Add$ (as we are now sure it exists), whereas in line~9 we select the sequence $\Subo^*$ of $\thr^*-m$ nodes from $\Subo$ that will count towards satisfying the threshold $\thr^*$.
Notice that if $m=\thr^*$ then $\Subo^*$ is empty.

In lines~10-20 we increase the degree of all nodes in $\S^*$ by connecting them primarily with nodes in $\Subo^*$ (as either way we need to increase their degrees and this way we obtain the smallest size of $\Add$), and then, if there are not enough nodes in $\Subo^*$, with other nodes from $\Subo$.
Let $y_i$ be the number of additional edges we need to connect to $\delta_i \in \Subo^*$ to increase its degree to $\goal$ after executing lines~10-20, i.e., $y_i = \goal - |N_{G^\II \cup \Add}(\delta_i)|$.
Notice that because of the way we distribute the connections with $S^*$ among the nodes in $\Subo^*$, we have that $\forall_{\delta_i, \delta_j \in \Subo^*} |y_i - y_j| \leq 1$ (see Figure~\ref{fig:subos-degree}).

\begin{figure}[tbh!]
\centering
\includegraphics[width=.8\linewidth]{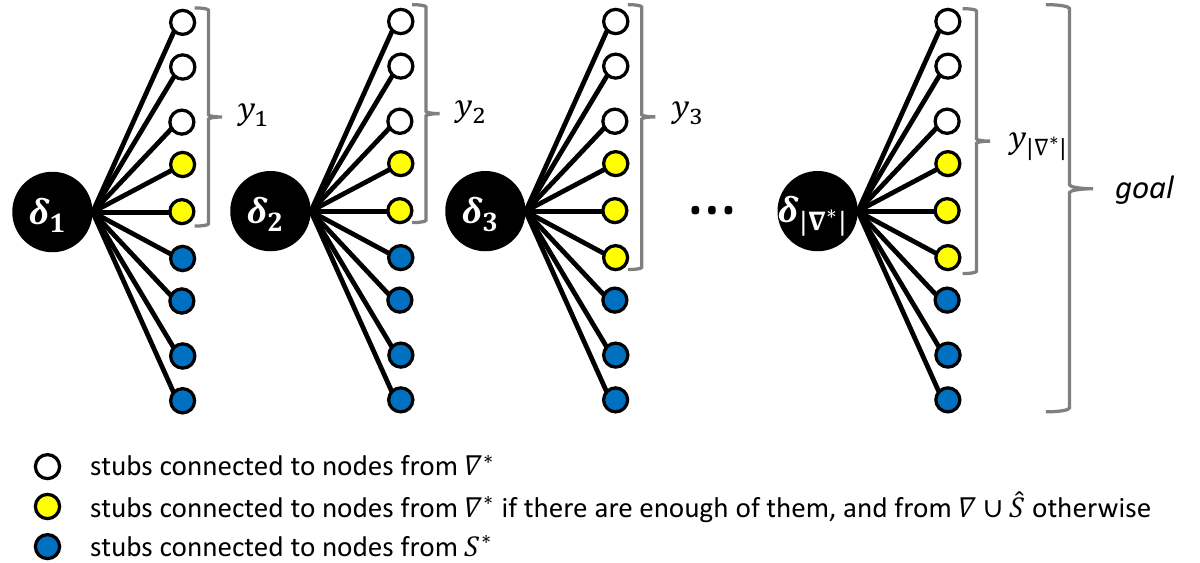}
\caption{
Distribution of stubs for nodes in $\Subo^*$ in the proof of Theorem~\ref{thrm:npc-subos-degree}.
}
\label{fig:subos-degree}
\end{figure}

Assume that for given $S^*$ and $\Subo^*$ there exists a minimal size solution such that $\exists_{\delta_i, \delta_j \in \Subo^*} |y_i - y_j| > 1$.
Without loss of generality, assume that $y_i > y_j + 1$, i.e., $\delta_j$ is connected with at least two more nodes from $S^*$ than $\delta_i$.
We can disconnect $\delta_j$ with any $v \in \FS$ among its neighbors not connected to $\delta_i$ (notice that as $y_i > y_j$, there must exist at least one such node), and instead connect $v$ with $\delta_i$.
We decreased the difference between $y_i$ and $y_j$ by one.
If by performing this operation we decreased the degree of $\delta_j$ below $\goal$, we should disconnect one of the neighbors of $\delta_i$ outside $S^*$ not connected to $\delta_j$ (again, from $y_i > y_j$ it follows that there exists at least one such node), and instead connect it to $\delta_j$, thus ensuring the correctness of the solution.
By repeating this operation we can decrease the maximal difference between any $y_i$ and any $y_j$ to $1$.
Hence, if there exists a minimal size solution such that $\exists_{\delta_i, \delta_j \in \Subo^*} y_i - y_j > 1$ then there also exists the same size solution such that $\forall_{\delta_i, \delta_j \in \Subo^*} |y_i - y_j| \leq 1$.

After increasing the degrees of all nodes in $S^*$ to at least $\goal$, we now need to increase the degrees of the nodes in $\Subo^*$.
Again, to obtain the smallest possible size of $\Add$, we will add as many edges as possible from $\Subo^* \times \Subo^*$, as opposed to between members of $\Subo^*$ and nodes from outside $\Subo^*$.
Let us denote the minimal number of necessary new connections among the members of $\Subo^*$ by $z$.
In line~21 we identify $\delta^*$ as the member of $\Subo^*$ that needs the greatest number of connections to be added to it (i.e., with the greatest value of $y_i$).
Notice that because of the way we constructed the set $\Add$ thus far, all other nodes in $\Subo^*$ need exactly as many new connections as $\delta^*$, or at most one less.
Hence, either all nodes in $\Subo^*$ need exactly $z$ new connections, or some of them need $z$ new connection, while others (including $\delta^*$) need $z+1$ new connections (all considered at the moment of executing line~21).

In line~22 we check whether the maximal number of required new edges is greater than the size of $\Subo^*$.
If that is the case, it is inevitable to connect some nodes in $\Subo^*$ with nodes from outside of $\Subo^*$, which we do in lines~23-25.
Notice that the choice of nodes from outside of $\Subo^*$ does not matter (as either way only one end of the edge will contribute to satisfying the threshold $\thr^*$), so we use the function $select(k,X)$ that selects $k$ elements from the set $X$.
Let us also assume that it prefers members of $\FS$.
Notice also that after this operation all nodes in $\Subo^*$ will need exactly $|\Subo^*|-1$ new connections.

Moreover, the sum of degrees in a network induced by $\Subo^*$ has to be even, hence in lines~27-28 we add additional edge with one end in $\Subo^*$ if necessary.
Notice that if we executed lines~23-25 then the sum of degrees is guaranteed to be even (as it is $|\Subo^*|(|\Subo^*|-1)$).
Notice also that since we add this edge to $\delta^*$, it is still true that either all nodes in $\Subo^*$ need $z$ more connections or some of them need $z$, while others need $z+1$ connections.

In line~29 we connect the nodes in $\Subo^*$ into a network that finally satisfies the threshold $\thr^*$.
We do it using the Havel-Hakimi algorithm~\cite{havel1955remark,hakimi1962realizability}, which connects a given set of nodes into a network with a given sequence of degrees if it is possible.

We will now show that this is indeed possible. We will do so using the Erd\H{o}s-Gallai theorem~\cite{erdos1960graphs}, which states that a given sequence $d_1 \geq \ldots \geq d_n$ can be realized as a network if an only if $\sum_{i=1}^n d_i$ is even and:
\begin{equation}
\label{eq:erdos-gallai}
\forall_{1 \leq k \leq n} \sum_{i=1}^k d_i \leq k(k-1) + \sum_{i=k+1}^{n} \min(d_i,k).
\end{equation}
As argued above, the sum of the number of new connections required by nodes in $\Subo^*$ to satisfy the threshold $\thr^*$ is even.
Let $n$ denote the size of $|\Subo^*|$, and let $m$ denote the number of nodes in $\Subo^*$ that require $z+1$ new connections (notice that $0 \leq m < n$).
The sequence of the degrees is such that $d_i=z+1$ if $i \leq m$ and $d_i=z$ otherwise.
We can assume that $z<n-1$, as otherwise all nodes in $\Subo^*$ need exactly $z=n-1$ new connections (as we executed lines~22-25 before), and the sequence of degrees can be realized by connecting nodes in $\Subo^*$ into a clique.
In what follows let $\mathcal{L}$ denote the left hand side of equation~\ref{eq:erdos-gallai}, and let $\mathcal{R}$ denote the right hand side of equation~\ref{eq:erdos-gallai}.
We will now show that Equation~\ref{eq:erdos-gallai} holds for nodes in $\Subo^*$, by performing calculations for four different cases:

\begin{itemize}
\item \textbf{Case I} $k \geq m \land z \geq k$:
\begin{align*}
\mathcal{L} &= (z+1)m + (k-m)z = kz+m \\
\mathcal{R} &= k(k-1)+(n-k)k = kn - k \\
\mathcal{R} - \mathcal{L} &= kn - k -kz -m \geq k(z+2) -2k - kz = 0;
\end{align*}

\item \textbf{Case II} $k \geq m \land z < k$:
\begin{align*} 
\mathcal{L} &= (z+1)m + (k-m)z = kz+m \\
\mathcal{R} &= k(k-1)+(n-k)z = k^2+nz-k-kz \\
\mathcal{R} - \mathcal{L} &= k^2+nz-k-2kz-m \geq k(z+2)+nz-2k-2kz = (n-k)z \geq 0;
\end{align*} 

\item \textbf{Case III} $k < m \land z \geq k$:
\begin{align*}
\mathcal{L} &= k(z+1) = kz+k \\
\mathcal{R} &= k(k-1)+(n-k)k = nk-k \\
\mathcal{R} - \mathcal{L} &= nk - kz - 2k = k(n-(z+2)) \geq 0;
\end{align*}

\item \textbf{Case IV} $k < m \land z < k$:
\begin{align*}
\mathcal{L} &= k(z+1) = kz+k \\
\mathcal{R} &= k(k-1)+(m-k)(z+1)+(n-m)z = k^2+m+nz-2k-kz \\
\mathcal{R} - \mathcal{L} &= k^2+m+nz-3k-2kz = k^2+m-k+nz-2k(z+1)+(z+1)^2-(z+1)^2 \\
&\geq (k-z-1)^2 +(m-k) + (z+2)z - (z+1)^2 = (k-z-1)^2 +(m-k) - 1 \geq 0 + 1 - 1 = 0.
\end{align*}
\end{itemize}

Finally, notice that if $m=0$ and so far we only added edges between the members of $\Subo^*$, we need to connect them to the rest of the network, which we do in lines~30-31.
Thanks to our assumption that the function $select$ prioritize nodes in $\FS$, if we added at least one edge between a member of $\Subo^*$ and a node from outside $\Subo^*$ then the network is already connected.
\end{proof}

\begin{theorem}
\label{thrm:npc-subos-closeness}
The problem of Hiding Source by Adding Nodes is NP-complete given the Closeness source detection algorithm.
\end{theorem}

\begin{proof}
The problem is trivially in NP, since after adding a given set of edges $\Add$, it is possible to compute the closeness centrality ranking of all nodes in $G^\II$ in polynomial time.

We will now prove that the problem is NP-hard.
To this end, we will show a reduction from the NP-complete \textit{Dominating Set} problem.
The decision version of this problem is defined by a network, $H=(V',E')$, where $V'=\{v_1,\ldots,v_n\}$, and a constant $k \in \N$, where the goal is to determine whether there exist $V^* \subseteq V'$ such that $|V^*|=k$ and every node outside $V^*$ has at least one neighbor in $V^*$, i.e., $\forall_{v \in V' \setminus V^*} N_H(v) \cap V^* \neq \emptyset$.

Let $(H,k)$ be a given instance of the Dominating Set problem.
We will now construct an instance of the Hiding Source by Adding Nodes problem.

First, let us construct a network $G=(V,E)$ where:
\begin{itemize}
\item $V = V' \cup \{ \vs, x, u, w, a_1, a_2, a_3 \} \cup \bigcup_{i=1}^{2n+k-1} \{y_i\}$,
\item $E = E' \cup \{(\vs,x), (x,u), (u,w), (w,a_1), (w,a_2), (w,a_3)\} \cup \bigcup_{i=1}^{2n+k-1} \{(u,y_i)\} \cup \bigcup_{i=1}^n \{(w,v_i)\}$.
\end{itemize}
An example of the construction of the network $G$ is presented in Figure~\ref{fig:npc-subos-closeness}.

\begin{figure}[tbh!]
\centering
\includegraphics[width=.9\linewidth]{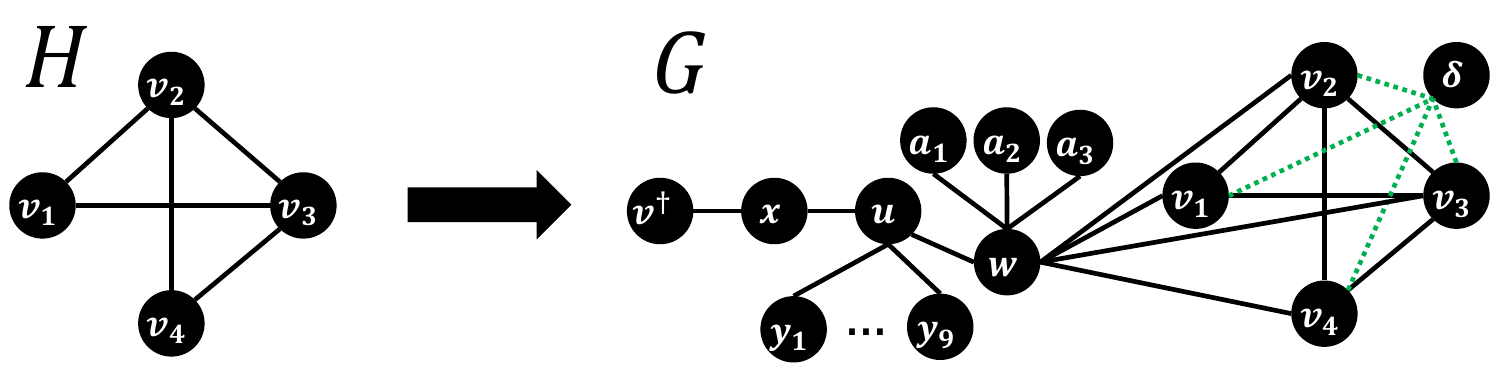}
\caption{
Construction of the network used in the proof of Theorem~\ref{thrm:npc-subos-closeness}.
Green dotted edges are allowed to be added.
}
\label{fig:npc-subos-closeness}
\end{figure}

Now, consider the instance $(G,\vs,\II,\sd,\thr,b,\Subo,\FS)$ of the Hiding Source by Adding Nodes problem, where:
\begin{itemize}
\item $G$ is the network we just constructed,
\item $\vs$ is the evader,
\item $\II=V$,
\item $\sd$ is the Closeness source detection algorithm,
\item $\thr=3n+k+6$ is the safety threshold,
\item $b=k$, where $k$ is the size of the dominating set from the Dominating Set problem instance,
\item $\Subo=\{\delta\}$,
\item $\FS = V'$, i.e., additional node can only be connected with the nodes in $V'$.
\end{itemize}

First, let us analyze the closeness centrality values of the nodes in $G$ after the addition of any $A \subseteq \Subo \times \FS$.
Let $D_v$ denote the sum of distances from $v$ to all nodes in the network, i.e., $D_v = \sum_{w \in V} d(v,w)$.
Notice that $\sdclos(v,G,\II) = \frac{1}{\sum_{w \in \II} D_v}$, which implies that a greater value of $D_v$ leads to a lower position of the ranking of nodes according to the Closeness source detection algorithm.
Moreover, let $z_A$ denote the sum of distance between $\delta$ and members of $V'$ after the addition of $A$.
Table~\ref{tab:npc-subos-closeness} presents the computation of $D_v$ for every node $v \in V \cup \Subo$ after the addition of a given $A \subseteq \Subo \times \FS$.

\begin{table}[tbh]
\small
\centering
\begin{tabular}{ l c c c c c c c c c }
$v$ & $d(v,\vs)$ & $d(v,x)$ & $d(v,u)$ & $d(v,w)$ & $\sum_j d(v,y_j)$ & $\sum_j d(v,a_j)$ & $\sum_j d(v,v_j)$ & $d(v,\delta)$ & $D_v$ \\
\hline
$\vs$ & $0$ & $1$ & $2$ & $3$ & $3(2n+k-1)$ & $12$ & $4n$ & $5$ & $10n+3k+20$ \\
$x$ & $1$ & $0$ & $1$ & $2$ & $2(2n+k-1)$ & $9$ & $3n$ & $4$ & $7n+2k+15$ \\
$u$ & $2$ & $1$ & $0$ & $1$ & $2n+k-1$ & $6$ & $2n$ & $3$ & $4n+k+12$ \\
$w$ & $3$ & $2$ & $1$ & $0$ & $2(2n+k-1)$ & $3$ & $n$ & $2$ & $5n+2k+9$ \\
$y_i$ & $3$ & $2$ & $1$ & $2$ & $2(2n+k-2)$ & $9$ & $3n$ & $4$ & $7n+2k+17$ \\
$a_i$ & $4$ & $3$ & $2$ & $1$ & $3(2n+k-1)$ & $4$ & $2n$ & $3$ & $8n+3k+14$ \\
$v_i$ & $4$ & $3$ & $2$ & $1$ & $3(2n+k-1)$ & $6$ & $\leq 2(n-1)$ & $\leq 3$ & $\leq 8n+3k+14$ \\
$\delta$ & $5$ & $4$ & $3$ & $2$ & $4(2n+k-1)$ & $9$ & $z_A$ & $0$ & $8n+4k+19+z_A$ \\
\hline
\end{tabular}
\caption{Sums of distances between nodes of the network after the addition of $A \subseteq \Subo \times \FS$, used in the proof of Theorem~\ref{thrm:npc-subos-closeness}.}
\label{tab:npc-subos-closeness}
\end{table}

Since the safety threshold is $\thr=3n+k+6$, all other nodes (including $\delta$) must have greater closeness centrality than $\vs$ after the addition of a given $A$ in order for the said $A$ to be a solution to the constructed instance of the problem of Hiding Source by Adding Nodes.
Notice that after adding any $A$ to the network we have $D_v < D_{\vs}$ for any $v \in V \setminus \{\vs,\delta\}$ (based on the formulas for $D_v$ in Table~\ref{tab:npc-subos-closeness}).
Hence, a given $A$ is a solution to the constructed instance of the Hiding Source by Adding Nodes problem if and only if we have $D_\delta < D_{\vs}$ after the addition of $A$.

Let us now analyze the value of $D_\delta$ after the addition of a given $A$.
We have that:
$$
D_\delta = 8n+4k+19+z_A = 8n+4k+19+|A|+2|V'_A|+3(n-|A|-|V'_A|) = 11n+4k+19-2|A|-|V'_A|
$$
where $V'_A = \{ v_i \in V' \setminus N(\delta) : N(v_i) \cap N(\delta) \neq \emptyset\}$.
Notice we have that $|V'_A| \leq n-|A|$, which gives us:
$$
D_\delta \geq 11n+4k+19-2|A|-(n-|A|) = 10n+4k+19-|A|.
$$
Hence, given that $D_{\vs}=10n+3k+20$, we have that $D_\delta < D_{\vs}$ if and only if $|A|=k$ and |$|V'_A| = n-k$, i.e., $\delta$ is connected with $k$ nodes in $V'$ and every other node in $V'$ has a neighbor who is connected with $\delta$.

We will now show that the constructed instance of the Hiding Source by Adding Nodes problem has a solution if and only if the given instance of the Dominating Set problem has a solution.

Assume that there exists a solution to the given instance of the Dominating Set problem, i.e., a subset $V^* \subseteq V'$ of size $k$ such that all other nodes have a neighbor in $V^*$.
After adding to $G$ the set $\Add = \{\delta\} \times V^*$ we have that $|\Add|=k$ and every node in $V' \setminus N(\delta)$ has a neighbor who is connected with $\delta$.
We showed that if there exists a solution to the given instance of the Dominating Set problem, then there also exists a solution to the constructed instance of the Hiding Source by Adding Nodes problem.

Assume that there exists a solution $\Add$ to the constructed instance of the Hiding Source by Modifying Edges problem.
As shown above, we must have $|\Add|=k$ and every node in $V' \setminus N(\delta)$ has a neighbor who is connected with $\delta$.
Therefore $\{v_i \in V' : (\delta,v_i) \in \Add\}$ is a dominating set in $H$ of size exactly $k$.
We showed that if there exists a solution to the constructed instance of the Hiding Source by Adding Nodes problem, then there also exists a solution to the given instance of the Dominating Set problem.

This concludes the proof.
\end{proof}

\begin{theorem}
\label{thrm:npc-subos-betweenness}
The problem of Hiding source by Adding Nodes is NP-complete given the Betweenness source detection algorithm.
\end{theorem}

\begin{proof}
The problem is trivially in NP, since after adding a given set of edges $\Add$, it is possible to compute the betweenness centrality ranking of all nodes in $G^\II$ in polynomial time.

We will now prove that the problem is NP-hard.
To this end, we will show a reduction from the NP-complete \textit{Finding $k$-Clique} problem.
The decision version of this problem is defined by a network, $H=(V',E')$, where $V'=\{v_1,\ldots,v_n\}$, and a constant $k \in \N$, where the goal is to determine whether there exist $k$ nodes forming a clique in $H$.

Let $(H,k)$ be a given instance of the Finding $k$-Clique problem.
Let us assume that $k \geq 3$, all other instances can be easily solved in polynomial time.
We will now construct an instance of the Hiding Source by Modifying Edges problem.

First, let us construct a network $G=(V,E)$ where:
\begin{itemize}
\item $V = V' \cup \{ \vs, \delta, u, w \} \cup \bigcup_{v_i,v_j \in V' : (v_i,v_j) \notin E'} \{\bar{e}_{i,j}\} \cup \bigcup_{i=1}^{k} \{x_i\} \cup \bigcup_{i=1}^{k^3} \{y_i\}$,
\item $E = \{w\} \times (V \setminus \{\delta\}) \cup \bigcup_{\bar{e}_{i,j} \in V} \{(\bar{e}_{i,j},v_i),(\bar{e}_{i,j},v_j)\} \cup \bigcup_{y_i, y_j \in V} \{(y_i,y_j)\} \cup \bigcup_{i=1}^{k} \{(\vs,x_i)\} \cup \{(u,x_1), (u,x_2)\}$.
\end{itemize}

In what follows we denote the set of nodes $x_1,\ldots,x_k$ by $X$, and the set of nodes $y_1,\ldots,y_{k^3}$ by $Y$.
Notice that a node $\bar{e}_{i,j}$ exists in $V$ if and only if $v_i,v_j$ are not connected in $H$.
An example of the construction of the network $G$ is presented in Figure~\ref{fig:npc-subos-betweenness}.

\begin{figure}[tbh!]
\centering
\includegraphics[width=.9\linewidth]{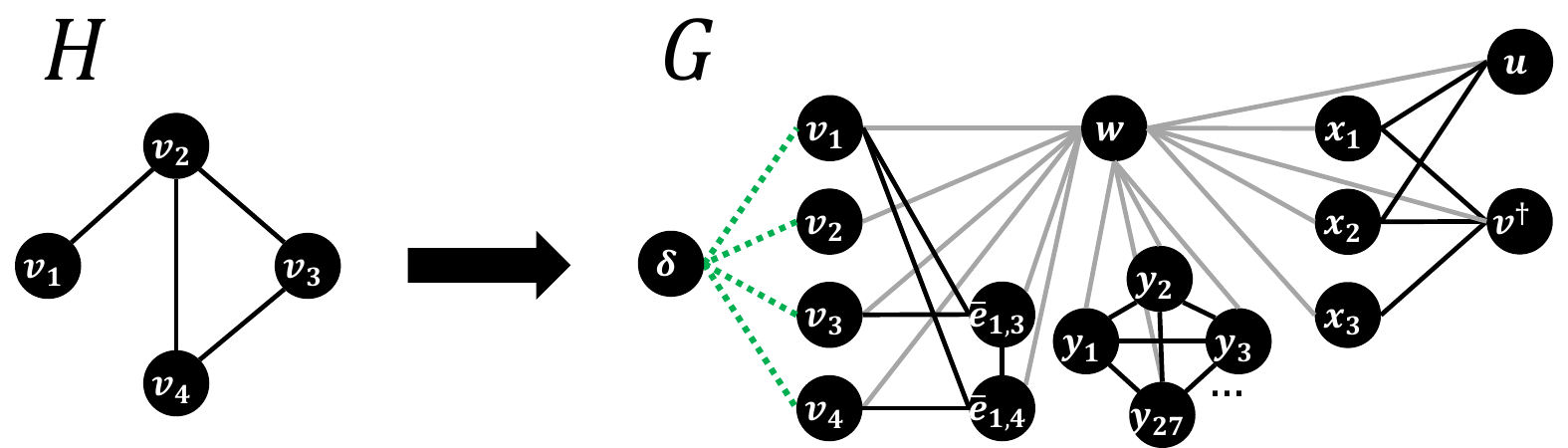}
\caption{
Construction of the network used in the proof of Theorem~\ref{thrm:npc-subos-betweenness}.
Green dotted edges are allowed to be added.
Edges incident with $w$ are printed grey for better readability.
}
\label{fig:npc-subos-betweenness}
\end{figure}

Now, consider the instance $(G,\vs,\II,\sd,\thr,b,\Subo,\FS)$ of the Hiding Source by Adding Nodes problem, where:
\begin{itemize}
\item $G$ is the network we just constructed,
\item $\vs$ is the evader,
\item $\II=V$,
\item $\sd$ is the Betweenness source detection algorithm,
\item $\thr=k+2$ is the safety threshold,
\item $b=k$ is the budget of the evader,
\item $\Subo=\{\delta\}$,
\item $\FS = V'$, i.e., additional node can only be connected with the nodes in $V'$.
\end{itemize}

Let $C_v$ denote the value of $\sdbetw(v,G,\II)$, i.e., the value used to compute determining the source of diffusion.
To remind the reader:
$$
C_v = \sum_{v' \neq v'' : v',v'' \in \II \setminus \{v\}} \frac{|\{\pi \in \Pi(v',v''): v \in \pi\}|}{|\Pi(v',v'')|}
$$
where $\Pi(v',v'')$ is the set of shortest paths between the nodes $v'$ and $v''$.

We will now make the following observations about the values of $C_v$ in $G$ after adding an arbitrary $A \subseteq \{\delta\} \times \FS$:
\begin{itemize}
\item $C_{\vs} = \frac{k(k-1)}{4}-\frac{1}{6}$, as it controls one of three shortest paths between $x_1$ and $x_2$ (the others being controlled by $w$ and $u$), and one of two shortest paths between all other pair of nodes in $X$ (the other being controlled by $w$),
\item $C_u = \frac{1}{3}$, as it controls one of three shortest paths between $x_1$ and $x_2$ (the others being controlled by $w$ and $\vs$),
\item $C_{x_i} = 0$, as it does not control any shortest paths,
\item $C_{a_i} = 0$, as it does not control any shortest paths,
\item $C_w \geq k^3(k+n+4)$, as it controls all shortest paths between nodes in $Y$ and all other nodes,
\item $C_{\bar{e}_{i,j}} \leq \frac{1}{2}$, as it controls one of at least two shortest paths between $v_i$ and $v_j$ (the other being controlled by $w$),
\item $C_{v_i}=0$ if $v_i$ is not connected with $\delta$, as it does not control any shortest paths,
\item $C_{v_i} \geq \frac{k^3}{|A|}$ if $v_i$ is connected with $\delta$, as it controls one of $|A|$ paths between $\delta$ and nodes in $Y$ (the others being controlled by other nodes in $V'$ connected to $\delta$),
\item $C_{\delta}=z_A\frac{1}{3}+\left(\frac{|A|(|A|-1)}{2}-z_A\right)\frac{1}{2}$, where $z_A$ is the number of pairs $v_i,v_j \in V'$ connected with $\delta$ such that $\bar{e}_{i,j} \in E$, as $\delta$ controls one of three shortest paths between such pairs of $v_i,v_j \in V'$ (the others being controlled by $w$ and $\bar{e}_{i,j}$), while $\delta$ controls one of two shortest paths between all other pairs of $v_i,v_j \in V'$ that it is connected two (the other being controlled by $w$).
\end{itemize}

Notice that we have $C_{\vs} < k^2$, and since we assumed that $k \geq 3$ we also have $C_{\vs} \geq 1\frac{1}{3}$.
Hence, the only nodes that can have greater value of $C_v$ (and higher position in the source detection algorithm ranking) are $w$, $\delta$ and nodes in $V'$ connected to $\delta$.
Since the safety threshold is $\thr=k+2$ and the evader's budget is $b=k$, it implies that $\delta$ must be connected with exactly $k$ nodes in $V'$.
Notice also that $w$ and nodes in $V'$ connected to $\delta$ have greater betweenness centrality than $\vs$ no matter the choice of $A$.
Hence, a given set $A$ is a solution to the constructed instance of the Hiding Source by Adding Nodes problem if and only if $\delta$ is connected with exactly $k$ nodes from $V'$ and $\delta$ has greater betweenness centrality than $\vs$.

Let us now analyze the betweenness centrality of $\delta$ when it is connected with $k$ nodes in $V'$ (in which case $|A|=k$):
$$
C_{\delta}=z_A\frac{1}{3}+\left(\frac{|A|(|A|-1)}{2}-z_A\right)\frac{1}{2}=\frac{k(k-1)}{4}-\frac{z_A}{6},
$$
where $z_A$ is the number of pairs $v_i,v_j \in V'$ connected with $\delta$ such that $\bar{e}_{i,j} \in E$.
Notice that if $z_A=0$ then $C_{\delta}=\frac{k(k-1)}{4} > C_{\vs}$.
However, if $z_A \geq 1$ then $C_{\delta} \leq \frac{k(k-1)}{4} - \frac{1}{6} = C_{\vs}$.
Notice that $z_A=0$ only when the nodes connected with $\delta$ form a clique in $H$ (as node $\bar{e}_{i,j} \in E$ is added to $V$ only when nodes $v_i$ and $v_j$ are not neighbors in $H$).
Hence, $\delta$ has greater betweenness centrality than $\vs$ if and only if nodes in $V'$ connected with $\delta$ form a clique in $H$.

We will now show that the constructed instance of the Hiding Source by Adding Nodes problem has a solution if and only if the given instance of the Finding $k$-Clique problem has a solution.

Assume that there exists a solution to the given instance of the Finding $k$-Clique problem, i.e., a subset $V^* \subseteq V'$ forming a $k$-clique in $H$.
Notice that for $\Add = \{\delta\} \times V^*$ we have $|\Add|=k$ and $z_{\Add}=0$.
We showed that if there exists a solution to the given instance of the Finding $k$-Clique problem, then there also exists a solution to the constructed instance of the Hiding Source by Adding Nodes problem.

Assume that there exists a solution $\Add$ to the constructed instance of the Hiding Source by Modifying Edges problem.
As observed above, we must have $|\Add|=k$ and the nodes that $\delta$ is connected with, i.e., nodes $V^* = \{ v_i \in V':(\delta,v_i) \in \Add\}$, must form a clique in $H$.
We showed that if there exists a solution to the constructed instance of the Hiding Source by Adding Nodes problem, then there also exists a solution to the given instance of the Finding $k$-Clique problem.

This concludes the proof.
\end{proof}

\begin{theorem}
\label{thrm:npc-subos-rumor}
The problem of Hiding source by Adding Nodes is NP-complete given the Rumor source detection algorithm.
\end{theorem}

\begin{proof}
The problem is trivially in NP, since after adding a given set of edges $\Add$, it is possible to compute the rumor centrality ranking of all nodes in $G^\II$ in polynomial time.

We will now prove that the problem is NP-hard.
To this end, we will show a reduction from the NP-complete \textit{Exact 3-Set Cover} problem.
The decision version of this problem is defined by a universe, $U=\{u_1,\ldots,u_{3k}\}$, and a collection of sets $S=\{S_1,\ldots,S_{|S|}\}$ such that $\forall_i S_i \subset U$ and $\forall_i |S_i|=3$, where the goal is to determine whether there exist $k$ elements of $S$ the union of which equals $U$.

Let $(U,S)$ be a given instance of the Exact 3-Set Cover problem.
Assume that $k \geq 17$, all other instances can be easily solved in polynomial time.
We will now construct an instance of the Hiding Source by Adding Nodes problem.

First, let us construct a network $G=(V,E)$ where:
\begin{itemize}
\item $V = \{ \vs, w, x, a_1, a_2 \} \cup U \cup S \cup \bigcup_{i=1}^{|S|} \{y_i\} \cup \bigcup_{i=1}^{k} \{Q_i\} \cup \bigcup_{i=1}^{3k} \{z_i\}$,
\item $E = \{(w,x),(w,\vs),(x,a_1),(x,a_2)\} \cup \left(\{w\} \times (Y \cup S \cup Q \cup Z)\right) \cup (Z \times U) \cup \bigcup_{u_i \in S_j} \{(u_i,S_j)\} \cup \bigcup_{u_i} \{(u_i,Q_{\ceil{\frac{i}{3}}})\}$.
\end{itemize}
We denote the set of nodes $\{a_1,a_2\}$ by $A$, the set of nodes $\{y_1,\ldots,y_{|S|}\}$ by $Y$, the set of nodes $\{z_1,\ldots,z_{3k}\}$ by $Z$, and the set of nodes $\{Q_1,\ldots,Q_{k}\}$ by $Q$.
An example of the construction of the network $G$ is presented in Figure~\ref{fig:npc-subos-rumor}.

\begin{figure}[tbh!]
\centering
\includegraphics[width=.9\linewidth]{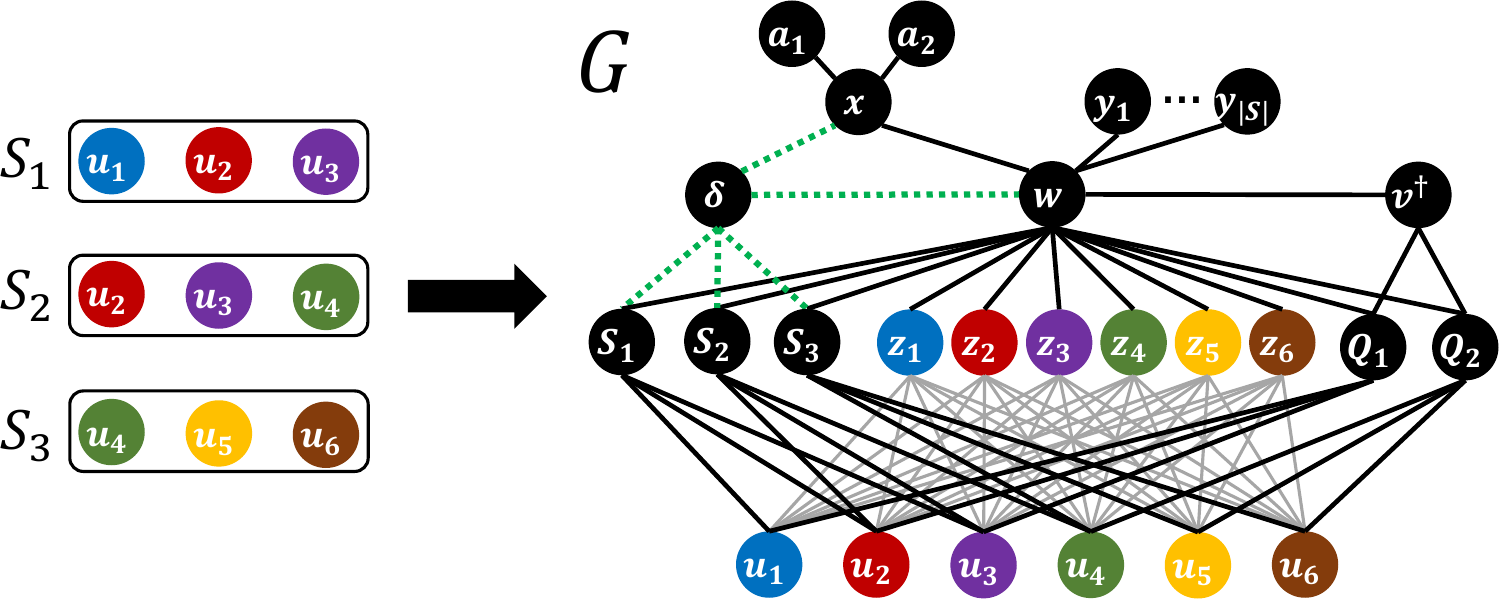}
\caption{
Construction of the network used in the proof of Theorem~\ref{thrm:npc-subos-rumor}.
Green dotted edges are allowed to be added.
}
\label{fig:npc-subos-rumor}
\end{figure}

Now, consider the instance $(G,\vs,\II,\sd,\thr,b,\Subo,\FS)$ of the Hiding Source by Adding Nodes problem, where:
\begin{itemize}
\item $G$ is the network we just constructed,
\item $\vs$ is the evader,
\item $\II=V$, i.e., all nodes in $G$ are infected,
\item $\sd$ is the Rumor source detection algorithm,
\item $\thr=1$,
\item $b=k+2$,
\item $\Subo = \{\delta\}$,
\item $\FS = \{w,x\} \cup S$.
\end{itemize}

To remind the reader, the score assigned to a given node by the Rumor source detection algorithm is $\sdrumor(v,G,\II) = \frac{|\II|!}{\prod_{w \in \II} \Theta^v_w}$ where $\Theta^v_w$ is the size of the subtree of $w$ in the BFS tree of $G^\II$ rooted at $v$.
Let $C_v(A)$ denote $\prod_{w \in \II \setminus \{v\}} \Theta^v_w$ in $G$ after the addition of $A$.
Notice that greater $C_v(A)$ implies lower $\sdrumor(v, G \cup A, \II)$ and vice versa, as we have $\sdrumor(v, G \cup A, \II) = \frac{|\II|!}{|V| C_v(A)}$.

Notice that a BFS tree of a given node can be constructed in many different ways, which makes the Rumor source detection algorithm nondeterministic.

First, let us compute the value of $C_{\vs}(A)$.
Figure~\ref{fig:npc-subos-rumor-vs} presents the BFS tree of $\vs$.
The location of $\vs$ in the tree depends on the connections included in $A$, the three cases are:
\begin{itemize}
\item if $(\delta,w) \in A$ then $\delta$ is in location 1,
\item if $(\delta,w) \notin A \land (\delta,x) \in A$ then $\delta$ is either in location 2 or in location 3,
\item if $(\delta,w) \notin A \land (\delta,x) \notin A$ then $\delta$ is in location 3.
\end{itemize}
The location of node $\delta$ in the BFS tree of $\vs$ determines the value of $C_{\vs}(A)$ as follows:
\begin{itemize}
\item if $\delta$ is in location 1 then $C_{\vs}(A) = 4^k 3 (2|S|+3k+5)$,
\item if $\delta$ is in location 2 then $C_{\vs}(A) = 4^k 4 (2|S|+3k+5)$,
\item if $\delta$ is in location 3 then $C_{\vs}(A) = 4^k 6 (2|S|+3k+5)$.
\end{itemize}

\begin{figure}[tbh!]
\centering
\begin{minipage}{.48\textwidth}
	\centering
	\includegraphics[width=.95\linewidth]{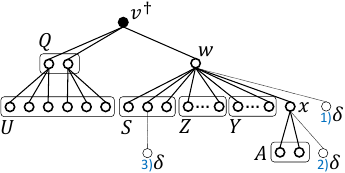}
	\caption{BFS tree of node $\vs$, with three possible locations of node $\delta$.}
	\label{fig:npc-subos-rumor-vs}
\end{minipage}
\hfill
\begin{minipage}{.48\textwidth}
	\centering
	\includegraphics[width=\linewidth]{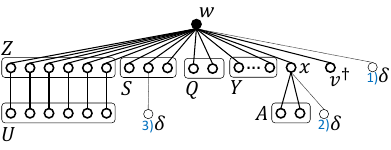}
	\caption{BFS tree of node $w$, with three possible locations of node $\delta$.}
	\label{fig:npc-subos-rumor-w}
\end{minipage}
\end{figure}

Hence, we have that $C_{\vs}(A) \leq 4^k 6 (2|S|+3k+5)$.
We will now show that for every node $v \in V \setminus \{\vs,\delta\}$ and every set of edges $A$ that can be added to $G$, there exists a BFS tree of $v$ such that $C_v(A) \geq 4^k 6 (2|S|+3k+5)$.
Since the Hiding Source by Adding Nodes problem requires the safety threshold to be maintained in every realization of the source detection algorithm, none of such $v$ can contribute to the safety threshold.

\begin{itemize}
\item $C_w(A) \geq C_{\vs}(A)$:
Figure~\ref{fig:npc-subos-rumor-w} presents a possible BFS tree of $w$ with three potential locations of node $\delta$.
We can observe that the value of $C_w(A)$ is minimal when $\delta$ is connected with $w$ (location 1 in Figure~\ref{fig:npc-subos-rumor-w}), which gives us:
$$
C_w(A) \geq 2^{3k} 3.
$$
Since we know that $C_{\vs}(A) \leq 4^k 6 (2|S|+3k+5)$, we have that:
$$
\frac{C_w(A)}{C_{\vs}(A)} \geq \frac{2^{3k} 3}{4^k 6 (2|S|+3k+5)} = \frac{2^k}{4|S|+6k+10}.
$$
Notice that we can assume that the set $S$ contains at most one copy of each 3-element subset of $U$, as a solution to the given instance of the Exact 3-Set Cover problem never contains two instance of the same subset (in fact, since $U$ has $3k$ elements, there is never any overlap between two elements of a solution). Hence, we can assume that $|S| \leq \frac{(3k)!}{(3k-3)!3!} = \frac{3k(3k-1)(3k-2)}{6} < \frac{9}{2}k^3$, which gives us:
$$
\frac{C_w(A)}{C_{\vs}(A)} > \frac{2^k}{18k^3+6k+10}.
$$
Now, notice that for $k \geq 17$, the function $2^k-18k^3-6k-10$ is increasing (its derivative is $2^k \ln(2)-54k^2-6$) and its value for $k=17$ is greater than zero.
Hence, we have that $2^k > 18k^3+6k+10$, which gives us:
$$
\frac{C_w(A)}{C_{\vs}(A)} > 1.
$$

\item $C_x(A) \geq C_{\vs}(A)$:
consider a BFS tree of $x$ obtained by rooting the tree presented in Figure~\ref{fig:npc-subos-rumor-w} in $x$ instead of in $w$.
We can observe that the value of $C_x(A)$ is minimal when $\delta$ is connected with $x$ (location 2 in Figure~\ref{fig:npc-subos-rumor-w}), which gives us:
$$
C_x(A) \geq 2^{3k}(2|S|+7k+2).
$$
Therefore, we have that:
$$
\frac{C_x(A)}{C_{\vs}(A)} \geq \frac{2^{3k}(2|S|+7k+2)}{4^k 6 (2|S|+3k+5)} = \frac{2^k(2|S|+7k+2)}{6 (2|S|+3k+5)} > 1.
$$

\item $C_{a_i}(A) \geq C_{\vs}(A)$:
consider a BFS tree of $a_i$ obtained by rooting the tree presented in Figure~\ref{fig:npc-subos-rumor-w} in $a_i$ instead of in $w$.
We can observe that the value of $C_{a_i}(A)$ is minimal when $\delta$ is connected with $x$ (location 2 in Figure~\ref{fig:npc-subos-rumor-w}), which gives us:
$$
C_{a_i}(A) \geq 2^{3k}(2|S|+7k+5)(2|S|+7k+2).
$$
Therefore, we have that:
$$
\frac{C_{a_i}(A)}{C_{\vs}(A)} \geq \frac{2^{3k}(2|S|+7k+5)(2|S|+7k+2)}{4^k 6 (2|S|+3k+5)} = \frac{2^k(2|S|+7k+5)(2|S|+7k+2)}{6 (2|S|+3k+5)} > 1.
$$

\item $C_{y_i}(A) \geq C_{\vs}(A)$:
consider the BFS tree of $y_i$ obtained by rooting the tree presented in Figure~\ref{fig:npc-subos-rumor-w} in $y_i$ instead of in $w$.
We can observe that the value of $C_{y_i}(A)$ is minimal when $\delta$ is connected with $w$ (location 1 in Figure~\ref{fig:npc-subos-rumor-w}), which gives us:
$$
C_{y_i}(A) \geq 2^{3k}3(2|S|+7k+5).
$$
Therefore, we have that:
$$
\frac{C_{y_i}(A)}{C_{\vs}(A)} \geq \frac{2^{3k}3(2|S|+7k+5)}{4^k 6 (2|S|+3k+5)} = \frac{2^{k-1}(2|S|+7k+5)}{2|S|+3k+5} > 1.
$$

\begin{figure}[tbh!]
\centering
\begin{minipage}{.48\textwidth}
	\centering
	\includegraphics[width=.9\linewidth]{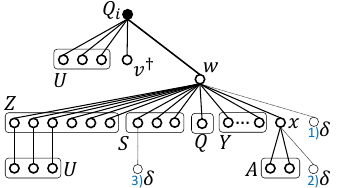}
	\caption{BFS tree of node $Q_i$, with three possible locations of node $\delta$.}
	\label{fig:npc-subos-rumor-q}
\end{minipage}
\hfill
\begin{minipage}{.48\textwidth}
	\centering
	\includegraphics[width=.9\linewidth]{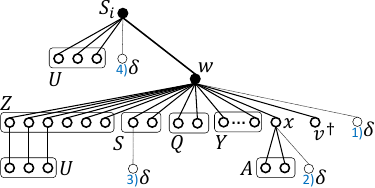}
	\caption{BFS tree of node $S_i$, with four possible locations of node $\delta$.}
	\label{fig:npc-subos-rumor-s}
\end{minipage}
\end{figure}

\item $C_{Q_i}(A) \geq C_{\vs}(A)$:
Figure~\ref{fig:npc-subos-rumor-q} presents a possible BFS tree of $Q_i$ with three potential locations of node $\delta$.
We can observe that the value of $C_{Q_i}(A)$ is minimal when $\delta$ is connected with $w$ (location 1 in Figure~\ref{fig:npc-subos-rumor-q}), which gives us:
$$
C_{Q_i}(A) \geq 2^{3k-3}3(2|S|+7k+1).
$$
Therefore, given the assumption that $k \geq 17$, we have that:
$$
\frac{C_{Q_i}(A)}{C_{\vs}(A)} \geq \frac{2^{3k-3}3(2|S|+7k+1)}{4^k 6 (2|S|+3k+5)} = \frac{2^{k-4}(2|S|+7k+1)}{2|S|+3k+5} > 1.
$$

\item $C_{S_i}(A) \geq C_{\vs}(A)$:
Figure~\ref{fig:npc-subos-rumor-s} presents a possible BFS tree of $S_i$ with four potential locations of node $\delta$.
We can observe that the value of $C_{S_i}(A)$ is minimal when $\delta$ is connected with $S_i$ (location 4 in Figure~\ref{fig:npc-subos-rumor-s}), which gives us:
$$
C_{S_i}(A) \geq 2^{3k-3}3(2|S|+7k+1).
$$
Therefore, given assumption that $k \geq 17$, we have that:
$$
\frac{C_{S_i}(A)}{C_{\vs}(A)} \geq \frac{2^{3k-3}3(2|S|+7k+1)}{4^k 6 (2|S|+3k+5)} = \frac{2^{k-4}(2|S|+7k+1)}{2|S|+3k+5} > 1.
$$

\begin{figure}[tbh!]
\centering
\begin{minipage}{.48\textwidth}
	\centering
	\includegraphics[width=.95\linewidth]{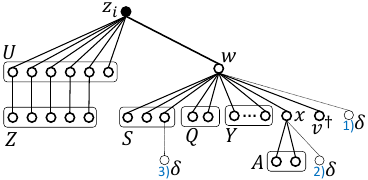}
	\caption{BFS tree of node $z_i$, with three possible locations of node $\delta$.}
	\label{fig:npc-subos-rumor-z}
\end{minipage}
\hfill
\begin{minipage}{.48\textwidth}
	\centering
	\includegraphics[width=.62\linewidth]{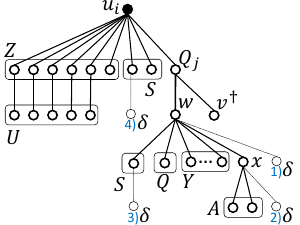}
	\caption{BFS tree of node $u_i$, with four possible locations of node $\delta$.}
	\label{fig:npc-subos-rumor-u}
\end{minipage}
\end{figure}

\item $C_{z_i}(A) \geq C_{\vs}(A)$:
Figure~\ref{fig:npc-subos-rumor-z} presents a possible BFS tree of $z_i$ with four potential locations of node $\delta$.
We can observe that the value of $C_{z_i}(A)$ is minimal when $\delta$ is connected with $w$ (location 1 in Figure~\ref{fig:npc-subos-rumor-z}), which gives us:
$$
C_{z_i}(A) \geq 2^{3k-1}3(2|S|+k+6).
$$
Therefore, we have that:
$$
\frac{C_{z_i}(A)}{C_{\vs}(A)} \geq \frac{2^{3k-1}3(2|S|+k+6)}{4^k 6 (2|S|+3k+5)} = \frac{2^{k-2}(2|S|+k+6)}{2|S|+3k+5} > 1.
$$

\item $C_{u_i}(A) \geq C_{\vs}(A)$:
Figure~\ref{fig:npc-subos-rumor-u} presents a possible BFS tree of $u_i$ with four potential locations of node $\delta$.
We can observe that the value of $C_{u_i}(A)$ is minimal when $\delta$ is connected with $w$ (location 1 in Figure~\ref{fig:npc-subos-rumor-u}), which gives us:
$$
C_{u_i}(A) \geq 2^{3k-1}3(|S|+k+6)(|S|+k+4).
$$
Therefore, we have that:
$$
\frac{C_{u_i}(A)}{C_{\vs}(A)} \geq \frac{2^{3k-1}3(|S|+k+6)(|S|+k+4)}{4^k 6 (2|S|+3k+5)} = \frac{2^{k-2}(|S|+k+6)(|S|+k+4)}{2|S|+3k+5} > 1.
$$

\end{itemize}

We showed that $\delta$ is the only node that can contribute to satisfying the safety threshold.
Therefore, a given $A$ is a solution to the constructed instance of the Hiding Source by Adding Nodes problem if and only if we have that $C_{\vs}(A) > C_\delta(A)$.
We will now show that $C_{\vs}(A) > C_\delta(A)$ for a given $A$ if and only if $(\delta,w) \in A$, and $(\delta,x) \in A$, and for every $u_i \in U$ there exists $S_j \in S$ such that $u_i \in S_j$ and $(\delta,S_j) \in A$.

\begin{figure}[tbh!]
\centering
\begin{minipage}{.48\textwidth}
	\centering
	\includegraphics[width=\linewidth]{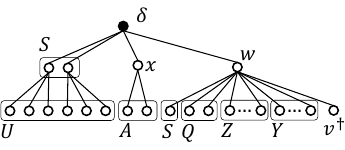}
	\caption{BFS tree of node $\delta$, when it is connected with $w$, $x$, and a cover of $U$.}
	\label{fig:npc-subos-rumor-d-optimal}
\end{minipage}
\hfill
\begin{minipage}{.48\textwidth}
	\centering
	\includegraphics[width=.85\linewidth]{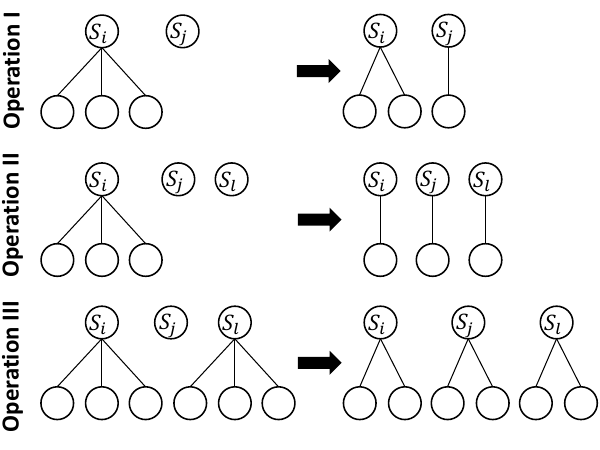}
	\caption{Operations used in the proof of Lemma~\ref{lem:npc-subos-rumor}.}
	\label{fig:npc-subos-rumor-d-operations}
\end{minipage}
\end{figure}

First, we will show that if $(\delta,w) \in A$ and $(\delta,x) \in A$ and $\forall_{u_i \in U} \exists_{S_j \in S} \left( u_i \in S_j \land (\delta,S_j) \in A \right)$, then for every BFS tree of $\delta$ we have $C_{\vs}(A) > C_\delta(A)$.
Figure~\ref{fig:npc-subos-rumor-d-optimal} presents the BFS tree of $\delta$ in this case (notice that in this particular case, this is the only possible structure of the BFS tree).
We have that:
$$
C_\delta(A) = 4^k3(2|S|+3k+2).
$$
At the same time, we have that:
$$
C_{\vs}(A) = 4^k 3 (2|S|+3k+5).
$$
Therefore, we have that $C_{\vs}(A) > C_\delta(A)$.

Before we move on, we will prove a useful lemma.

\begin{lemma}
\label{lem:npc-subos-rumor}
Assume that in a BFS tree of $\delta$ all nodes from $U$ are leaves and are the only children of the nodes in $S$.
The minimum over all possible values of $\prod_{S_i \in S} \Theta^\delta_{S_i}$ is $4^k$.
The second lowest possible value is $4^{k-1}6$.
\end{lemma}

\begin{proof}
In what follows, let $\Theta_S$ denote $\prod_{S_i \in S} \Theta^\delta_{S_i}$.
To remind the reader, there are $3k$ nodes in $U$ and every node $S_i$ is connected with exactly $3$ of them.
The value of $\Theta^\delta_{S_i}$ is therefore between $1$ (if node $S_i$ has no children in the BFS tree) and $4$ (if all three nodes from $U$ connected with $S_i$ are its children in the BFS tree).
Hence, we have that $\prod_{S_i \in S} \Theta^\delta_{S_i} = 4^a 3^b 2^c$ such that $3a+2b+c=3k$.

Assume that $k$ nodes in $S_i$ have three children each.
The value of $\prod_{S_i \in S} \Theta^\delta_{S_i}$ is then $4^k$.
Notice that any possible value $4^a 3^b 2^c$ of $\prod_{S_i \in S} \Theta^\delta_{S_i}$ can be achieved starting from $4^k$ by performing the following operations (presented in Figure~\ref{fig:npc-subos-rumor-d-operations}) in any order:
\begin{itemize}
\item \textbf{operation I:} moving one child of a node in $S$ with three children to another node in $S$ with no children, repeated $\min(b,c)$ times,
\item \textbf{operation II:} moving two children of a node in $S$ with three children to two other nodes in $S$ with no children, repeated $\frac{c-b}{3}$ times if $c > b$, and not executed at all otherwise,
\item \textbf{operation III:} moving one child each of two nodes in $S$ with three children to another node with no children, repeated $\frac{b-c}{3}$ times if $b > c$, and not executed at all otherwise.
\end{itemize}

Let $\Theta_0$ be the value of $\prod_{S_i \in S} \Theta^\delta_{S_i}$ before performing a given operation.
Notice that each of the operations increases the value of $\prod_{S_i \in S} \Theta^\delta_{S_i}$:
\begin{itemize}
\item the value of $\prod_{S_i \in S} \Theta^\delta_{S_i}$ after \textbf{operation I} is $\frac{3}{2}\Theta_0$,
\item the value of $\prod_{S_i \in S} \Theta^\delta_{S_i}$ after \textbf{operation II} is $2\Theta_0$,
\item the value of $\prod_{S_i \in S} \Theta^\delta_{S_i}$ after \textbf{operation III} is $\frac{27}{16}\Theta_0$.
\end{itemize}

Hence, since every possible value of $4^a 3^b 2^c$ can be achieved via performing a sequence of operations starting with $4^k$, and every operation increases the value, then $4^k$ is the minimal possible value.
Moreover, since operation I increases the value the least, the second minimal value is $\frac{3}{2} 4^k = 4^{k-1}6$.
\end{proof}

\begin{figure}[tbh!]
\centering
\begin{minipage}{.48\textwidth}
	\centering
	\includegraphics[width=.7\linewidth]{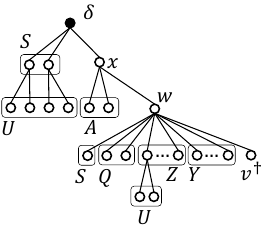}
	\caption{BFS tree of node $\delta$ for Case I.}
	\label{fig:npc-subos-rumor-d-case1}
\end{minipage}
\hfill
\begin{minipage}{.48\textwidth}
	\centering
	\includegraphics[width=.7\linewidth]{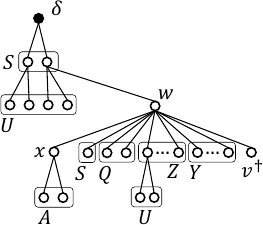}
	\caption{BFS tree of node $\delta$ for Case II.}
	\label{fig:npc-subos-rumor-d-case2}
\end{minipage}
\end{figure}

Now, we will show that if $(\delta,w) \notin A$ or $(\delta,x) \notin A$ or $\exists_{u_i \in U} \forall_{S_j \in S : u_i \in S_j} (\delta,S_j) \notin A$, then there exists a BFS tree of $\delta$ such that $C_\delta(A) \geq C_{\vs}(A)$.
We show the proof for each of the cases separately:
\begin{itemize}

\item \textbf{Case I} $(\delta,w) \notin A \land (\delta,x) \in A$:
Figure~\ref{fig:npc-subos-rumor-d-case1} presents the structure of the BFS tree of $\delta$ for this case.
Notice that, given Lemma~\ref{lem:npc-subos-rumor}, the value of $C_\delta(A)$ is minimal when $k$ nodes from $S$ connected with $\delta$ cover the entire universe (we then have $\prod_{S_i \in S} \Theta^\delta_{S_i}=4^k$).
Notice that, if $\delta$ would be connected with more than $k$ nodes from $S$ (which is possible, since the budget of the evader is $k+2$), then at least one element of $U$ would be a neighbor of two nodes from $S$ connected with $\delta$, and, based on Lemma~\ref{lem:npc-subos-rumor}, we would be able to choose a BFS tree such that $\prod_{S_i \in S} \Theta^\delta_{S_i} \geq 4^{k-1}6$.
We have:
$$
C_\delta(A) \geq 4^k (2|S|+3k+2)(2|S|+3k+5),
$$
as well as:
$$
C_{\vs}(A) \leq 4^k 6(2|S|+3k+5),
$$
which gives us:
$$
\frac{C_\delta(A)}{C_{\vs}(A)} \geq \frac{4^k (2|S|+3k+2)(2|S|+3k+5)}{4^k 6(2|S|+3k+5)} = \frac{(2|S|+3k+2)(2|S|+3k+5)}{6(2|S|+3k+5)} > 1.
$$

\item \textbf{Case II} $(\delta,w) \notin A \land (\delta,x) \notin A$:
Figure~\ref{fig:npc-subos-rumor-d-case2} presents the structure of the BFS tree of $\delta$ for this case.
Notice that, given Lemma~\ref{lem:npc-subos-rumor}, the value of $C_\delta(A)$ is minimal when $k$ nodes from $S$ connected with $\delta$ cover the entire universe (we then have $\prod_{S_i \in S} \Theta^\delta_{S_i}=4^k$).
Notice that, if $\delta$ would be connected with more than $k$ nodes from $S$ (which is possible, since the budget of the evader is $k+2$), then at least one element of $U$ would be a neighbor of two nodes from $S$ connected with $\delta$, and, based on Lemma~\ref{lem:npc-subos-rumor}, we would be able to choose a BFS tree such that $\prod_{S_i \in S} \Theta^\delta_{S_i} \geq 4^{k-1}6$.
We have:
$$
C_\delta(A) \geq 4^{k-1} 3(2|S|+3k+5)(2|S|+3k+8),
$$
as well as:
$$
C_{\vs}(A) = 4^k 6(2|S|+3k+5),
$$
which gives us:
$$
\frac{C_\delta(A)}{C_{\vs}(A)} \geq \frac{4^{k-1} 3(2|S|+3k+5)(2|S|+3k+8)}{4^k 6(2|S|+3k+5)} = \frac{(2|S|+3k+5)(2|S|+3k+8)}{8(2|S|+3k+5)} > 1.
$$

\begin{figure}[tbh!]
\centering
\begin{minipage}{.48\textwidth}
	\centering
	\includegraphics[width=.95\linewidth]{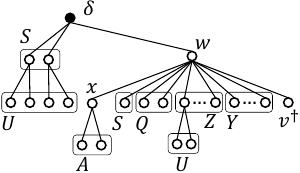}
	\caption{BFS tree of node $\delta$ for Case III.}
	\label{fig:npc-subos-rumor-d-case3}
\end{minipage}
\hfill
\begin{minipage}{.48\textwidth}
	\centering
	\includegraphics[width=.95\linewidth]{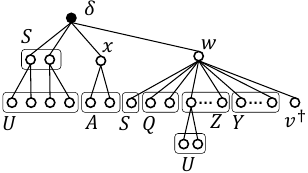}
	\caption{BFS tree of node $\delta$ for Case IV.}
	\label{fig:npc-subos-rumor-d-case4}
\end{minipage}
\end{figure}

\item \textbf{Case III} $(\delta,w) \in A \land (\delta,x) \notin A$:
Figure~\ref{fig:npc-subos-rumor-d-case3} presents the structure of the BFS tree of $\delta$ for this case.
Notice that, given Lemma~\ref{lem:npc-subos-rumor}, the value of $C_\delta(A)$ is minimal when $k$ nodes from $S$ connected with $\delta$ cover the entire universe (we then have $\prod_{S_i \in S} \Theta^\delta_{S_i}=4^k$).
Notice that, if $\delta$ would be connected with more than $k$ nodes from $S$ (which is possible, since the budget of the evader is $k+2$), then at least one element of $U$ would be a neighbor of two nodes from $S$ connected with $\delta$, and, based on Lemma~\ref{lem:npc-subos-rumor}, we would be able to choose a BFS tree such that $\prod_{S_i \in S} \Theta^\delta_{S_i} \geq 4^{k-1}6$.
We have:
$$
C_\delta(A) \geq 4^k 3(2|S|+3k+5),
$$
as well as:
$$
C_{\vs}(A) = 4^k 3(2|S|+3k+5),
$$
which gives us:
$$
\frac{C_\delta(A)}{C_{\vs}(A)} \geq \frac{4^k 3(2|S|+3k+5)}{4^k 3(2|S|+3k+5)} = 1.
$$

\item \textbf{Case IV} $(\delta,w) \in A \land (\delta,x) \in A \land \exists_{u_i \in U} \forall_{S_j \in S : u_i \in S_j} (\delta,S_j) \notin A$:
Figure~\ref{fig:npc-subos-rumor-d-case4} presents the structure of the BFS tree of $\delta$ for this case.
Since nodes in $S$ connected with $\delta$ do not cover the entire universe, there is at least one node from $U$ connected $z_1$ in the BFS tree.
We have:
$$
C_\delta(A) \geq 4^{k-1}18(2|S|+3k+3),
$$
as well as:
$$
C_{\vs}(A) = 4^k 3(2|S|+3k+5),
$$
which gives us:
$$
\frac{C_\delta(A)}{C_{\vs}(A)} \geq \frac{4^{k-1}18(2|S|+3k+3)}{4^k 3(2|S|+3k+5)} = \frac{3(2|S|+3k+3)}{2(2|S|+3k+5)} > 1.
$$

\end{itemize}

We showed that a given $A$ is a solution to the constructed instance of the Hiding Source by Adding Nodes if and only if $(\delta,w) \in A$, and $(\delta,x) \in A$, and for every $u_i \in U$ there exists $S_j \in S$ such that $u_i \in S_j$ and $(\delta,S_j) \in A$.
Finally, we are ready to prove the theorem.

Assume that there exists a solution to the given instance of the Exact 3-Set Cover problem, i.e., a subset $S* \subseteq S$ such that $\bigcup_{S_i \in S*} S_i = U$.
Then the set $\{\delta\} \times (\{w,x\} \cup S^*)$ is a solution to the constructed instance of the Hiding Source by Adding Nodes problem.

Similarly, assume that there exists a solution $\Add$ to the constructed instance of the Hiding Source by Adding Nodes problem.
Hence, for every $u_i \in U$ there exists $S_j \in S$ such that $u_i \in S_j$ and $(\delta,S_j) \in \Add$.
We also have that $\delta$ can be connected with at most $k$ nodes from $S$, as it has to also be connected with $w$ and $x$, and the budget of the evader is $k+2$.
Therefore, the set $\{S_i \in S : (\delta,S_i) \in \Add\}$ is a solution to the given instance of the Exact 3-Set Cover problem.

This concludes the proof.
\end{proof}

\begin{theorem}
\label{thrm:npc-subos-rwalk}
The problem of Hiding source by Adding Nodes is NP-complete given the Random Walk source detection algorithm.
\end{theorem}

\begin{proof}
The problem is trivially in NP, since after adding a given set of edges $\Add$, it is possible to compute the Random Walk source detection algorithm scores of all nodes in $G^\II$ in polynomial time.

We will now prove that the problem is NP-hard.
To this end, we will show a reduction from the NP-complete \textit{$3$-Set Cover} problem.
The decision version of this problem is defined by a universe, $U=\{u_1,\ldots,u_{|U|}\}$, and a collection of sets $S=\{S_1,\ldots,S_{|S|}\}$ such that $\forall_i S_i \subset U$ and $\forall_i |S_i|=3$, where the goal is to determine whether there exist $k$ elements of $S$ the union of which equals $U$.

Let $(U,S)$ be a given instance of the $3$-Set Cover problem.
We will now construct an instance of the Hiding Source by Adding Nodes problem.

First, let us construct a network $G=(V,E)$ where:
\begin{itemize}
\item $V = \{ \vs, w, x, a \} \cup S \cup U$,
\item $E = \{(\vs,w), (\vs,a), (w,x)\} \cup \left( \{\vs\} \times S \right) \cup \bigcup_{S_i \in V} \bigcup_{u_j \in S_i} \{(S_i,u_j)\}$.
\end{itemize}
In what follows we will denote the set of nodes $S_1,\ldots, \S_{|S|}$ by $S$, and we will denote the set of nodes $u_1,\ldots, u_{|U|}$ by $U$.
Notice that having $u_j \in S_i$ in the last union in the formula of $E$ means that we connect a given node $S_i \in S$ only with nodes $u_j \in U$ corresponding to the elements contained in $S_i$ in the given instance of the $3$-Set Cover problem.
An example of the construction of the network $G$ is presented in Figure~\ref{fig:npc-subos-rwalk}.

\begin{figure}[tbh!]
\centering
\includegraphics[width=.7\linewidth]{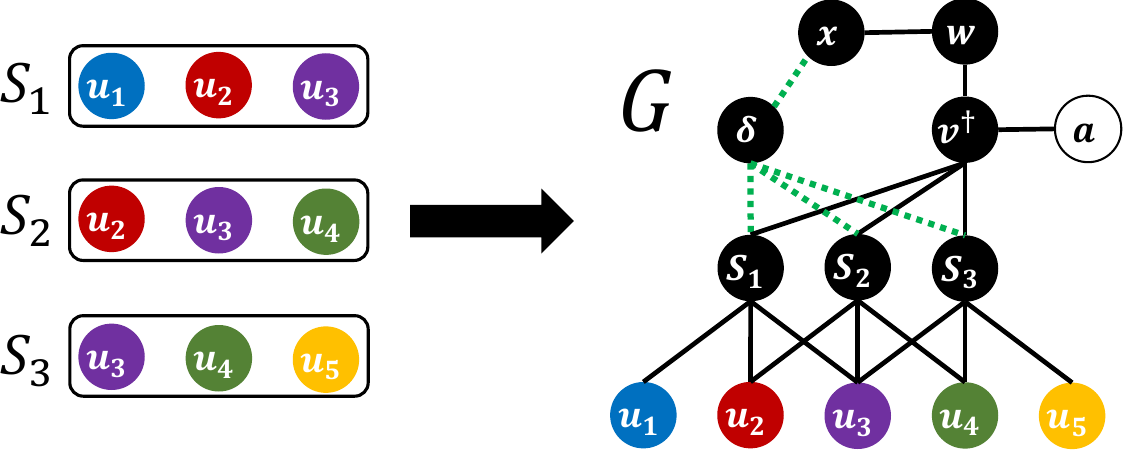}
\caption{
Construction of the network used in the proof of Theorem~\ref{thrm:npc-subos-rwalk}.
Green dotted edges are allowed to be added.
Infected nodes in $G$ are highlighted in black.
}
\label{fig:npc-subos-rwalk}
\end{figure}

Now, consider the instance $(G,\vs,\II,\sd,\thr,b,\Subo,\FS)$ of the Hiding Source by Adding Nodes problem, where:
\begin{itemize}
\item $G$ is the network we just constructed,
\item $\vs$ is the evader,
\item $\II=V \setminus \{a\}$, i.e., all nodes in $G$ other than $a$ are infected,
\item $\sd$ is the Random Walk source detection algorithm,
\item $\thr=|S|+|U|+3$ is the safety threshold,
\item $b=k+1$,
\item $\Subo = \{\delta\}$,
\item $\FS = \{x\} \cup S$.
\end{itemize}

Moreover, the Random Walk source detection algorithm needs to be parameterized with the model of spreading.
Let the algorithm be parameterized with the model in which the probability of propagation is $p=\frac{1}{2}$ and the time of diffusion is $T=3$.

Notice that since the safety threshold is $\thr=|S|+|U|+3$, all nodes other than $a$ (which is not infected) must have greater source detection algorithm scores than $\vs$.
In particular, after adding any non-empty $A$ to the network, the score of $\vs$ is greater than zero (since all infected nodes are then within distance $T=3$ from $\vs$).
Therefore, if $A$ is a solution to the constructed instance of the Hiding Source by Adding Nodes problem, then all nodes in $U$ have to be within distance $T=3$ from the node $x$ (otherwise their scores are all zero).
This is the case if and only if $\delta$ is connected with $x$ and for every node $u_i$ there exists a node $S_j$ such that $u_i \in S_j$ and $S_j$ is connected with $\delta$.

We will now show the implication in the other direction, i.e., if for a given $A$ we have that $\delta$ is connected with $x$ and for every node $u_i$ there exists a node $S_j$ such that $u_i \in S_j$ and $S_j$ is connected with $\delta$, then $A$ is a solution to the constructed instance of the Hiding Source by Adding Nodes problem.

\begin{table}[tbh]
\centering
\begin{tabular}{ l c c c c }
$v$ & $\phi_3(v)$ & $\phi_2(v)$ & $\phi_1(v)$ & $\phi_0(v) = \sdrwalk(v,G \cup A,\II \cup \{\delta\})$ \\
\hline
$\vs$ & $1$ & $1-\frac{p}{|S|+2}$ & $1-\frac{p(2-p)}{|S|+2}$ & $1-\frac{p(80+40|S|+51p^2+25p^2|S| -120p-60p|S|-|A|p^2)}{20(|S|+2)^2}$ \\
$\delta$ & $1$ & $1$ & $1$ & $1-\frac{|A|-1}{|A|}\frac{p^3}{5(|S|+2)}$ \\
$w$ & $1$ & $1$ & $1-\frac{p^2}{2(|S|+2)}$ & $1-\frac{p^2(3-2p)}{2(|S|+2)}$ \\
$x$ & $1$ & $1$ & $1$ & $1-\frac{p^3}{4(|S|+2)}$ \\
$u_i$ & $1$ & $1$ & $1$ & $\geq 1 - \frac{p^3}{4(|S|+2)}$ \\
$S_i \in N(\delta)$ & $1$ & $1$ & $1-\frac{p^2}{5(|S|+2)}$ & $1-\frac{p^2(3-2p)}{5(|S|+2)}$ \\
$S_i \notin N(\delta)$ & $1$ & $1$ & $1-\frac{p^2}{4(|S|+2)}$ & $1-\frac{p^2(3-2p)}{4(|S|+2)}$ \\
\hline
\end{tabular}
\caption{The values of $\phi_t$ after the addition of a given $A$ such that $(\delta,x) \in A$, used in the proof of Theorem~\ref{thrm:npc-subos-rwalk}.}
\label{tab:npc-subos-rwalk}
\end{table}

Table~\ref{tab:npc-subos-rwalk} presents the values of $\phi_t$ for all infected nodes in the network and for different values of $t$, after we add to the network a given $A$ such that $(\delta,x) \in A$.
We will now show that for every node $v$ other $\vs$ we have $\phi_0(v) > \phi_0(\vs)$.
Given that $p=\frac{1}{2}$, we have:

\begin{itemize}
\item $\phi_0(\vs) = 1-\frac{131+65|S|-|A|}{160(|S|+2)^2} < 1-\frac{120+60|S|}{160(|S|+2)^2} = 1-\frac{3}{8(|S|+2)}$;
\item $\phi_0(\delta) = 1-\frac{|A|-1}{|A|}\frac{1}{40(|S|+2)} > 1-\frac{1}{40(|S|+2)} > \phi_0(\vs)$;
\item $\phi_0(w) = 1-\frac{1}{4(|S|+2)} > \phi_0(\vs)$;
\item $\phi_0(x) = 1-\frac{1}{32(|S|+2)} > \phi_0(\vs)$;
\item $\phi_0(u_i) \geq 1 - \frac{1}{32(|S|+2)} > \phi_0(\vs)$;
\item if $S_i \in N(\delta)$ then $\phi_0(S_i) = 1-\frac{1}{10(|S|+2)} > \phi_0(\vs)$;
\item if $S_i \notin N(\delta)$ then $\phi_0(S_i) = 1-\frac{1}{8(|S|+2)} > \phi_0(\vs)$.
\end{itemize}

We showed that a given $A$ is a solution to the constructed instance of the Hiding Source by Adding Nodes problem if and only if $\delta$ is connected with $x$ and for every node $u_i$ there exists a node $S_j$ such that $u_i \in S_j$ and $S_j$ is connected with $\delta$.
We will now show that the constructed instance of the Hiding Source by Adding Nodes problem has a solution if and only if the given instance of the $3$-Set Cover problem has a solution.

Assume that there exists a solution to the given instance of the $3$-Set Cover problem, i.e., a subset $S^* \subseteq S$ of size $k$ the union of which is the universe $U$.
In that case the set $\{\delta\} \times (\{x\} \cup S^*)$ is a solution to the constructed instance of the Hiding Source by Adding Nodes problem.

Assume that there exists a solution $\Add$ to the constructed instance of the Hiding Source by Adding Nodes problem.
In that case $S^* = \{S_i \in S : (\delta,S_i) \in \Add\}$ is a a solution to the given instance of the $3$-Set Cover problem.

This concludes the proof.
\end{proof}

\begin{theorem}
\label{thrm:npc-subos-mcarlo}
The problem of Hiding Source by Adding Nodes is NP-complete given the Monte Carlo source detection algorithm.
\end{theorem}

\begin{proof}
The problem is trivially in NP, since after adding a given set of edges $\Add$, it is possible to generate Monte Carlo samples and compute the ranking of all nodes in $G^\II$ in polynomial time.

We will now prove that the problem is NP-hard.
To this end, we will show a reduction from the NP-complete \textit{Dominating Set} problem.
The decision version of this problem is defined by a network, $H=(V',E')$, where $V'=\{v_1,\ldots,v_n\}$, and a constant $k \in \N$, where the goal is to determine whether there exist $V^* \subseteq V'$ such that $|V^*|=k$ and every node outside $V^*$ has at least one neighbor in $V^*$, i.e., $\forall_{v \in V' \setminus V^*} N_H(v) \cap V^* \neq \emptyset$.

Let $(H,k)$ be a given instance of the Dominating Set problem.
Let us assume that there is no solution of size one, i.e., $\forall_{v_i \in V'}d_H(v_i) < n-1$.
This can be easily checked in polynomial time.
We will now construct an instance of the Hiding Source by Adding Nodes problem.

First, let us construct a network $G=(V,E)$ where:
\begin{itemize}
\item $V = V' \cup \{ \vs, u, w, x \} \cup \bigcup_{i=1}^{n-1} \{a_i\}$,
\item $E = E' \cup \{(\vs,u), (u,w), (w,x)\} \cup \bigcup_{i=1}^{n-1} \{(a_i,x)\}$.
\end{itemize}
An example of the construction of the network $G$ is presented in Figure~\ref{fig:npc-subos-mcarlo}.

\begin{figure}[tbh!]
\centering
\includegraphics[width=.8\linewidth]{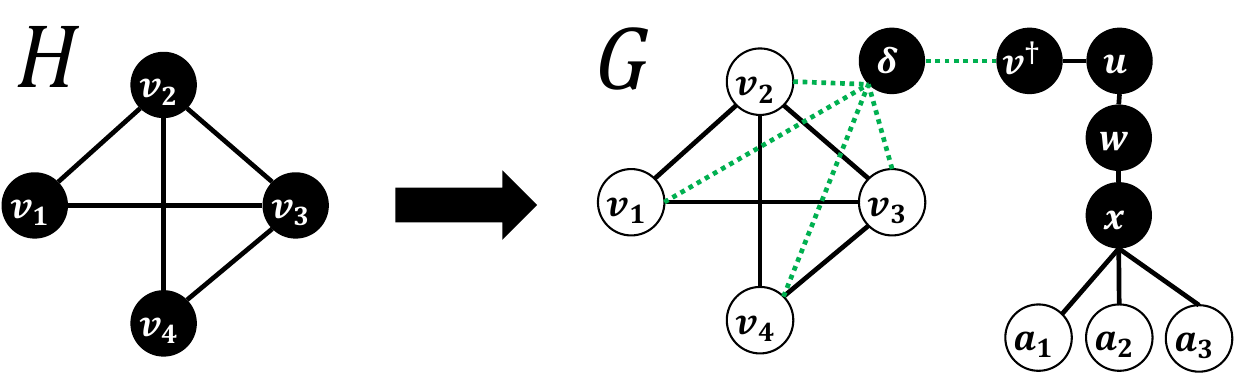}
\caption{
Construction of the network used in the proof of Theorem~\ref{thrm:npc-subos-mcarlo}.
Green dotted edges are allowed to be added.
Infected nodes in $G$ are highlighted in black.
}
\label{fig:npc-subos-mcarlo}
\end{figure}

Now, consider the instance $(G,\vs,\II,\sd,\thr,b,\Subo,\FS)$ of the Hiding Source by Adding Nodes problem, where:
\begin{itemize}
\item $G$ is the network we just constructed,
\item $\vs$ is the evader,
\item $\II=\{\vs,u,w,x\}$,
\item $\sd$ is the Monte Carlo source detection algorithm,
\item $\thr=2$ is the safety threshold,
\item $b=k+1$, where $k$ is the size of the dominating set from the Dominating Set problem instance,
\item $\Subo = \{\delta\}$,
\item $\FS = V' \cup \{\vs\}$.
\end{itemize}

Moreover, the Monte Carlo source detection algorithm needs to be parameterized with the model of spreading.
Let the algorithm be parameterized with the model in which the probability of propagation is $p=1$ and the time of diffusion is $T=3$.
Notice that since the model with $p=1$ is deterministic, all Monte Carlo samples starting from a given node always give the same result, hence the formula of the Monte Carlo source detection algorithm is in this case:
$$
\sdmcarlo(v,G,\II) = \exp\left( \frac{-\left( \frac{|\II \cap \II_v|}{|\II \cup \II_v|}- 1 \right)^2}{a^2} \right)
$$
where $\II_v$ is the set of infected nodes in the Monte Carlo sample starting the  diffusion at $v$.
Moreover, notice that a position of a given node $v \in \II$ in the ranking generated by the Monte Carlo source detection algorithm depends solely on the value of $C_v= \frac{|\II \cap \II_v|}{|\II \cup \II_v|}$ (in particular, the node with the greatest value of $C_v$ is selected as the source of diffusion).

Notice that the problem of Hiding Source by Adding Nodes requires the network induced by the infected nodes to be connected, hence $\delta$ must be connected with $\vs$, and it may be connected with at most $k$ nodes from $V'$.
Let $V'_{A,y}$ denote the set of nodes from $V'$ in distance at most $y$ from $\delta$ after the addition of $A$, i.e., $V'_{A,y} = \{v \in V': d_{(V \cup \Subo, E \cup A)}(\vs,\delta) \leq y\}$.
Notice that for any $y$ we have $|V'_{A,y}| \leq n$.
Let us compute the values of $C_v$ for the nodes in $\II$ (notice that these are the only nodes that can be selected as the source of diffusion by the Monte Carlo algorithm) after the addition of an arbitrary $A$:
\begin{itemize}
\item $C_{\vs} = \frac{5}{|V'_{A,2}|+5}$,
\item $C_\delta = \frac{4}{|V'_{A,3}|+5} \leq \frac{4}{|V'_{A,2}|+5} < \frac{5}{|V'_{A,2}|+5} = C_{\vs}$,
\item $C_u = \frac{5}{n+|V'_{A,1}|+4}$,
\item $C_w = \frac{5}{n+4}$,
\item $C_x = \frac{4}{n+4} < \frac{5}{n+5} \leq \frac{5}{|V'_{A,2}|+5} = C_{\vs}$ (as $|V'_{A,2}| \leq n$).
\end{itemize}

Since the safety threshold is $\thr=2$, both $u$ and $w$ need to have a greater value of $C_v$ than $\vs$.
It is easy to see that $C_w > C_{\vs}$ if and only if $|V'_{A,2}|=n$.
Notice also that if $|V'_{A,2}|=n$ then also $|V'_{A,1}| > 1$ (and, consequently, $C_u = \frac{5}{n+|V'_{A,1}|+4} < \frac{5}{n+5} = C_{\vs}$), as we assumed that there is no node with degree $n-1$ in $H$ (and subnetwork induced by $V'$ in $G$ is the same as $H$).
Therefore, the safety threshold is met if and only if $|V'_{A,2}|=n$.

We will now show that the constructed instance of the Hiding Source by Modifying Edges problem has a solution if and only if the given instance of the Dominating Set problem has a solution.

Assume that there exists a solution to the given instance of the Dominating Set problem, i.e., a subset $V^* \subseteq V'$ of size $k$ such that all other nodes have a neighbor in $V^*$.
After adding to $G$ set $\Add = \{(\delta,\vs)\} \cup \{\delta\} \times V^*$ we have that $|\Add|=k+1$ and $|V'_{\Add,2}|=n$, which implies that the safety threshold is met.
We showed that if there exists a solution to the given instance of the Dominating Set problem, then there also exists a solution to the constructed instance of the Hiding Source by Adding Nodes problem.

Assume that there exists a solution $\Add$ to the constructed instance of the Hiding Source by Modifying Edges problem.
As shown above, we must have $|V'_{\Add,2}|=n$, i.e., every node in $V'$ is either connected with $\delta$ or has a neighbor who is connected with $\delta$.
Moreover, $\{(\delta,\vs)\}$ must be a part of $\Add$, and, since the budget of the evader is $b=k+1$, there are at most $k$ edges connecting $\delta$ with the nodes in $V'$.
Therefore $V^*=\{v \in V' : (\delta,v) \in \Add\}$ is a dominating set in $H$ of size at most $k$ (we can add $k-|\Add|$ arbitrarily chosen elements to obtain set of size exactly $k$).
We showed that if there exists a solution to the constructed instance of the Hiding Source by Adding Nodes problem, then there also exists a solution to the given instance of the Dominating Set problem.

This concludes the proof.
\end{proof}

\begin{theorem}
\label{thrm:npc-rewiring-degree}
The problem of Hiding Source by Modifying Edges is NP-complete given the Degree source detection algorithm.
\end{theorem}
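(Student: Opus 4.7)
The problem is in NP because, given a candidate $(\Add,\Rem)$, we can verify in polynomial time that $|\Add|+|\Rem|\leq b$, that $(V,(E\cup\Add)\setminus\Rem)^\II$ is connected, and that at least $\thr$ infected nodes have induced degree strictly greater than $\sddegr(\vs,\cdot,\II)$, by directly computing the induced degree sequence. The substantive work is in establishing NP-hardness; I propose a reduction from Set Cover (a reduction from Vertex Cover or 3-Set Cover would also work, but Set Cover is the most natural fit for the degree-counting structure).

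\textbf{Reduction sketch.} Given a Set Cover instance with universe $U=\{u_1,\dots,u_n\}$, collection $S=\{S_1,\dots,S_m\}$, and target $k$, I would build a network $G$ containing the evader $\vs$, an element node for each $u_i$, a set node for each $S_j$, and a carefully tuned collection of auxiliary padding nodes. The construction would satisfy: (i) $\vs$ has induced degree $d$ in $G^\II$; (ii) each element node $u_i$ has induced degree exactly $d-1$, so that a single additional incident edge pushes it above $\vs$ in the ranking; (iii) each set node $S_j$ begins with induced degree well below $d$, so that set nodes never contribute to the threshold. The allowed additions would be $\FA=\{(S_j,u_i) : u_i\in S_j\}$, directly encoding set membership, while $\FR$ would contain dedicated ``activation'' edges incident to $\vs$ (one per set node) whose removal is the mechanism by which the evader pays for ``using'' that set. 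The budget $b$ and threshold $\thr=n$ would be calibrated so that pushing all $n$ elements above $\vs$ requires exactly $n$ additions from $\FA$ together with one removal per distinct set node touched, forcing the distinct set nodes used to form a cover of size at most $k$.

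\textbf{Main obstacle.} The central difficulty is pinning down the gadget so that the correspondence between feasible modifications and set covers is tight in both directions. The forward direction (a cover of size $k$ yields a feasible modification) is routine. The reverse direction is where care is needed: the evader might attempt to bypass the set-cover structure by removing edges incident to $\vs$ to cheaply shrink $\deg(\vs)$ itself, raising many tied nodes above $\vs$ without touching set nodes at all. This must be neutralized by padding $\vs$ with a large number of auxiliary neighbors whose edges are \emph{not} in $\FR$, and by ensuring that $\FA$ contains no edges incident to $\vs$, so that any reduction in $\deg(\vs)$ is bounded by the number of activation edges removed, which is in turn bounded by the budget. Concretely, I expect the analysis to reduce to a per-element degree-accounting argument showing that each of the $n$ needed ``boosts'' must come from an edge of $\FA$, and that the set of $S_j$ appearing in those boosts must cover $U$ while using at most $k$ distinct set nodes under the budget. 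Once this is verified, the reduction is polynomial and NP-completeness follows.
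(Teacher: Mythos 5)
Your NP-membership argument and your identification of the main obstacle are fine, but the reduction gadget has a genuine gap that I do not see how to repair in the form you describe. The crux is your claim that the budget can be ``calibrated so that pushing all $n$ elements above $\vs$ requires exactly $n$ additions from $\FA$ together with one removal per distinct set node touched.'' Nothing in the problem couples an addition incident to a set node $S_j$ with the removal of that set's activation edge: the evader chooses $\Add\subseteq\FA$ and $\Rem\subseteq\FR$ independently, subject only to $|\Add|+|\Rem|\leq b$ and connectivity of the infected subnetwork. Connectivity cannot supply the coupling either (at best it forces an addition when a removal would disconnect $S_j$, which is the wrong direction). So the evader can add one covering edge per element, drawn from arbitrarily many distinct sets, and ignore the activation edges entirely; since bounding the number of \emph{distinct} sets used is precisely the content of Set Cover, the reduction does not capture the hardness. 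Worse, your activation edges are themselves incident to $\vs$ and removable, which contradicts your own neutralization of the degree-shrinking attack: with element nodes at degree $d-1$ and $\vs$ at degree $d$, removing just two activation edges drops $\vs$ to $d-2$ and puts all $n$ elements strictly above it with zero additions. (There is also a smaller calibration slip: one added edge brings an element from $d-1$ to $d$, which ties with $\vs$ rather than exceeding it, since the ranking counts nodes with strictly greater score.)

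The paper avoids all of this by reducing from Finding $k$-Clique rather than Set Cover, and by setting $\FR=\emptyset$ so that removals play no role. There, $\vs$ is adjacent to every node $v_i$ of the clique instance, each $v_i$ is padded with $n-k+1$ pendant neighbors so its degree is $n-k+2$ against $\deg(\vs)=n$, the addable edges are exactly $\FA=E'$, the threshold is $\thr=k$, and the budget is $b=k(k-1)/2$. Each $v_i$ must gain $k-1$ new incident edges to overtake $\vs$, and since every added edge contributes degree to two endpoints, raising $k$ nodes by $k-1$ each within a budget of $\binom{k}{2}$ edges is possible if and only if the added edges form a $k$-clique. That double-counting argument is the forcing mechanism your gadget is missing; if you want to salvage a covering-style reduction you would need an analogous structural reason why spreading the additions over many set nodes is strictly more expensive than concentrating them, and the degree measure alone does not provide one.
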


\begin{proof}
The problem is trivially in NP, since after adding and removing the given sets of edges $\Add$ and $\Rem$, it is possible to compute the degree centrality ranking of all nodes in $G^\II$ in polynomial time.

We will now prove that the problem is NP-hard.
To this end, we will show a reduction from the NP-complete \textit{Finding $k$-Clique} problem.
The decision version of this problem is defined by a network, $H=(V',E')$, where $V'=\{v_1,\ldots,v_n\}$, and a constant $k \in \N$, where the goal is to determine whether there exist $k$ nodes forming a clique in $H$.

Let $(H,k)$ be a given instance of the Finding $k$-Clique problem.
Let us assume that $n=2$, i.e., network $H$ has at least two nodes.
We will now construct an instance of the Hiding Source by Modifying Edges problem.

First, let us construct a network $G=(V,E)$ where:
\begin{itemize}
\item $V = V' \cup \{ \vs \} \cup \bigcup_{i=1}^{n} \bigcup_{j=1}^{n-k+1} \{a_{i,j}\}$,
\item $E = \bigcup_{v_i \in V} \{(\vs,v_i)\} \cup \bigcup_{v_i \in V} \bigcup_{a_{i,j} \in V} \{(v_i,a_{i,j})\}$.
\end{itemize}
An example of the construction of the network $G$ is presented in Figure~\ref{fig:npc-rewiring-degree}.

\begin{figure}[tbh!]
\centering
\includegraphics[width=.65\linewidth]{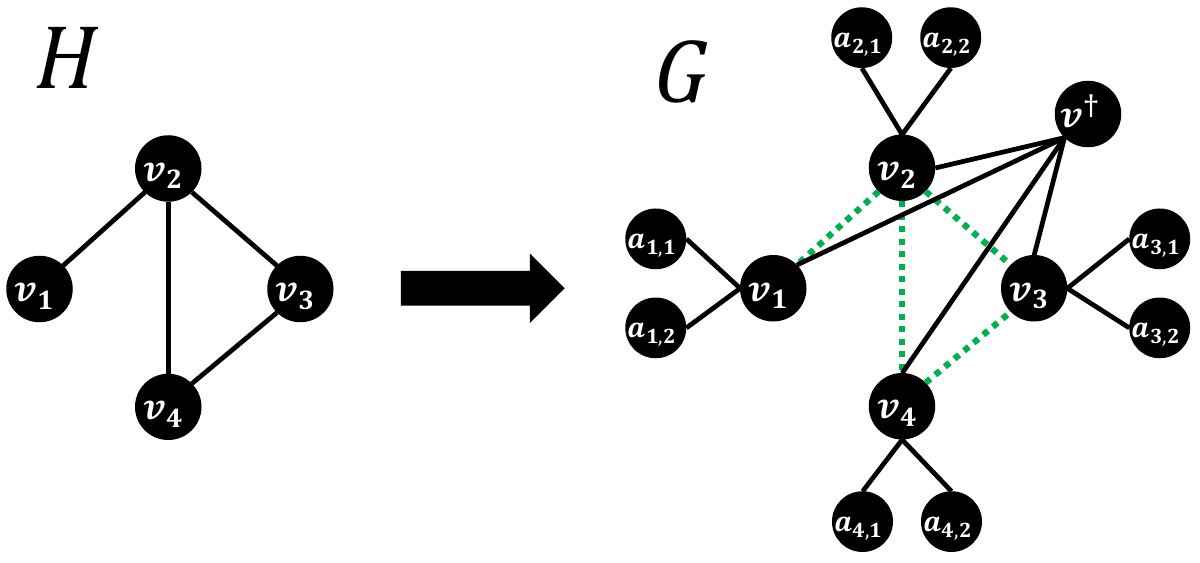}
\caption{
Construction of the network used in the proof of Theorem~\ref{thrm:npc-rewiring-degree}.
Green dotted edges are allowed to be added.
}
\label{fig:npc-rewiring-degree}
\end{figure}

Now, consider the instance $(G,\vs,\II,\sd,\thr,b,\FA,\FR)$ of the Hiding Source by Modifying Edges problem, where:
\begin{itemize}
\item $G$ is the network we just constructed,
\item $\vs$ is the evader,
\item $\II=V$, i.e., all nodes in $G$ are infected,
\item $\sd$ is the Degree source detection algorithm,
\item $\thr=k$ is the safety threshold,
\item $b=\frac{k(k-1)}{2}$,
\item $\FA = E'$, i.e., only edges existing in $H$ can be added to $G$,
\item $\FR = \emptyset$, i.e., none of the edges can be removed.
\end{itemize}

Since $\FR = \emptyset$, for any solution to the constructed instance of the Hiding Source by Modifying Edges problem we must have $\Rem = \emptyset$.
Hence, we will omit mentions of $\Rem$ in the remainder of the proof, and we will assume that a solution consists just of $\Add$.

Notice that the degree of the evader $\vs$ in $G$ is $n$, and it does not change after the addition of any $A \subseteq \FA$ (as we can only add edges between the members of $V'$).
Notice also that the degree of every node $a_{i,j}$ is $1$, and it cannot be increased.
Therefore, the only nodes that can contribute to satisfying the safety threshold (by increasing their degree to a value greater than the degree of the evader) are the nodes in $V'$.
The degree of any $v_i$ in $G$ is $n-k+2$ (as it is connected with $n-k+1$ nodes $a_{i,j}$ and with the node $\vs$).
Therefore, in order for a given $v_i$ to have greater degree than $\vs$, we have to add to $G$ at least $k-1$ edges incident with $v_i$.

We will now show that the constructed instance of the Hiding Source by Modifying Edges problem has a solution if and only if the given instance of the Finding $k$-Clique problem has a solution.

Assume that there exists a solution to the given instance of the Finding $k$-Clique problem, i.e., a subset $V^* \subseteq V'$ forming a $k$-clique in $H$.
We will show that $\Add = V^* \times V^*$ is a solution to the constructed instance of the Hiding Source by Modifying Edges.
First, notice that indeed $\Add \subseteq \FA$, as $\FA$ contains all edges from $H$, and $V^* \times V^*$ is a clique in $H$.
Notice also that adding $\Add$ to $G$ increases the degree of $k$ nodes in $V^*$ to $n+1$, i.e., to a value greater than the degree of the evader.
Hence, there now exist $k$ nodes with degree greater than the evader.
We showed that if there exists a solution to the given instance of the Finding $k$-Clique problem, then there also exists a solution to the constructed instance of the Hiding Source by Modifying Edges problem.

Assume that there exists a solution $\Add$ to the constructed instance of the Hiding Source by Modifying Edges problem.
We will show that $V^* = \bigcup_{(v,w) \in \Add} \{v,w\}$ forms a $k$-clique in $H$.
Notice that since $\Add$ is a solution, it increases the degree of at least $k$ nodes in $V'$ (since the safety threshold is $\thr=k$) by at least $k-1$.
However, since the budget is $b=\frac{k(k-1)}{2}$, adding $\Add$ must increase the degree of \textit{exactly} $k$ nodes in $V'$ by \textit{exactly} $k-1$.
If such a choice is available, the nodes in $V^*$ form a clique in $\FA$, therefore, they also form a clique in $H$.
We showed that if there exists a solution to the constructed instance of the Hiding Source by Modifying Edges problem, then there also exists a solution to the given instance of the Finding $k$-Clique problem.

This concludes the proof.
\end{proof}

\begin{theorem}
\label{thrm:npc-rewiring-closeness}
The problem of Hiding source by Modifying Edges is NP-complete given the Closeness source detection algorithm.
\end{theorem}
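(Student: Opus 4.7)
The plan is to prove NP-completeness by reducing from Dominating Set, reusing essentially the gadget and the distance-sum bookkeeping from Theorem~\ref{thrm:npc-bots-closeness}. Membership in NP is immediate: given candidate edge modifications $(\Add,\Rem)$, one can recompute $\sdclos(\cdot, (V, (E\cup\Add)\setminus\Rem), \II)$ for every infected node in polynomial time and check the rank of $\vs$.

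For hardness, I would start from an instance $(H,k)$ of Dominating Set with $H=(V',E')$ and $|V'|=n$, and build a network $G=(V,E)$ that embeds $H$ together with the same gadget used in Theorem~\ref{thrm:npc-bots-closeness} (evader $\vs$, a short path $\vs\!-\!x\!-\!u\!-\!w$, leaves $a_1,a_2,a_3$ on $w$, edges $(w,v_i)$ for every $v_i\in V'$, and a bundle of $y$-leaves on $u$). The new twist is that the candidate ``dominator'' node $\delta$ is now a pre-existing vertex of $G$ rather than an added bot, so I would attach it once—say, as a pendant to $w$ or at the far end of a small tail off the gadget—so that its initial distance sum $D_\delta$ is large and its connectivity to $\II$ is guaranteed before any modification. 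The instance of Hiding Source by Modifying Edges is then $(G,\vs,\II,\sdclos,\thr,b,\FA,\FR)$ with $\II=V$, $\FA=\{\delta\}\times V'$, $\FR=\emptyset$, $b=k$, and $\thr$ chosen so that exactly one node must overtake $\vs$ in the closeness ranking.

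The analysis mirrors the distance table of Theorem~\ref{thrm:npc-bots-closeness}. I would first re-tune the leaf counts of the gadget (for instance, the $2n+k-1$ copies of $y_i$) by a small additive constant so that, after accounting for $\delta$'s pre-existing contribution to every row of the table, the following still holds for every $\Add \subseteq \FA$: (i) $D_{\vs}$ is a fixed constant, (ii) $D_v < D_{\vs}$ for every $v \in V\setminus\{\vs,\delta\}$, and (iii)
$$
D_\delta \;=\; C \;-\; 2|\Add| \;-\; |V'_{\Add}|,
$$
where $V'_{\Add}=\{v_i\in V'\setminus N(\delta) : N(v_i)\cap N_{G\cup\Add}(\delta)\neq\emptyset\}$ and $C$ is a constant determined by the gadget. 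Since $|V'_{\Add}|\le n-|\Add|$ with equality precisely when $N_{G\cup\Add}(\delta)\cap V'$ dominates $V'$, the inequality $D_\delta<D_{\vs}$ can be achieved with budget $k$ iff $H$ admits a dominating set of size $k$. Both directions of the reduction then follow as in Theorem~\ref{thrm:npc-bots-closeness}: a dominating set $V^*$ yields $\Add=\{\delta\}\times V^*$, $\Rem=\emptyset$; conversely, any valid $(\Add,\Rem)$ must satisfy $|\Add|=k$, $\Rem=\emptyset$, and the neighbors of $\delta$ in $V'$ dominate $V'$.

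The main obstacle is calibrating $\delta$'s initial attachment so that three competing constraints hold simultaneously: the pre-modification network must be connected on $\II$, the initial $D_\delta$ must exceed $D_{\vs}$ by more than any amount recoverable without fully dominating $V'$, and no auxiliary node (particularly those in $V'$ that become neighbors of $\delta$) can itself overtake $\vs$ and spuriously satisfy the safety threshold. All three constraints are linear in the leaf multiplicities of the gadget, so the same kind of arithmetic tuning that made Theorem~\ref{thrm:npc-bots-closeness} work should extend to this setting; the bulk of the proof will be the routine—but tedious—verification of the $D_v$ inequalities in an updated analogue of Table~\ref{tab:npc-bots-closeness}.
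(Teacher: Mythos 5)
Your proposal takes a genuinely different route from the paper. The paper does not reuse the Dominating Set gadget at all for the edge-modification case: it reduces from 3-Set Cover, builds a much simpler network in which $\vs$ is adjacent to $w$ and to all set-nodes $S_i$, allows only edges in $\{w\}\times S$ to be added, and sets $\thr=1$ so that only $w$ needs to overtake $\vs$; the arithmetic then shows $D_w<D_{\vs}$ iff $w$ is joined to exactly $k$ set-nodes covering all of $U$. Your plan instead keeps the Dominating Set reduction and the ``all other nodes must overtake $\vs$'' threshold, replacing the bot $\delta$ by a pre-existing vertex. The paper's choice buys a shorter distance table and avoids precisely the calibration problem you identify as the main obstacle; your approach, if completed, would buy uniformity with Theorem~\ref{thrm:npc-bots-closeness} at the cost of more delicate bookkeeping.

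There is, however, one concretely broken step and one substantive omission. First, of your two proposed attachment points for $\delta$, the ``pendant to $w$'' option provably destroys the reduction: every $v_j\in V'$ is adjacent to $w$ in the gadget, so if $\delta$ is also adjacent to $w$ then $d(\delta,v_j)\le 2$ for \emph{every} $v_j$ regardless of $\Add$. The distance-$3$ fallback term in $z_{\Add}$, which is exactly what encodes non-domination, disappears, and $D_\delta$ becomes a function of $|\Add|$ alone; the instance is then trivially solvable and carries no information about $H$. Only the ``end of a tail of length at least $2$'' option preserves the mechanism (so that the route through the tail never undercuts the distance-$3$ path $\delta\to v_i\to w\to v_j$). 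Second, the claim that the retuning is ``routine'' understates what remains: attaching a tail changes the $d(v,\delta)$ column for every row of the analogue of Table~\ref{tab:npc-bots-closeness}, introduces new tail vertices whose own $D_v$ must be checked against $D_{\vs}$ and folded into the threshold, and makes the constant $C$ in your formula for $D_\delta$ valid only when $\Add\neq\emptyset$ (before any addition, all of $\delta$'s distances run through the tail). None of these is fatal, but until the tail length, leaf multiplicities, and $\thr$ are fixed and the inequalities re-verified, the reduction is a plausible plan rather than a proof.
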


\begin{proof}
The problem is trivially in NP, since after adding and removing the given sets of edges $\Add$ and $\Rem$, it is possible to compute the closeness centrality ranking of all nodes in $G^\II$ in polynomial time.

We will now prove that the problem is NP-hard.
To this end, we will show a reduction from the NP-complete \textit{$3$-Set Cover} problem.
The decision version of this problem is defined by a universe, $U=\{u_1,\ldots,u_{|U|}\}$, and a collection of sets $S=\{S_1,\ldots,S_{|S|}\}$ such that $\forall_i S_i \subset U$ and $\forall_i |S_i|=3$, where the goal is to determine whether there exist $k$ elements of $S$ the union of which equals $U$.

Let $(U,S)$ be a given instance of the $3$-Set Cover problem.
Let us assume that $|S|+|U| \geq 8$, note that all instances where $|S|+|U| < 8$ can be solved in polynomial time.
We will now construct an instance of the Hiding Source by Modifying Edges problem.

First, let us construct a network $G=(V,E)$ where:
\begin{itemize}
\item $V = \{ \vs, w \} \cup S \cup U \cup \bigcup_{i=1}^{|S|-k+1} \{a_i\}$,
\item $E = \{(\vs,w)\} \cup \bigcup_{a_i \in V} \{(w,a_i)\} \cup \bigcup_{S_i \in V} \{(\vs,S_i)\} \cup \bigcup_{S_i \in V} \bigcup_{u_j \in S_i} \{(S_i,u_j)\}$.
\end{itemize}
In what follows we will denote the set of nodes $S_1,\ldots, \S_{|S|}$ by $S$, and we will denote the set of nodes $u_1,\ldots, u_{|U|}$ by $U$.
Notice that $u_j \in S_i$ in the last union in the formula of $E$ means that we connect a given node $S_i \in S$ only with nodes $u_j \in U$ corresponding to the elements contained in $S_i$ in the given instance of the $3$-Set Cover problem.
An example of the construction of the network $G$ is presented in Figure~\ref{fig:npc-rewiring-closeness}.

\begin{figure}[tbh!]
\centering
\includegraphics[width=.75\linewidth]{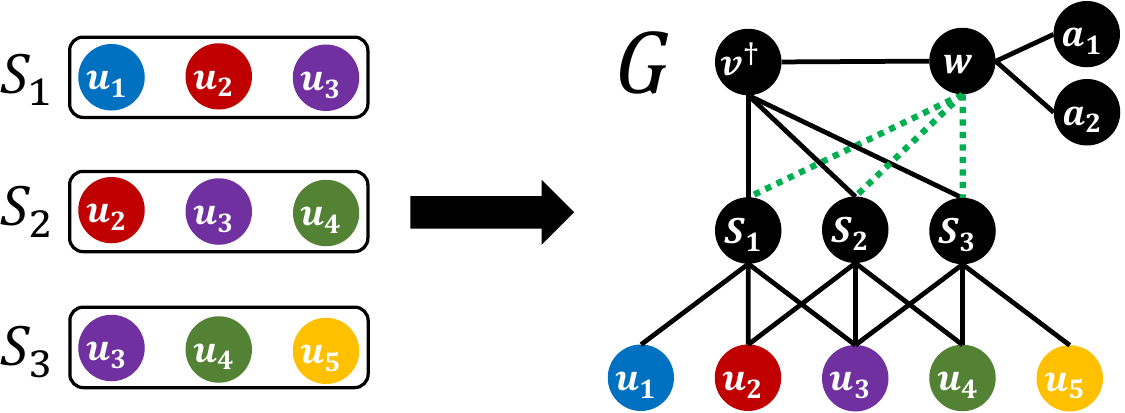}
\caption{
Construction of the network used in the proof of Theorem~\ref{thrm:npc-rewiring-closeness}.
Green dotted edges are allowed to be added.
Colors of the nodes $u_i$ express correspondence with the elements of the universe in the $3$-Set Cover problem instance.
}
\label{fig:npc-rewiring-closeness}
\end{figure}

Now, consider the instance $(G,\vs,\II,\sd,\thr,b,\FA,\FR)$ of the Hiding Source by Modifying Edges problem, where:
\begin{itemize}
\item $G$ is the network we just constructed,
\item $\vs$ is the evader,
\item $\II=V$, i.e., all nodes in $G$ are infected,
\item $\sd$ is the Closeness source detection algorithm,
\item $\thr=1$ is the safety threshold,
\item $b=k$, where $k$ is the size of solution of the given instance of the $3$-Set Cover problem,
\item $\FA = \{w\} \times S$, i.e., we allow to add any edges between $w$ and nodes in $S$,
\item $\FR = \emptyset$, i.e., none of the edges can be removed.
\end{itemize}

Since $\FR = \emptyset$, for any solution to the constructed instance of the Hiding Source by Modifying Edges problem we must have $\Rem = \emptyset$.
Hence, we will omit mentions of $\Rem$ in the remainder of the proof, and we will assume that a solution consists just of $\Add$.

First, let us analyze the closeness centrality values of the nodes in $G$ after the addition of any $A \subseteq \FA$.
Let $x_A$ denote the number of nodes $u_j \in U$ such that $d(w,u_j)=2$, i.e., the number of nodes $u_j \in U$ such that $w$ is connected (via the edges from $A$) with at least one node $S_i$ connected with $u_j$ (notice that there are no other possibilities for $w$ to be in distance $2$ from a node in $U$).
Moreover, let $D_v$ denote the sum of distances from $v$ to all nodes in the network, i.e., $D_v = \sum_{w \in V} d(v,w)$.
Notice that $\sdclos(v,G,\II) = \frac{1}{\sum_{w \in \II} D_v}$, i.e., greater value of $D_v$ implies lower position of the ranking of nodes according to Closeness source detection algorithm.
Table~\ref{tab:npc-rewiring-closeness} presents the computation of $D_v$ for every node $v \in V$ after the addition of a given $A \subseteq \FA$.

\begin{table}[tbh]
\centering
\begin{tabular}{ l c c c c c c }
$v$ & $d(v,\vs)$ & $d(v,w)$ & $\sum_j d(v,a_j)$ & $\sum_j d(v,S_j)$ & $\sum_j d(v,u_j)$ & $D_v$ \\
\hline
$\vs$ & $0$ & $1$ & $2(|S|-k+1)$ & $|S|$ & $2|U|$ & $3|S|+2|U|-2k+3$ \\
$w$ & $1$ & $0$ & $|S|-k+1$ & $2|S|-|A|$ & $3|U|-x_A$ & $3|S|+3|U|-k-|A|-x_A+2$ \\
$a_i$ & $2$ & $1$ & $2(|S|-k)$ & $3|S|-|A|$ & $4|U|-x_A$ & $5|S|+4|U|-2k-|A|-x_A+3$ \\
$S_i$ & $1$ & $\geq 1$ & $\geq 2(|S|-k+1)$ & $2(|S|-1)$ & $3+3(|U|-3)$ & $\geq 4|S|+3|U|-2k-4$ \\
$u_i$ & $2$ & $\geq 2$ & $\geq 3(|S|-k+1)$ & $\geq |S|$ & $\geq 2(|U|-1)$ & $\geq 4|S|+2|U|-3k+5$ \\
\hline
\end{tabular}
\caption{Sums of distances between nodes of the network after the addition of $A \subseteq \FA$, used in the proof of Theorem~\ref{thrm:npc-rewiring-closeness}.}
\label{tab:npc-rewiring-closeness}
\end{table}

Since the safety threshold is $\thr=1$, only one node have to have greater closeness centrality than $\vs$ after he addition of a given $A \subseteq \FA$ in order for said $A$ to be solution to the constructed instance of the Hiding Source by Modifying Edges problem.
However, notice that after adding any $A$ to the network we have $D_{\vs} < D_{a_i}$, $D_{\vs} < D_{u_i}$, and $D_{\vs} < D_{S_i}$ (the last inequality holds when we use the assumption that $|S|+|U| \geq 8$).
Hence, the only node that can have greater closeness centrality than $\vs$ (and therefore higher position in the ranking used to determine the source of diffusion) is $w$.
In other words, a given $A \subseteq \FA$ is a solution to the constructed instance of the Hiding Source by Modifying Edges problem if and only if we have $D_{\vs} > D_w$ after the addition of $A$.

We will now show that the constructed instance of the Hiding Source by Modifying Edges problem has a solution if and only if the given instance of the $3$-Set Cover problem has a solution.

Assume that there exists a solution to the given instance of the $3$-Set Cover problem, i.e., a subset $S^* \subseteq S$ of size $k$ the union of which is the universe $U$.
We will show that $\Add = \{w\} \times S^*$ (i.e., connecting $w$ with nodes corresponding to all sets in $S^*$) is a solution to the constructed instance of the Hiding Source by Modifying Edges problem.
First, notice that after the addition of $\Add$ there exists a path of length two from $w$ to every node $u_j \in U$, leading through the node corresponding to an element $S_i \in S^*$ containing $u_j$, with which $w$ is now connected.
Hence, we have that $x_{\Add} = |U|$ and $|\Add|=k$, which gives us:
$$
D_w = 3|S|+3|U|-k-|\Add|-x_{\Add}+2 = 3|S|+2|U|-2k+2 < 3|S|+2|U|-2k+3 = D_{\vs}.
$$
Therefore, after the addition of $\Add$ node $w$ has greater closeness centrality than the evader.
We showed that if there exists a solution to the given instance of the $3$-Set Cover problem, then there also exists a solution to the constructed instance of the Hiding Source by Modifying Edges problem.

Assume that there exists a solution $\Add$ to the constructed instance of the Hiding Source by Modifying Edges problem.
We will show that $S^* = \{S_i \in S : (w,S_i) \in \Add\}$ is a cover of $U$.
Let us compute the difference between $D_{\vs}$ and $D_w$ after the addition of $\Add$ (based on the values from Table~\ref{tab:npc-rewiring-closeness}):
$$
D_{\vs} - D_w = |\Add| + x_{\Add}+1-k-|U|.
$$
Since $\Add$ is a solution, $w$ must have greater centrality measure than $\vs$, implying $D_{\vs} - D_w > 0$, which gives us:
$$
|\Add| + x_{\Add}+1 > k+|U|.
$$
Notice however that $|\Add| \leq k$ (since the evader's budget is $b = k$) and that $x_{\Add} \leq |U|$ (since $x_{\Add}$ is the number of nodes in $U$ at distance $2$ from $w$).
Therefore, we must have $|\Add|=k$ and $x_{\Add}=|U|$, which means that after the addition of $\Add$ for every $u_j \in U$ we have a node $S_i \in S$ connected to both $w$ and $u_j$ (there is no other way of forming a path of length $2$ between $w$ and $u_j$).
Consequently, every element of the universe $u_j \in U$ is covered by at least one set $S_i \in S^*$, as $w$ is connected with a node $S_i$ only if it belongs to $S^*$, and $S_i$ is connected with $u_j$ only if it contains $u_j$ in the $3$-Set Cover problem instance.
We showed that if there exists a solution to the constructed instance of the Hiding Source by Modifying Edges problem, then there also exists a solution to the given instance of the $3$-Set Cover problem.

This concludes the proof.
\end{proof}

\begin{theorem}
\label{thrm:npc-rewiring-betweenness}
The problem of Hiding source by Modifying Edges is NP-complete given the Betweenness source detection algorithm.
\end{theorem}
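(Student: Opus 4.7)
The plan is to establish NP-membership and then prove NP-hardness by a reduction from Finding $k$-Clique, closely adapting the construction used in Theorem~\ref{thrm:npc-bots-betweenness}. NP-membership is immediate: given candidate edge sets $\Add$ and $\Rem$, one can compute betweenness centrality for every node in $G^\II$ in polynomial time. The remainder of the argument focuses on hardness.

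For the reduction, I would take an instance $(H,k)$ of Finding $k$-Clique with $H=(V',E')$ and $k \geq 3$, and mirror the gadget from Theorem~\ref{thrm:npc-bots-betweenness}, but replace the single bot $\delta$ by an already-infected node $z \in V$ pre-attached to the hub $w$. The gadget retains the ``witness'' nodes $\bar{e}_{i,j}$ for every non-edge of $H$ (each connected to $v_i$ and $v_j$), the central hub $w$ adjacent to every other node, a large $k^3$-clique $Y$ that forces $w$ to dominate the ranking by a wide margin, and the evader $\vs$ together with the leaves $\{x_1,\ldots,x_k\}$ and the node $u$ whose two extra edges to $x_1,x_2$ pin $C_{\vs}$ at $\frac{k(k-1)}{4}-\frac{1}{6}$. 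I set $\II=V$, $\FR=\emptyset$, $\FA = \{z\}\times V'$, budget $b=k$, and safety threshold $\thr=k+2$. Because $\FR=\emptyset$, every feasible solution satisfies $\Rem=\emptyset$, so the evader's only move is to choose which $k$ members of $V'$ to attach to $z$.

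The technical core is the same ``betweenness accounting'' as in Theorem~\ref{thrm:npc-bots-betweenness}. Thanks to the $Y$-clique, $w$ permanently dominates the ranking; for every other auxiliary node (the $x_i$, the $a_i$-style nodes, and the $\bar{e}_{i,j}$) a direct computation gives a betweenness value strictly below $C_{\vs}$ regardless of $\Add$. Hence the only candidates to exceed $\vs$ are $w$, $z$ itself, and the $V'$-neighbors of $z$; combined with the threshold $k+2$ and budget $k$, this forces $z$ to acquire exactly $k$ new neighbors in $V'$ and to satisfy $C_z > C_{\vs}$. A straightforward enumeration of shortest paths then yields $C_z = \frac{k(k-1)}{4}-\frac{z_A}{6}$, where $z_A$ counts pairs of newly attached neighbors of $z$ that are non-adjacent in $H$ (each such pair has its shortest $z$-free path mediated by the corresponding $\bar{e}_{i,j}$, costing $z$ a third of a unit). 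Therefore $C_z > C_{\vs}$ iff $z_A = 0$ iff the $k$ chosen neighbors form a clique in $H$, giving the desired equivalence.

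The main obstacle I anticipate is verifying that promoting $\delta$ from an external bot to a pre-existing infected node $z$ does not silently perturb the delicate betweenness inequalities underpinning Theorem~\ref{thrm:npc-bots-betweenness}. Specifically, one must check that the pre-existing edge $(w,z)$ creates no new shortest-path shortcuts through $z$ before any modification (so $C_z$ starts at $0$), that $z$ being in $\II$ does not shift any of the ``auxiliary node'' centralities past $C_{\vs}$, and that all the tie-breaking cases in the formula for $C_z$ after attachment survive verbatim. Once these bookkeeping checks are handled, the two-way equivalence between $k$-cliques in $H$ and solutions to the constructed instance follows immediately, completing the proof.
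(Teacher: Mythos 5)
Your reduction has a genuine gap, and it sits exactly at the point you defer to a ``bookkeeping check.'' In the construction of Theorem~\ref{thrm:npc-bots-betweenness}, the threshold $\thr=k+2$ is reachable only because the $k$ nodes of $V'$ attached to $\delta$ each acquire betweenness at least $k^3/|\Add|\geq k^2>C_{\vs}$: since $\delta$ has no other attachment to the graph, every shortest path from $\delta$ to the $k^3$ nodes of $Y$ is forced through one of $\delta$'s chosen $V'$-neighbors, and it is these $k$ nodes, together with $w$ and $\delta$, that supply the $k+2$ nodes ranked above $\vs$. The moment you pre-attach $z$ to the hub $w$, this mechanism disappears: the unique shortest path from $z$ to each $y_j$ becomes the length-$2$ path $\langle z,w,y_j\rangle$, so the chosen $v_i$ mediate none of the $z$-to-$Y$ traffic and pick up only small fractional contributions (halves or thirds of $z$-to-$\bar{e}_{i,j}$ paths). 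Consequently, after any $\Add\subseteq\{z\}\times V'$ at most $w$ and $z$ can exceed $C_{\vs}$, the rank of $\vs$ never reaches $k+2$, and the forward direction of your equivalence fails: a $k$-clique in $H$ no longer yields a solution of the constructed instance. Dropping the pre-attachment (keeping $z$ as an edgeless infected node) would restore the computation of Theorem~\ref{thrm:npc-bots-betweenness} verbatim, but the edge to $w$ is precisely what you need for the connectivity requirement on the modified $G^{\II}$; recalibrating the threshold to $2$ instead is not an innocent fix either, because then one must rule out that some $v_i$ already sits above $C_{\vs}$ independently of $\Add$ (e.g., by mediating paths between the witness nodes $\bar{e}_{i,j}$ incident to it), which would make the instance trivially solvable.

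For contrast, the paper proves this theorem by a removal-based reduction rather than an addition-based one: it sets $\FA=\emptyset$ and $\FR=\{\vs\}\times V'$ in a graph where $\vs$ is adjacent to all of $V'$ (with $E'$ retained among $V'$, a pendant $a_i$ on each $v_i$, and a separate $W$--$B$ gadget fixing the rest of the ranking), so that $C_{\vs}$ drops to $0$ exactly when the surviving neighborhood of $\vs$ is a clique, while the budget $n-k$ forces at least $k$ neighbors to survive. That route never has to account for who inherits the long-range shortest paths of a newly wired node, which is the step your adaptation breaks.
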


\begin{proof}
The problem is trivially in NP, since after adding and removing the given sets of edges $\Add$ and $\Rem$, it is possible to compute the betweenness centrality ranking of all nodes in $G^\II$ in polynomial time.

We will now prove that the problem is NP-hard.
To this end, we will show a reduction from the NP-complete \textit{Finding $k$-Clique} problem.
The decision version of this problem is defined by a network, $H=(V',E')$, where $V'=\{v_1,\ldots,v_n\}$, and a constant $k \in \N$, where the goal is to determine whether there exist $k$ nodes forming a clique in $H$.

Let $(H,k)$ be a given instance of the Finding $k$-Clique problem.
Let us assume that $n \geq 3$, i.e., network $H$ has at least three nodes.
We will now construct an instance of the Hiding Source by Modifying Edges problem.

First, let us construct a network $G=(V,E)$ where:
\begin{itemize}
\item $V = V' \cup \{ \vs, w_1, w_2 \} \cup \bigcup_{i=1}^{n} \{a_i\} \cup \bigcup_{i=1}^{n-2} \{b_i\}$,
\item $E = E' \cup \{(v_1,w_1),(v_1,w_2)\} \cup \bigcup_{v_i \in V} \{(v_i,\vs), (v_i, a_i)\} \cup \bigcup_{b_i \in V} \{(b_i,w_1),(b_i,w_2)\}$.
\end{itemize}
In what follows we will denote the set of nodes $a_1,\ldots, a_n$ by $A$, we will denote the set of nodes $b_1,\ldots, b_{n-2}$ by $B$, and we will denote the set of nodes $w_1,w_2$ by $W$.
Notice that two nodes $v_i,v_j \in V'$ are connected in $G$ if and only if the are connected in $H$ (as $E'$ is included in $E$, and no other edges are added between the members of $V'$).
An example of the construction of the network $G$ is presented in Figure~\ref{fig:npc-rewiring-betweenness}.

\begin{figure}[tbh!]
\centering
\includegraphics[width=.75\linewidth]{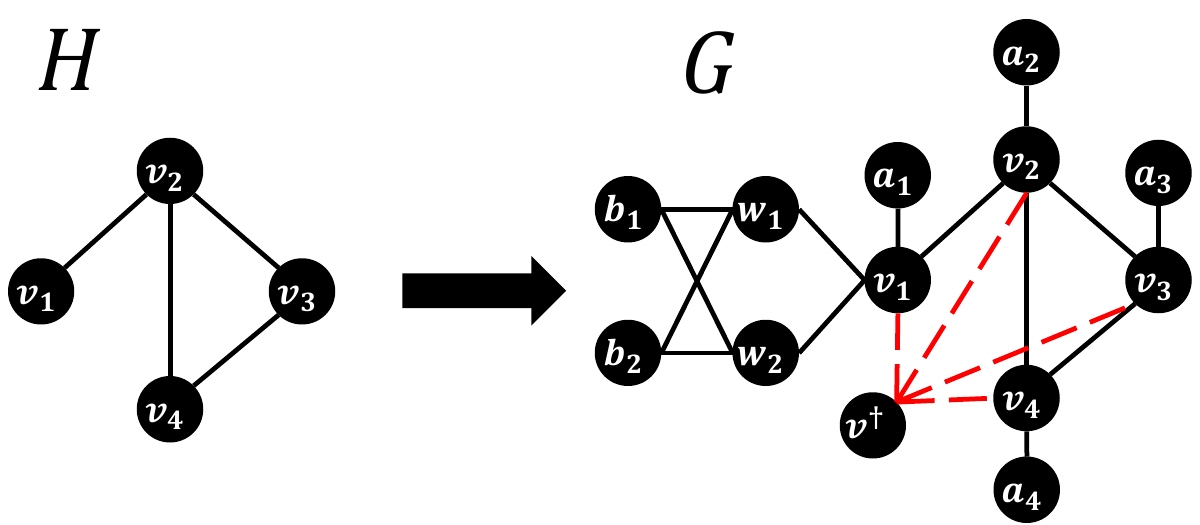}
\caption{
Construction of the network used in the proof of Theorem~\ref{thrm:npc-rewiring-betweenness}.
Red dashed edges are allowed to be removed.
}
\label{fig:npc-rewiring-betweenness}
\end{figure}

Now, consider the instance $(G,\vs,\II,\sd,\thr,b,\FA,\FR)$ of the Hiding Source by Modifying Edges problem, where:
\begin{itemize}
\item $G$ is the network we just constructed,
\item $\vs$ is the evader,
\item $\II=V$, i.e., all nodes in $G$ are infected,
\item $\sd$ is the Betweenness source detection algorithm,
\item $\thr=2n$ is the safety threshold,
\item $b=n-k$, where $k$ is the size of clique in the given instance of Finding $k$-Clique problem,
\item $\FA = \emptyset$, i.e., none of the edges can be removed,
\item $\FR = \{\vs\} \times V'$, i.e., only edges between $\vs$ and nodes in $V'$ can be removed.
\end{itemize}

Since $\FA = \emptyset$, for any solution to the constructed instance of the Hiding Source by Modifying Edges problem we must have $\Add = \emptyset$.
Hence, we will omit mentions of $\Add$ in the remainder of the proof, and we will assume that a solution consists just of $\Rem$.

Let $C_v$ denote the value of $\sdbetw(v,G,\II)$, i.e., the value used to determine the source of diffusion.
To remind the reader:
$$
C_v = \sum_{u \neq w : u,w \in \II \setminus \{v\}} \frac{|\{\pi \in \Pi(u,w): v \in \pi\}|}{|\Pi(u,w)|}
$$
where $\Pi(u,w)$ is the set of shortest paths between the nodes $u$ and $w$.

We will now make the following observations about the values of $C_v$ in $G$ after removal of an arbitrary $R \subseteq \FR$:
\begin{itemize}
\item $C_{v_i} \geq 3n-1$ for any $v_i \in V'$, as it controls all shortest paths between $a_i$ an all other $3n-1$ nodes,
\item $C_{w_i} = \frac{(n-2)(2n+1)}{2}$ for any $w_i \in W$, as it controls half of all the shortest paths between $n-2$ nodes in $B$ and all other $2n+1$ nodes in $\{\vs\} \cup V' \cup A$ (the other half is controlled by the other node in $W$) and it does not control any other shortest paths,
\item $C_{a_i} = 0$ for any $a_i \in A$, as it does not control any shortest paths,
\item $C_{b_i} = \frac{1}{n-1}$ for any $b_i \in B$, as it controls one of $n-1$ shortest paths between $w_1$ and $w_2$ (the other paths are controlled by other nodes in $B$ and by the node $v_1$) and it does not control any other shortest paths.
\end{itemize}

Notice that if after the removal of a given $R \subseteq \FR$ the evader $\vs$ is connected with at least two nodes in $v_i,v_j \in V'$ that do not have an edge between them, then we have $C_{\vs} \geq \frac{1}{n-1}$, as $\vs$ controls one shortest paths between $v_i$ and $v_j$, where other paths can only be controlled by the other $n-2$ nodes in $V'$.
The only nodes with the values of $C_v$ greater than $\frac{1}{n-1}$ are the $n+2$ nodes in $V'\cup W$.
However, the safety threshold is $\thr=2n$, so we also need $n-2$ nodes in $B$ to have greater value of $C_v$ than $\vs$ (notice that since $C_{a_i}=0$ for all $a_i \in A$, they will never contribute to the safety threshold).
Therefore, the safety margin requirement is fulfilled only if all neighbors of $\vs$ form a clique (in which case $C_{\vs}=0$).

We will now show that the constructed instance of the Hiding Source by Modifying Edges problem has a solution if and only if the given instance of the Finding $k$-Clique problem has a solution.

Assume that there exists a solution to the given instance of the Finding $k$-Clique problem, i.e., a subset $V^* \subseteq V'$ forming a $k$-clique in $H$.
Notice that after the removal of $\Rem = \{\vs\} \times (V' \setminus V^*)$, the evader $\vs$ is only connected with the nodes in $V^*$.
Since they form a clique in $H$, they also form a clique in $G$, hence according to the above observation the safety threshold is met.
Notice also that $|\Rem|=n-k$, hence the solution is within the evader's budget.
We showed that if there exists a solution to the given instance of the Finding $k$-Clique problem, then there also exists a solution to the constructed instance of the Hiding Source by Modifying Edges problem.

Assume that there exists a solution $\Rem$ to the constructed instance of the Hiding Source by Modifying Edges problem.
As observed above, the remaining nodes with which $\vs$ is connected, i.e., nodes $V^* = \{ v_i \in V':(vs,v_i) \notin \Rem\}$ must form a clique in $G$, hence they also must form a clique in $H$.
Since the budget of the evader is $b=n-k$, there are at least $k$ nodes in $V^*$.
We showed that if there exists a solution to the constructed instance of the Hiding Source by Modifying Edges problem, then there also exists a solution to the given instance of the Finding $k$-Clique problem.

This concludes the proof.
\end{proof}

\begin{theorem}
\label{thrm:npc-rewiring-rumor}
The problem of Hiding source by Modifying Edges is NP-complete given the Rumor source detection algorithm.
\end{theorem}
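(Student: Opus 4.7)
The plan is to mirror the structure of Theorem~\ref{thrm:npc-bots-rumor} but adapt the reduction so that the evader's hiding action is the removal of edges rather than the attachment of a bot. Membership in NP is immediate: given candidate sets $\Add, \Rem$, we can build $G \cup \Add \setminus \Rem$, compute the (possibly worst-case over BFS-tree realizations) rumor centrality score of every infected node in polynomial time, and check the ranking of $\vs$.

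For NP-hardness I would reduce from \textit{Exact 3-Set Cover} (the same source problem used for the bot version, since it already captures the ``pick exactly $k$ sets whose union is $U$'' structure that rumor centrality naturally detects through BFS subtree sizes). Given an instance $(U,S)$ with $|U| = 3k$, I would construct $G$ with: the evader $\vs$; a ``trunk'' of degree-controlling nodes $w, x, a_1, a_2, \ldots$ playing the same role as in the bot-reduction gadget; a node $S_i$ for each set; a node $u_j$ for each universe element, with $(S_i, u_j) \in E$ iff $u_j \in S_i$; and additional auxiliary nodes (analogous to the $Q_i, y_i, z_i$ of Theorem~\ref{thrm:npc-bots-rumor}) whose purpose is to pin the baseline values of $\prod_w \Theta^{v}_w$ for every non-evader node above $\vs$'s baseline, so that only one designated competitor $v^\star$ can possibly overtake $\vs$. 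Crucially, in $G$ I will attach $\vs$ directly to every $S_i$ and to $v^\star$, and set $\FR$ to consist exactly of the edges $\{(\vs, S_i) : S_i \in S\}$ together with a small fixed set of ``mandatory'' edges; $\FA = \emptyset$, so only removals are allowed. The budget $b$ will be $|S| - k$.

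The core claim to prove is then: after removing $R \subseteq \FR$, the competitor $v^\star$ overtakes $\vs$ in the rumor ranking (under every BFS-tree realization) if and only if $R$ leaves $\vs$ adjacent to exactly $k$ set-nodes whose corresponding sets form an exact cover of $U$. The forward direction is the delicate side: I would use an analogue of Lemma~\ref{lem:npc-bots-rumor}, showing that over all valid BFS trees the product $\prod_{S_i} \Theta^{v^\star}_{S_i}$ is minimized exactly when the $S_i$ still adjacent to $\vs$ (and hence forced to be reached through $v^\star$ in $v^\star$'s BFS tree) cover $U$ without overlap, giving the product value $4^k$; any non-exact cover forces at least one universe element to appear as a grandchild of two different $S_i$'s, raising the minimum product by a factor of $3/2$ and destroying the required inequality. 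The connectivity constraint $G^{\II}$ is preserved because every $u_j$ keeps a path through at least one remaining $S_i$, and every $S_i$ with $(\vs, S_i)$ removed remains connected via its universe-element neighbors.

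The main obstacle will be exactly the bookkeeping in the BFS-tree analysis: because the rumor algorithm is nondeterministic we must establish the ranking inequality in the \emph{worst} realization of the BFS tree for every competitor node, while in the \emph{best} realization for $\vs$ itself. This forces a case split by the location of $v^\star$ (and each auxiliary node) in its own BFS tree, and the calculations must survive in the presence of up to $|S|-k$ removed edges from $\vs$. I would handle this by choosing the auxiliary-node multiplicities (the counts of $y_i, z_i, a_i, \ldots$) large enough that every competitor other than $v^\star$ has a provably enormous product no matter which edges are removed, isolating the delicate comparison to $C_{\vs}(R)$ versus $C_{v^\star}(R)$, after which the argument reduces to the combinatorial Lemma~\ref{lem:npc-bots-rumor}-style bound on $\prod_{S_i} \Theta^{v^\star}_{S_i}$ and yields the equivalence with exact cover.
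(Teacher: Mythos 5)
Your NP-membership argument is fine, but the hardness half has a genuine gap: the whole reduction rests on the claim that ``$v^\star$ overtakes $\vs$ in every BFS realization if and only if the $k$ surviving set-nodes form an exact cover,'' and this claim is neither proved nor, as sketched, plausible. The mechanism you borrow from Theorem~\ref{thrm:npc-bots-rumor} works there because the bot $\delta$ is a \emph{fresh} node whose BFS tree has the chosen $S_i$ as children and the universe elements as grandchildren, so $\prod_{S_i}\Theta^{\delta}_{S_i}$ directly records the cover structure. In your removal version no node has a BFS tree of that shape: the surviving $S_i$ remain adjacent to $\vs$, so in the BFS tree rooted at your competitor $v^\star$ they sit \emph{inside} the subtree of $\vs$, and $\prod_{S_i}\Theta^{v^\star}_{S_i}$ is not the analogue of the bot-version product (indeed, if $v^\star$'s only neighbour is $\vs$ then $C_{v^\star}=(|\II|-1)\,C_{\vs}>C_{\vs}$ for every $R$, so $v^\star$ never overtakes; giving $v^\star$ further neighbours reintroduces exactly the uncontrolled case analysis you are trying to avoid). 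Worse, the quantity that actually varies with $R$ is $C_{\vs}(R)$ itself, and it moves in the wrong direction for your encoding: if the surviving sets cover $U$, every $u_j$ stays at depth $2$ in $\vs$'s BFS tree, the tree stays shallow, $C_{\vs}(R)$ stays small, and $\vs$ remains top-ranked; it is the \emph{failure} to cover $U$ that pushes nodes deeper and hides $\vs$. So the reduction as sketched appears to reward non-covers rather than exact covers, and padding with auxiliary nodes does not address this directional problem.

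For comparison, the paper sidesteps all of this by reducing from Finding a Hamiltonian Cycle rather than Exact $3$-Set Cover. There $\vs$ hangs off $v_1$ by a single edge, $w$ is attached to $N_H(v_1)$ and to three long pendant paths, only edges of $E'$ and $\{w\}\times N_H(v_1)$ may be removed, and the decisive quantity is the post-removal distance $d_R$ between $v_1$ and $w$: a dedicated lemma shows $R$ is a solution if and only if $d_R=n$, which forces $V'$ to form a path from $\vs$ to $w$, i.e.\ a Hamiltonian cycle in $H$. The ranking comparison is then between $\vs$ and a pendant leaf $a_{1,1}$, with $C_{\vs}$ and $C_{a_{1,1}}$ controlled by explicit factorial formulas in the single monotone parameter $d_R$ rather than by a cover structure. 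To salvage your plan you would need to redesign the gadget so that the combinatorial choice is encoded in a quantity that the \emph{evader's own} subtree product penalises in the right direction, which amounts to a different proof.
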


\begin{proof}
The problem is trivially in NP, since after adding and removing the given sets of edges $\Add$ and $\Rem$, it is possible to compute the rumor centrality ranking of all nodes in $G^\II$ in polynomial time.

We will now prove that the problem is NP-hard.
To this end, we will show a reduction from the NP-complete \textit{Finding a Hamiltonian Cycle} problem.
The decision version of this problem is defined by a network, $H=(V',E')$, where $V'=\{v_1,\ldots,v_n\}$, and where the goal is to determine whether there exists a Hamiltonian cycle in $H$, i.e., a cycle that visits each node exactly once.

Let $(H)$ be a given instance of the problem of Finding a Hamiltonian Cycle.
We will now construct an instance of the problem of Hiding Source by Modifying Edges.

First, let us construct a network $G=(V,E)$ where:
\begin{itemize}
\item $V = V' \cup \{ \vs, w \} \cup \bigcup_{i=1}^{3} \bigcup_{j=1}^{n} \{a_{i,j}\}$,
\item $E = E' \cup \{(\vs,v_1)\} \cup \bigcup_{v_i \in N_H(v_1)} \{(v_i,w)\} \cup \bigcup_{a_{i,n} \in V} \{(a_{i,n},w)\} \cup \bigcup_{i=1}^{3} \bigcup_{j=1}^{n-1} \{(a_{i,j},a_{i,j+1})\}$.
\end{itemize}
Notice that the evader $\vs$ is connected only with $v_1$, while $w$ is connected with all neighbors of $v_1$ in $H$ (we can assume that $v_1$ has at least two neighbors in $H$, as otherwise $H$ definitely does not have a Hamiltonian cycle).
An example of the construction of the network $G$ is presented in Figure~\ref{fig:npc-rewiring-rumor}.

\begin{figure}[tbh!]
\centering
\includegraphics[width=.9\linewidth]{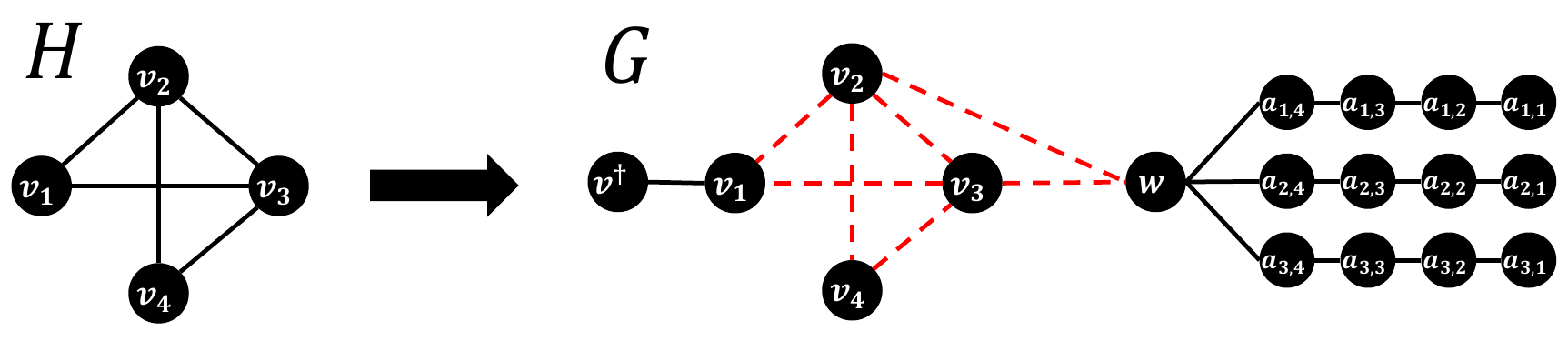}
\caption{
Construction of the network used in the proof of Theorem~\ref{thrm:npc-rewiring-rumor}.
Red dashed edges are allowed to be removed.
}
\label{fig:npc-rewiring-rumor}
\end{figure}

Now, consider the instance $(G,\vs,\II,\sd,\thr,b,\FA,\FR)$ of the problem of Hiding Source by Modifying Edges, where:
\begin{itemize}
\item $G$ is the network we just constructed,
\item $\vs$ is the evader,
\item $\II=V$, i.e., all nodes in $G$ are infected,
\item $\sd$ is the Rumor source detection algorithm,
\item $\thr=4n+1$ is the safety threshold,
\item $b=|E'|+|N_H(v_1)|-n$,
\item $\FA = \emptyset$, i.e., none of the edges can be added,
\item $\FR = E' \cup (\{w\} \times N_H(v_1))$, i.e., only edges belonging to the original set of edges in $H$ and edges between $w$ and neighbors of $v_1$ in $H$ can be removed.
\end{itemize}

Since $\FA = \emptyset$, for any solution to the constructed instance of the problem of Hiding Source by Modifying Edges, we must have $\Add = \emptyset$.
Hence, we will omit mentions of $\Add$ in the remainder of the proof, and we will assume that a solution consists of just $\Rem$.

Let $d_R$ denote distance between $v_1$ and $w$ in $G$ after the removal of $R$, i.e., $d_R=d_{(V,E \setminus R)}(v_1,w)$.
We will first prove the following lemma.

\begin{lemma}
\label{lem:npc-rewiring-rumor}
A given $R$ is a solution to the constructed instance of the problem of Hiding Source by Modifying Edges if and only if $d_R=n$.
\end{lemma}

\begin{proof}[Proof of Lemma~\ref{lem:npc-rewiring-rumor}]
We first show that if $d_R=n$ then $R$ is a solution to the constructed instance of the Hiding Source by Modifying Edges problem.
Notice that $d_R=n$ implies that all nodes in $V'$ form a path between $\vs$ and $w$ (as presented in the example in Figure~\ref{fig:npc-rewiring-rumor-example}), as there are exactly $n$ nodes in $V'$, no other nodes can be part of the shortest path between $\vs$ and $w$, and if at least one additional edge was added to this path, the distance between $v_1$ and $w$ would be smaller than $n$.

\begin{figure}[tbh!]
\centering
\includegraphics[width=.6\linewidth]{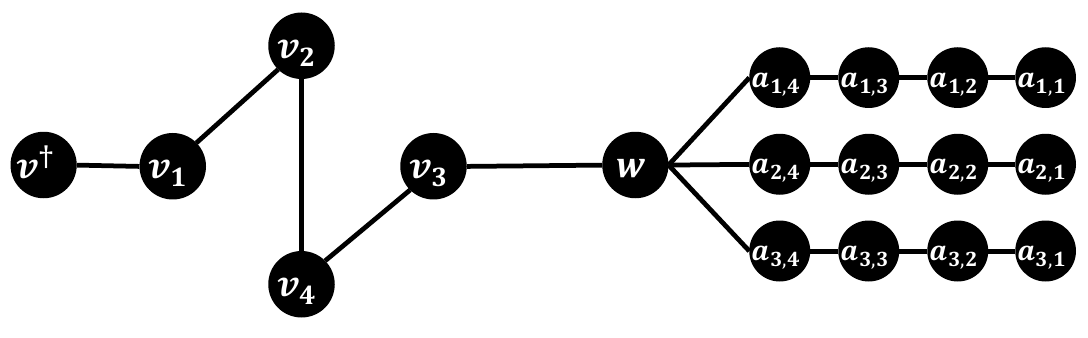}
\caption{
Network $G$ from Figure~\ref{fig:npc-rewiring-rumor} after the removal of $R$ such that nodes in $V'$ form a path between $\vs$ and $w$.
}
\label{fig:npc-rewiring-rumor-example}
\end{figure}

To remind the reader, the score assigned to a given node by the Rumor source detection algorithm is $\sdrumor(v,G,\II) = \frac{|\II|!}{\prod_{w \in \II} \Theta^v_w}$ where $\Theta^v_w$ is the size of the subtree of $w$ in the BFS tree of $G^\II$ rooted at $v$.
Let $C_v(d_R)$ denote $\prod_{w \in \II \setminus \{v\}} \Theta^v_w$ in $G$ where $R$ was removed.
Notice that greater $C_v(d_R)$ implies lower $\sdrumor(v,(V,E \setminus R),\II)$ and vice versa, as we have $\sdrumor(v,(V,E \setminus R),\II) = \frac{|\II|!}{|V| C_v(d_R)}$.

Let us now compute the values of $C_v(n)$ of all nodes in $G$, i.e., the value of $C_v$ when the removal of $R$ caused the nodes in $V'$ the form a path between $\vs$ and $w$:

\begin{itemize}
\item $C_{\vs}(n) = (4n+1)(4n)\ldots(3n+1) n!^3 = \frac{(4n+1)!n!^3}{(3n)!},$
\item $C_{v_i}(n) = j! (4n-j+1)(4n-j)\ldots(3n+1) n!^3 = \frac{j!(4n-j+1)!n!^3}{(3n)!}$, where $v_i$ is $j$-th node on the path from $\vs$ to $w$,
\item $C_w(n) = (n+1)!n!^3,$
\item $C_{a_{i,j}}(n) = (j-1)!(4n-j+2)(4n-j+1)\ldots(3n+2)(n+1)!n!^2 = \frac{(j-1)!(4n-j+2)!(n+1)!n!^2}{(3n+1)!}.$
\end{itemize}

To give the reader a better understanding of how the values of $C_v(n)$ are computed, we will describe in more detail the computation for $C_{\vs}(n)$; values for other nodes are computed analogically.
We root the BFS tree of network $G$ after the removal of $R$ in node $\vs$ (since the network is already a tree, all of its edges are part of the BFS tree).
The subtree rooted at $v_1$ is then of size $4n+1$, as it contain all nodes of the network other than $\vs$.
The subtree of the next node from $V'$ on the path from $\vs$ to $w$ (in case of the example in Figure~\ref{fig:npc-rewiring-rumor-example} it is $v_2$) is of size $4n$.
The size of the subtree of each subsequent node on the path from $\vs$ to $w$ is lower by one, until the subtree rooted at $w$, which is of size $3n+1$.
Based on similar reasoning, the product of sizes of subtrees on each branch consisting of nodes $a_{i,j}$ for a fixed $i$ is $n!$, and there are three such branches, which gives us $n!^3$.

To prove that $R$ is a solution to the constructed instance of the problem of Hiding Source by Modifying Edges, we now have to show that the safety threshold is met, i.e., that the evader has lower rumor centrality (greater $C_v(n)$) than all other nodes in $G$.
Indeed, we have:
\begin{itemize}
\item $C_{\vs}(n) / C_{v_i}(n) = \frac{(4n+1)!}{j!(4n-j+1)!} > 1$, given that $1 \leq j \leq n$ as $v_i$ is the $j$-th node on the path from $\vs$ to $w$,
\item $C_{\vs}(n) / C_w(n) = \frac{(4n+1)!}{(3n)!(n+1)!} > 1,$
\item $C_{\vs}(n) / C_{a_{i,j}}(n) = \frac{(4n+1)!n!^3}{(3n)!} \frac{(3n+1)!}{(j-1)!(4n-j+2)!(n+1)!n!^2} = \frac{(4n+1)!(3n+1)}{(j-1)!(4n-j+2)!(n+1)} > 1$, given that $1 \leq j \leq n$.
\end{itemize}
We have shown that if $d_R=n$ then $R$ is a solution to the constructed instance of the problem of Hiding Source by Modifying Edges.

We will now show that if $R$ is a solution to the constructed instance of the problem of Hiding Source by Modifying Edges then $d_R=n$.
Assume to the contrary, that $d_R<n$ and $R$ is a solution.
Since $R$ is a solution, then all nodes must have greater rumor centrality (and lower $C_v$) than $\vs$, in particular we must have $C_{a_{1,1}} < C_{\vs}$.

We have that:
$$
C_{\vs}(d_R) \leq (4n+1)(4n)\ldots(4n+1-d_R)(n-d_R)!n!^3 = \frac{(4n+1)!(n-d_R)!n!^3}{(4n-d_R)!}
$$
because there are $d_R$ nodes from $V'$ on the path from $\vs$ to $w$, while the other $n-d_R$ nodes have contribution at most $(n-d_R)!$ to the product (notice that $k!$ is the maximal value for the product of sizes of $k$ subtrees).
Similarly, we also have that:
$$
C_{a_{1,1}}(d_R) \geq (4n+1)(4n)\ldots(3n+2)(d_R+1)!n!^2 = \frac{(4n+1)!(d_R+1)!n!^2}{(3n+1)!}
$$
because the contribution of the nodes from $V'$ on the shortest path from $w$ to $\vs$ is at least $(d_R+1)!$, while the contribution of the other $n-d_R$ nodes in $V'$ is at least $1$.

To complete the proof of the lemma, we will now show that if $d_R < n$ then $C_{\vs}(d_R) \leq C_{a_{1,1}}(d_R)$ (hence, our assumption was false and $R$ cannot be a solution).
Let $UL(d_R)$ be the upper limit on $\frac{C_{\vs}(d_R)}{C_{a_{1,1}}(d_R)}$, i.e.:
$$
\frac{C_{\vs}(d_R)}{C_{a_{1,1}}(d_R)} \leq UL(d_R).
$$
The formula of $UL(d_R)$, based on the equations in the two previous paragraphs, is:
$$
UL(d_R) = \frac{(4n+1)!(n-d_R)!n!^3}{(4n-d_R)!} \frac{(3n+1)!}{(4n+1)!(d_R+1)!n!^2} = \frac{(n-d_R)!n!(3n+1)!}{(4n-d_R)!(d_R+1)!}.
$$
In particular:
\begin{itemize}
\item For $d_R=n-1$ we have that:
$$
UL(n-1) = \frac{1!n!(3n+1)!}{(3n+1)!n!} = 1.
$$
Hence $C_{\vs}(n-1) \leq C_{a_{1,1}}(n-1)$, i.e., $R$ is not a solution when $d_R=n-1$.
\item We also have that:
$$
\frac{UL(d_R+1)}{UL(d_R)} = \frac{(n-d_R-1)!n!(3n+1)!}{(4n-d_R-1)!(d_R+2)!} \frac{(4n-d_R)!(d_R+1)!}{(n-d_R)!n!(3n+1)!} = \frac{4n-d_R}{(d_R+2)(n-d_R)}.
$$
Notice that since $d_R<n$ and $d_R \geq 1$ we have that:
$$
\frac{UL(d_R+1)}{UL(d_R)} > \frac{3n}{3(n-1)} > 1.
$$
Hence $UL(d_R)$ is increasing with $d_R$, and for $d_R < n-1$ we have that
$$
UL(d_R) < UL(n-1) = 1,
$$
which implies that $C_{\vs}(d_R) < C_{a_{1,1}}(d_R)$ for $d_R < n-1$, i.e., $R$ is not a solution when $d_R<n-1$
\end{itemize}
We showed that if $R$ is a solution to the constructed instance of the problem of Hiding Source by Modifying Edges then $d_R=n$.

This concludes the proof of Lemma~\ref{lem:npc-rewiring-rumor}.
\end{proof}

Having proved Lemma~\ref{lem:npc-rewiring-rumor}, we now move back to proving Theorem~\ref{thrm:npc-rewiring-rumor}. To this end, we will show that the constructed instance of the problem of Hiding Source by Modifying Edges has a solution if and only if the given instance of the problem of Finding a Hamiltonian Cycle has a solution.

Assume that there exists a solution to the given instance of the Finding a Hamiltonian Cycle problem, i.e., a set of edges $E^* \subseteq E'$ inducing a Hamiltonian cycle in $H$.
Let $v^*$ be one of the neighbors of $v_1$ on the cycle.
We will show that $\Rem = (E' \setminus E^*) \cup \{(v_1,v^*)\} \cup \left( \{w\} \times (N_H(v_1) \setminus \{v^*\}) \right)$ is a solution to the constructed instance of the Hiding Source by Modifying Edges.
By removing edges $(E' \setminus E^*) \cup \{(v_1,v^*)\}$ from $G$, nodes in $V'$ now induce a path in $G$.
Node $v_1$ is connected with $\vs$ and with the neighbor on the cycle other than $v^*$, whereas $v^*$ is connected with $w$.
Notice also that since we removed the edges in $\{w\} \times (N_H(v_1) \setminus \{v^*\})$ from $G$, the node $v^*$ is the only node in $V'$ connected to $w$.
Finally, notice that the number of removed edges is within the evader's budget.
Hence, we have that $\vs$ is connected to $w$ with a path of nodes from $V'$ (a situation presented in Figure~\ref{fig:npc-rewiring-rumor-example}), which implies that $d_{\Rem}=n$.
Based on Lemma~\ref{lem:npc-rewiring-rumor}, we showed that $\Rem$ is a solution to the constructed instance of the Hiding Source by Modifying Edges problem.

Assume that there exists a solution $\Rem$ to the constructed instance of the problem of Hiding Source by Modifying Edges.
Based on Lemma~\ref{lem:npc-rewiring-rumor} it implies that $d_{\Rem}=n$, i.e., $\vs$ is connected to $w$ with a path of nodes from $V'$.
Let $v^*$ be the node from $V'$ directly connected to $w$, and let $E^*$ be the set of edges induced by $V'$ in $E \setminus \Rem$ (edges that form a path from $\vs$ to $w$ after the removal of $\Rem$).
Since all edges in $G$ between the nodes in $V'$ exist also in $H$, the set $E^*$ induces a path connecting all the nodes in $H$.
Moreover, since $v_1$ is the only node in $V'$ connected to $\vs$ in $G$, it has to be the first node on the path formed by $E^*$, with $v^*$ being the last node.
However, because of the way we constructed $G$, $v^*$ is a neighbor of $v_1$ in $H$.
Therefore, the set $E^* \cup \{(v_1,v^*)\}$ induces a Hamiltonian cycle in $H$.
We showed that if there exists a solution to the constructed instance of the problem of Hiding Source by Modifying Edges, then there also exists a solution to the given instance of the problem of Finding a Hamiltonian Cycle.

This concludes the proof.
\end{proof}

\begin{theorem}
\label{thrm:npc-rewiring-rwalk}
The problem of Hiding source by Modifying Edges is NP-complete given the Random Walk source detection algorithm.
\end{theorem}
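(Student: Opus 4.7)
The plan is to mirror the structure of Theorem~\ref{thrm:npc-bots-rwalk}. Membership in NP is immediate: once a candidate pair $(\Add,\Rem)\subseteq\FA\times\FR$ is specified, the modified network and the quantities $\phi_t(v)$ can be evaluated in polynomial time, so the ranking of $\vs$ under $\sdrwalk$ can be checked directly. For NP-hardness I would reduce from \textit{$3$-Set Cover}, the same problem used for the Adding Nodes counterpart, because its ``cover the universe'' structure pairs naturally with the distance-$T$ cutoff built into $\sdrwalk$.

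Given an instance $(U,S,k)$, I would construct a network with the evader $\vs$, a designated ``competitor'' node $w$ that the reduction forces above $\vs$, one node per element of $U$, one node per set in $S$, plus auxiliary ``padding'' nodes attached to $\vs$ (to stabilise its $\phi_0$ value, in the spirit of the clique $\{a_1,\dots,a_{|S|-k+1}\}$ used in Theorem~\ref{thrm:npc-rewiring-closeness}). The edge-modification constraints would be set up so that $\FA=\{w\}\times S$ and $\FR=\emptyset$ (or the symmetric ``remove'' variant), so the evader's only lever is to pick $k$ sets $S_i$ and connect $w$ to the corresponding set-nodes. Parameters of the SI-model used by $\sdrwalk$, namely $p$ and $T$, would be chosen (e.g.\ $T=3$, $p=\tfrac12$) so that: (i) every node $u_j$ of $U$ lies within distance $T$ of $w$ iff the chosen sets cover $U$; (ii) when this covering condition holds, every non-evader node of the gadget, including $w$, ends up with strictly larger $\phi_0$ than $\vs$; (iii) when the condition fails, at least one $u_j$ is at distance $>T$ from $w$, which kills enough scores to push $\vs$ back to the top of the ranking. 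The safety threshold $\thr$ would be set to the number of non-evader infected nodes, so the evader must outrank every other infected node simultaneously.

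The two directions of the equivalence then follow the template of Theorem~\ref{thrm:npc-bots-rwalk}: a $k$-cover $S^{*}$ yields the modification $\{w\}\times S^{*}$ with $|{\Add}|=k$ that witnesses hiding; conversely, any feasible $(\Add,\emptyset)$ must have $|{\Add}|\le k$ and must place $w$ within two hops of every $u_j$, which forces $\{S_i:(w,S_i)\in\Add\}$ to be a cover. Connectivity of $(V,(E\cup\Add)\setminus\Rem)^{\II}$ is free because no edges are removed and the original $G^{\II}$ is already connected.

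The main obstacle I expect is the algebraic bookkeeping in point (ii) above. Unlike the Adding Nodes setting, where the bot $\delta$ appeared fresh and only contributed at the margin, modifying edges inside the ambient graph changes the normalisation $p/|N(v)|$ at every endpoint, so $\phi_t$ shifts at many nodes at once. I would tame this by making $\vs$'s padding neighbourhood large enough that its $\phi_0$ varies only by $O(1/\mathrm{poly})$ across modifications, and by routing the $k$ new edges exclusively through $w$, so that the substantial score change is concentrated at $w$ and at the set-nodes attached to it. Once that structural choice is made, the case-by-case verification $\phi_0(v)>\phi_0(\vs)$ for every $v\in\II\setminus\{\vs\}$ should reduce to a handful of inequalities in $|S|,|U|,k$ analogous to the table in the proof of Theorem~\ref{thrm:npc-bots-rwalk}, which can be discharged under a mild assumption such as $|S|+|U|$ exceeding a small constant.
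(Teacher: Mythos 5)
Your plan is viable in outline but takes a genuinely different route from the paper. The paper does \emph{not} adapt the $3$-Set Cover reduction of Theorem~\ref{thrm:npc-bots-rwalk}; it reduces from \textit{Finding a Hamiltonian Cycle}, allows only edge \emph{removals} ($\FA=\emptyset$, $\FR=E'\cup(\{w\}\times N_H(v_1))$), and makes every node infected. The payoff is Lemma~\ref{lem:npc-rewiring-rwalk}: when $\II=V$, the recursion collapses to $\phi_0\equiv 1$, so $\sdrwalk$ assigns each node exactly $1$ if its eccentricity is at most $T$ and $0$ otherwise. With $T=n+1$ and $n+3$ nodes, the evader can only be outranked by forcing its eccentricity above $T$, which happens precisely when the surviving graph is a path with $\vs$ at one end---i.e., when the removed edges carve a Hamiltonian path out of $H$. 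This turns the whole argument into a statement about graph distances, with no $\phi_0$ inequalities to verify at all. Your route instead \emph{fights} this degeneracy: for condition~(ii) to be meaningful you must keep at least one non-infected node adjacent to $\vs$ (as the node $a$ does in Theorem~\ref{thrm:npc-bots-rwalk}), since otherwise Lemma~\ref{lem:npc-rewiring-rwalk} forces all positive scores to tie at $1$ and, under the paper's strict-inequality ranking $\rnk$, no node would ever outrank $\vs$. You correctly identify the resulting cost---every added edge $(w,S_i)$ perturbs $|N(w)|$ and $|N(S_i)|$ and hence shifts $\phi_0$ at many nodes simultaneously---and this bookkeeping, together with verifying that every non-rescued configuration leaves at least one node (an uncovered $u_j$, and hence also $w$) at distance greater than $T$ and therefore at score $0$, is where all the remaining work lies. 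So: membership in NP and the overall cover-versus-distance-cutoff mechanism are sound, but your reduction still needs the explicit gadget and a full table of $\phi_0$ estimates, whereas the paper's Hamiltonian-cycle construction buys a purely combinatorial proof at the price of a less ``natural'' source problem.
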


\begin{proof}
The problem is trivially in NP, since after adding and removing the given sets of edges $\Add$ and $\Rem$, it is possible to compute the Random Walk source detection algorithm scores of all nodes in $G^\II$ in polynomial time.

We will now prove that the problem is NP-hard.
To this end, we will show a reduction from the NP-complete \textit{Finding a Hamiltonian Cycle} problem.
The decision version of this problem is defined by a network, $H=(V',E')$, where $V'=\{v_1,\ldots,v_n\}$, and where the goal is to determine whether there exists a Hamiltonian cycle in $H$, i.e., a cycle that visits each node exactly once.

Let $(H)$ be a given instance of the problem of Finding a Hamiltonian Cycle.
We will now construct an instance of the problem of Hiding Source by Modifying Edges.

First, let us construct a network $G=(V,E)$ where:
\begin{itemize}
\item $V = V' \cup \{ \vs, w, u \}$,
\item $E = E' \cup \{(\vs,v_1), (w,u)\} \cup \bigcup_{v_i \in N_H(v_1)} \{(v_i,w)\}$.
\end{itemize}
Notice that the evader $\vs$ is connected only with $v_1$, while $w$ is connected with all neighbors of $v_1$ in $H$ (we can assume that $v_1$ has at least two neighbors in $H$, as otherwise $H$ definitely does not have a Hamiltonian cycle).
An example of the construction of the network $G$ is presented in Figure~\ref{fig:npc-rewiring-rwalk}.

\begin{figure}[tbh!]
\centering
\includegraphics[width=.7\linewidth]{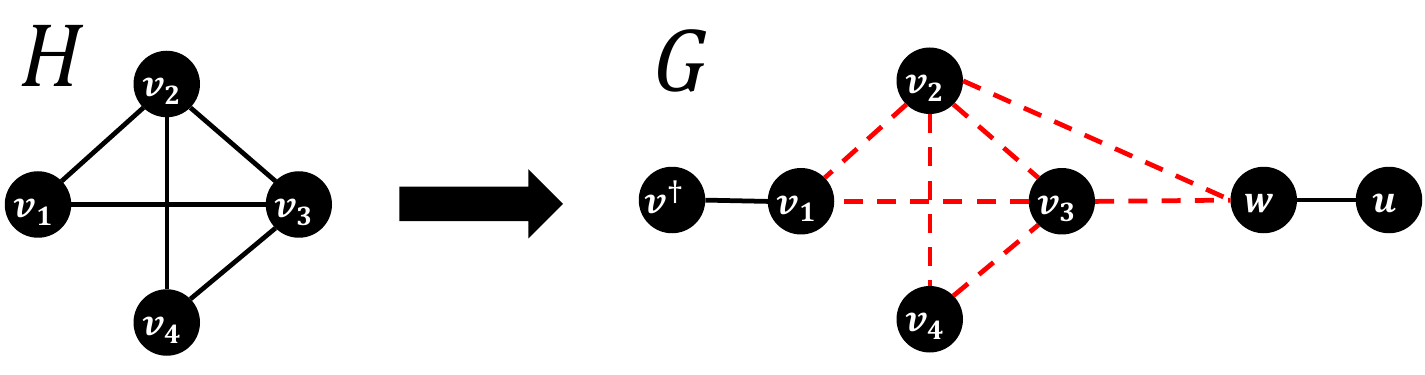}
\caption{
Construction of the network used in the proof of Theorem~\ref{thrm:npc-rewiring-rwalk}.
Red dashed edges are allowed to be removed.
}
\label{fig:npc-rewiring-rwalk}
\end{figure}

Now, consider the instance $(G,\vs,\II,\sd,\thr,b,\FA,\FR)$ of the problem of Hiding Source by Modifying Edges, where:
\begin{itemize}
\item $G$ is the network we just constructed,
\item $\vs$ is the evader,
\item $\II=V$, i.e., all nodes in $G$ are infected,
\item $\sd$ is the Random Walk source detection algorithm,
\item $\thr=n+1$ is the safety threshold,
\item $b=|E'|+|N_H(v_1)|-n$,
\item $\FA = \emptyset$, i.e., none of the edges can be added,
\item $\FR = E' \cup (\{w\} \times N_H(v_1))$, i.e., only edges belonging to the original set of edges in $H$ and edges between $w$ and neighbors of $v_1$ in $H$ can be removed.
\end{itemize}

Moreover, the Random Walk source detection algorithm needs to be parameterized with the model of spreading.
Let the algorithm be parameterized with the model in which the probability of propagation is $p=\frac{1}{2}$ and the time of diffusion is $T=n+1$.

Since $\FA = \emptyset$, for any solution to the constructed instance of the problem of Hiding Source by Modifying Edges, we must have $\Add = \emptyset$.
Hence, we will omit the mentions of $\Add$ in the remainder of the proof, and we will assume that a solution consists just of $\Rem$.

We will first prove the following lemma.

\begin{lemma}
\label{lem:npc-rewiring-rwalk}
If all nodes of the network are infected then the Random Walk source detection algorithm assigns to a given node $v \in V$ score $1$ if $\forall_{w \in \II} d_G(v,w) \leq T$ and $0$ otherwise.
\end{lemma}

\begin{proof}[Proof of Lemma~\ref{lem:npc-rewiring-rwalk}]
To remind the reader, the formula of the Random Walk source detection algorithm is:
$$
\sdrwalk(v,G,\II) =
\begin{cases}
	\phi_0(v) & \mbox{if } \forall_{w \in \II} d_G(v,w) \leq T \\
	0 & \mbox{otherwise}
\end{cases}
$$
where $T$ is the number of rounds in the SI model and $\phi$ is defined using the formula:
$$
\phi_t(v) =
\begin{cases}
	1 & \mbox{if } t=T \\
	(1-p) \phi_{t+1}(v) + \sum_{w \in N(v) \cap \II} \frac{p}{|N(v)|} \phi_{t+1}(w)  & \mbox{otherwise}
\end{cases}
$$
where $p$ is the probability of infection in the SI model.

Hence, the fact that the node $v$ gets assigned a score of $0$ if $\neg \forall_{w \in \II} d_G(v,w) \leq T$ follows from the formula of $\sdrwalk$.
We have to prove that if all nodes are infected and $\forall_{w \in \II} d_G(v,w) \leq T$ then $v$ is assigned a score of $1$.

Assume that $\II=V$ and $\forall_{v \in \II} \phi_{t+1}(v)=1$.
We have that:
$$
\phi_t(v) = (1-p) + |N(v)| \frac{p}{|N(v)|} = (1-p) + p = 1.
$$
We have shown that if $\II=V$ and $\forall_{v \in \II} \phi_{t+1}(v)=1$ then $\forall_{v \in \II} \phi_t(v)=1$.
Since we also have that $\forall_{v \in \II} \phi_T(v)=1$ then by induction we have that $\forall_{v \in \II} \phi_0(v)=1$

This concludes the proof of Lemma~\ref{lem:npc-rewiring-rwalk}.
\end{proof}

Since all nodes in $G$ are infected, the only way some of them can have a greater Random Walk algorithm score than the evader (and thus contribute to the safety threshold) is if $\exists_{w \in \II} d_G(\vs,w) > T$.
Notice that since $T=n+1$ and the network $G$ has $n+3$ nodes, the only way in which the distance between $\vs$ and some node is $n+2$ or greater is if the nodes in $G$ form a path with $\vs$ on one of its ends.
In that case $\vs$ and the node on the other end of the path have scores $0$, while all other $n+1$ nodes have scores $1$, hence the safety threshold $\thr=n+1$ is satisfied.

We will now show that the constructed instance of the problem of Hiding Source by Modifying Edges has a solution if and only if the given instance of the problem of Finding a Hamiltonian Cycle has a solution.

Assume that there exists a solution to the given instance of the problem of Finding a Hamiltonian Cycle, i.e., a set of edges $E^* \subseteq E'$ inducing a Hamiltonian cycle in $H$.
Let $v^*$ be one of the neighbors of $v_1$ on the cycle.
We will show that $\Rem = (E' \setminus E^*) \cup \{(v_1,v^*)\} \cup \left( \{w\} \times (N_H(v_1) \setminus \{v^*\}) \right)$ (notice that this set has exactly $b=|E'|+|N_H(v_1)|-n$ edges) is a solution to the constructed instance of the problem of Hiding Source by Modifying Edges.
By removing the edges $(E' \setminus E^*) \cup \{(v_1,v^*)\}$ from $G$, the nodes in $V'$ now induce a path in $G$.
Node $v_1$ is connected with $\vs$ and with the neighbor on the cycle other than $v^*$, whereas $v^*$ is connected with $w$.
Notice also that since we removed the edges in $\{w\} \times (N_H(v_1) \setminus \{v^*\})$ from $G$, $v^*$ is the only node in $V'$ connected with $w$.
Finally, notice the number of removed edges is within the evader's budget.
Hence, we have that after the removal of $\Rem$ the network $G$ is a path with $\vs$ on one of the ends (and $u$ on the other end).
Based on Lemma~\ref{lem:npc-rewiring-rwalk} and the previous observations, we showed that $\Rem$ is a solution to the constructed instance of the problem of Hiding Source by Modifying Edges.

Assume that there exists a solution $\Rem$ to the constructed instance of the problem of Hiding Source by Modifying Edges.
Based on Lemma~\ref{lem:npc-rewiring-rwalk} and the previous observations, it implies that after the removal of $\Rem$ the network $G$ is a path with $\vs$ on one of the ends.
Notice that $u$ has to be the other end of the path, as it has only one neighbor, namely $w$.
Hence, the nodes in the path between $\vs$ and $w$ must be all the nodes in $V'$, connected with the edges from $E'$ (as we did not add any other edges between the nodes in $V'$).
Moreover, the neighbor of $\vs$ on the path must be $v_1$, while the neighbor of $w$ (other than $u$) on the path has to be one of the neighbors of $v_1$ in $H$.
If we denote this neighbor of $w$ by $v^*$, then the set of edges induced by $V'$ in $(V, E\setminus \Rem)$ with the addition of $(v_1,v^*)$ induces a Hamiltonian cycle in $H$.
We showed that if there exists a solution to the constructed instance of the problem of Hiding Source by Modifying Edges, then there also exists a solution to the given instance of the problem of Finding a Hamiltonian Cycle.

This concludes the proof.
\end{proof}

\begin{theorem}
\label{thrm:npc-rewiring-mcarlo}
The problem of Hiding Source by Modifying Edges is NP-complete given the Monte Carlo source detection algorithm.
\end{theorem}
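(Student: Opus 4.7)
The plan is to first establish NP membership, which is routine: given candidate sets $\Add$ and $\Rem$, one can simulate the Monte Carlo algorithm (or, when $p=1$, compute Jaccard similarities directly) in polynomial time. For NP-hardness, I would reduce from Dominating Set, in close parallel to Theorem~\ref{thrm:npc-bots-mcarlo}, which handled the corresponding Adding Nodes version. Given an instance $(H,k)$ of Dominating Set with $H=(V',E')$ and $|V'|=n$ (and, as in that theorem, assuming no node of $H$ has degree $n-1$, which is checkable in polynomial time), I would construct $G$ in which the evader $\vs$ is attached through a short chain to a competitor node $w$, with the nodes of $V'$ hanging beyond $w$ together with auxiliary ``padding'' nodes to fix the cardinality of the balls around $w$. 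The infected set $\II$ would be a small fixed subset (the chain itself), and $\FR$ would be restricted to edges between $\vs$ and nodes in $V'$ (with $\FA=\emptyset$), with budget $b=n-k$.

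The Monte Carlo algorithm would be parameterized with $p=1$ and $T$ chosen so that each candidate source reaches exactly the ``radius-$T$ ball'' around it; this is the same device used in Theorem~\ref{thrm:npc-bots-mcarlo}, and it makes the algorithm deterministic. With these choices, the score of each candidate $v\in\II$ collapses to $C_v=|\II\cap\II_v|/|\II\cup\II_v|$, a simple counting expression in terms of how far $v$ can ``see'' into $V'$ under the modified graph. Letting $V'_{R,y}$ denote the set of nodes in $V'$ within distance $y$ from $\vs$ after the removal of $R\subseteq\FR$, I would tabulate $C_\vs,C_w,C_u,C_x$ and show that, with safety threshold $\thr=2$, the only way to make $w$ (and the neighbour $u$ on the chain) outrank $\vs$ is to have $|V'_{R,2}|=n$, i.e.\ every node of $V'$ either remains adjacent to $\vs$ or has a neighbour in $H$ that remains adjacent to $\vs$.

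Granted this equivalence, the two directions of the reduction would follow in a few lines: a dominating set $V^*\subseteq V'$ of size $k$ yields the valid removal $\Rem=\{\vs\}\times(V'\setminus V^*)$ of size $n-k$, and conversely, any valid $\Rem$ must leave $\vs$ adjacent to exactly $k$ nodes of $V'$ forming a dominating set of $H$. The main obstacle will be engineering the padding and the chain so that no degenerate alternative beats the dominating-set solution---for instance, disconnecting $\vs$ entirely from $V'$, or exploiting some shortcut through $u$ that raises $C_u$ or lowers $C_\vs$ in an unintended way. Managing this will require, as in Theorem~\ref{thrm:npc-bots-mcarlo}, a careful case analysis of $C_v$ for every $v\in\II$ after arbitrary $R$, together with the degree-$(n-1)$ exclusion on $H$ to prevent a trivial size-one hiding strategy. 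Once these arithmetic details are pinned down, the reduction---and hence the theorem---follows.
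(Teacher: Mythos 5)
Your overall plan matches the paper's in its essentials---reduction from Dominating Set, the chain gadget $\vs$--$u$--$w$--$x$ with padding leaves, parameterizing Monte Carlo with $p=1$ so that it becomes deterministic and each candidate's score collapses to $|\II\cap\II_v|/|\II\cup\II_v|$ over its radius-$T$ ball, threshold $\thr=2$, and the key quantity being which nodes of $V'$ lie within distance $2$ of $\vs$. However, the direction of modification you chose breaks the reduction. With $p=1$ and $\II$ a small fixed subset of the chain contained in every candidate's ball, a candidate's score is $|\II|$ divided by the size of its ball, so the evader becomes \emph{better} hidden the \emph{larger} its ball is. Your hiding condition $|V'_{R,2}|=n$ is therefore correct in spirit, but it is anti-monotone under edge removal: if $\vs$ starts adjacent to all of $V'$ (as your correspondence $\Rem=\{\vs\}\times(V'\setminus V^*)$ with budget $n-k$ requires), then $|V'_{\emptyset,2}|=n$ already holds, the empty removal is a valid solution, and every constructed instance is a yes-instance regardless of whether $H$ has a dominating set of size $k$. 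More generally, removing edges incident to $\vs$ can only shrink $\vs$'s ball and raise its Jaccard score, i.e., removal works \emph{against} the evader under this scoring, so no amount of padding can rescue a removal-only version of this particular gadget.

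The paper's proof fixes exactly this by flipping the allowed modification: it sets $\FA=\{\vs\}\times V'$, $\FR=\emptyset$, and budget $b=k$, with $\II=\{\vs,u,w\}$ and $T=2$. The evader must then \emph{add} edges from $\vs$ into $V'$; writing $m_A$ for the number of nodes of $V'$ not adjacent to $\vs$ but having a neighbor adjacent to $\vs$, one gets $C_{\vs}=3/(|A|+3+m_A)$ versus $C_w=3/(n+2)$, so $w$ outranks $\vs$ iff $|A|+m_A=n$, which with $|A|\le k$ forces the added neighborhood to be a dominating set of size $k$ (the assumption excluding degree-$(n-1)$ nodes then handles $C_u$). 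If you replace your choice of $\FA,\FR,b$ with the paper's and keep the rest of your outline, the argument goes through; as written, the reduction does not.
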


\begin{proof}
The problem is trivially in NP, since after adding and removing the given sets of edges $\Add$ and $\Rem$, it is possible to generate Monte Carlo samples and compute the ranking of all nodes in $G^\II$ in polynomial time.

We will now prove that the problem is NP-hard.
To this end, we will show a reduction from the NP-complete \textit{Dominating Set} problem.
The decision version of this problem is defined by a network, $H=(V',E')$, where $V'=\{v_1,\ldots,v_n\}$, and a constant $k \in \N$, where the goal is to determine whether there exist $V^* \subseteq V'$ such that $|V^*|=k$ and every node outside $V^*$ has at least one neighbor in $V^*$, i.e., $\forall_{v \in V' \setminus V^*} N_H(v) \cap V^* \neq \emptyset$.

Let $(H,k)$ be a given instance of the Dominating Set problem.
Let us assume that $k < n-1$, all other instances can be easily solved in polynomial time.
We will now construct an instance of the Hiding Source by Modifying Edges problem.

First, let us construct a network $G=(V,E)$ where:
\begin{itemize}
\item $V = V' \cup \{ \vs, u, w, x \} \cup \bigcup_{i=1}^{n-2} \{a_i\}$,
\item $E = E' \cup \{(\vs,u), (u,w), (w,x)\} \cup \bigcup_{i=1}^{n-2} \{(a_i,x)\}$.
\end{itemize}
An example of the construction of the network $G$ is presented in Figure~\ref{fig:npc-rewiring-mcarlo}.

\begin{figure}[tbh!]
\centering
\includegraphics[width=.75\linewidth]{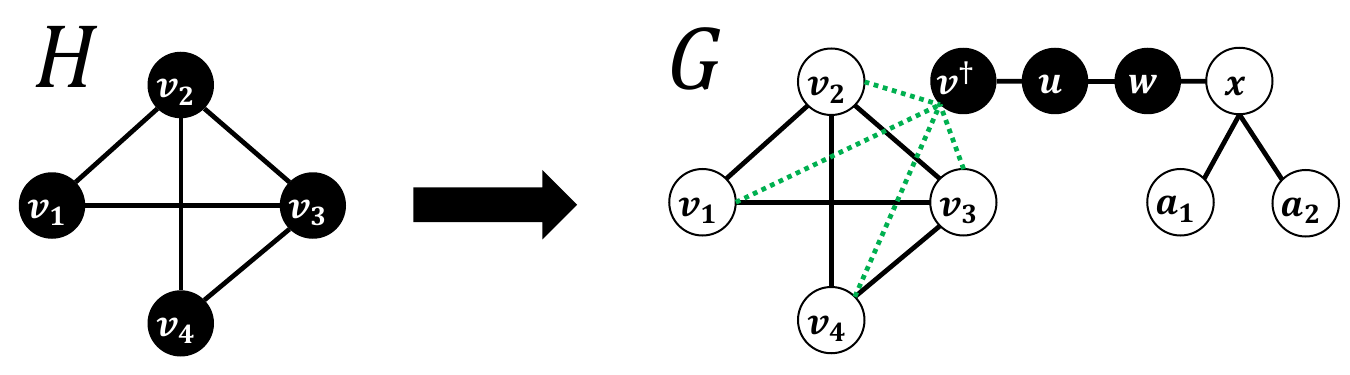}
\caption{
Construction of the network used in the proof of Theorem~\ref{thrm:npc-rewiring-mcarlo}.
Green dotted edges are allowed to be added.
Infected nodes in $G$ are highlighted in black.
}
\label{fig:npc-rewiring-mcarlo}
\end{figure}

Now, consider the instance $(G,\vs,\II,\sd,\thr,b,\FA,\FR)$ of the problem of Hiding Source by Modifying Edges, where:
\begin{itemize}
\item $G$ is the network we just constructed,
\item $\vs$ is the evader,
\item $\II=\{\vs,u,w\}$,
\item $\sd$ is the Monte Carlo source detection algorithm,
\item $\thr=2$ is the safety threshold,
\item $b=k$, where $k$ is the size of the dominating set from the Dominating Set problem instance,
\item $\FA = \{\vs\} \times V'$, i.e., only edges between $\vs$ and the members of $V'$ can be added,
\item $\FR = \emptyset$, i.e., none of the edges can be removed.
\end{itemize}

Since $\FR = \emptyset$, for any solution to the constructed instance of the Hiding Source by Modifying Edges problem we must have $\Rem = \emptyset$.
Hence, we will omit any mentions of $\Rem$ in the remainder of the proof, and we will assume that a solution consists just of $\Add$.

Moreover, the Monte Carlo source detection algorithm needs to be parameterized with the model of spreading.
Let the algorithm be parameterized with the model where the probability of propagation is $p=1$ and the time of diffusion is $T=2$.
Notice that since the model with $p=1$ is deterministic, all Monte Carlo samples starting from a given node always give the same result, hence the formula of the Monte Carlo source detection algorithm is in this case:
$$
\sdmcarlo(v,G,\II) = \exp\left( \frac{-\left( \frac{|\II \cap \II_v|}{|\II \cup \II_v|}- 1 \right)^2}{a^2} \right)
$$
where $\II_v$ is the set of infected nodes in the Monte Carlo sample starting the diffusion at $v$.
Moreover, notice that the position of a given node $v \in \II$ in the ranking generated by the Monte Carlo source detection algorithm depends solely on the value of $C_v= \frac{|\II \cap \II_v|}{|\II \cup \II_v|}$ (in particular, the node with the greatest value of $C_v$ is selected as the source of diffusion).

Let $V'_A$ denote the set of nodes from $V'$ connected with $\vs$ after the addition of $A$, i.e., $V'_A = \{v \in V': (\vs,v) \in A\}$.
Let us compute the values of $C_v$ for the nodes in $\II$ (notice that these are the only nodes that can be selected as the source of diffusion by the Monte Carlo algorithm) after the addition of an arbitrary $A$:
\begin{itemize}
\item $C_{\vs} = \frac{3}{|A|+3+m_A}$ where $m_A$ is the number of nodes in $V'/V'_A$ with at least one neighbor in $V'_A$,
\item $C_u = \frac{3}{|A|+4},$
\item $C_w = \frac{3}{n+2}.$
\end{itemize}

Since the safety threshold is $\thr=2$, both $u$ and $w$ need to have greater value of $C_v$ than $\vs$.
It is easy to see that $C_w > C_{\vs}$ if and only if $|A|+m_A=n$.
Notice also that if $|A|+m_A=n$ then also $m_A > 2$, since $|A| \leq k$ (as the budget of the evader is $k$) and since we assumed that $k < n -1$.
Therefore, the safety threshold is met if and only if $|A|+m_A=n$.

We will now show that the constructed instance of the Hiding Source by Modifying Edges problem has a solution if and only if the given instance of the Dominating Set problem has a solution.

Assume that there exists a solution to the given instance of the Dominating Set problem, i.e., a subset $V^* \subseteq V'$ of size $k$ such that all other nodes have a neighbor in $V^*$.
After adding to $G$ the set $\Add = \{\vs\} \times V^*$ we have that $|\Add|=k$ and $m_{\Add} = n-k$, which implies that the safety threshold is met.
We showed that if there exists a solution to the given instance of the Dominating Set problem, then there also exists a solution to the constructed instance of the Hiding Source by Modifying Edges problem.

Assume that there exists a solution $\Add$ to the constructed instance of the Hiding Source by Modifying Edges problem.
As shown above, we must have $|\Add|+m_{\Add}=n$.
Since the budget of the evader is $k$, the size of $\Add$ is at most $k$.
Therefore $V'_{\Add}$ is a dominating set in $H$ of size at most $k$ (we can add $k-|\Add|$ arbitrarily chosen elements to obtain a set of size exactly $k$).
We showed that if there exists a solution to the constructed instance of the Hiding Source by Modifying Edges problem, then there also exists a solution to the given instance of the Dominating Set problem.

This concludes the proof.
\end{proof}

\clearpage
\section{Hiding With and Without Strategic Manipulation}
\label{app:hiding-profiles}

In the main article, we discussed the concept of disentangling the two different notions of hiding---one provided by the structure of the network itself, and the other being the result of strategic manipulation.
Figure~2 in the main article shows the results of this analysis for the Eigenvector source detection algorithm.
Here, we present analogical results for all source detection algorithms considered in the study; see Figures~\ref{fig:profile-1} to~\ref{fig:profile-6}.
For an evaluation of the source detection algorithms themselves, without any attempts of hiding, see Figure~\ref{fig:standard-before-bars}.

As can be seen, the scale-free structure of the \BAn networks tends to keep the evader well-hidden (and even more so when the size and density of the network increase), whereas the identity of the evader is much more exposed in the other network structures. Next, we comment on the susceptibility of different network structures to strategic manipulation, starting with the addition of nodes, following by the modification of edges. In particular, when hiding by adding nodes, \ERn and Watts-Strogatz networks demonstrate less resilience to manipulation compared to their \BAn counterpart. Moreover, the effectiveness of hiding seems to be affected by the average degree of the network in a much more significant way than by the number of nodes in the network; the effect can be either positive or negative, depending on the source detection algorithm. Finally, let us comment on the susceptibility to strategic manipulation when the evader modifies edges. In this case, for most source detection algorithms, hiding the source of diffusion is more effective in scale-free networks, and the effectiveness increases in larger and denser scale-free networks. In contrast, no clear patterns were found for the other two types of network structure.

\begin{figure}[tbh]
\centering
\includegraphics[width=\linewidth]{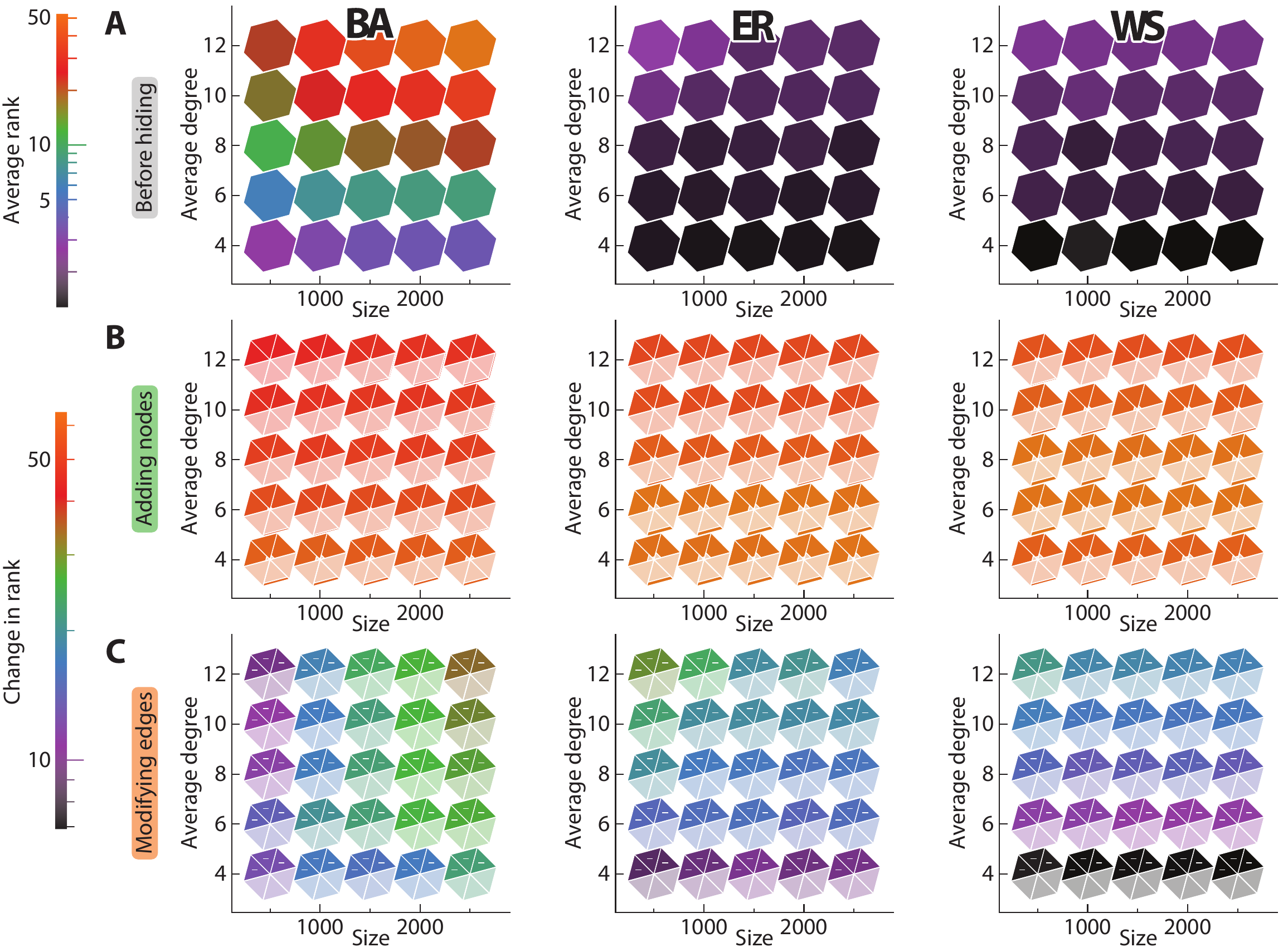}
\caption{
The same as Figure~2 in the main article, but for Degree source detection algorithm instead of Eigenvector.
}
\label{fig:profile-1}
\end{figure}

\begin{figure}[tbh]
\centering
\includegraphics[width=\linewidth]{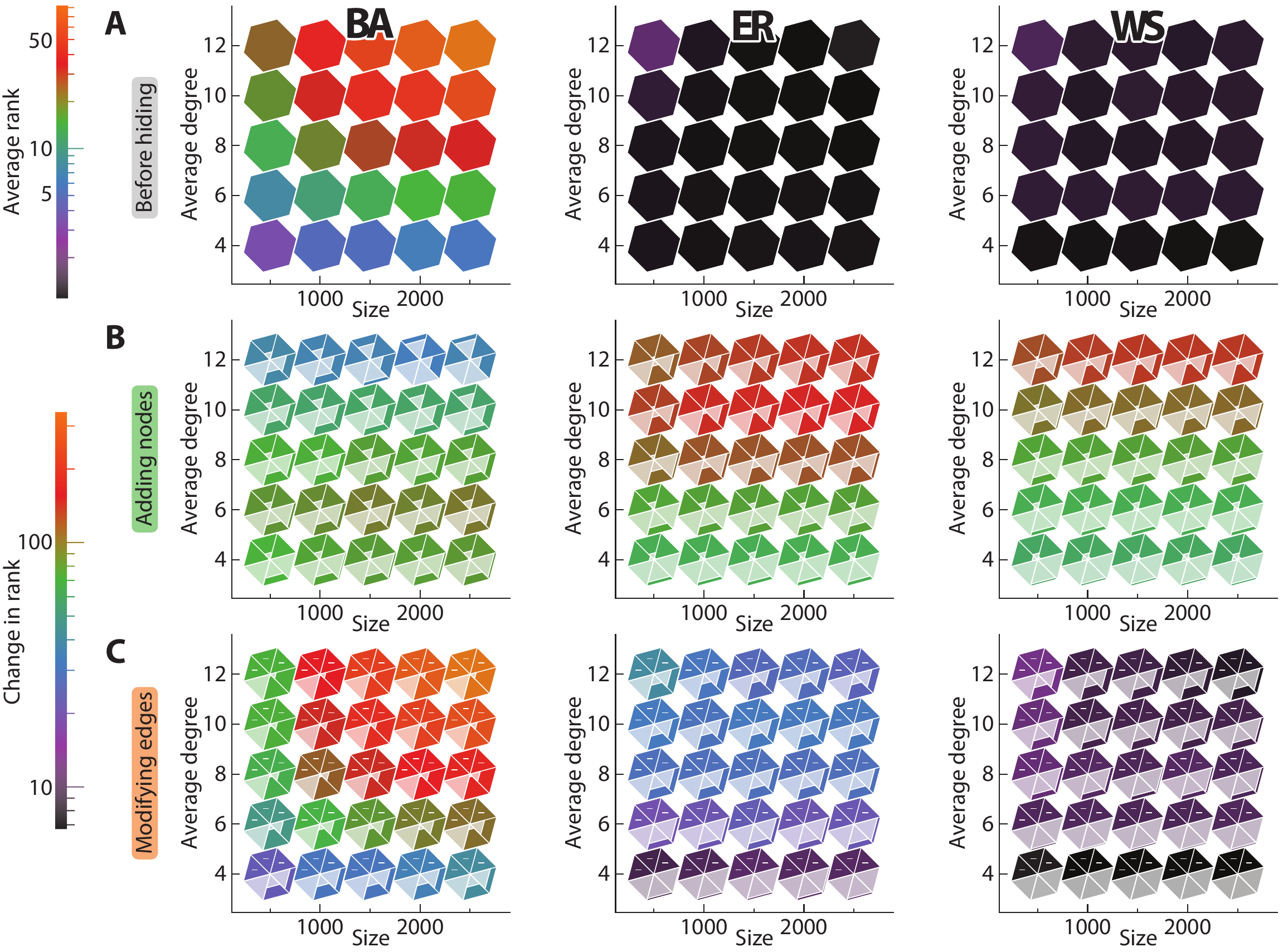}
\caption{
The same as Figure~2 in the main article, but for Closeness source detection algorithm instead of Eigenvector.
}
\label{fig:profile-2}
\end{figure}

\begin{figure}[tbh]
\centering
\includegraphics[width=\linewidth]{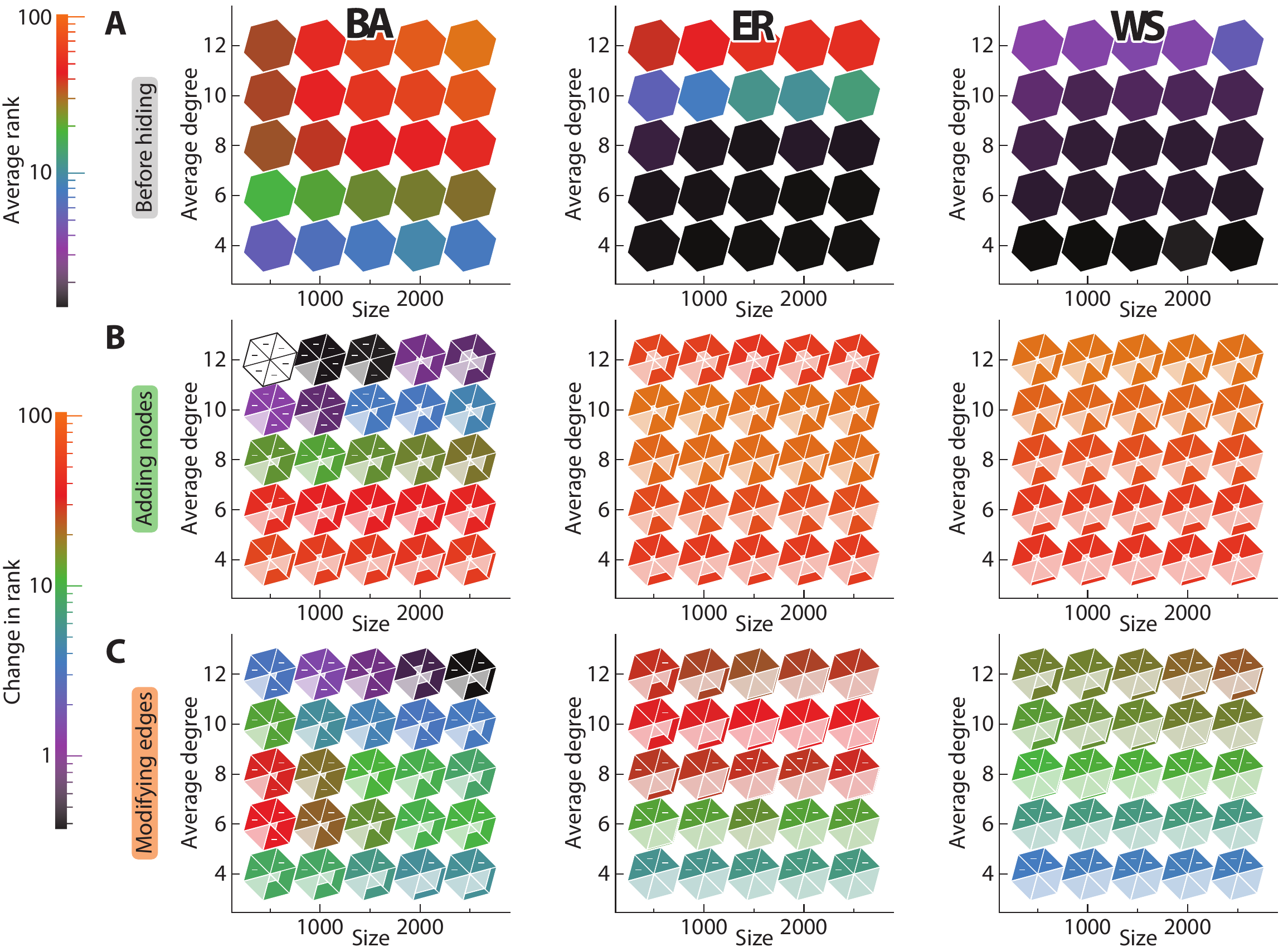}
\caption{
The same as Figure~2 in the main article, but for Rumor source detection algorithm instead of Eigenvector. White hexagons imply that all hiding strategies increase the visibility of the evader.
}
\label{fig:profile-3}
\end{figure}

\begin{figure}[tbh]
\centering
\includegraphics[width=\linewidth]{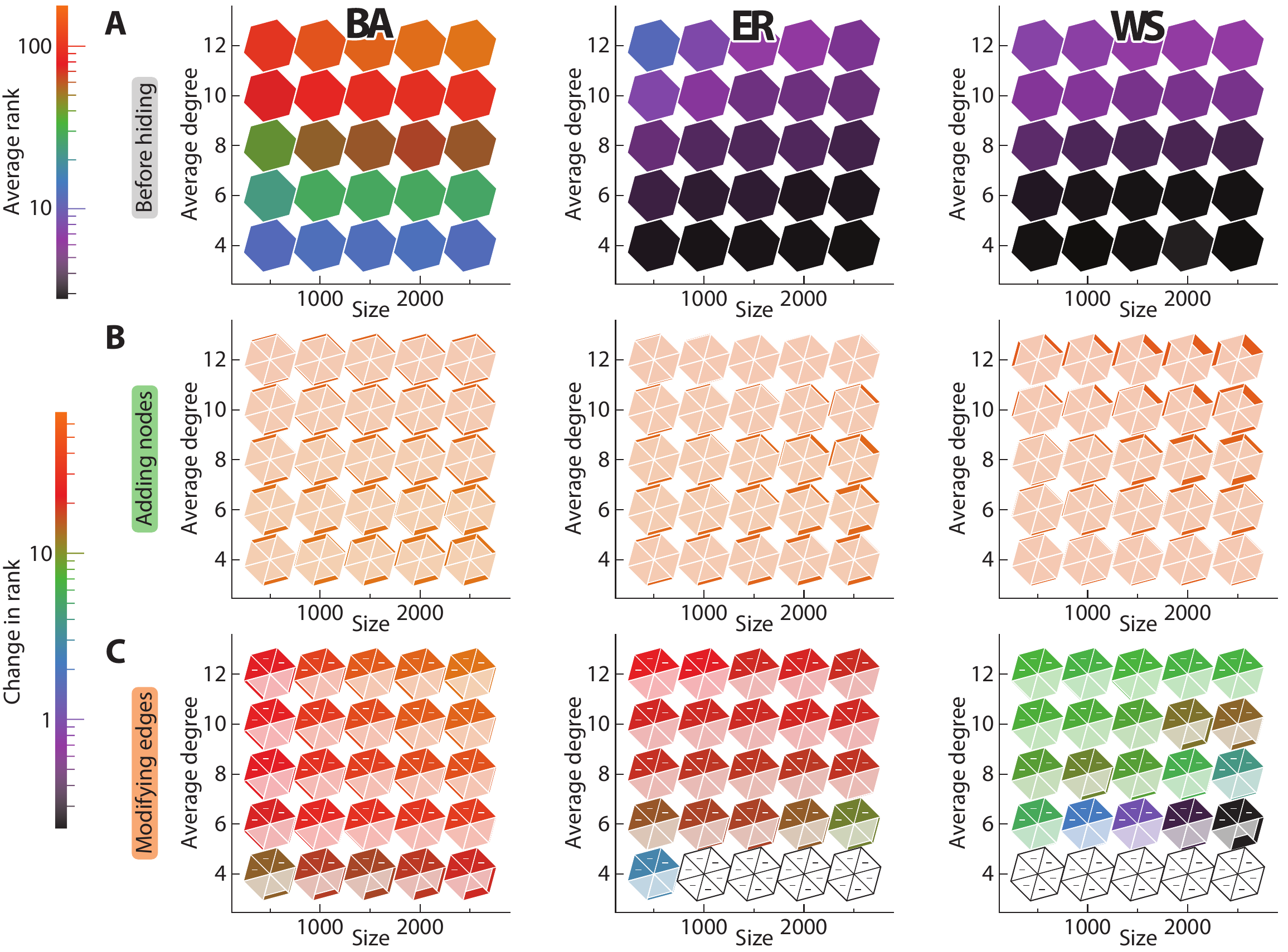}
\caption{
The same as Figure~2 in the main article, but for Random walk source detection algorithm instead of Eigenvector. White hexagons imply that all hiding strategies increase the visibility of the evader.
}
\label{fig:profile-4}
\end{figure}

\begin{figure}[tbh]
\centering
\includegraphics[width=\linewidth]{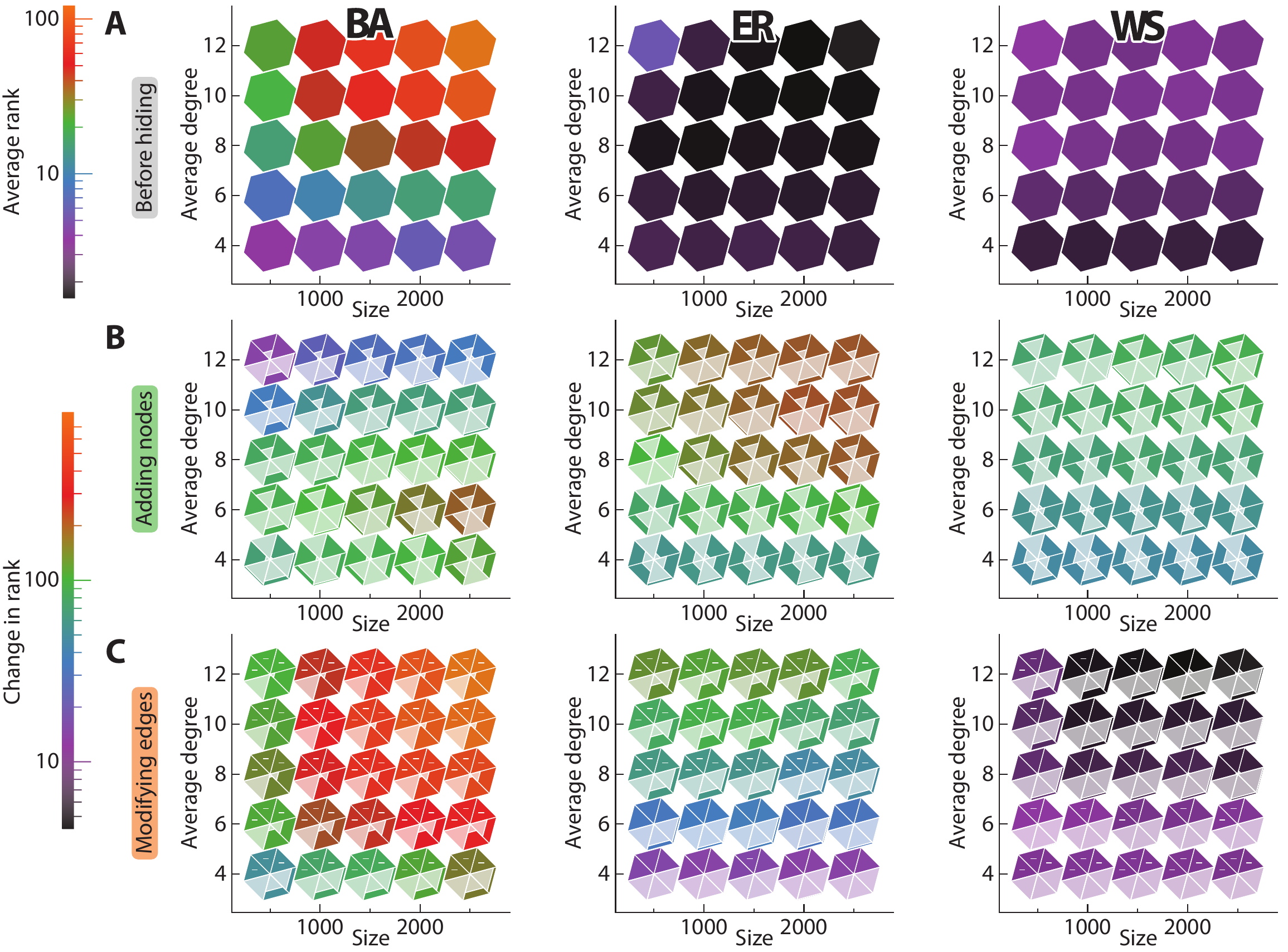}
\caption{
The same as Figure~2 in the main article, but for Monte Carlo source detection algorithm instead of Eigenvector.
}
\label{fig:profile-5}
\end{figure}

\begin{figure}[tbh]
\centering
\includegraphics[width=\linewidth]{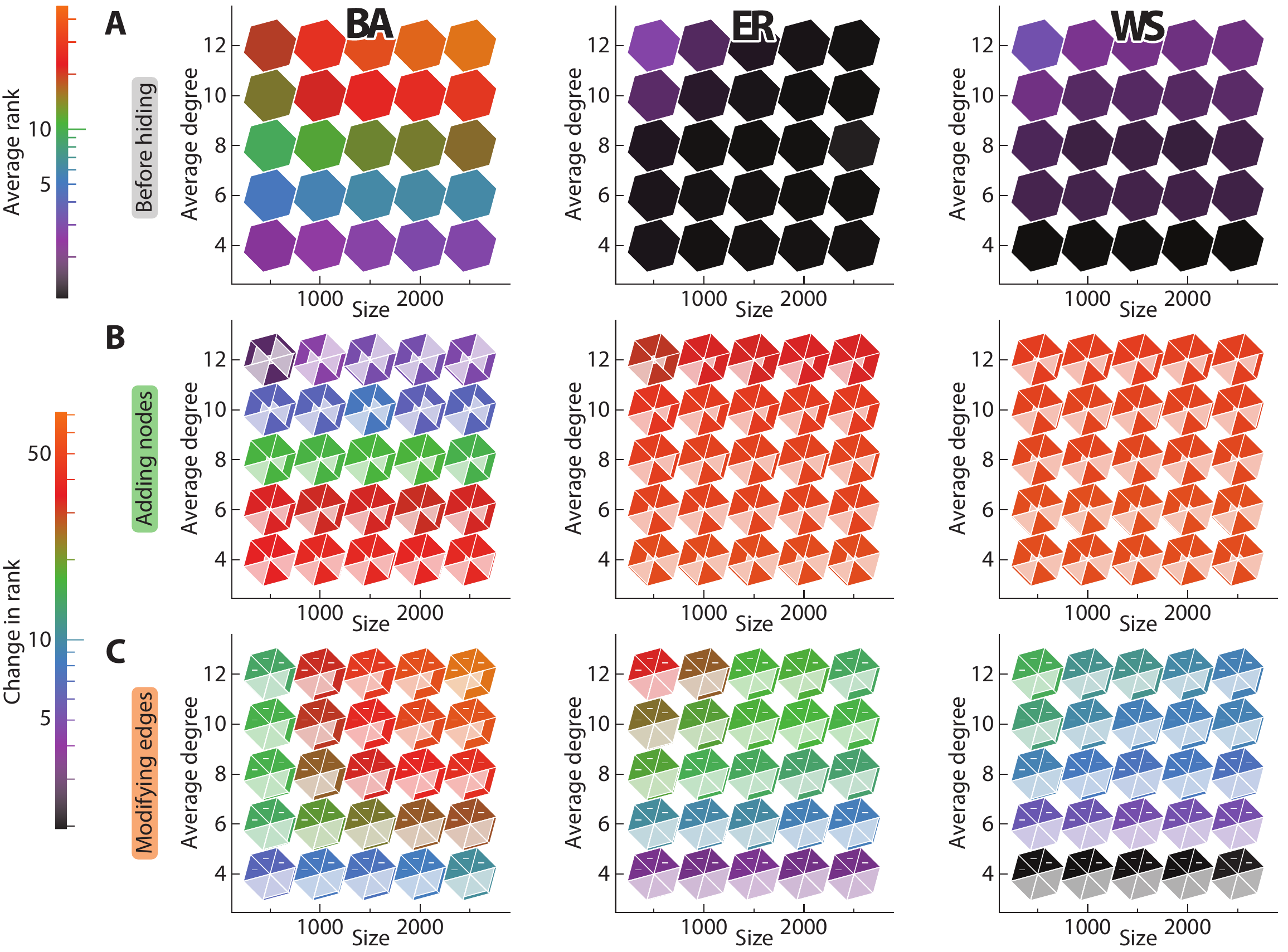}
\caption{
The same as Figure~2 in the main article, but for Betweenness source detection algorithm instead of Eigenvector.
}
\label{fig:profile-6}
\end{figure}

\begin{figure}[tbh]
\centering
\includegraphics[width=\linewidth]{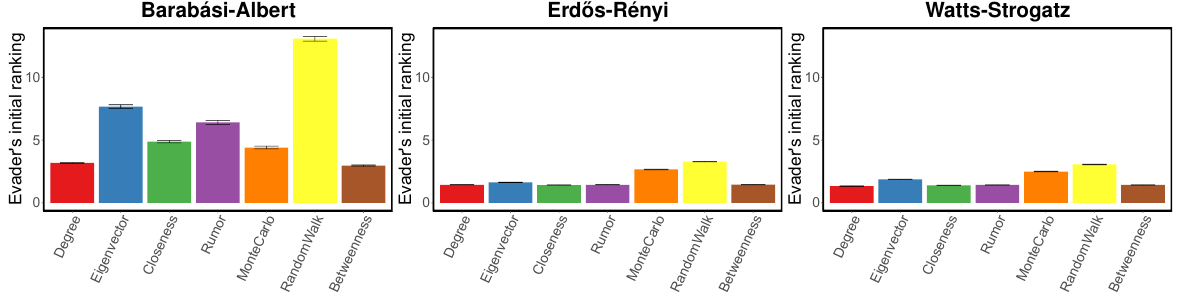}
\caption{
Comparison of the effectiveness of different source detection algorithms before the hiding process in random networks consisting of 1,000 nodes.
The x-axis corresponds to different source detection algorithms, while the y-axis corresponds to the initial ranking of the evader.
The error bars represent $95\%$ confidence intervals.
}
\label{fig:standard-before-bars}
\end{figure}

\clearpage
\section{Hiding by Adding Nodes---Additional Results}
\label{app:simulation-subos}

Examples of applying different heuristics that add nodes to the network are presented in Figure~\ref{fig:subo-heuristic-examples}.
In the examples each confederate is being connected with only one supporter.

\begin{figure}[tbh]
\centering
\includegraphics[width=\linewidth]{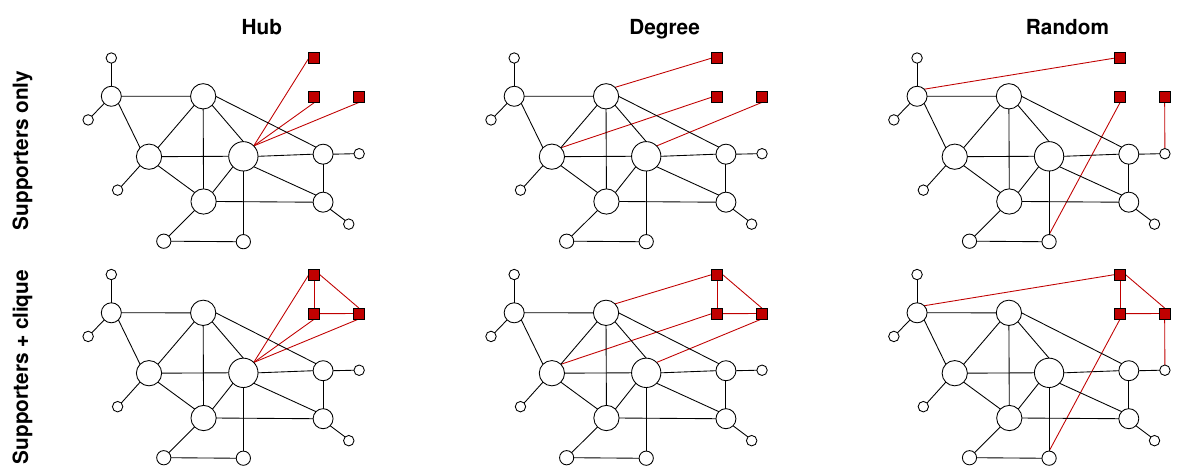}
\caption{
Examples of heuristics that add new nodes to the network.
Red squared nodes represent confederates (new nodes), while rounded nodes represent network members.
The size of each rounded node corresponds to its degree.
Each column corresponds to a different way of selecting supporters to be connected with the confederates.
The first row shows examples of heuristics that connects confederates with supporters only, while the second row shows examples of heuristics that additionally connect all confederates into a clique.
}
\label{fig:subo-heuristic-examples}
\end{figure}

Next, Figures~\ref{fig:subo-simulation-line} and~\ref{fig:subo-simulation-heat} present additional results; the first of these figures shows the change in the evader's ranking \textit{during} the hiding process, while the second figure shows the evader's ranking \textit{after} the hiding process. Several observations can be made based on these results. Regarding the different network structures, the source detection algorithms are on average less effective in the preferential-attachment networks generated using the \BAn model (i.e., the evader's position in the ranking is usually low to begin with). Regarding the different source detection algorithms, Betweenness and Rumor are on average the most resilient to hiding attempts (i.e., the average drop in the evader's ranking is the smallest), whereas Monte Carlo, Eigenvector, and Random walk are usually the least resilient. Regarding the selection of supporters to connect the confederates with, it is usually beneficial to link confederates with many different supporters (using either the random or degree heuristics), rather than connecting them all to the same set of supporters (using the hub heuristic), with the only exception being the Monte Carlo source detection algorithm. As for the question of whether confederates should be connected into a clique, our simulations show that this would indeed result in superior hiding of the evader. Moreover, in the majority of cases, connecting each confederate to three supporters proves to be more effective than connecting the confederate to just a single supporter.

\begin{figure}[tbh]
\centering
\includegraphics[width=.76\linewidth]{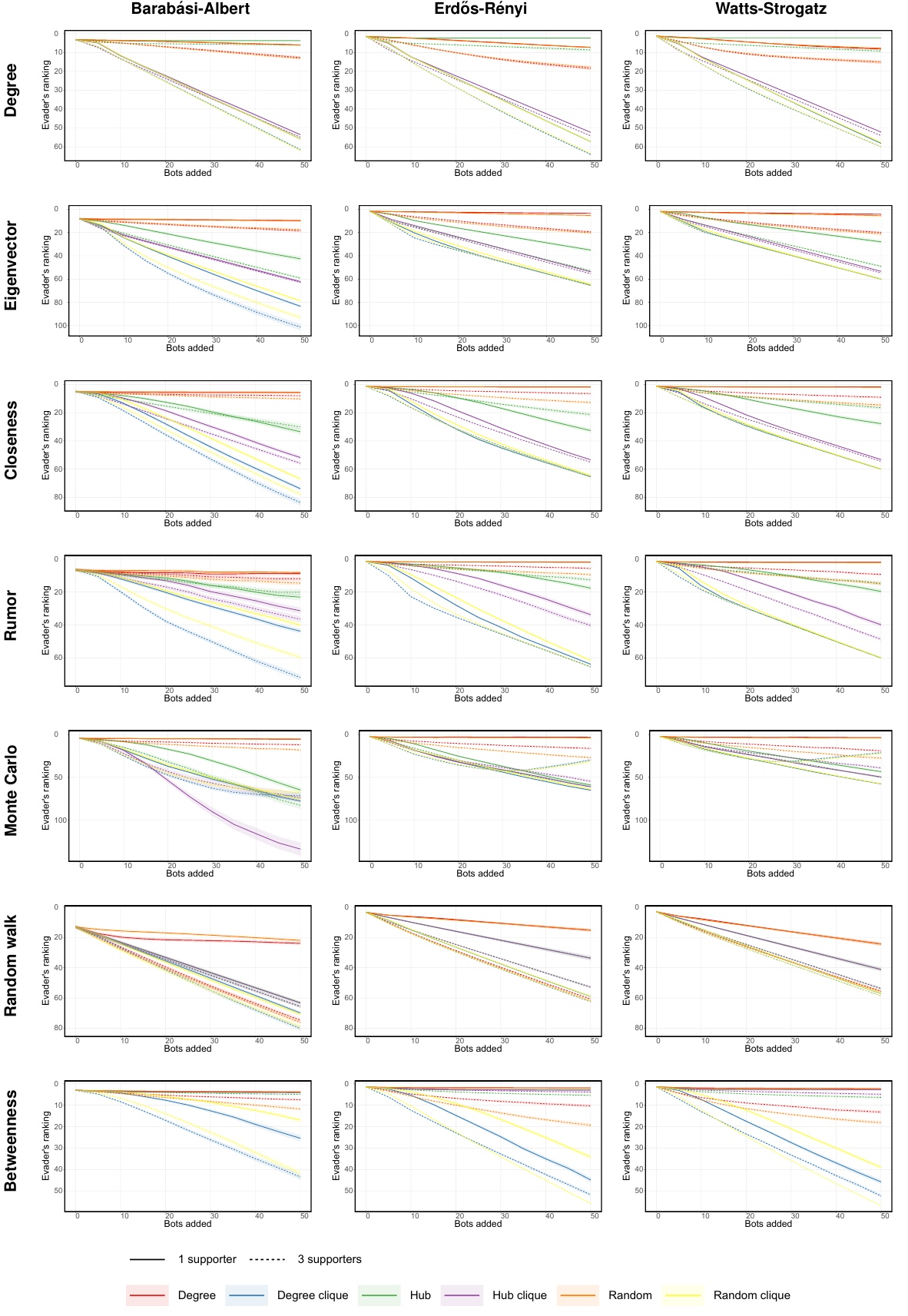}
\caption{
Results of simulations with hiding the source of diffusion by adding confederates to randomly generated networks with $1,000$ nodes and an average degree of $4$. The y-axis corresponds to the ranking of the evader according to the source detection algorithm (greater values indicate more efficient hiding), while the x-axis corresponds to the number of nodes added to the network. Each color corresponds to a different heuristic. Solid lines correspond to cases where each confederate is connected to a single supporter, while dashed lines correspond to cases where each confederate is connected to three supporters. Shaded areas represent $95\%$ confidence intervals.
}
\label{fig:subo-simulation-line}
\end{figure}

\begin{figure}[tbh]
\centering
\includegraphics[width=\linewidth]{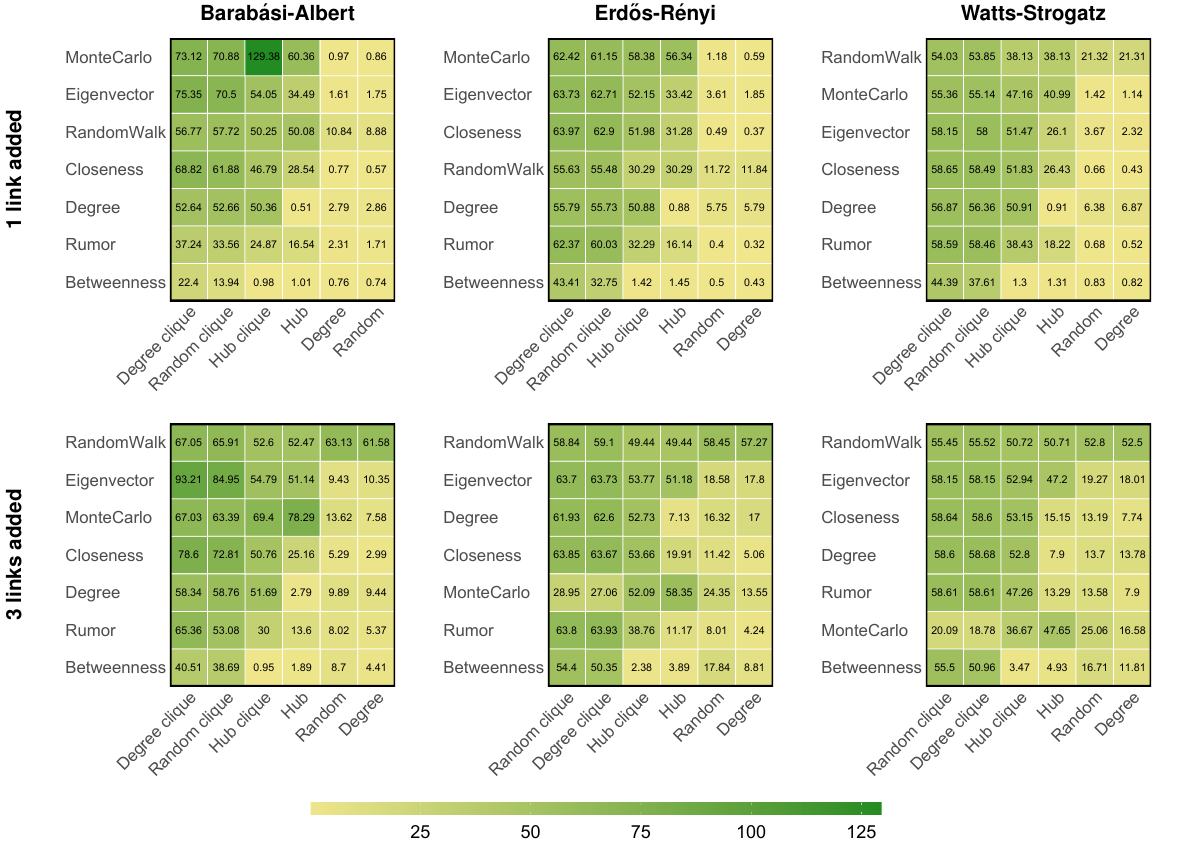}
\caption{
Results of simulations with hiding the source of diffusion by adding nodes to randomly generated networks with $1,000$ nodes with an average degree of $4$. The y-axis of each heatmap corresponds to different source detection algorithms, whereas the x-axis corresponds to different heuristics. The value in each cell indicates the change in the evader's ranking according to the source detection algorithm after adding $50$ confederates to the network using the heuristic. Rows and columns are sorted by average value.
}
\label{fig:subo-simulation-heat}
\end{figure}

\clearpage
\section{Hiding by Modifying Edges---Additional Results}
\label{app:simulation-edges}

Examples of applying different heuristics that modify edges of the network are presented in Figure~\ref{fig:edges-heuristic-examples}.

\begin{figure}[tbh]
\centering
\includegraphics[width=\linewidth]{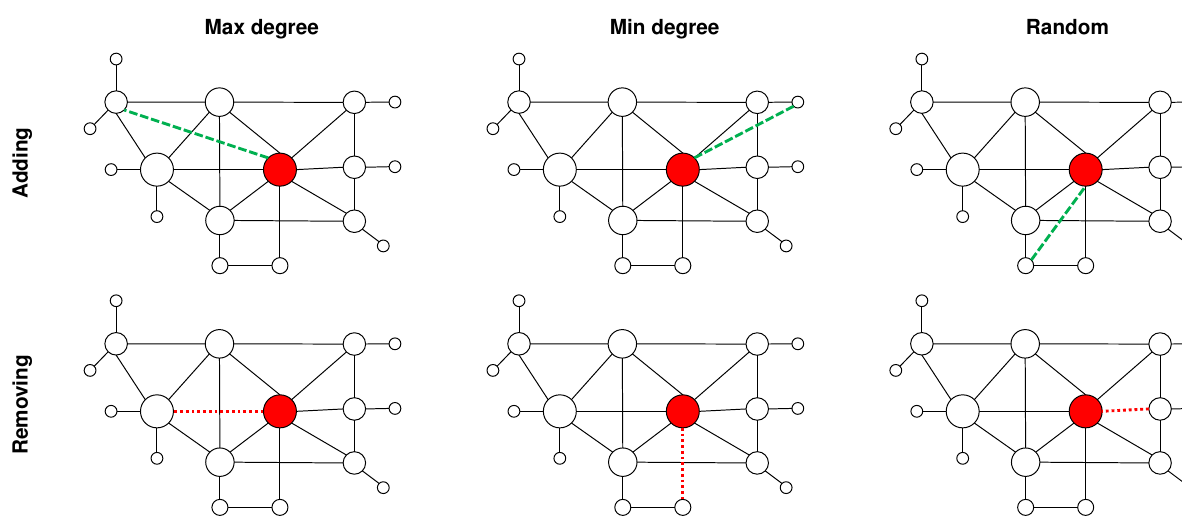}
\caption{
Examples of heuristics that modify edges of the network. The red node in each network represents the evader. The size of each node corresponds to its degree. Green dashed edges represent newly added connections, while dotted red edges represent newly removed connections. The first row shows examples of heuristics that add edges to the network, while the second row shows examples of heuristics that remove edges from the network. Each column corresponds to a different way of selecting the nodes that will be connected to, or disconnected from, the evader.
}
\label{fig:edges-heuristic-examples}
\end{figure}

Figures~\ref{fig:edge-simulation-line} to~\ref{fig:edge-simulation-mix-2} present the results of our simulations on networks with $1,000$ nodes, with Figures~\ref{fig:edge-simulation-line} and~\ref{fig:edge-simulation-heat} showing the results for the pure strategies, and Figures~\ref{fig:edge-simulation-mix-1} and~\ref{fig:edge-simulation-mix-2} showing the results for the mixed strategies. Our results indicate that removing existing edges from the network is significantly more effective in hiding the source of diffusion than adding new edges. In fact, adding new connections incident with the evader exposes the evader even more to the source detection algorithms in the vast majority of cases. Mixing between the two types of strategies typically provides worse performance in terms of hiding than simply running the removal component of the strategy. Regarding the choice of the neighbor to disconnect the evader from, selecting the neighbors with the greatest degree provides the best hiding to the evader, followed by neighbors selected at random, with low-degree neighbors being the worst choice. Nevertheless, in many cases all three removal heuristics provide very similar performance.

\begin{figure}[tbh]
\centering
\includegraphics[width=.78\linewidth]{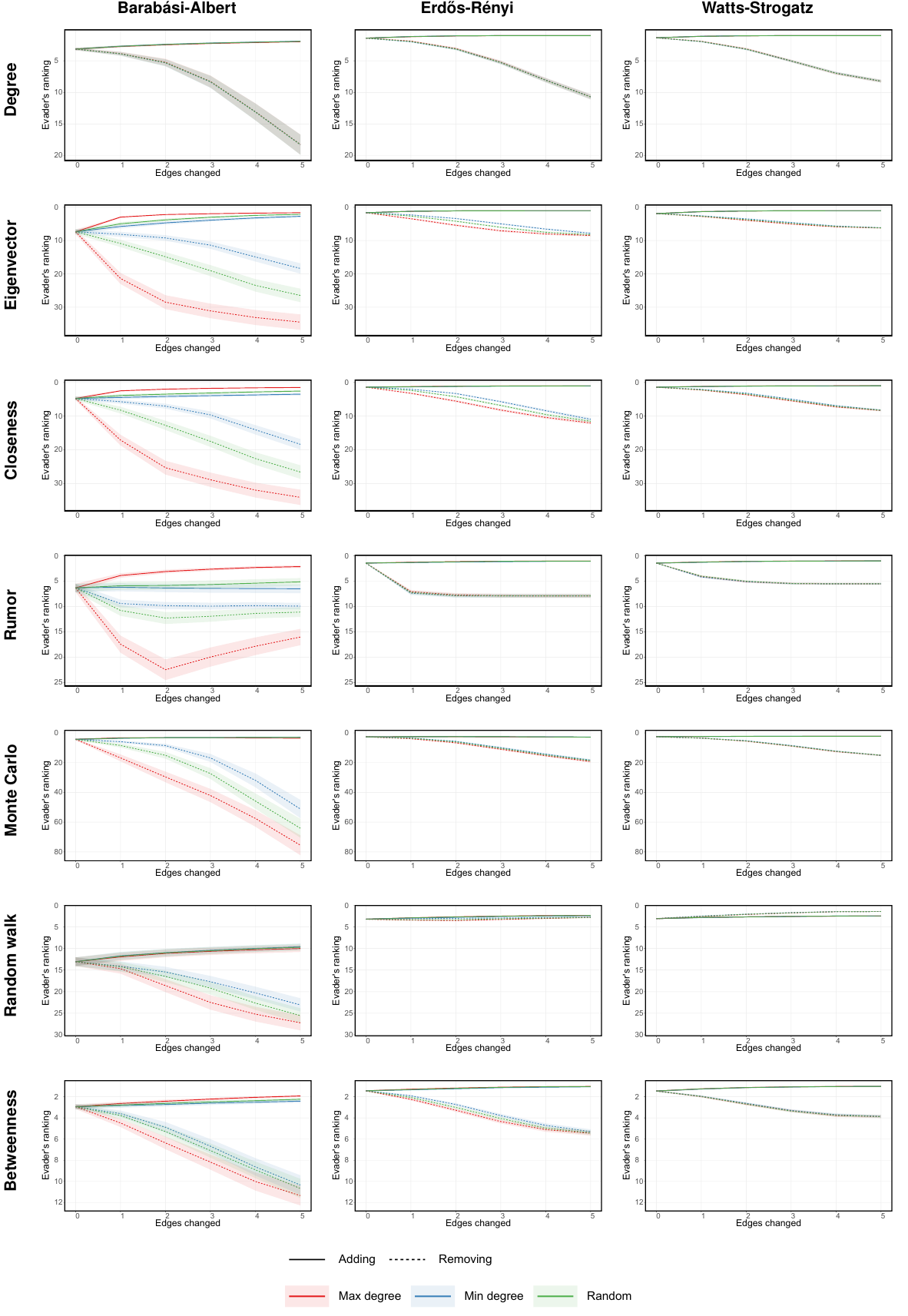}
\caption{
Results of hiding the source of diffusion by modifying edges in networks consisting of $1,000$ nodes with an average degree of $4$. The y-axis represents the evader's ranking according to the source detection algorithm (greater value indicates more effective hiding); the x-axis corresponds to the number of edges added to, or removed from, the network. Each color corresponds to a different way of choosing edges, while each line type (dashed or solid) corresponds to either adding or removing. Shaded areas represent $95\%$ confidence intervals.
}
\label{fig:edge-simulation-line}
\end{figure}

\begin{figure}[tbh]
\centering
\includegraphics[width=\linewidth]{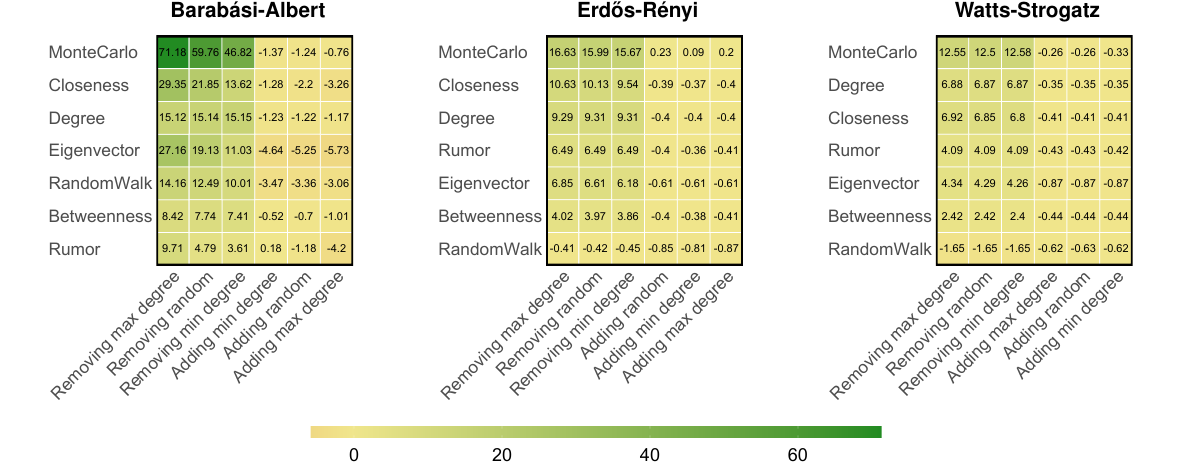}
\caption{
Results of hiding the source of diffusion by modifying edges in networks consisting of $1,000$ nodes with an average degree of $4$. In each heatmap, rows correspond to different source detection algorithms, while columns correspond to different heuristics. The value in each cell indicates the change in the evader's ranking according to the source detection algorithm as a result of adding or removing $5$ edges to the network, depending on the heuristic. Positive values indicate that the evader became more hidden, with greater values indicated a more effective disguise. In contrast, negative values indicate that the evader became less hidden. Rows and columns are sorted by average value.
}
\label{fig:edge-simulation-heat}
\end{figure}

\begin{figure}[tbh]
\centering
\includegraphics[width=\linewidth]{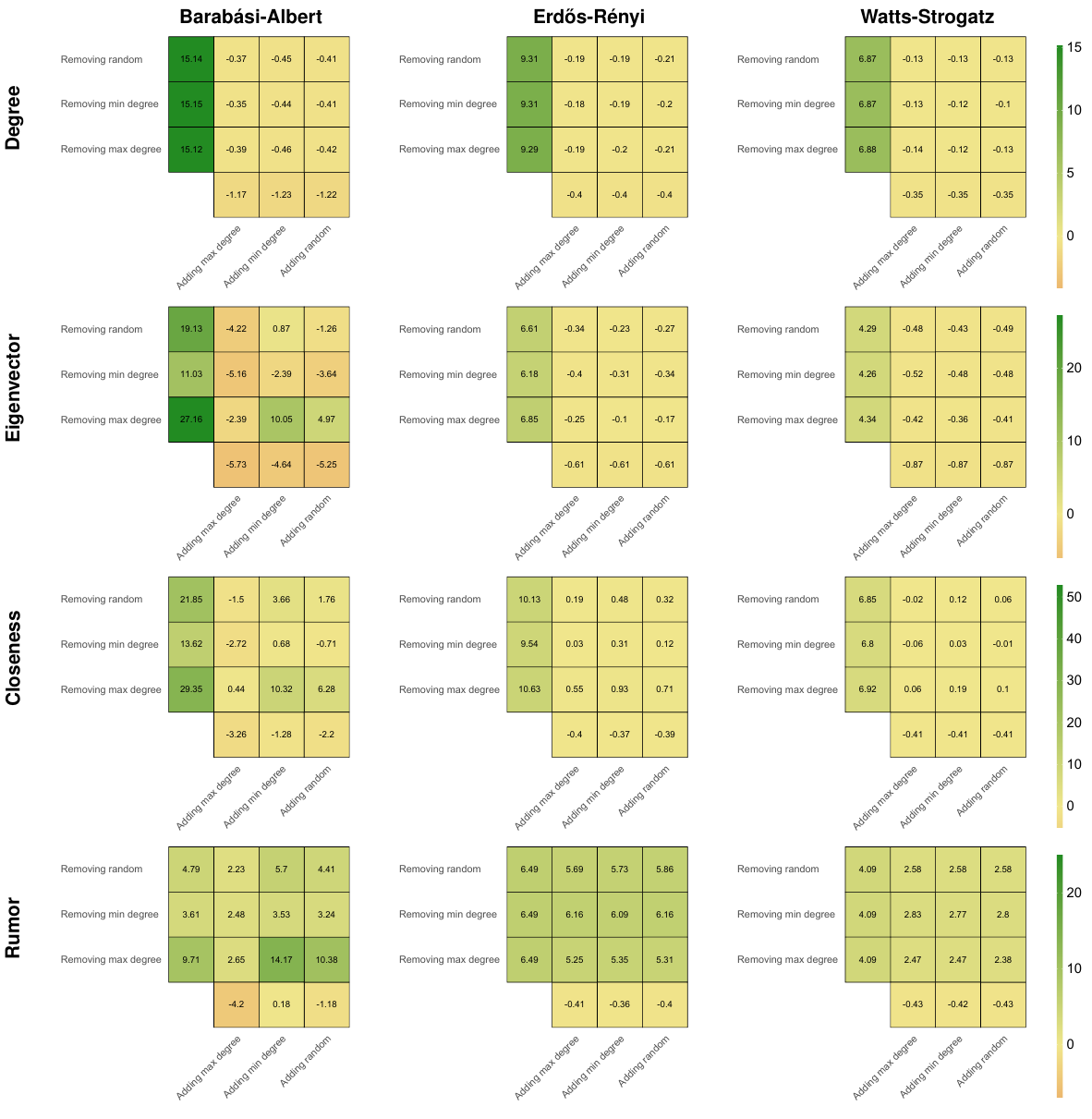}
\caption{
Results of hiding the source of diffusion by modifying edges in networks consisting of $1,000$ nodes with an average degree of $4$. In each heatmap, rows correspond to different removal heuristics, while columns correspond to different addition heuristics. The first column and the last row represent pure strategies, while the remaining cells represent mixed strategies. The value in each cell indicates the change in the evader's ranking according to the source detection algorithm after adding or removing $5$ edges, depending on the heuristic. Positive values indicate that the evader became more hidden, with greater values indicated a more effective disguise. In contrast, negative values indicate that the evader became less hidden. Rows and columns are sorted by average value.
}
\label{fig:edge-simulation-mix-1}
\end{figure}

\begin{figure}[tbh]
\includegraphics[width=\linewidth]{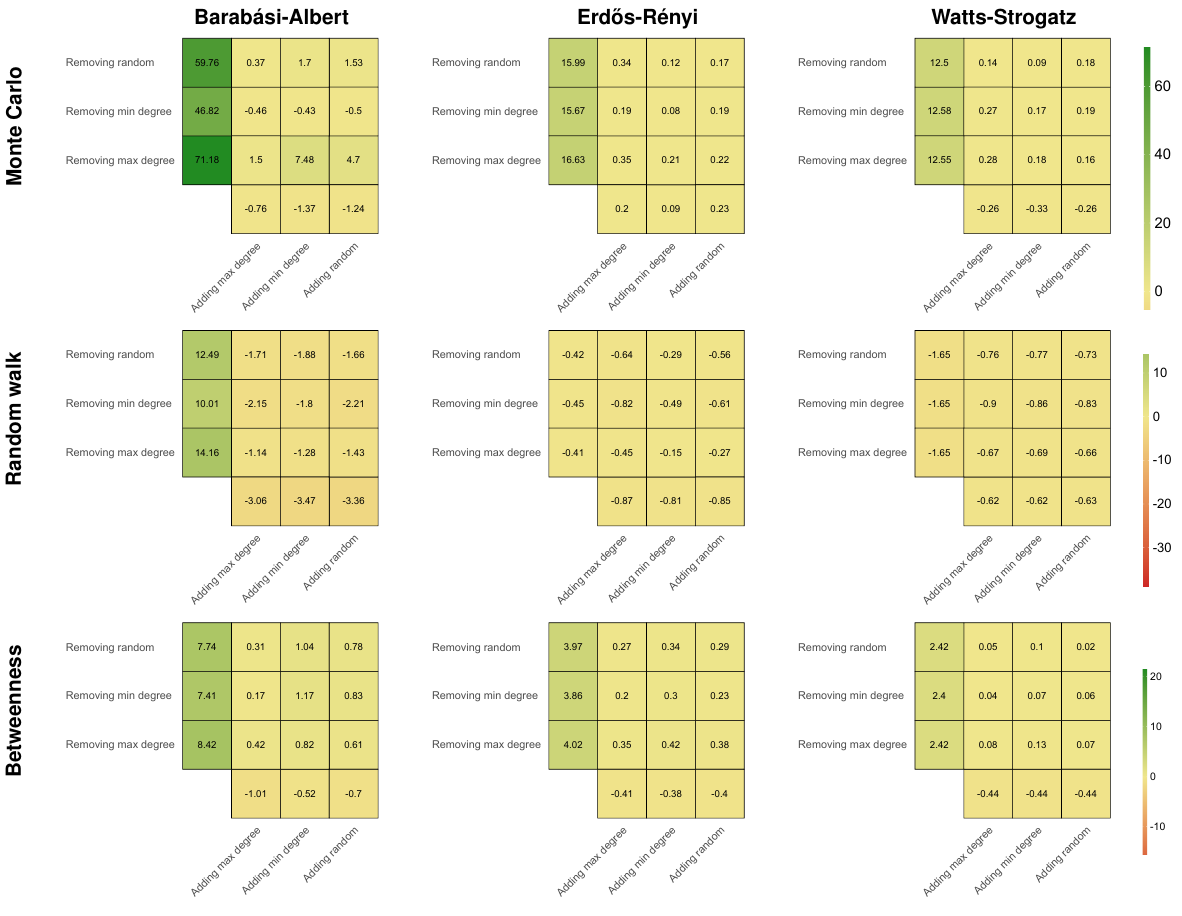}
\caption{
The same as Figure~\ref{fig:edge-simulation-mix-1}, except for the difference in the source detection algorithms being considered in the analysis.
}
\label{fig:edge-simulation-mix-2}
\end{figure}

\clearpage
\section{Comparison of Effectiveness of Adding Nodes and Modifying Edges}
\label{app:exchange}

In the main article, we compared the effectiveness of hiding the source of diffusion using two different types of network modifications---adding nodes and modifying edges. Figure~3 in the main article depicts the results of this analysis for the Eigenvector source detection algorithm. Here, we present analogical results for all other source detection algorithms considered in the study; see Figure~\ref{fig:exchange-appendix}.

As can be seen, the results for most source detection algorithm exhibit trends that are similar to those observed given the Eigenvector algorithm in the main article. More specifically, in the vast majority of cases, modifying one edge is significantly more effective than adding a single confederate to the network (as indicated by values greater than $1$ in the heatmaps). This tendency is particularly strong in the networks generated using the \BAn model, where for some networks it takes more than a hundred confederates to achieve the same effect as one edge. In contrast, in many networks generated using the Watts-Strogatz model, the addition of confederates is more effective than modifying edges.

\begin{figure}[tbh]
\centering
\setlength\tabcolsep{0pt}
\begin{tabular}{m{.05\textwidth}m{.77\textwidth}}
\rotatebox{90}{\hspace{5mm} \textbf{Degree}} &
\includegraphics[width=0.85\linewidth]{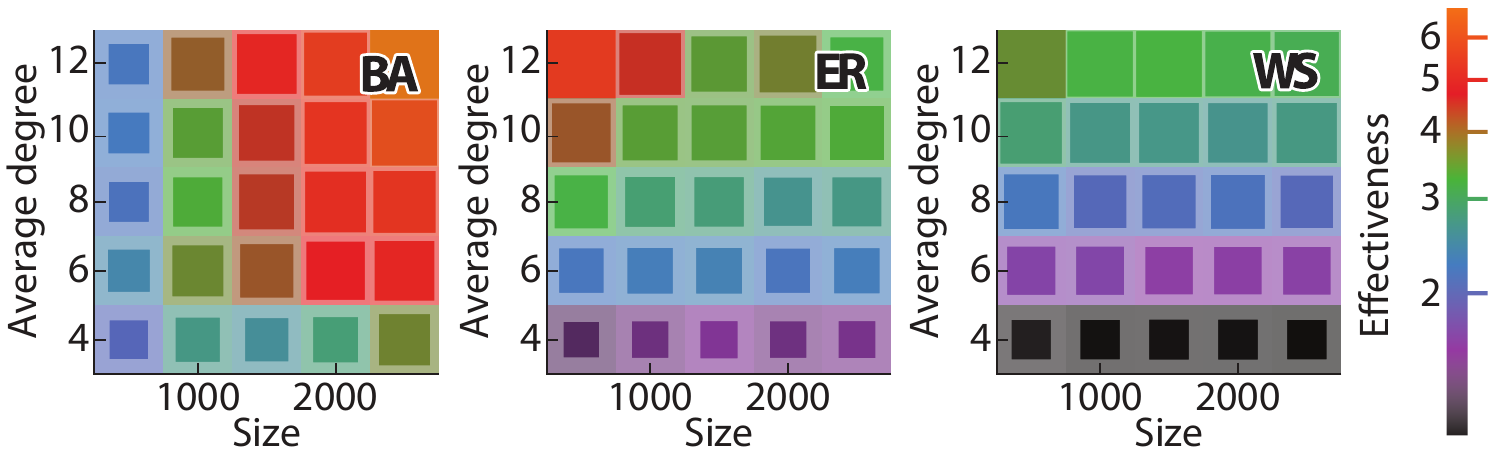} \\
\rotatebox{90}{\hspace{5mm} \textbf{Closeness}} &
\includegraphics[width=0.85\linewidth]{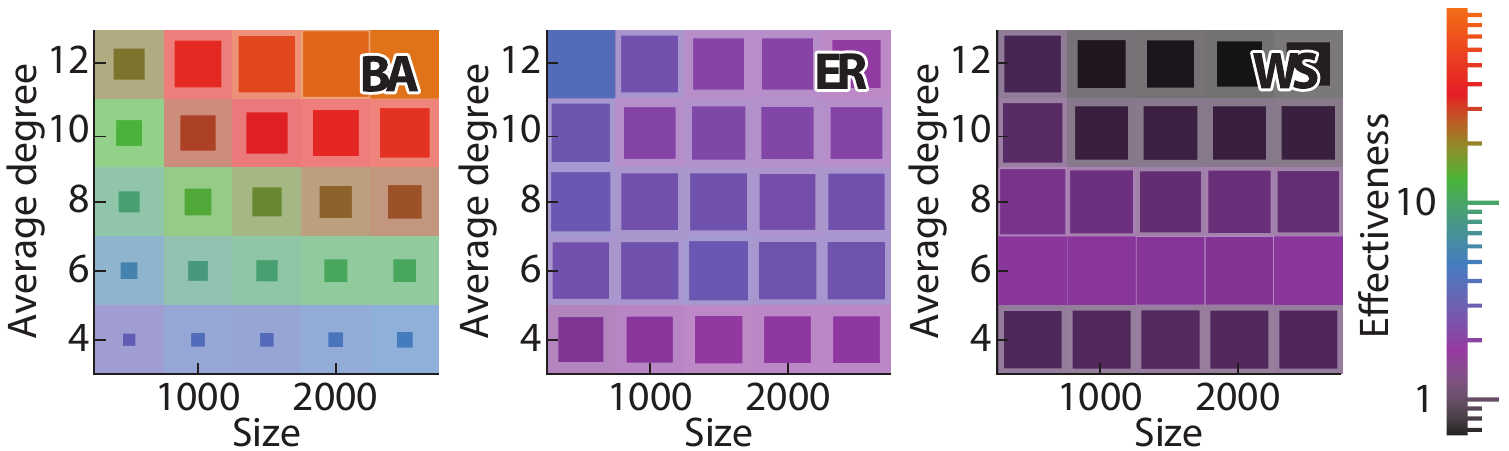} \\
\rotatebox{90}{\hspace{5mm} \textbf{Rumor}} &
\includegraphics[width=0.85\linewidth]{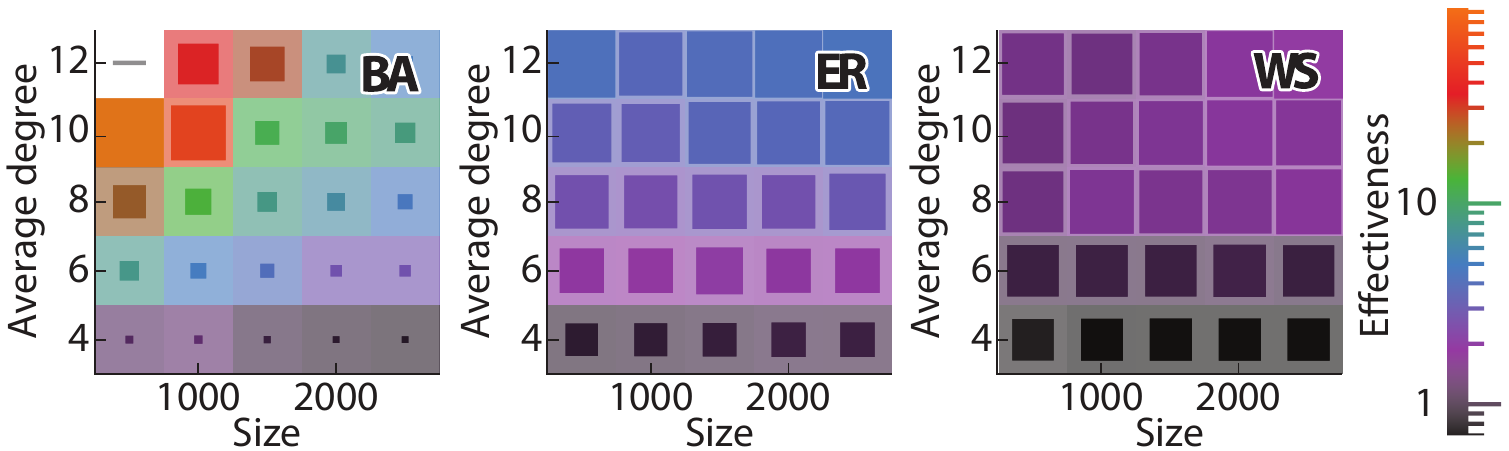} \\
\rotatebox{90}{\hspace{5mm} \textbf{Monte Carlo}} &
\includegraphics[width=0.85\linewidth]{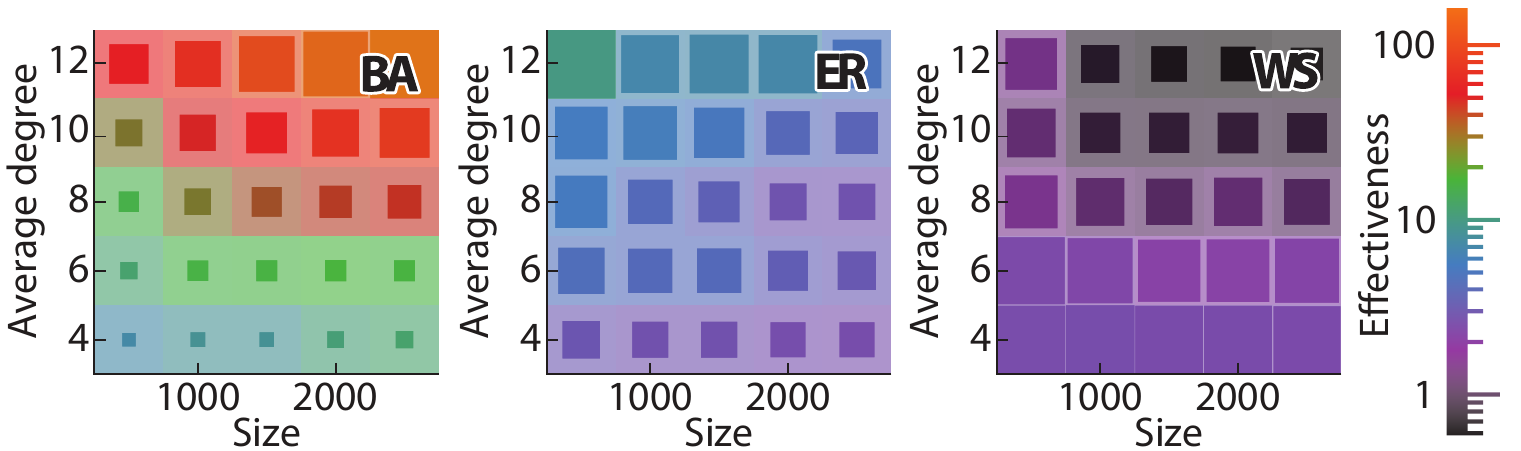} \\
\rotatebox{90}{\hspace{5mm} \textbf{Random walk}} &
\includegraphics[width=0.85\linewidth]{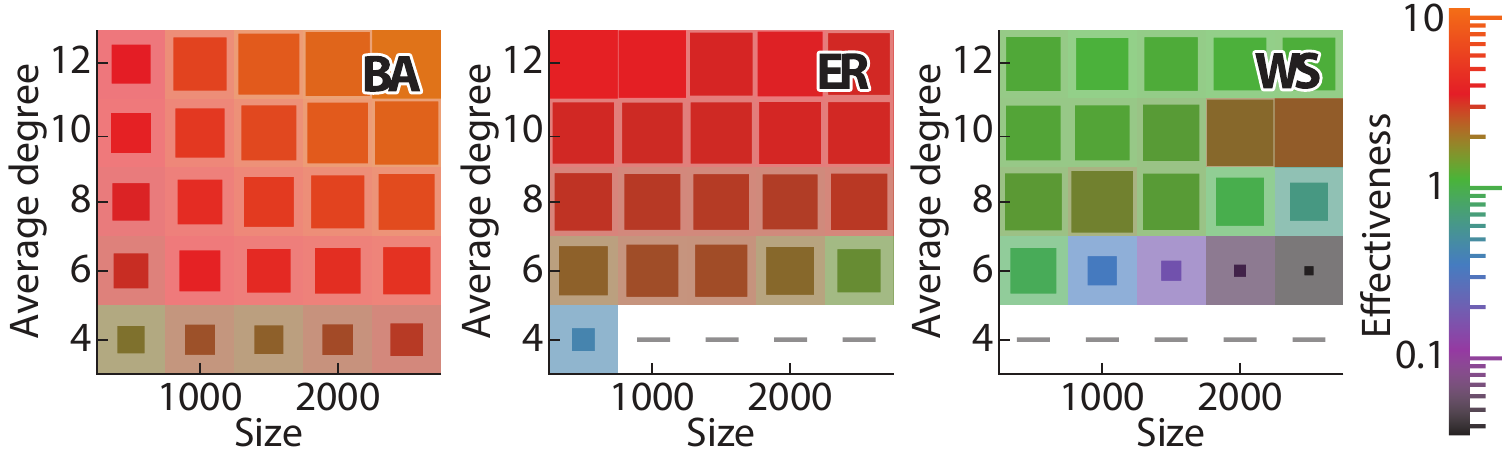} \\
\rotatebox{90}{\hspace{5mm} \textbf{Betweenness}} &
\includegraphics[width=0.85\linewidth]{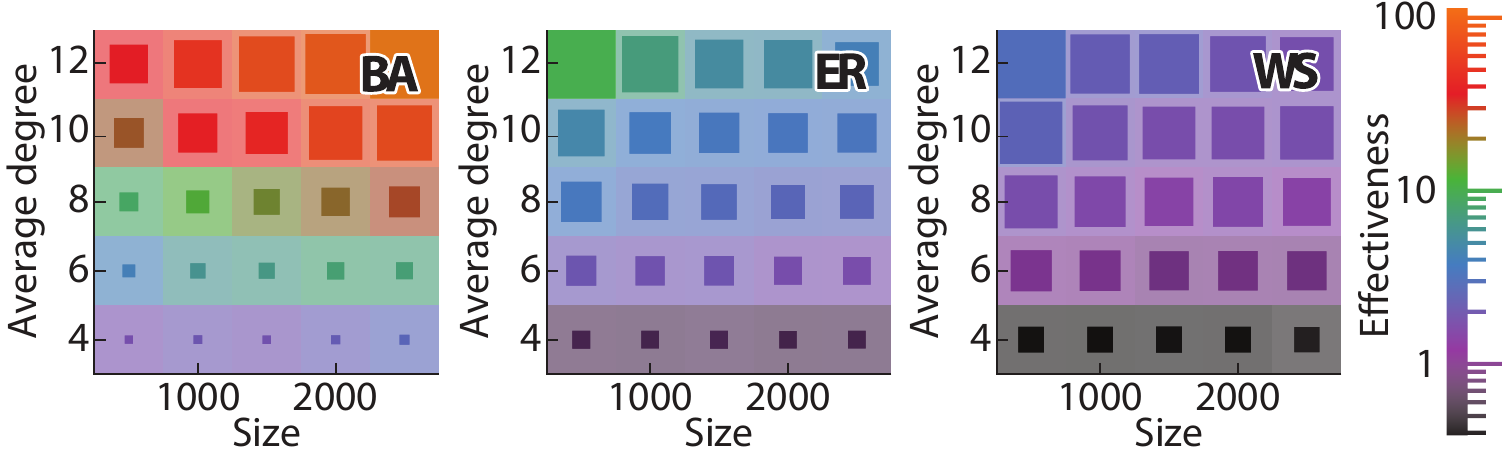} \\
\end{tabular}
\caption{
The same as Figure~3 in the main article, except for the difference in the source detection algorithms being considered in the analysis. For cells marked with minuses, at least one type of heuristic (adding confederates or modifying edges) does not reduce the evader's ranking.
}
\label{fig:exchange-appendix}
\end{figure}

\clearpage
\section{Hiding in Large Networks}
\label{app:simulation-large}

All the experiments presented thus far in the Appendices consider only $1,000$ nodes. Next, we evaluate the effectiveness of hiding the source of diffusion in much larger networks. Specifically, we consider random networks with $100,000$ nodes generated using the \BAn, the \ERn and the Watts-Strogatz models. As mentioned in the main article, due to the excessive amount of computations needed to handle such networks, we approximate the evader's ranking. The approximation is done by computing the ranking of the evader not among all nodes, but rather among $10,000$ nodes, consisting of the $5,000$ infected nodes with the greatest degrees and another $5,000$ infected nodes chosen uniformly at random from the remaining ones. Furthermore, in our approximation we do not consider the Betweenness and Random walk source detection algorithms, since their ranking cannot be efficiently computed for just a selected subset of nodes.

Figures from~\ref{fig:subo-simulation-line-large} to~\ref{fig:edge-simulation-heat-large} present the results of our simulations, while Figure~\ref{fig:large-before-bars} compares the effectiveness of the source detection algorithms themselves, before any attempts of hiding.
Generally, results show similar patterns to these computed for smaller networks. For networks generated using the \BAn model, the evader is relatively well-hidden even without executing any heuristics, whereas in other types of networks the evader is usually exposed at the beginning of the process, usually occupying the top position in the ranking. What is more, unlike the case with the \BAn model, applying the heuristics given the other network models has a negligible effect, with the evader's ranking decreasing by only a few positions in most cases.

\begin{figure}[tbh]
\centering
\includegraphics[width=.98\linewidth]{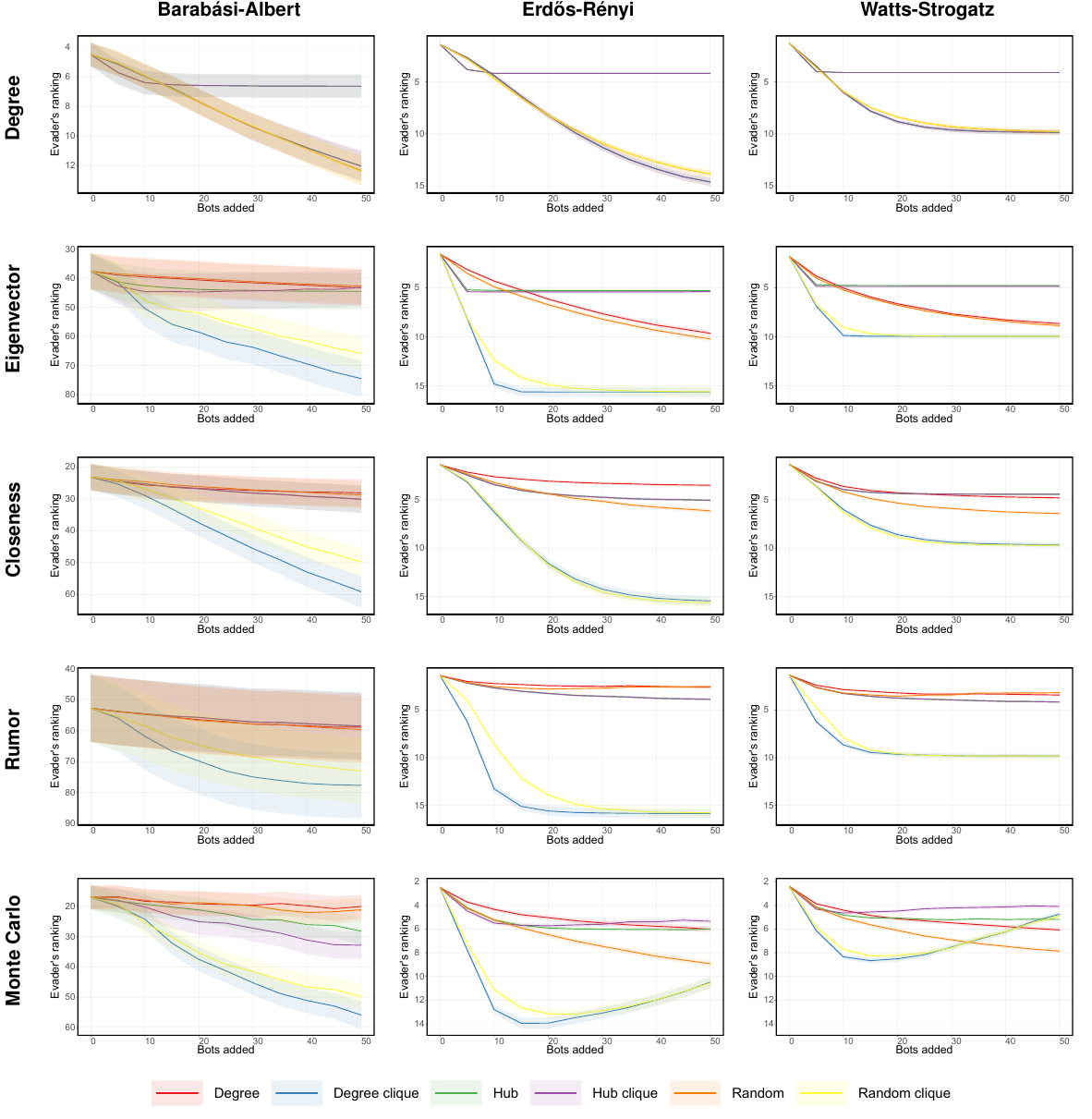}
\caption{
The same as Figure~\ref{fig:subo-simulation-line}, but for networks with $100,000$ nodes (instead of $1,000$) and only for the case where $3$ supporters are connected to each confederate.
}
\label{fig:subo-simulation-line-large}
\end{figure}

\begin{figure}[tbh]
\centering
\includegraphics[width=.98\linewidth]{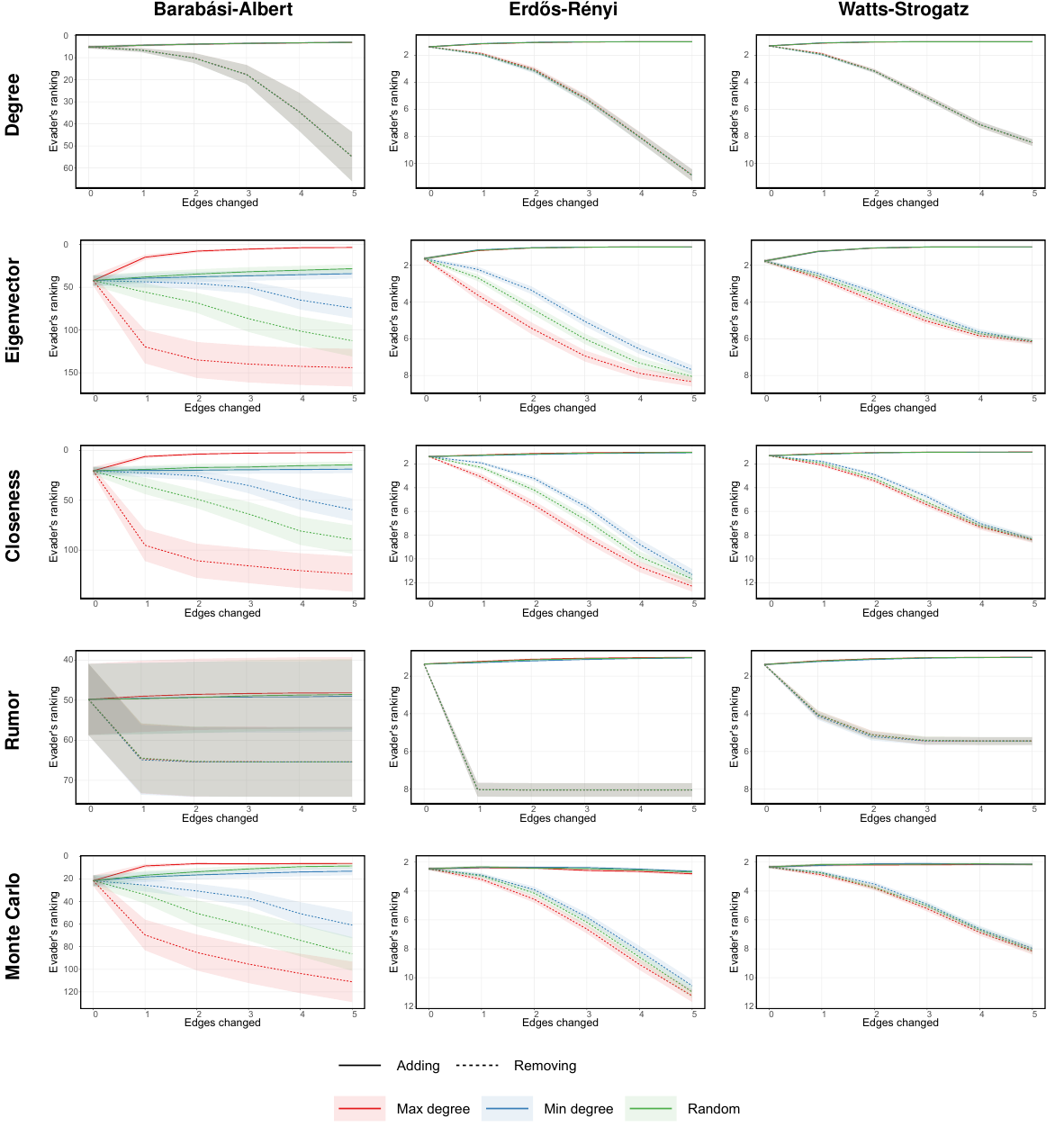}
\caption{
The same as Figure~\ref{fig:edge-simulation-line}, but for networks with $100,000$ nodes instead of $1,000$ nodes.
}
\label{fig:edge-simulation-line-large}
\end{figure}

\begin{figure}[tbh]
\centering
\includegraphics[width=\linewidth]{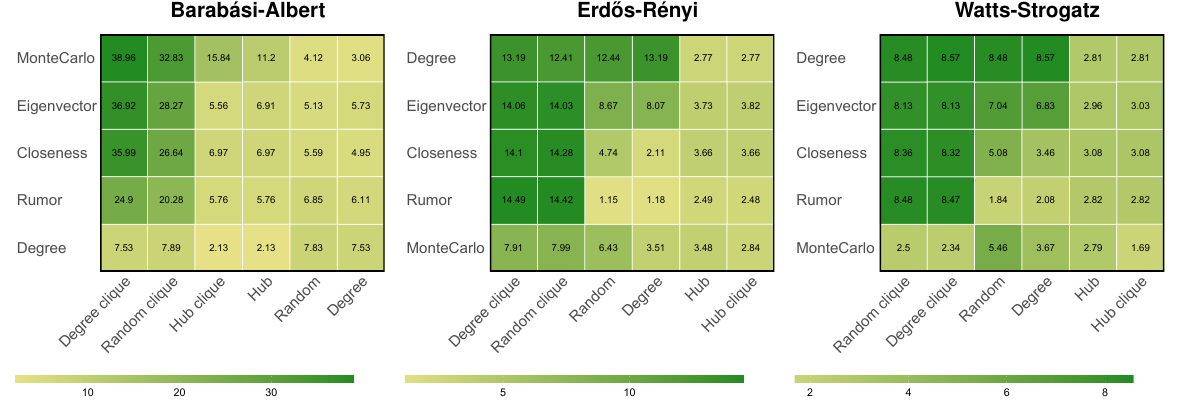}
\caption{
The same as Figure~\ref{fig:subo-simulation-heat}, but for networks with $100,000$ nodes (instead of $1,000$ nodes) and only for the case in which $3$ supporters are connected to each confederate.
}
\label{fig:subo-simulation-heat-large}
\end{figure}

\begin{figure}[tbh]
\centering
\includegraphics[width=\linewidth]{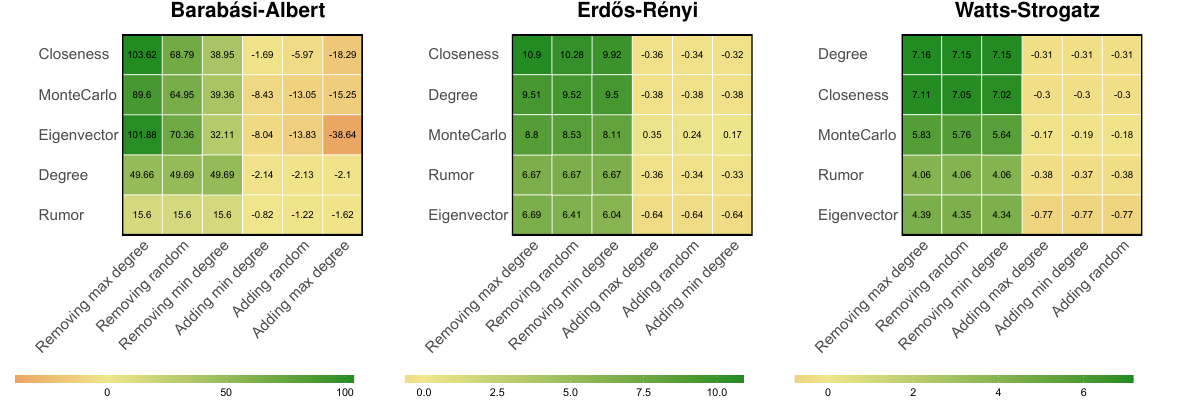}
\caption{
The same as Figure~\ref{fig:edge-simulation-heat}, but for networks with $100,000$ nodes instead of $1,000$ nodes.
}
\label{fig:edge-simulation-heat-large}
\end{figure}

\begin{figure}[tbh]
\centering
\includegraphics[width=\linewidth]{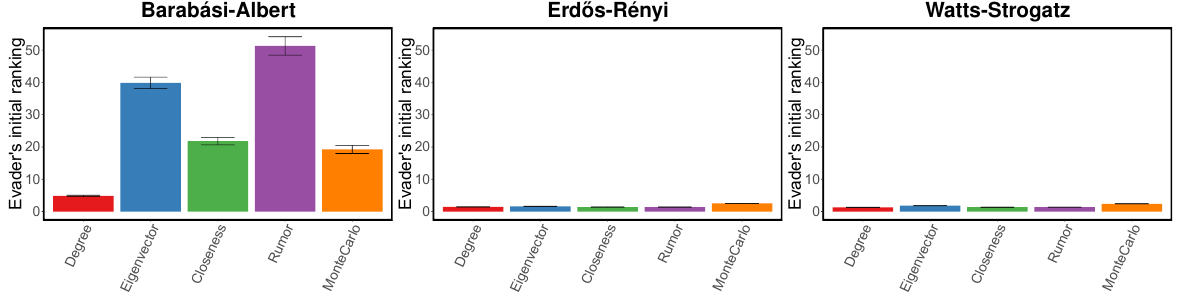}
\caption{
Comparison of the effectiveness of different source detection algorithms before the hiding process in random networks consisting of 100,000 nodes.
The x-axis corresponds to different source detection algorithms, while the y-axis corresponds to the initial ranking of the evader.
The error bars represent $95\%$ confidence intervals.
}
\label{fig:large-before-bars}
\end{figure}

\clearpage
\section{Hiding in Real-Life Networks}
\label{app:simulation-real}

We now evaluate the effectiveness of heuristics hiding the source of diffusion in real-life networks. In particular, we consider the following networks:

\begin{itemize}
\item \textbf{St Lucia}~\cite{thomas2016diffusion}---colocation network collected at the St Lucia campus of the University of Queensland using the WiFi network. The network consists of 302 nodes and 1149 edges. Nodes represent students, while an edge between any two students indicates that they were in the same location at the same time.

\item \textbf{Facebook}~\cite{leskovec2012learning}---fragment of the Facebook network, an ego-network of a student of an American university. The network consists of 333 nodes and 2523 edges. Nodes represent Facebook users, while an edge between any two users indicates that they are friends on Facebook.

\item \textbf{Copenhagen}~\cite{ahmed2010time}---retweet network of the tweets regarding the United Nations Climate Change conference held in Copenhagen in December 2009. The network consists of 761 nodes and 1029 edges. Nodes represent Twitter users, while an edge between two users indicates that at least one of them responded to a tweet by the other.

\item \textbf{Gnutella}~\cite{ripeanu2002mapping}---Gnutella peer-to-peer file sharing network snapshot from August 8, 2002. The network consists of 6299 nodes and 20776 edges. Nodes represent hosts, while an edge between two nodes indicates that the two hosts exchanged a file.

\item \textbf{PGP}~\cite{boguna2004models}---the network of users of the Pretty-Good-Privacy (PGP) algorithm for secure information exchange from 2004. The network consists of 10680 nodes and 24316 edges. Nodes represent  users of the algorithm, while an edge between two nodes indicates that they mutually signed their public keys within the protocol (i.e., they formed a trust relationship).

\item \textbf{Prostitution}~\cite{rocha2010information}---the giant component of a sexual contacts network between escorts and customers from a Brazilian online community. The network consists of 15810 nodes and 38540 edges. A node represents either an escort or a customers, while an edge between an escort and a customer indicates that the former reported a sexual contact with the latter.
\end{itemize}

The experimental procedure for the first three (smaller) real-life networks remains the same as for random networks with $1,000$ nodes in the main article, whereas the procedure for the last three (larger) real-life networks is the same as for random networks with $100,000$ nodes.
The results of our simulations for smaller networks are presented in Figures from~\ref{fig:subo-simulation-line-real-small} to~\ref{fig:edge-simulation-heat-real-small}, while the results for smaller networks are presented in Figures from~\ref{fig:subo-simulation-line-real-large} to~\ref{fig:edge-simulation-heat-real-large}. For an evaluation of the source detection algorithms themselves, before any attempts of hiding, see Figures~\ref{fig:real-small-before-bars} and~\ref{fig:real-large-before-bars}.

Regarding the smaller networks, results seem to be largely consistent with observation made for the random networks. One noticeable difference is the reduced effectiveness of heuristics adding nodes against rumor source detection algorithm in the St Lucia and the Facebook networks. It might be caused by the relatively large density of these two networks, causing most of the BFS trees computed by the rumor algorithm to be star-like. In case of the larger networks, the evader is hidden even without any strategic manipulations to a greater degree than in randomly generated networks, suggesting that correctly identifying the source of diffusion in the real world might be even harder than our simulations on synthetic data indicate.

\begin{figure}[tbh]
\centering
\includegraphics[width=.86\linewidth]{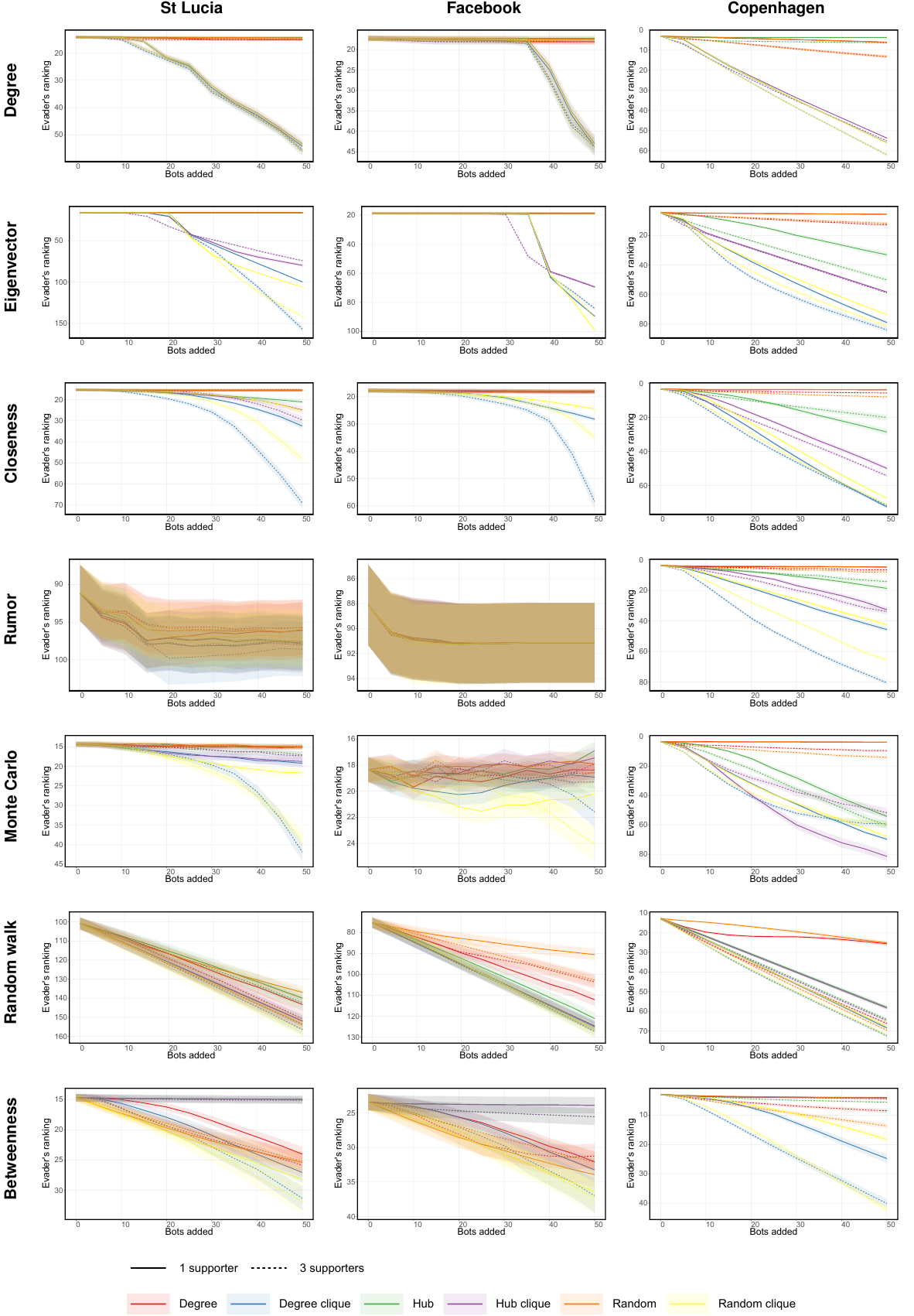}
\caption{
The same as Figure~\ref{fig:subo-simulation-line}, but for small real-life networks instead of randomly generated networks.
}
\label{fig:subo-simulation-line-real-small}
\end{figure}

\begin{figure}[tbh]
\centering
\includegraphics[width=.86\linewidth]{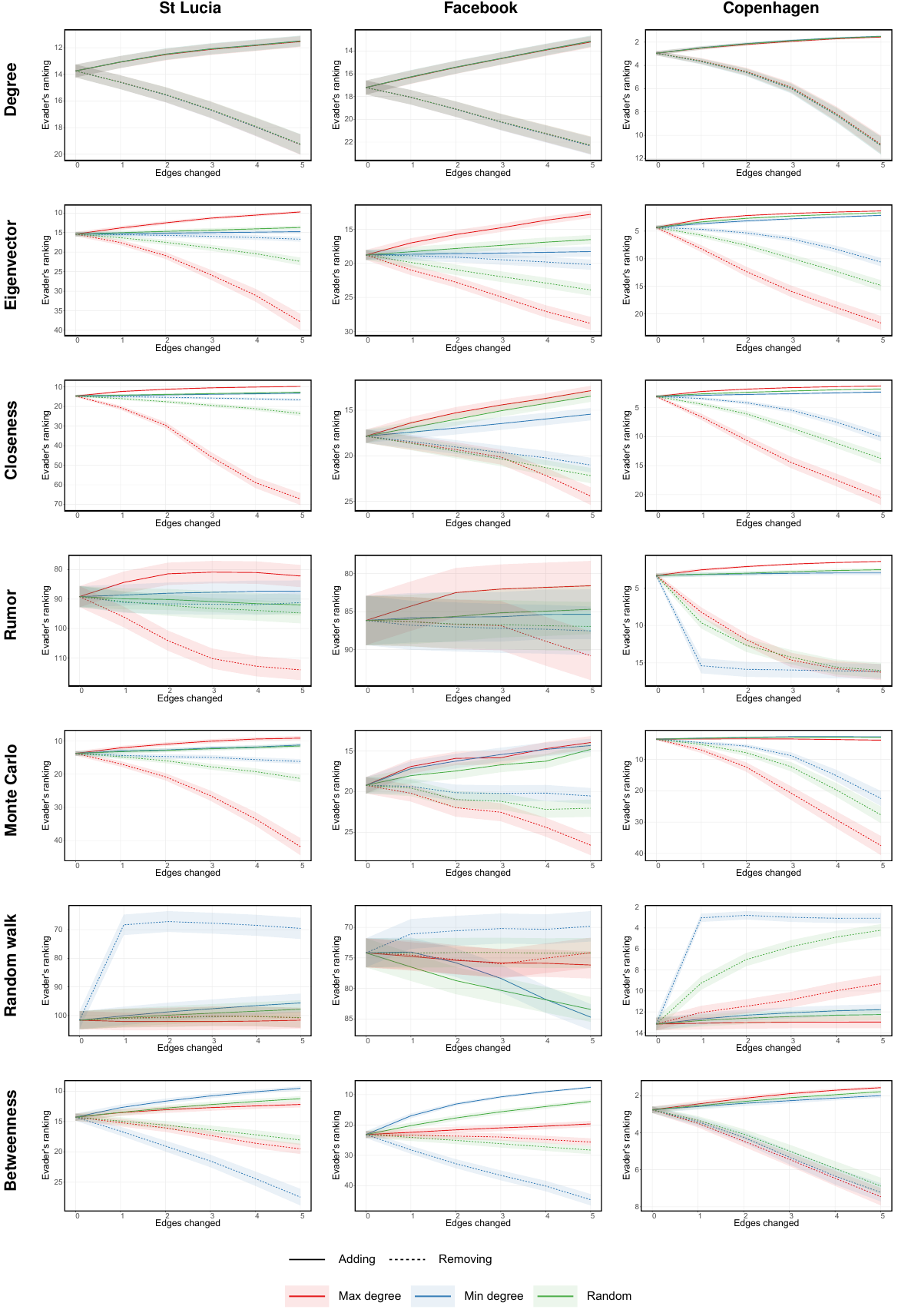}
\caption{
The same as Figure~\ref{fig:edge-simulation-line}, but for small real-life networks instead of randomly generated networks.
}
\label{fig:edge-simulation-line-real-small}
\end{figure}

\begin{figure}[tbh]
\centering
\includegraphics[width=\linewidth]{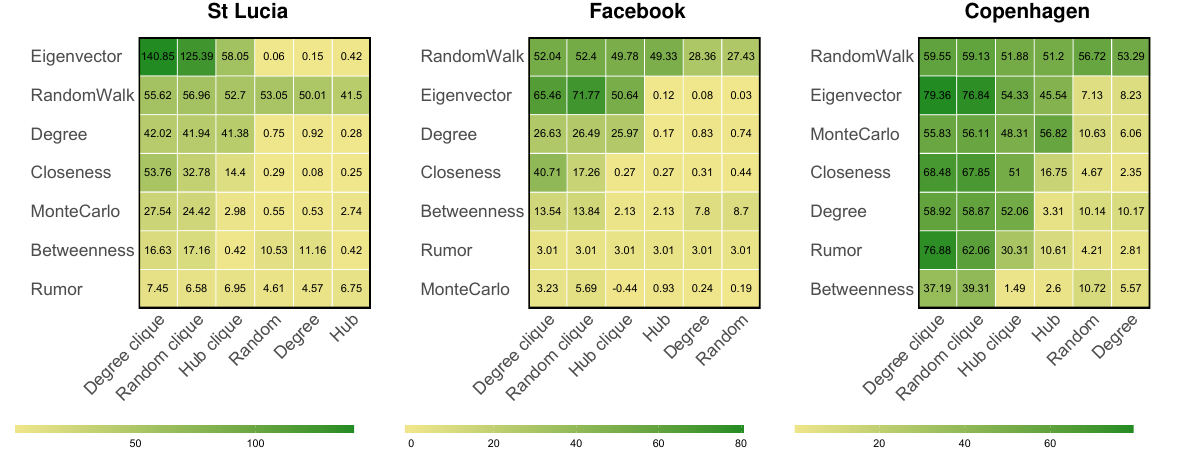}
\caption{
The same as Figure~\ref{fig:subo-simulation-heat}, but for small real-life networks instead of randomly generated networks.
}
\label{fig:subo-simulation-heat-real-small}
\end{figure}

\begin{figure}[tbh]
\centering
\includegraphics[width=\linewidth]{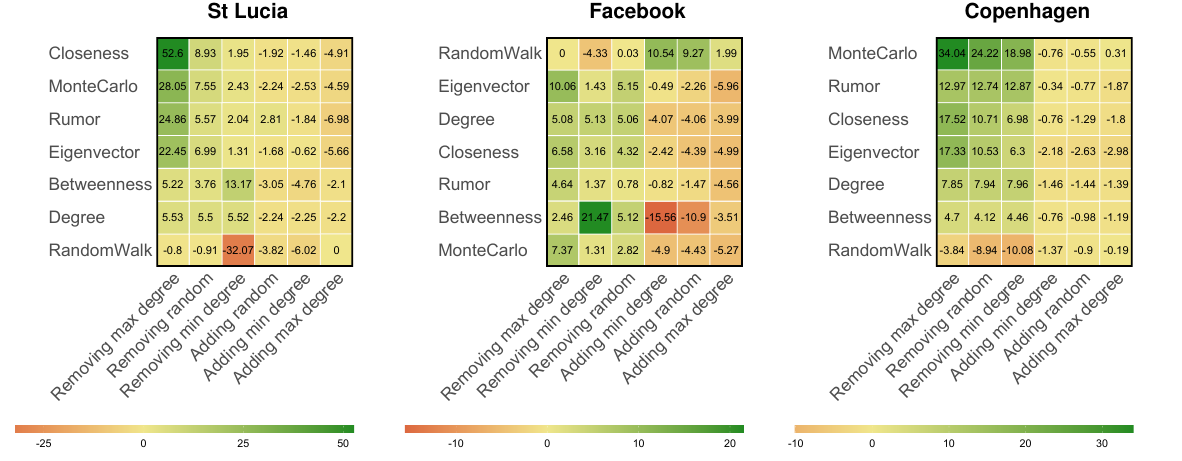}
\caption{
The same as Figure~\ref{fig:edge-simulation-heat}, but for small real-life networks instead of randomly generated networks.
}
\label{fig:edge-simulation-heat-real-small}
\end{figure}

\begin{figure}[tbh]
\centering
\includegraphics[width=.86\linewidth]{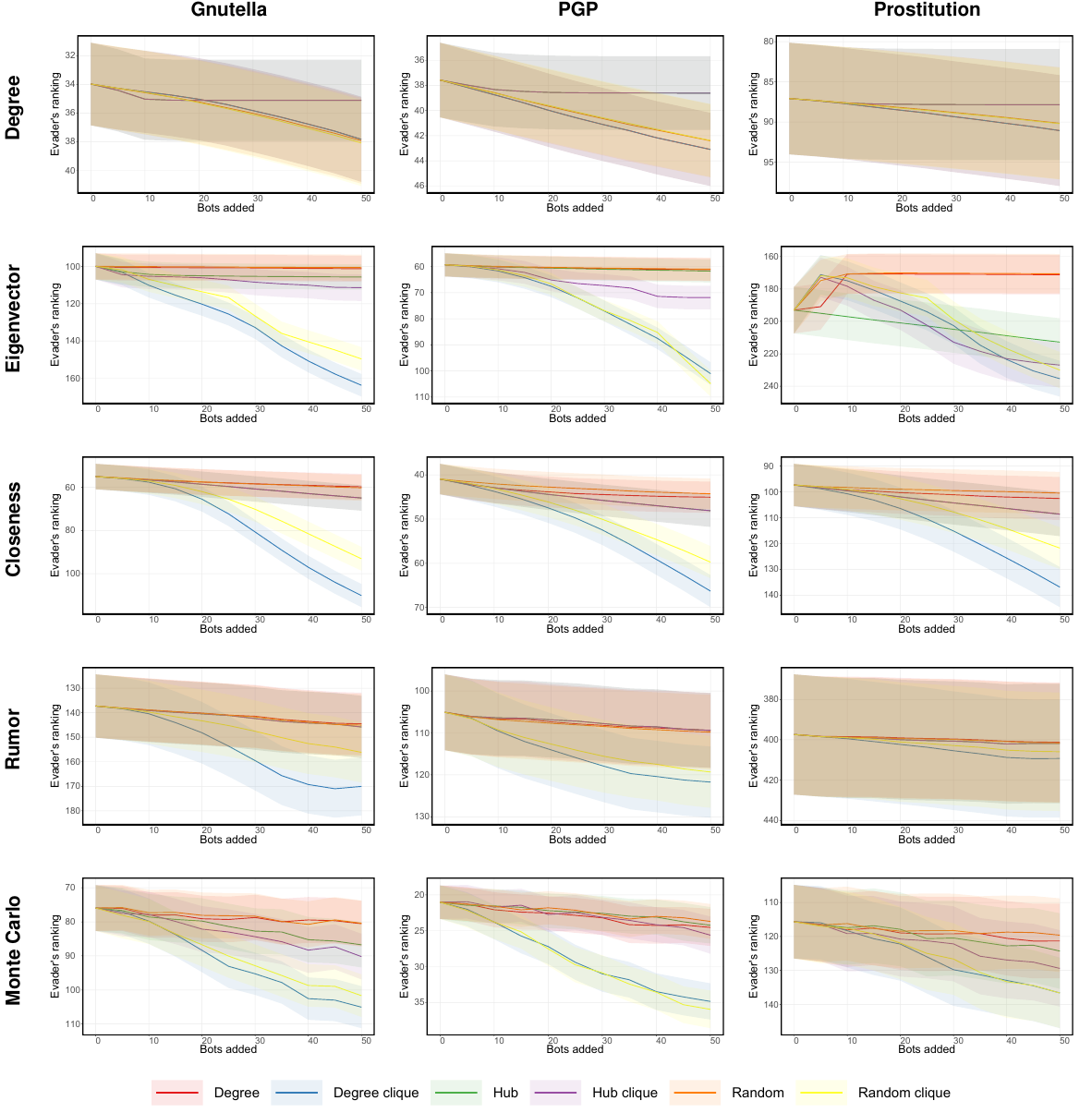}
\caption{
Same as Figure~\ref{fig:subo-simulation-line-large}, but for large real-life networks instead of randomly generated networks.
}
\label{fig:subo-simulation-line-real-large}
\end{figure}

\begin{figure}[tbh]
\centering
\includegraphics[width=.86\linewidth]{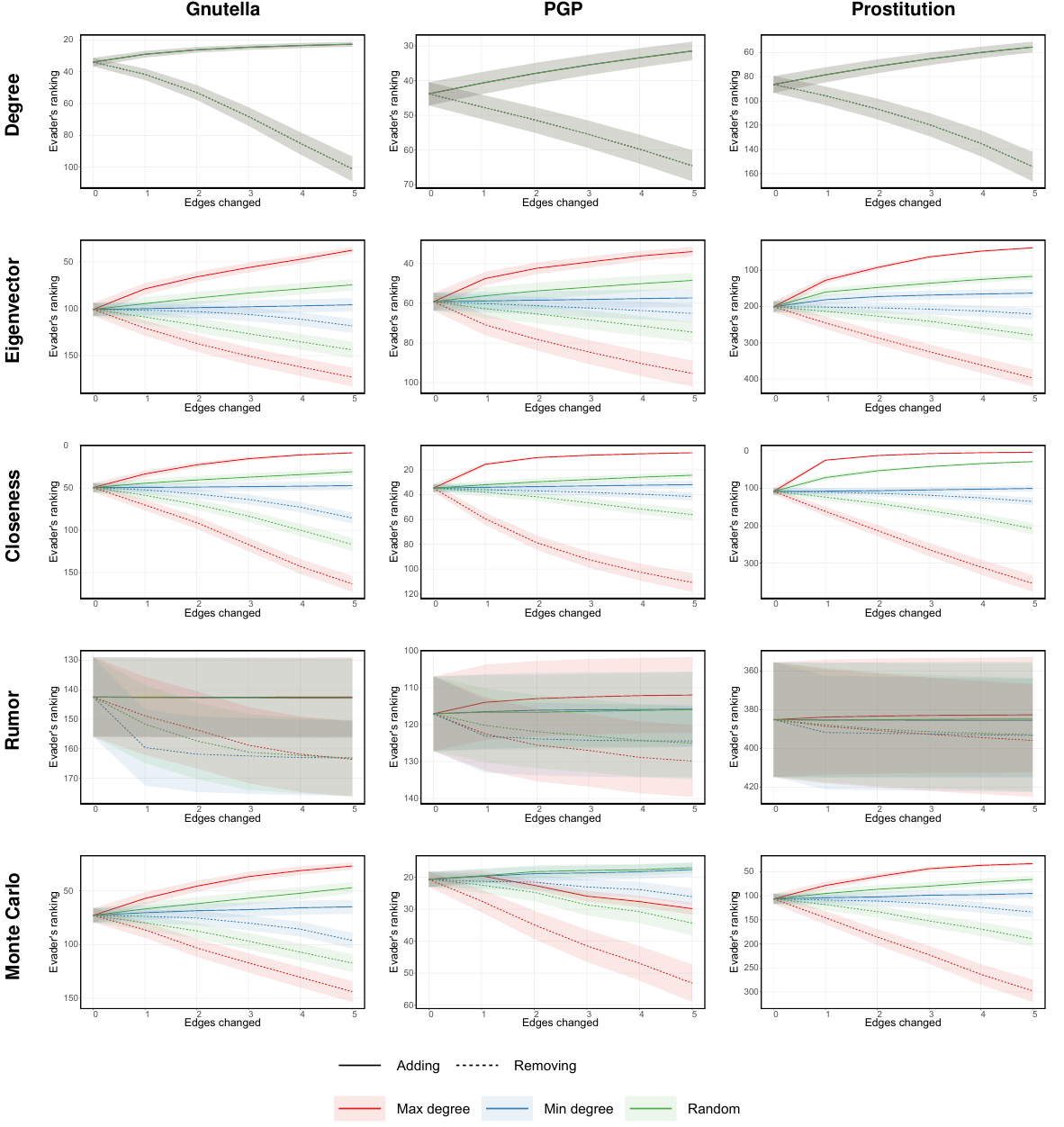}
\caption{
Same as Figure~\ref{fig:edge-simulation-line-large}, but for large real-life networks instead of randomly generated networks.
}
\label{fig:edge-simulation-line-real-large}
\end{figure}

\begin{figure}[tbh]
\centering
\includegraphics[width=\linewidth]{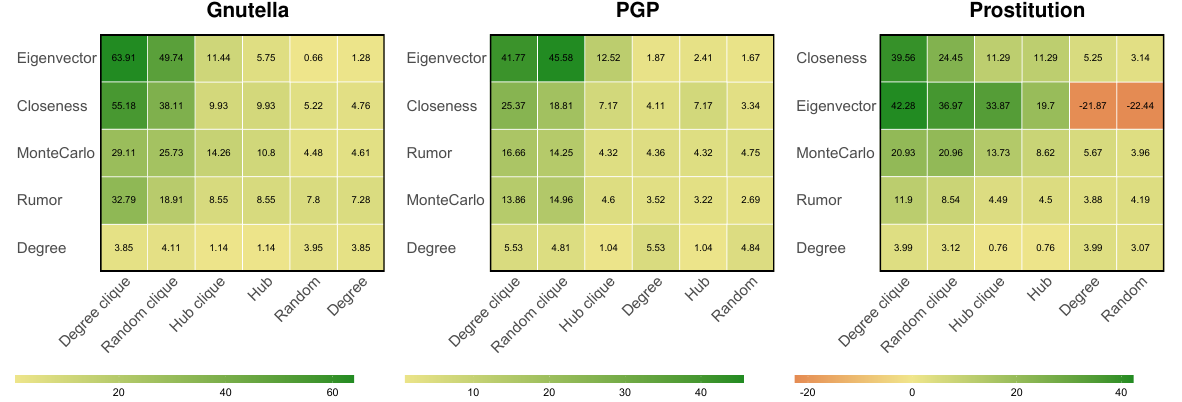}
\caption{
Same as Figure~\ref{fig:subo-simulation-heat-large}, but for large real-life networks instead of randomly generated networks.
}
\label{fig:subo-simulation-heat-real-large}
\end{figure}

\begin{figure}[tbh]
\centering
\includegraphics[width=\linewidth]{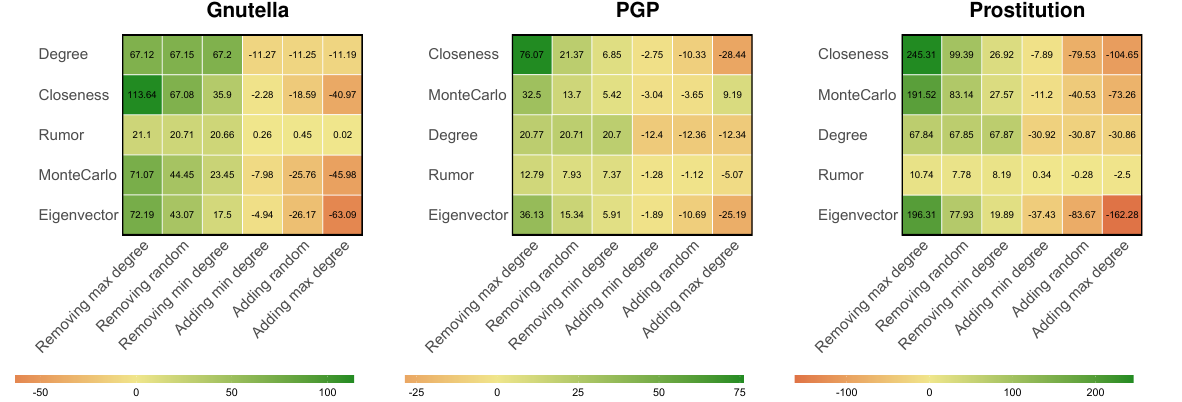}
\caption{
Same as Figure~\ref{fig:edge-simulation-heat-large}, but for large real-life networks instead of randomly generated networks.
}
\label{fig:edge-simulation-heat-real-large}
\end{figure}

\begin{figure}[tbh]
\centering
\includegraphics[width=\linewidth]{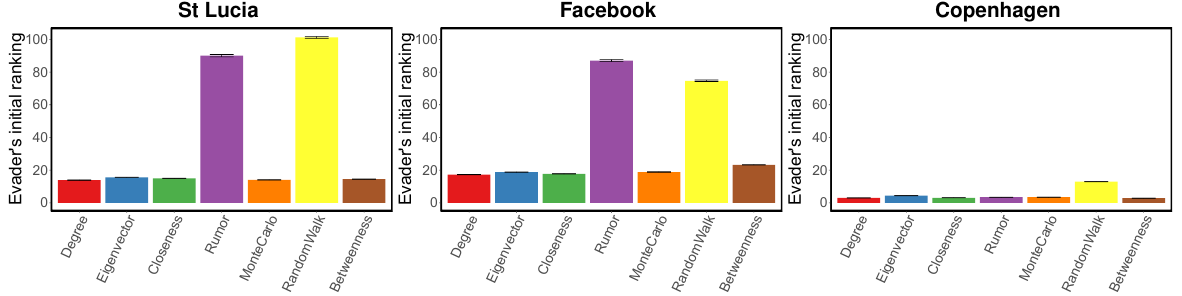}
\caption{
Comparison of the effectiveness of different source detection algorithms before the hiding process in three real-life networks, namely: St Lucia, Facebook, and Copenhagen.
The x-axis corresponds to different source detection algorithms, while the y-axis corresponds to the initial ranking of the evader.
The error bars represent $95\%$ confidence intervals.
}
\label{fig:real-small-before-bars}
\end{figure}

\begin{figure}[tbh]
\centering
\includegraphics[width=\linewidth]{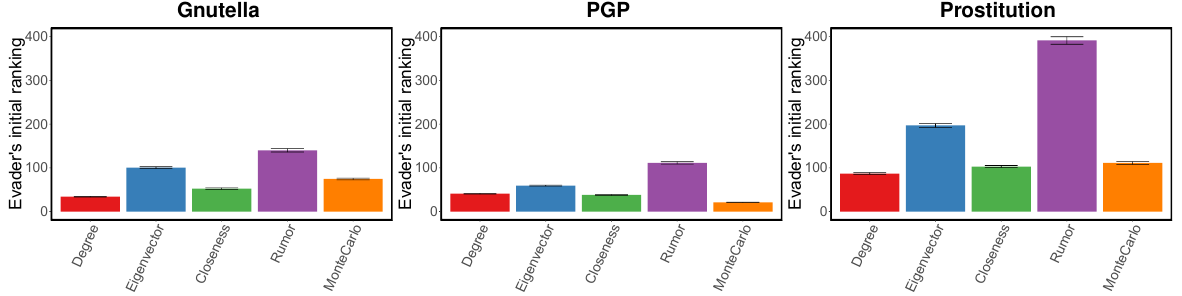}
\caption{
Comparison of the effectiveness of different source detection algorithms before the hiding process in three real-life networks, namely: Gnutella, PGP, and Prostitution.
The x-axis corresponds to different source detection algorithms, while the y-axis corresponds to the initial ranking of the evader.
The error bars represent $95\%$ confidence intervals.
}
\label{fig:real-large-before-bars}
\end{figure}

\clearpage
\section{Hiding the Source of a Real Cascade}
\label{app:real-cascades}

Previously in Section~\ref{app:simulation-real}, the networks that were analyzed were real, but the cascades were not, as they were generated using the Susceptible-Infected (SI) model. In this section, we provide a more realistic analysis where not only the network is real, but also the cascades themselves. This section analyzes a more realistic in which both the network and the cascade are real. To this end, we use the Twitter dataset created by M{\o}nsted \etal~\cite{monsted2017evidence}.
More specifically, the authors created 39 new Twitter accounts and gathered some followers. Then, using these accounts, the authors started spreading eight hashtags that previously did not exist on Twitter, namely \textit{\#getyourflushot}, \textit{\#highfiveastranger}, \textit{\#somethinggood}, \textit{\#HowManyPushups}, \textit{\#turkeyface}, \textit{\#SFThanks}, \textit{\#blackfridaystories}, and \textit{\#BanksySF}.
All tweets and retweets containing the new hashtags were then recorded, thereby capturing eight realistic information cascades which are included in the dataset.

Using the above dataset, we construct a network whereby a node corresponds to a Twitter account, and an edge between two nodes indicates that one is a follower or a friend of the other, or that one retweeted content posted the other.
The resulting network consists of $241,698$ nodes and $366,539$ edges.
For each hashtag listed above, we consider the account that first tweeted it as the source, and all accounts that tweeted or retweeted it at least once as the set of infected nodes.
We then attempt to hide the source from source detection algorithms, using the heuristics described in the main article.
We note that the Monte Carlo and random walk source detection algorithms are not included in this analysis, since they both require information about the diffusion model used to generate the set of infected nodes.

Given this network, Figure~\ref{fig:before-bars-fixed} depicts the average performance of source detection algorithms, taken over the eight cascades, while Figures~\ref{fig:line-fixed} and \ref{fig:heat-fixed} evaluate the effectiveness of our heuristics.
Altogether, these results (which are based on real-world cascades) validate our earlier findings (which were based on simulated cascades).
Specifically, our heuristics are capable of reducing the effectiveness of source detection algorithms. Moreover, the most effective heuristics on synthetic data tend to also be among the most effective ones on real data. Finally, adding edges can backfire and end up exposing the source even more, while adding nodes and removing edges rarely backfires (again, just like the patterns observed in our experiments with simulated cascades).

\begin{figure}[tbh]
\centering
\includegraphics[width=.4\linewidth]{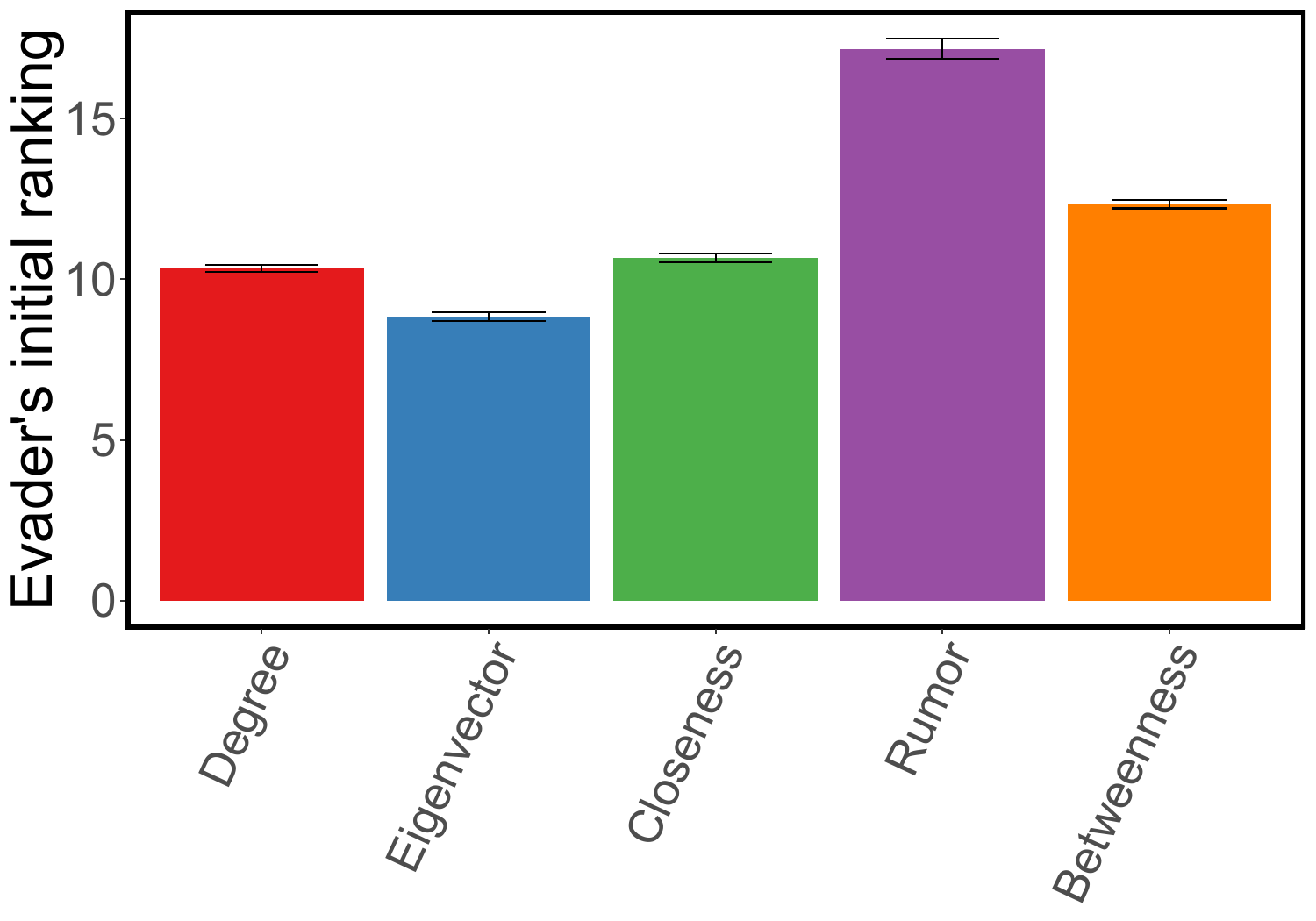}
\caption{
Comparing the effectiveness of source detection algorithms in real-life cascades before the hiding process is initiated.
The x-axis corresponds to different source detection algorithms, while the y-axis corresponds to the initial ranking of the evader (i.e., their ranking before attempting to hide).
Results are averaged over the eight cascades in the real-life dataset, with error bars representing $95\%$ confidence intervals.
}
\label{fig:before-bars-fixed}
\end{figure}

\begin{figure}[tbh]
\centering
\includegraphics[width=.79\linewidth]{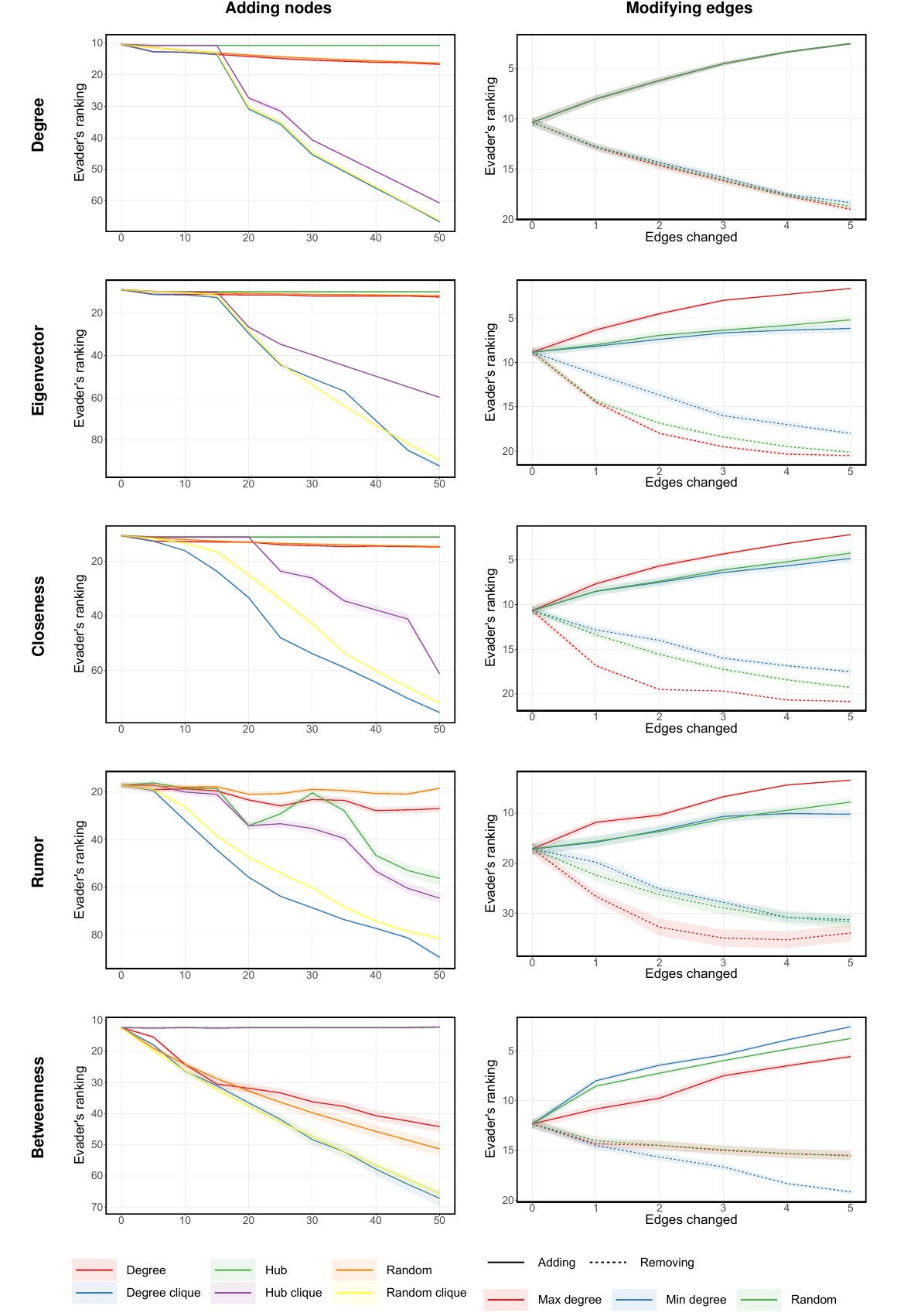}
\caption{
Results of hiding the source of real-life cascades. The y-axis corresponds to the ranking of the evader according to the source detection algorithm
(greater values indicate more efficient hiding),
while the x-axis corresponds to the number of nodes added to the network (left column) or the number of edges added to, or removed from, the network (right column). Each line correspond to different heuristics. Results are averaged over the eight cascades in the real-life dataset, with shaded areas representing $95\%$ confidence intervals.
}
\label{fig:line-fixed}
\end{figure}

\begin{figure}[tbh]
\centering
\includegraphics[width=.9\linewidth]{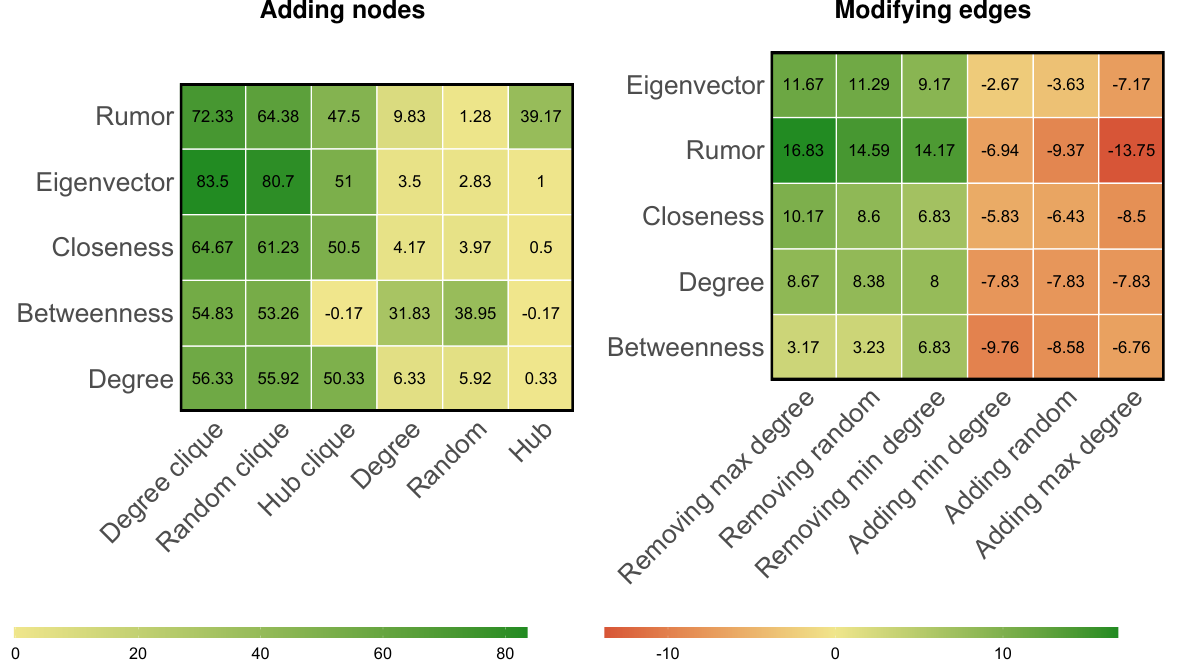}
\caption{
Results of hiding the source of real-life cascades. The y-axis of each heatmap corresponds to different source detection algorithms, whereas the x-axis corresponds to different heuristics. The value in each cell indicates the change in the evader's ranking according to the source detection algorithm after adding $50$ confederates to the network (left plot) or after adding or removing $5$ edges to the network (right plot). Rows and columns are sorted by average value, while the value in each cell is averaged over the eight cascades in the real-life dataset. 
}
\label{fig:heat-fixed}
\end{figure}

\clearpage
\section{Alternative Diffusion and Network Models}
\label{app:altmodels}

In the main article, we focused on a single diffusion model, namely SI, and on three network generation models, namely \BAn, \ERn, and Watts-Strogatz. In this section, we present simulations with alternative models of diffusion and network generation.
More specifically, we present the results for following three simulation settings:

\begin{itemize}

\item \textbf{Goel et al.~\cite{goel2016structural}}---Goel et al. attempted to create a formal model of the spreading of viral events on Twitter.
After evaluating several different models, they came to the conclusion that the one that produces results closest to reality runs the Susceptible-Infected-Recovered (SIR) diffusion in the Newman configuration networks.

The SIR diffusion model~\cite{kermack1927contribution} is similar to the SI model used in the remainder of our work, but an infected node can move to the Recovered state, after which it cannot infect others nor become infected again.
Goel et al. use a variation of the SIR model where an infected node moves to the Recovered state one round after it gets infected, and where the probability of infection is equal to $\frac{1}{2d^*}$, with $d^*$ being the average degree in the network.
The authors also assume that, after the diffusion process terminates, the information about whether a node was infected or not is available to the party analyzing the network (i.e., it is possible to discern between the nodes that are in the Susceptible state and those that are in the Recovered state).
In our simulations we use the exact same parameterization of this diffusion model.

As for the Newman configuration model~\cite{newman2003structure}, it is a model that generates networks with a given degree distribution.
Goal et al. used networks with degrees drawn from the power law distribution $P(d) \sim d^{-\alpha}$, where $\alpha=2.3$.
We use the same model of generating networks in our simulations.

\item \textbf{Kleinberg~\cite{kleinberg2007cascading}}---Kleinberg's model of diffusion is based on the assumption that in order for a node to adopt a certain phenomenon, a given number of the node's neighbors must adopt it first.
Before the diffusion starts, Kleinberg randomly generates for every node $v$ a threshold $\theta_v \in [0,1]$ from a uniform distribution.
At the beginning of the diffusion only the source node is active, while the rest of the nodes are inactive.
A given inactive node becomes active when the proportion of its active neighbors becomes greater than $\theta_v$, i.e., when:
\[
\frac{|\{w \in N(v) : w \in I\}|}{d(v)} > \theta_v
\]
where $I$ is the set of active nodes.
The diffusion ends when no new nodes can become active.
We run the simulations for this diffusion model in the same networks as for Goel et al.~\cite{goel2016structural}.

\item \textbf{Golub and Jackson~\cite{golub2012does}}---Golub and Jackson created the ``islands'' model that generates networks with a strong community structure.
The set of nodes is divided into $m$ equally-sized groups.
Every node forms a connection with another node from the same group with probability $p_{same}$, and with any other nodes from a different group with probability $p_{diff} \leq p_{same}$.
In our simulations we consider the islands networks with $10$ groups of size $100$ each, and with the average degree $4$, where each node is on average connected with $3$ others from the same group and with $1$ node from a different group.
With these parameters, we obtain networks of the same size and average degree as those used in the experiments presented in the main article. As for the diffusion model, we use the same one that was used in the main article.

\end{itemize}

The results of our simulations with alternative diffusion and network generation models are presented in Figures from~\ref{fig:subo-simulation-line-altmodel} to \ref{fig:edge-simulation-heat-altmodel}. 
When we compare them with analogical figures in Sections~\ref{app:simulation-subos} and~\ref{app:simulation-edges}, we find that they exhibit similar trends. This similarity may seem counterintuitive at first, especially in the case of the Goel et al. simulations and the Kleinberg simulations, since they not only use different networks, but also use different diffusion models.
However, it should be noted that the source detection algorithms studied in this work either require perfect information about the diffusion model, in which case they can easily adapt to any diffusion model, or do not incorporate any information whatsoever about the diffusion model, since they only analyze the connections between the infected nodes. These findings suggest that the source detection algorithms considered in our simulations are suitable for any diffusion model that preserves information about whether a node was infected in the past.

\begin{figure}[tbh]
\centering
\includegraphics[width=.8\linewidth]{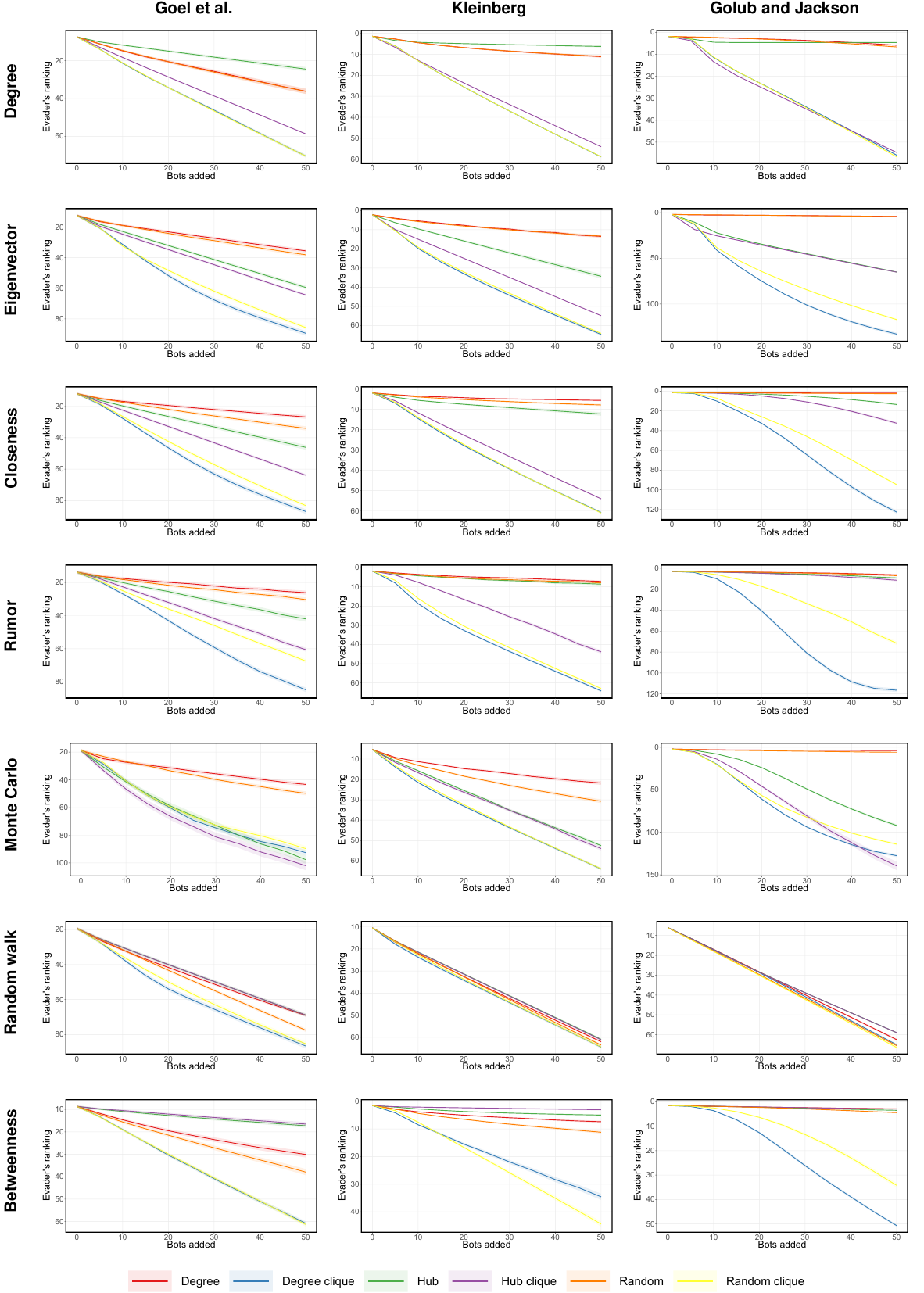}
\caption{
Results of hiding the source of diffusion by adding confederates for alternative models of diffusion and network generation. The y-axis corresponds to the ranking of the evader according to the source detection algorithm (greater values indicate more efficient hiding), while the x-axis corresponds to the number of nodes added to the network. Each color corresponds to a different heuristic, with each confederate being connected to three supporters. Shaded areas represent $95\%$ confidence intervals.
}
\label{fig:subo-simulation-line-altmodel}
\end{figure}

\begin{figure}[tbh]
\centering
\includegraphics[width=.8\linewidth]{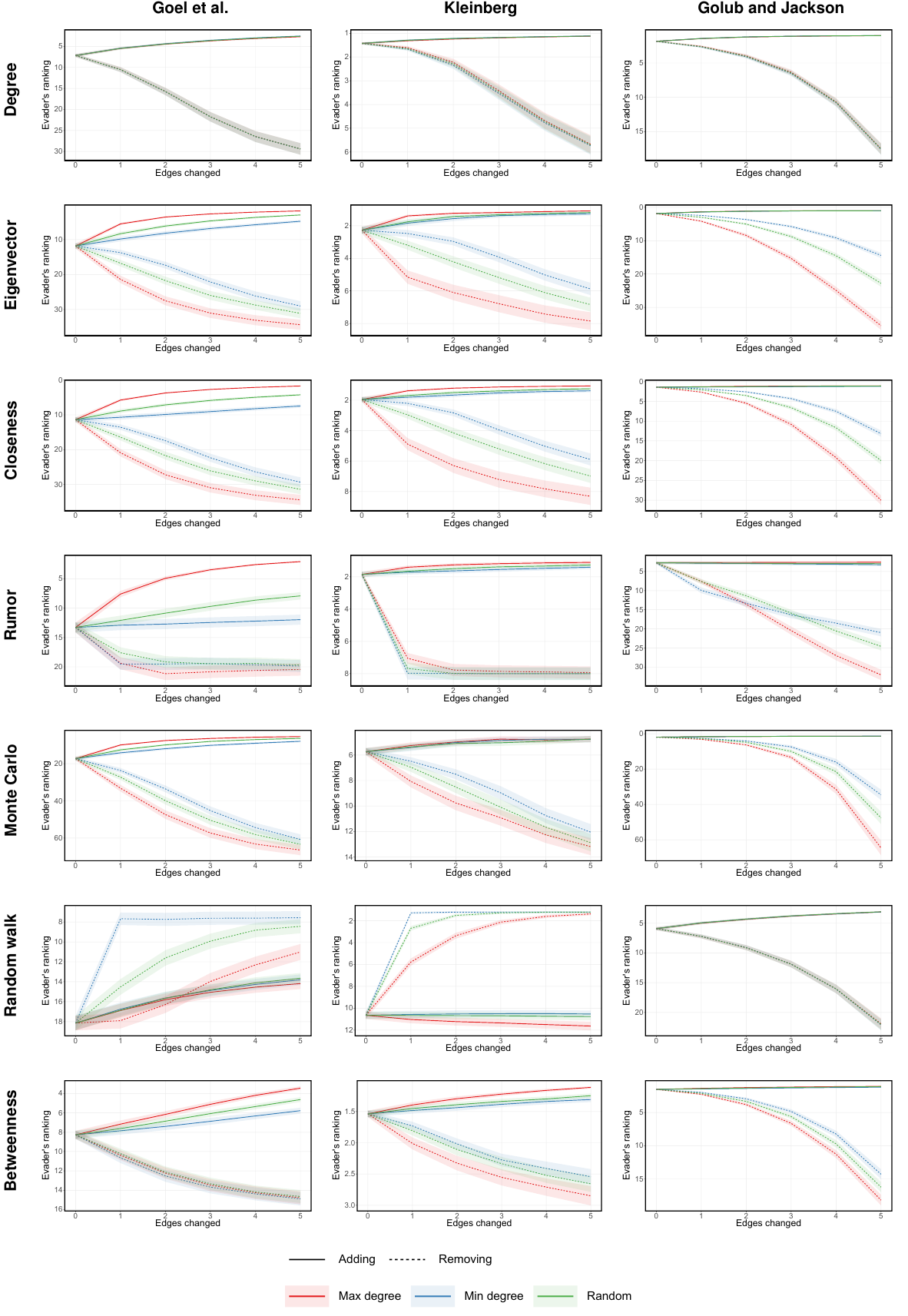}
\caption{
Results of hiding the source of diffusion by modifying edges for alternative models of diffusion and network generation. The y-axis represents the evader's ranking according to the source detection algorithm (greater value indicates more effective hiding); the x-axis corresponds to the number of edges added to, or removed from, the network. Each color corresponds to a different way of choosing edges, while each line type corresponds to either adding or removing. Shaded areas represent $95\%$ confidence intervals.
}
\label{fig:edge-simulation-line-altmodel}
\end{figure}

\begin{figure}[tbh]
\centering
\includegraphics[width=\linewidth]{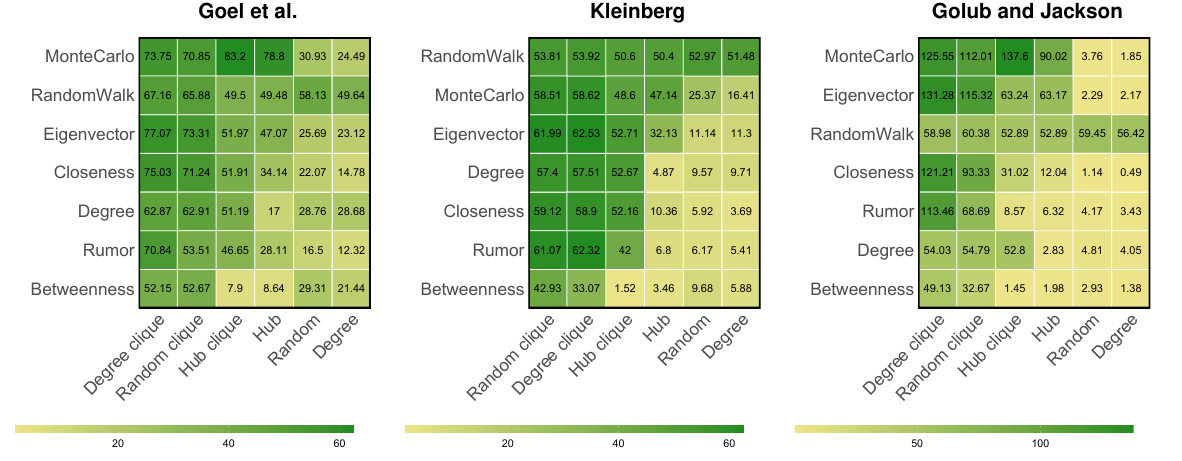}
\caption{
Results of simulations with hiding the source of diffusion by adding nodes for alternative models of diffusion and network generation. The y-axis of each heatmap corresponds to different source detection algorithms, whereas the x-axis corresponds to different heuristics. The value in each cell indicates the change in the evader's ranking according to the source detection algorithm after adding $50$ confederates to the network using the heuristic. Rows and columns are sorted by average value.
}
\label{fig:subo-simulation-heat-altmodel}
\end{figure}

\begin{figure}[tbh]
\centering
\includegraphics[width=\linewidth]{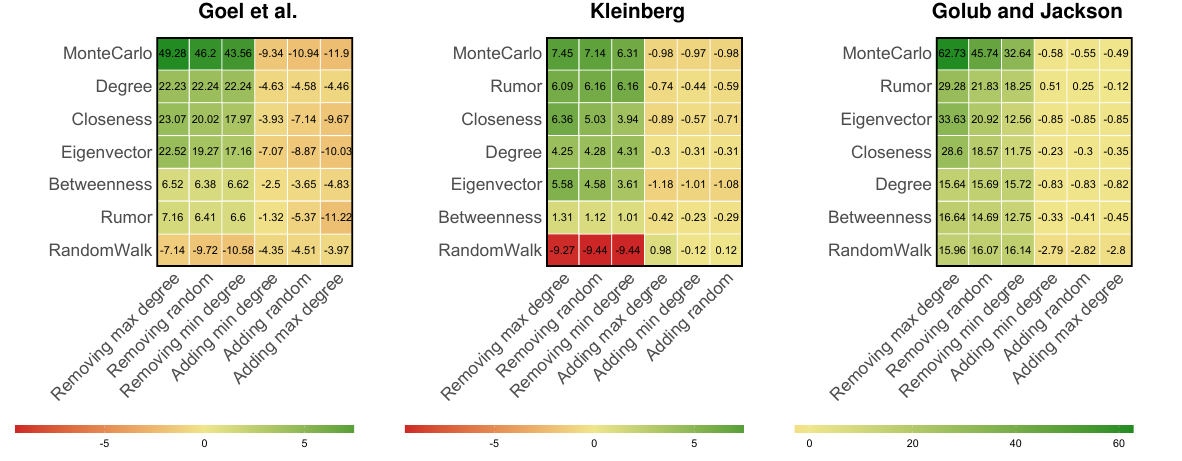}
\caption{
Results of hiding the source of diffusion by modifying edges for alternative models of diffusion and network generation. In each heatmap, rows correspond to different source detection algorithms, while columns correspond to different heuristics. The value in each cell indicates the change in the evader's ranking according to the source detection algorithm as a result of adding or removing $5$ edges to the network, depending on the heuristic. Positive values indicate that the evader became more hidden, with greater values indicated a more effective disguise. In contrast, negative values indicate that the evader became less hidden. Rows and columns are sorted by average value.
}
\label{fig:edge-simulation-heat-altmodel}
\end{figure}

\clearpage
\section{Effects of Hiding the Nearby Nodes}
\label{app:distance}

In this section, we investigate how the evader's ranking is affected when another node in the evader's direct network vicinity runs the hiding process. To this end, we run the basic version of our experiments (i.e., the one described in the main article) but this time it is not the evader who runs the heuristic, but rather one of the infected nodes at distance $5$ or less from the evader. Here, we evaluate the two heuristics that proved to be the most effective in our basic experiments, i.e., the one that adds nodes and the one that modifies edges.

Figure~\ref{fig:distance-line} presents the results of our simulations. As expected, the evader is best hidden when they are the ones running the heuristic. Moreover, when another node $v$ runs the heuristic, the evader also becomes hidden, but this effect tends to decrease as the distance between $v$ and the evader increases, apart from a few exceptions.
In should be noted that running the heuristic that adds nodes is particularly ineffective when performed by the evader's neighbors (as indicated by a sharp increase in the plots), because the shortest paths from the nodes added in this manner to the rest of the network have a high probability of running though the evader.

\begin{figure}[tbh]
\centering
\includegraphics[width=\linewidth]{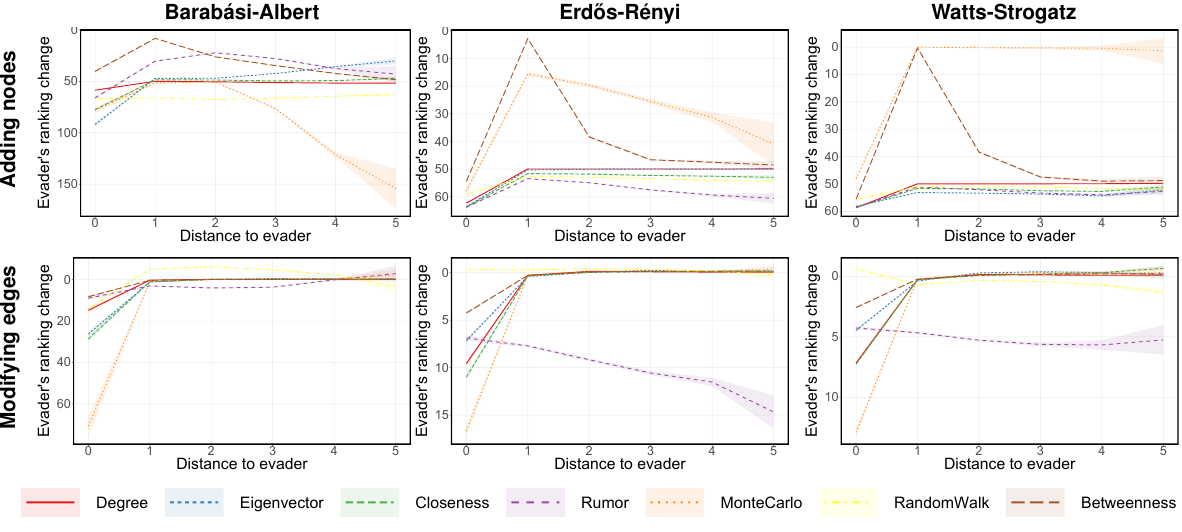}
\caption{
Evaluating how the evader's ranking is affected when nearby nodes attempt to hide, given networks consisting of $1,000$ nodes with an average degree of $4$---the same average degree used in our basic experiments. The x-axis corresponds to the distance between the node running the heuristic and the evader in an unmodified network, i.e., before hiding (notice that distance $0$ indicates that the evader is the one running the heuristic). The y-axis represents the change in the evader's ranking according to the source detection algorithm (greater value indicates more effective hiding). Shaded areas represent $95\%$ confidence intervals.
}
\label{fig:distance-line}
\end{figure}

\clearpage
\section{The Seeker with Imperfect Knowledge}
\label{app:fuzzy}

We now investigate how the completeness of the seeker's knowledge about the network's structure affects the effectiveness of the evader's hiding. To this end, we run the simulations where the seeker knows only about a certain percentage of the network's edges, which are selected uniformly at random at the beginning of each experiment. Then, we evaluate the two heuristics that proved to be the most effective in our basic experiments, i.e., the one that adds nodes and the one that modifies edges.

The results of our simulations are presented in Figure~\ref{fig:fuzzy-line}.
As shown in the figure, in the vast majority of cases the evader's hiding becomes significantly more effective as the knowledge of the seeker dwindles.
One notable exception is the Monte Carlo source detection algorithm, where the accuracy of the seeker's prediction can in some cases increase when its knowledge about the network decreases.
This might be caused by the fact that when certain edges are hidden from the seeker, the network appears (in the eyes of the seeker) to be fragmented into several connected components. In this case, the Monte Carlo algorithm runs the diffusion in the connected components that do not include the evader, thereby yielding an outcome that is significantly different from the original diffusion course.

\begin{figure}[tbh]
\centering
\includegraphics[width=\linewidth]{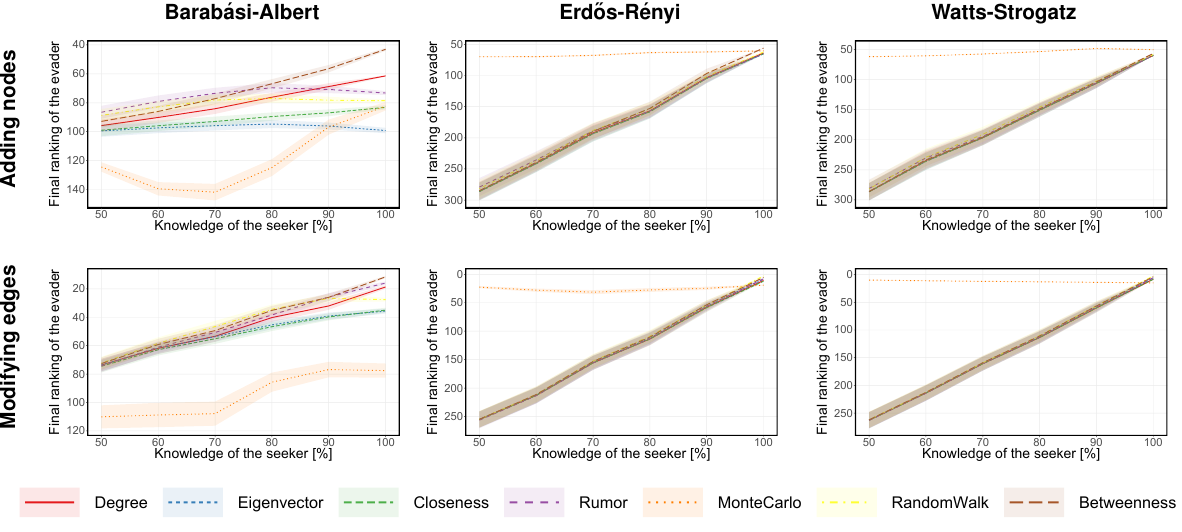}
\caption{
Effects of imperfect knowledge of the seeker given networks consisting of $1,000$ nodes with an average degree of $4$---the same average degree used in our basic experiments. The x-axis corresponds to the percentage of the network's edges that are visible to the seeker. The y-axis represents the final ranking the evader according to different source detection algorithms after the hiding process (greater value indicates more effective hiding). Shaded areas represent $95\%$ confidence intervals.
}
\label{fig:fuzzy-line}
\end{figure}

\clearpage
\section{Randomly Selected Evader}
\label{app:any-evader}

In this section, instead of selecting the evader randomly from the $10\%$ of nodes with the highest degrees (as we did in the main article), we select the evader uniformly at random from all nodes in the network. All other parameters of the simulations remain exactly the same as in the basic experiments presented elsewhere in our study.

Figures~\ref{fig:any-before-bars} to~\ref{fig:any-edge-simulation-heat} present the results of the simulations. Let us comment on these results, starting with Figure~\ref{fig:any-before-bars}. This figure depicts the position of the evader in the rankings produced by the different source detection algorithms, before the evader initiates the hiding process.
Compared to the results presented earlier in Figure~\ref{fig:standard-before-bars} of Appendix~\ref{app:hiding-profiles} (where the evader is selected from the $10\%$ of nodes with the highest degrees), we can see that the extent to which the source is exposed without any hiding remains broadly the same.
While this observation may seem surprising at first, the core principle underlying the majority of the source detection algorithms in our simulations is to consider only the part of the network that is induced by the set of infected nodes.
Thus, even if the source is not particularly important in the entire network, it is by necessity relatively central in the subnetwork induced by the infected nodes, simply because these nodes became infected as a result of a diffusion process that originated at the source.

Figures~\ref{fig:any-subo-simulation-line} to~\ref{fig:any-edge-simulation-heat} present the effectiveness of the hiding process, given the new way in which the evader is selected (randomly from the entire network).
As can be seen, the hiding heuristics are slightly less effective compared to the earlier case where the evader was selected from the $10\%$ with the highest degree. Nevertheless, these heuristics can still manage to significantly decrease the chance of identifying the evader by the source detection algorithms.
Finally, in terms of the relative differences between the heuristics themselves, the ones that were most effective before (when the evader is selected from the $10\%$ with highest degree) are still the most effective (when it is selected randomly from the entire network).

\begin{figure}[tbh]
\centering
\includegraphics[width=\linewidth]{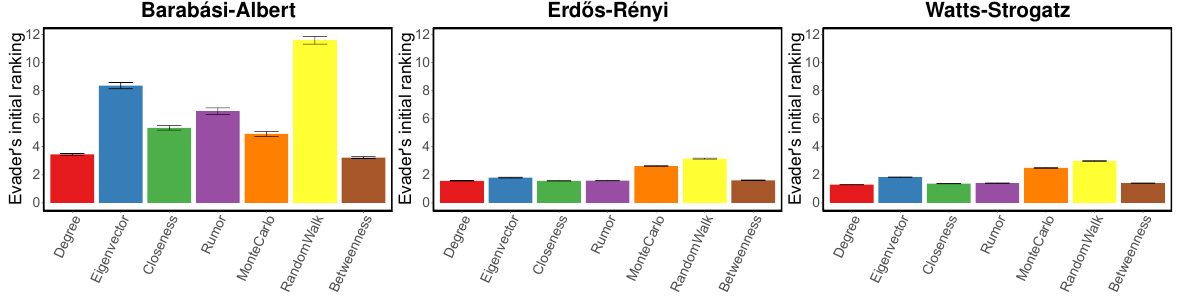}
\caption{
Comparison of the effectiveness of different source detection algorithms before the hiding process in random networks consisting of 1,000 nodes, when the evader is selected uniformly at random.
The x-axis corresponds to different source detection algorithms, while the y-axis corresponds to the evader's ranking according to the different algorithms.
The error bars represent $95\%$ confidence intervals.
}
\label{fig:any-before-bars}
\end{figure}

\begin{figure}[tbh]
\centering
\includegraphics[width=.8\linewidth]{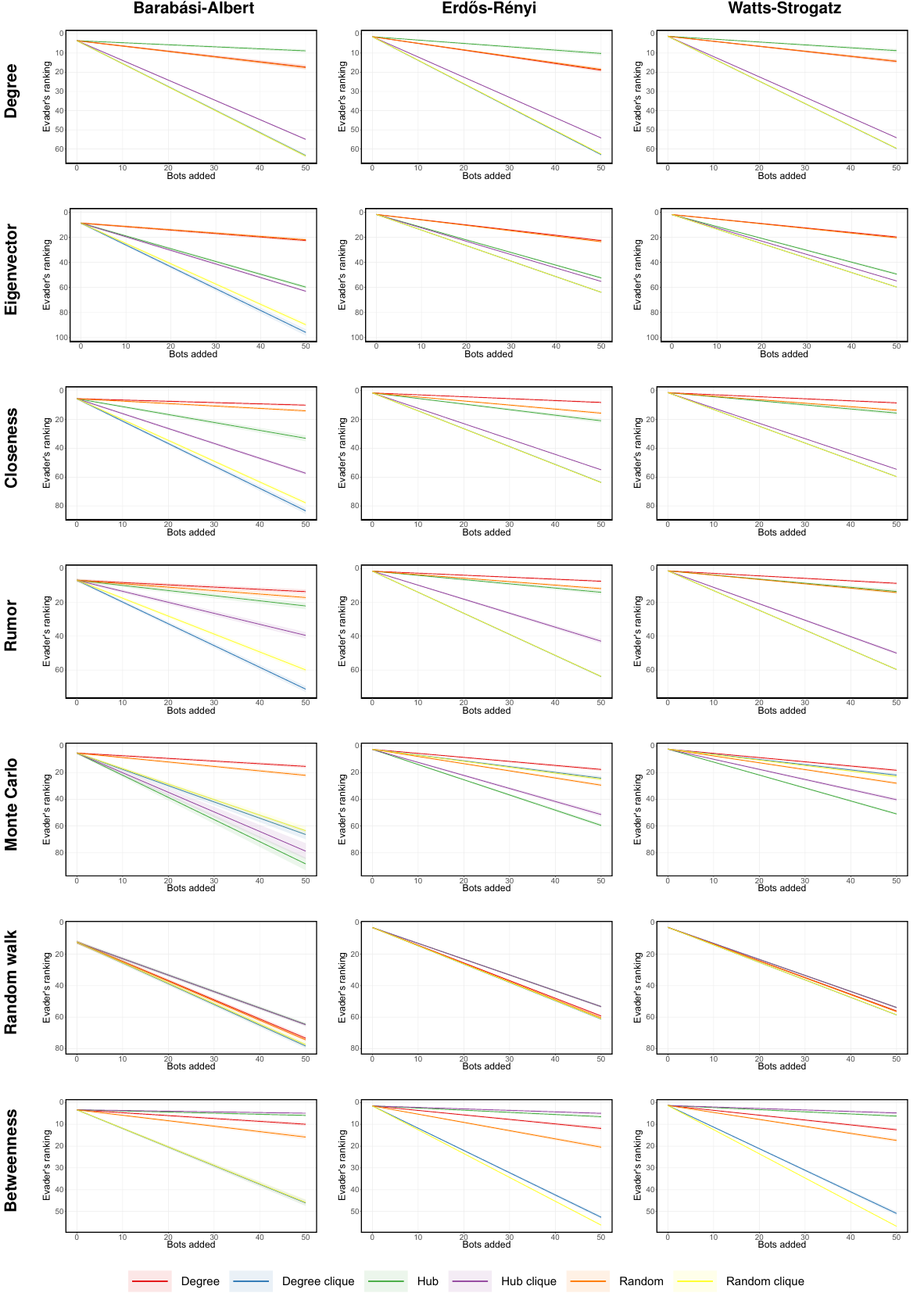}
\caption{
Results of hiding the source of diffusion by adding confederates for alternative models of diffusion and network generation, when the evader is selected uniformly at random. The y-axis corresponds to the ranking of the evader according to the source detection algorithm (greater values indicate more efficient hiding), while the x-axis corresponds to the number of nodes added to the network. Each color corresponds to a different heuristic, with each confederate being connected to three supporters. Shaded areas represent $95\%$ confidence intervals.
}
\label{fig:any-subo-simulation-line}
\end{figure}

\begin{figure}[tbh]
\centering
\includegraphics[width=.78\linewidth]{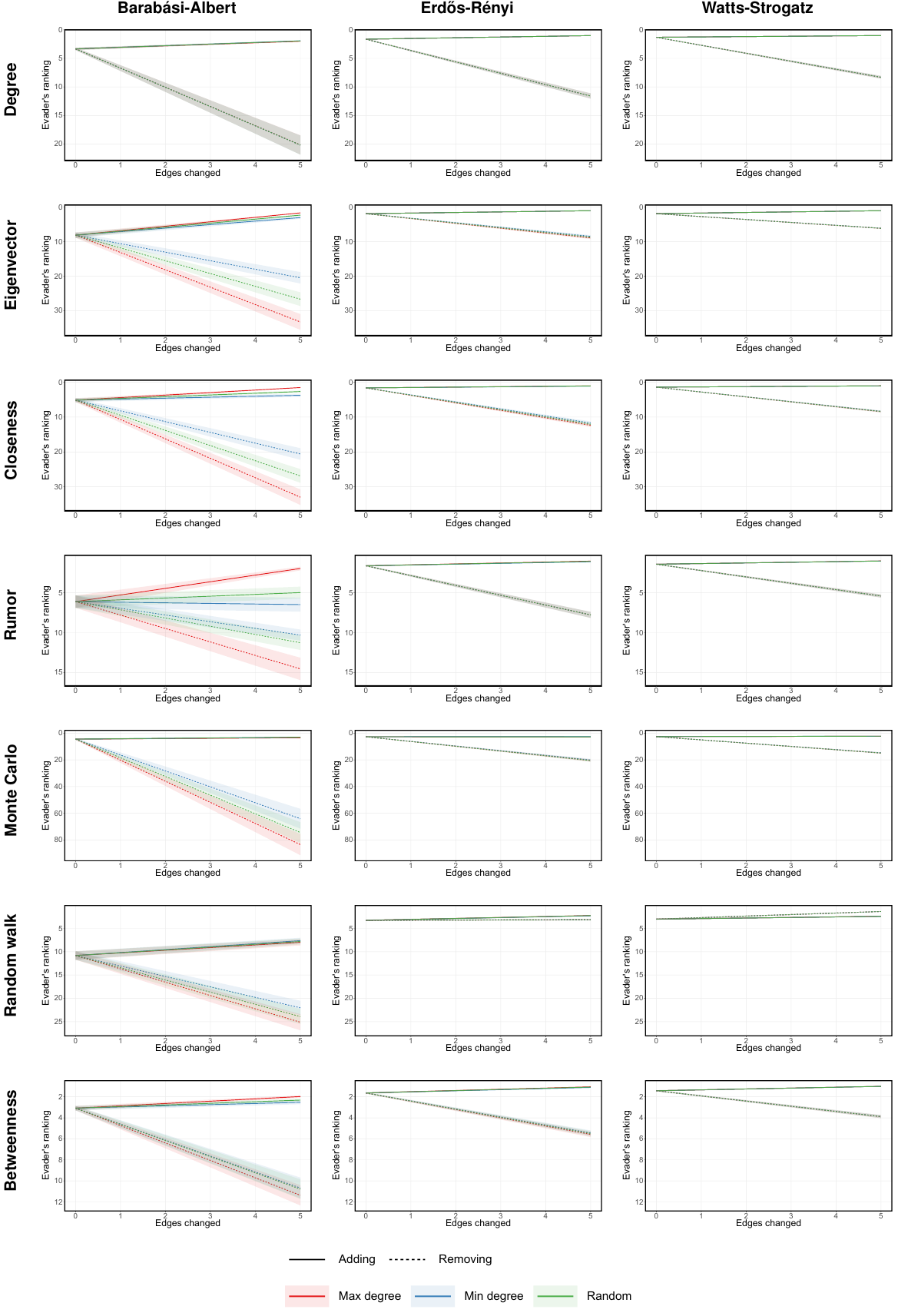}
\caption{
Results of hiding the source of diffusion by modifying edges for alternative models of diffusion and network generation, when the evader is selected uniformly at random. The y-axis represents the evader's ranking according to the source detection algorithm (greater value indicates more effective hiding); the x-axis corresponds to the number of edges added to, or removed from, the network. Each color corresponds to a different way of choosing edges, while each line type corresponds to either adding or removing edges. Shaded areas represent $95\%$ confidence intervals.
}
\label{fig:any-edge-simulation-line}
\end{figure}

\begin{figure}[tbh]
\centering
\includegraphics[width=\linewidth]{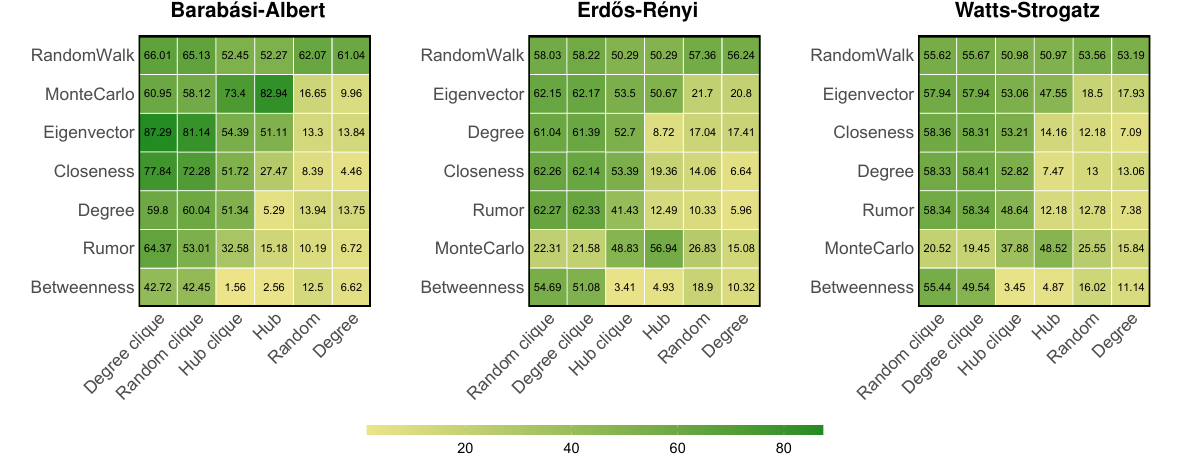}
\caption{
Results of simulations with hiding the source of diffusion by adding nodes for alternative models of diffusion and network generation, when the evader is selected uniformly at random. The y-axis of each heatmap corresponds to different source detection algorithms, whereas the x-axis corresponds to different heuristics. The value in each cell indicates the change in the evader's ranking according to the source detection algorithm after adding $50$ confederates to the network using the heuristic. Rows and columns are sorted by average value.
}
\label{fig:any-subo-simulation-heat}
\end{figure}

\begin{figure}[tbh]
\centering
\includegraphics[width=\linewidth]{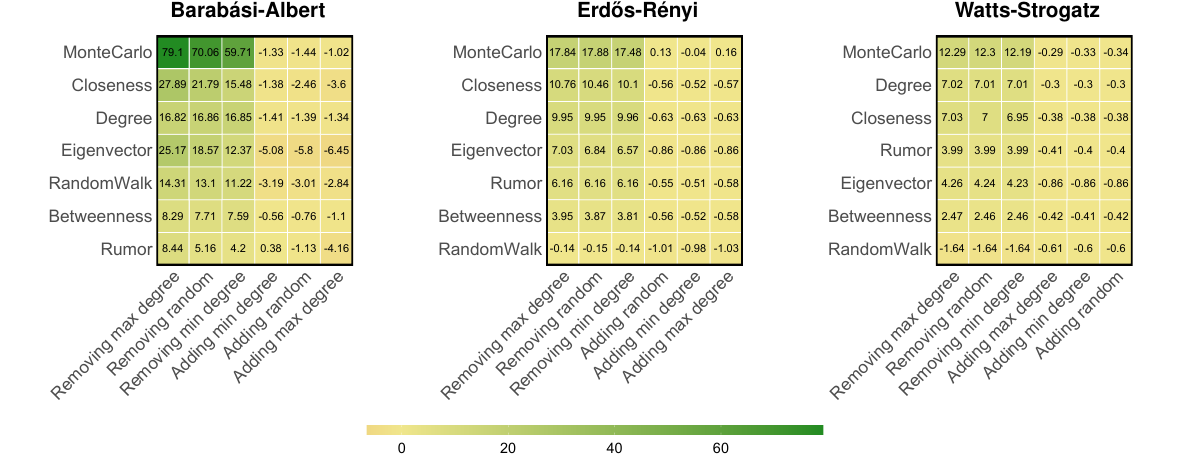}
\caption{
Results of hiding the source of diffusion by modifying edges for alternative models of diffusion and network generation, when the evader is selected uniformly at random. In each heatmap, rows correspond to different source detection algorithms, while columns correspond to different heuristics. The value in each cell indicates the change in the evader's ranking according to the source detection algorithm as a result of adding or removing $5$ edges to the network, depending on the heuristic. Positive values indicate that the evader became more hidden, with greater values indicated a more effective disguise. In contrast, negative values indicate that the evader became less hidden. Rows and columns are sorted by average value.
}
\label{fig:any-edge-simulation-heat}
\end{figure}

\clearpage
\section{Sensitivity Analysis}
\label{app:sensitivity}

This section provides a sensitivity analysis by modifying the parameters used in our original simulations.

\subsection{Selection of the Evader}
\label{app:sensitivity-selection}

We start by analyzing how the way of selecting the evader from among the set of nodes affects the effectiveness of the hiding process.
In the main article we select the evaders from among the top $10\%$ of nodes according to their degrees.
We now vary this parameter between $10\%$ and $90\%$.

Figures~\ref{fig:sensitivity-percentile-subos} and~\ref{fig:sensitivity-percentile-edge} present the results of our simulations. As can be seen, in the vast majority of cases the choice of this parameter has a relatively small effect on the change in the evader's ranking after the hiding process.

\begin{figure}[tbh]
\centering
\includegraphics[width=.8\linewidth]{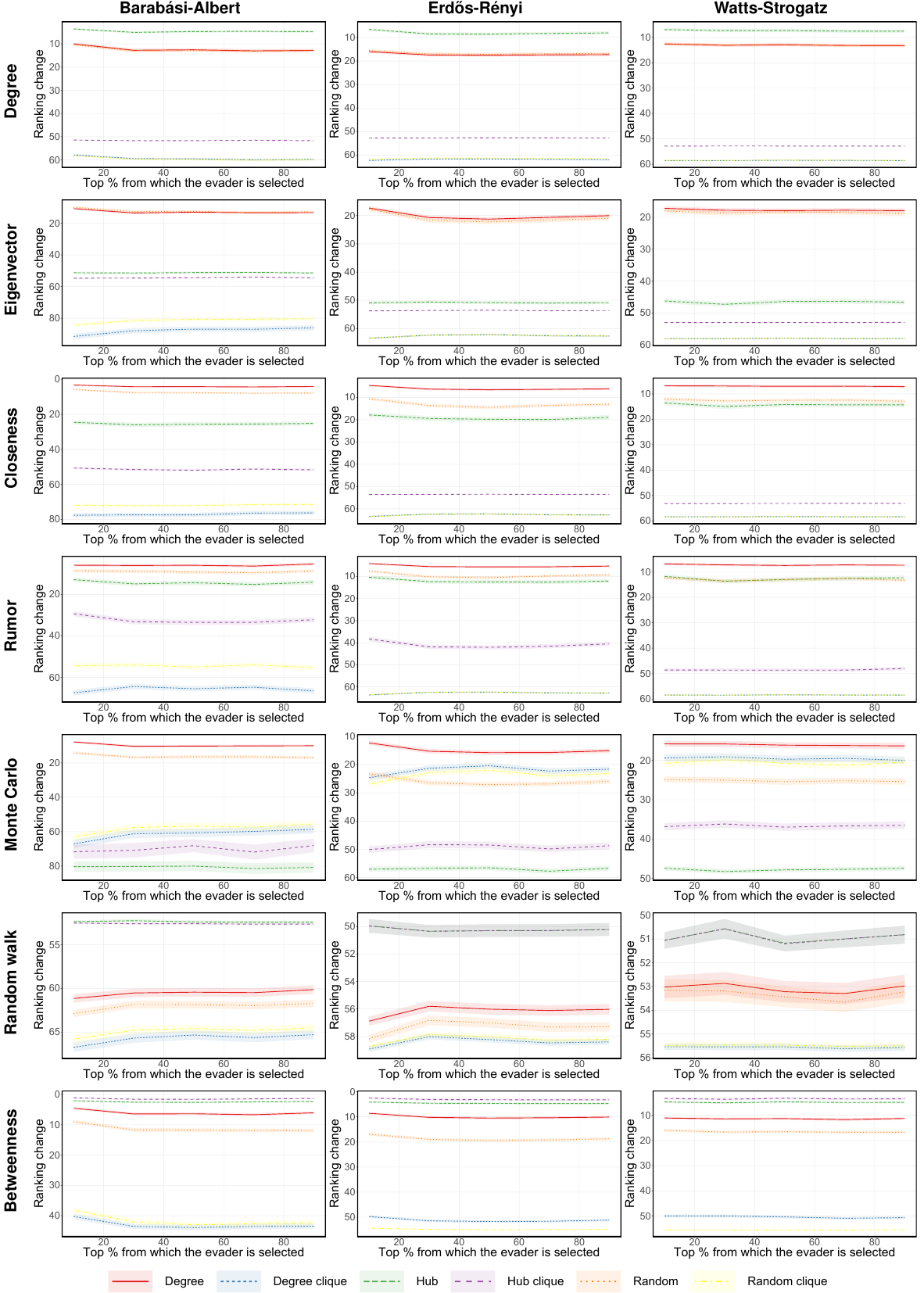}
\caption{
Impact of the way of selecting the evader on the effectiveness of the evader's hiding for heuristics that add nodes. The x-axis corresponds to the percentage of top-ranked nodes (according to degree) from which the evader is randomly selected (e.g., $x=30\%$ corresponds to the case when the evader is selected randomly from the $30\%$ with the highest degrees). The y-axis represents the change in the evader's ranking after the hiding process (greater value indicates more effective hiding). Shaded areas represent $95\%$ confidence intervals.
}
\label{fig:sensitivity-percentile-subos}
\end{figure}

\begin{figure}[tbh]
\centering
\includegraphics[width=.85\linewidth]{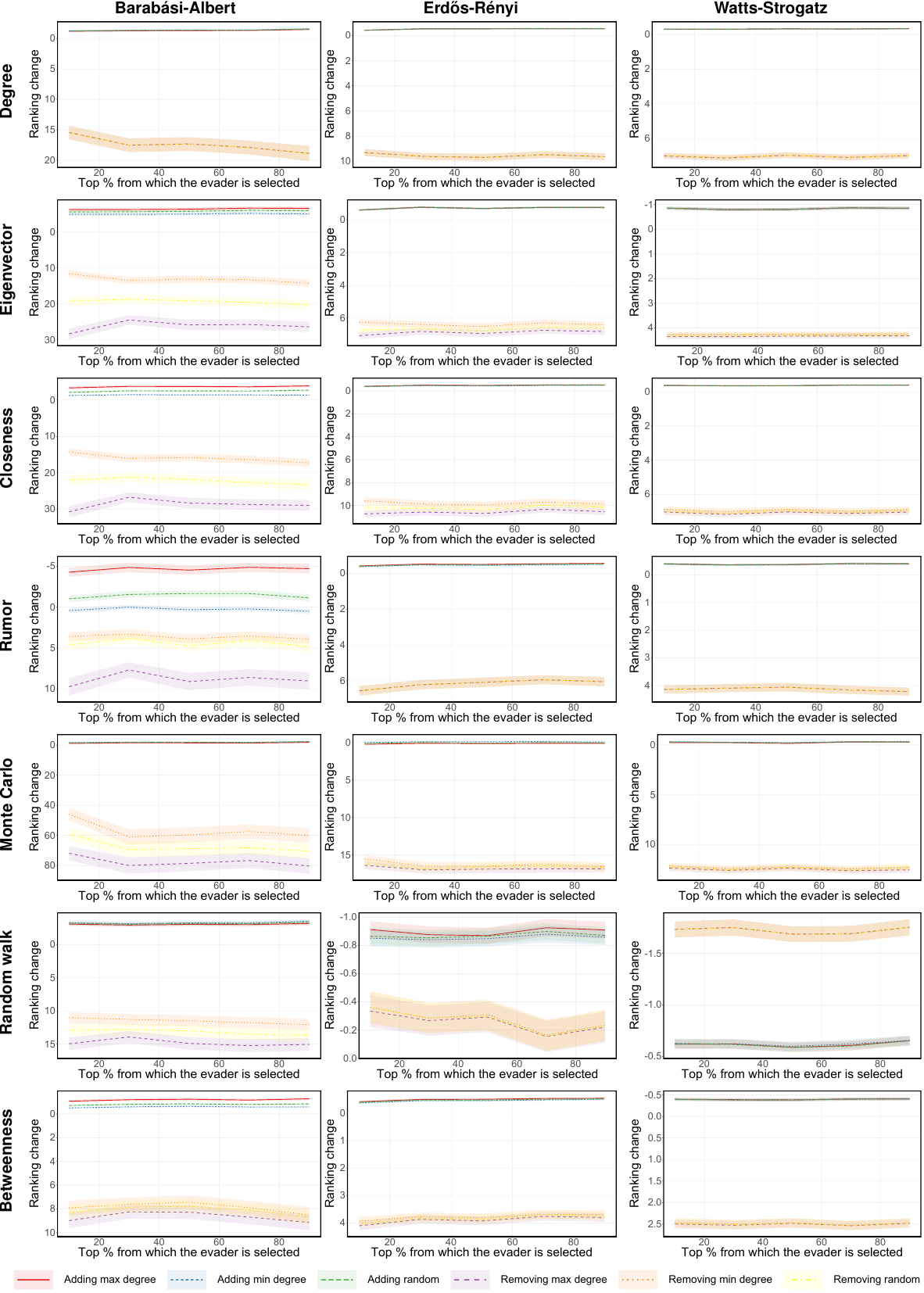}
\caption{
The same as Figure~\ref{fig:sensitivity-percentile-subos} but for heuristics that modify edges rather than for heuristics that add nodes.
}
\label{fig:sensitivity-percentile-edge}
\end{figure}

\subsection{Watts-Strogatz Model Rewiring Parameter}

In this subsection, we study how the rewiring parameter of the Watts-Strogatz network generation model affects the performance of our heuristics. In more detail, this parameter determines the probability that a given edge in an initially formed regular ring lattice gets rewired.
In the experiments presented in the main article we set this probability to $0.25$. Next, we run experiments with the probability ranging from $0.05$ to $0.5$. 

Figure~\ref{fig:sensitivity-wsrewire} presents the results of our simulations.
As can be seen, the hiding process is usually slightly more effective in small world networks generated using greater values of the parameter.
However, in the majority of cases, the difference in effectiveness between the parameter values $0.05$ and $0.5$ is just a few ranking positions.

\begin{figure}[tbh]
\centering
\includegraphics[width=.82\linewidth]{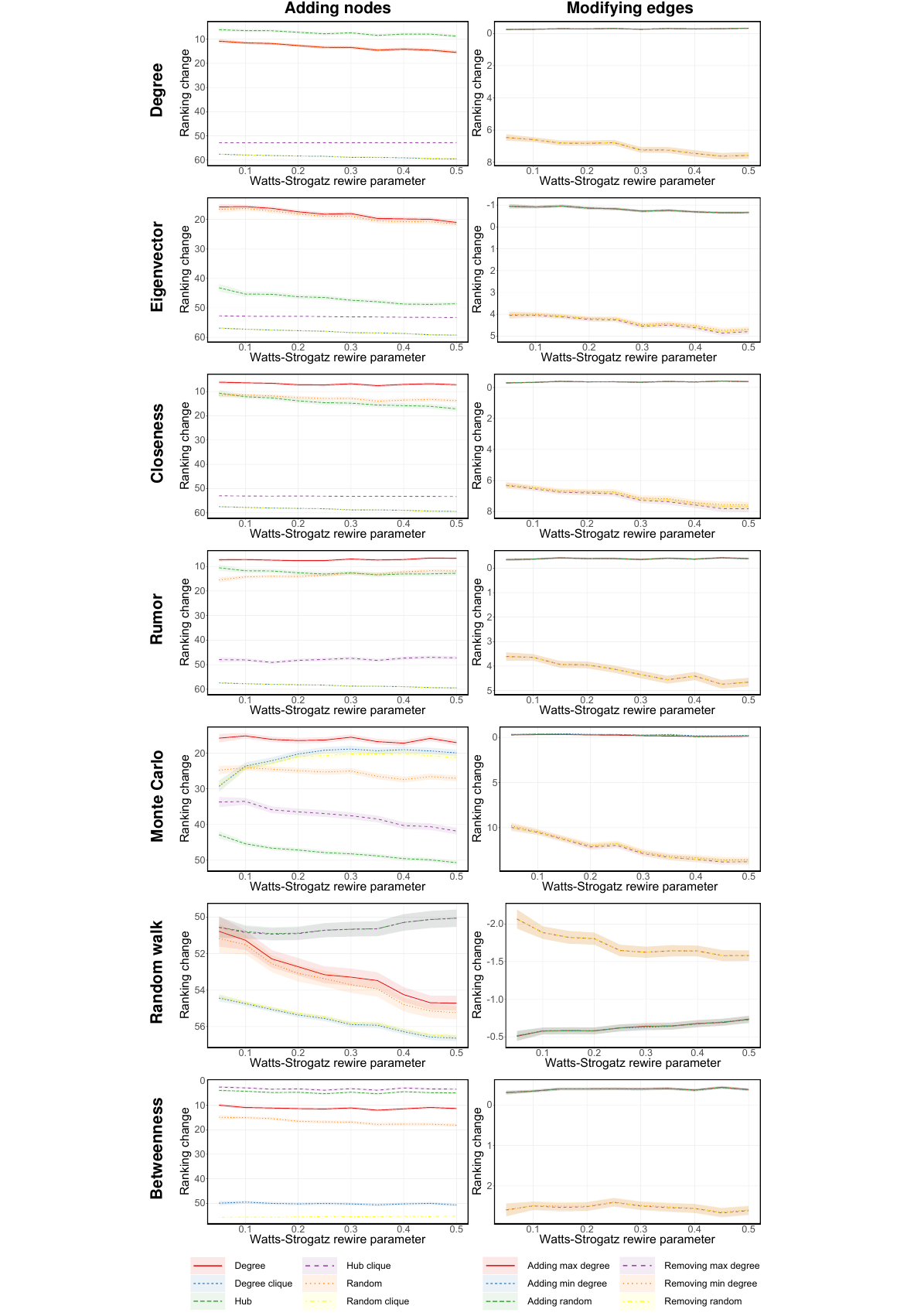}
\caption{
The impact of the rewiring parameter of the Watts-Strogatz model on the effectiveness of the evader's hiding. The x-axis corresponds to different values of the parameter. The y-axis represents the change in the evader's ranking after the hiding process (greater value indicates more effective hiding). Shaded areas represent $95\%$ confidence intervals.
}
\label{fig:sensitivity-wsrewire}
\end{figure}

\subsection{Comparison Group Size in Large Networks Simulations}

In the main article and in Section~\ref{app:simulation-large} of this document, we experimented with networks consisting of $100,000$ nodes. Ideally, using different source detection algorithms, one should compute scores for all $100,000$ nodes, and then rank them based on these score in order to determine the evader's position among all nodes. However, this would significantly increase the time required to run the simulations. Based on this, we resorted to approximating the evader's ranking by computing it among the $5,000$ infected nodes with the greatest degrees, and among another $5,000$ infected nodes chosen randomly out of the remaining nodes.
In this section, we analyze the impact of the size of the comparison group on the experiments' outcome. Specifically, instead of the aforementioned $5,000$, we now consider values ranging from $1,000$ to $9,000$.

Figures~\ref{fig:sensitivity-fast-subos} and~\ref{fig:sensitivity-fast-edge} present the results of our analysis.
As can be seen, modifying the size of the comparison group does not introduce systemic bias to the results obtained from the simulations.

\begin{figure}[tbh]
\centering
\includegraphics[width=\linewidth]{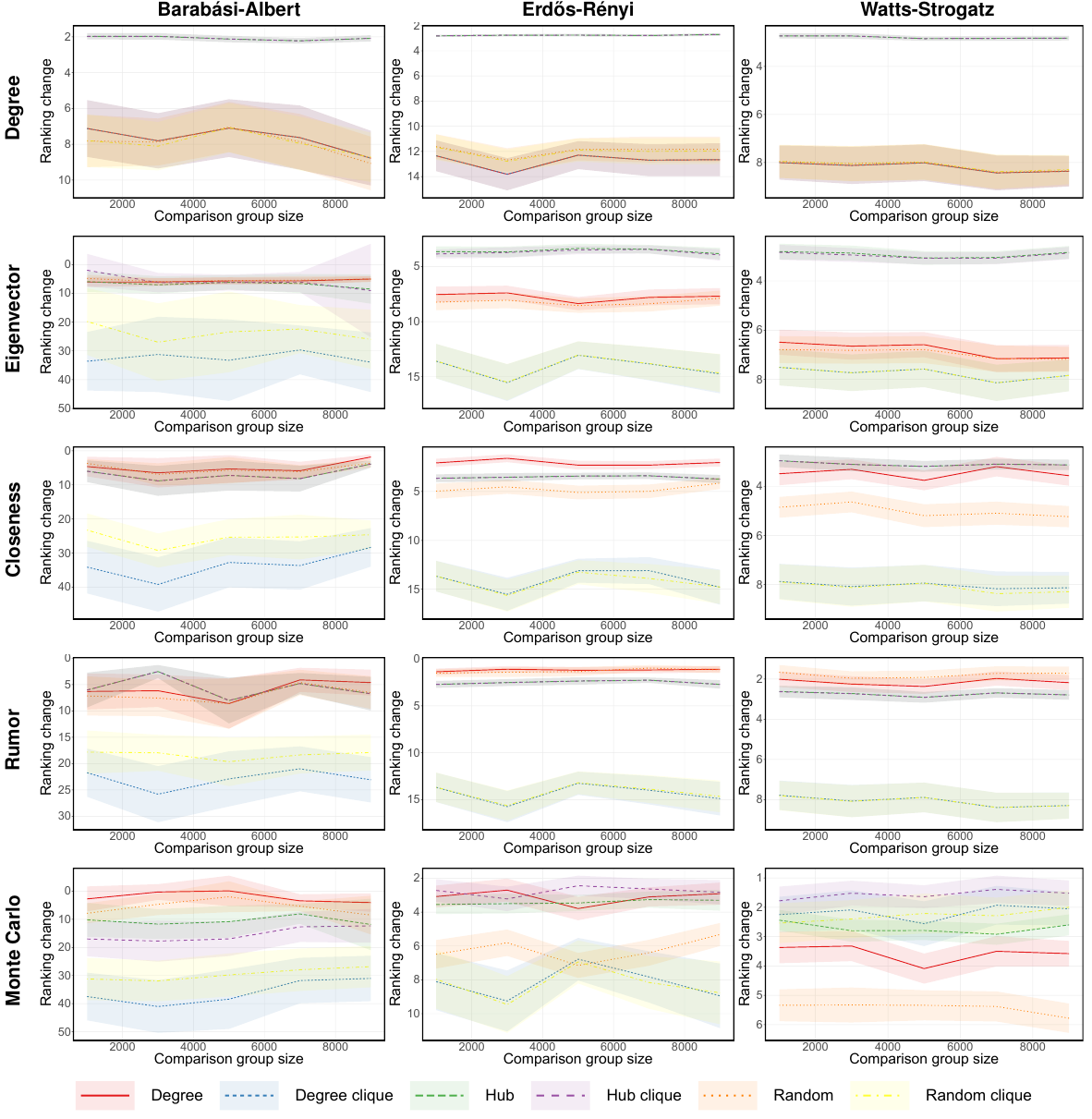}
\caption{
The impact of the size of the comparison group on the experiments with large networks for heuristics that add nodes. The x-axis corresponds to the size of the comparison group. The y-axis represents the change in the evader's ranking according to different source detection algorithms after the hiding process (greater value indicates more effective hiding). Shaded areas represent $95\%$ confidence intervals.
}
\label{fig:sensitivity-fast-subos}
\end{figure}

\begin{figure}[tbh]
\centering
\includegraphics[width=\linewidth]{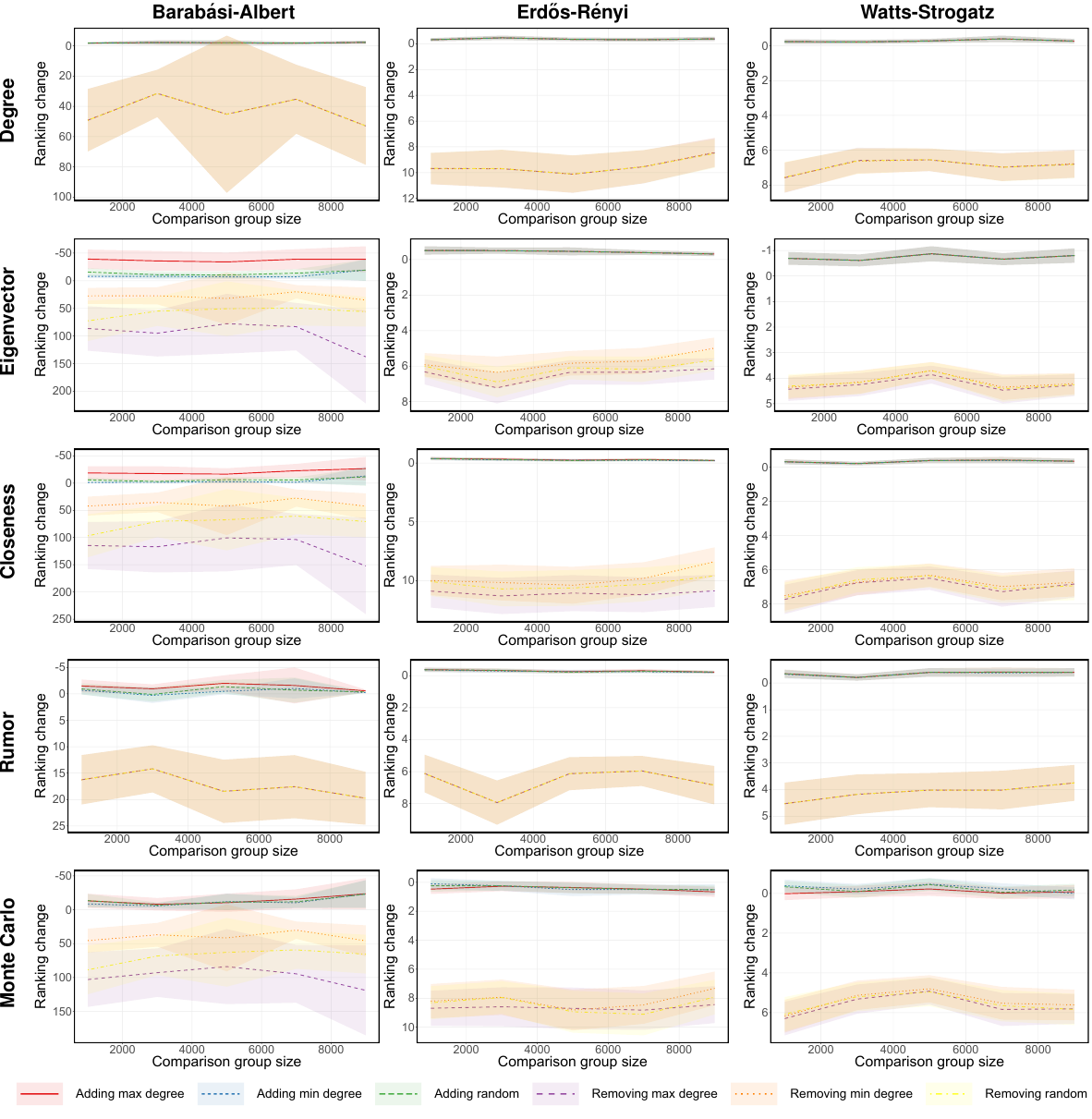}
\caption{
The same as Figure~\ref{fig:sensitivity-fast-subos} but for heuristics that modify edges rather than for heuristics that add nodes.
}
\label{fig:sensitivity-fast-edge}
\end{figure}

\end{document}